\newtheorem{theorem}{Theorem}[section]
\newtheorem*{theorem*}{Theorem}
\newtheorem{proposition}[theorem]{Proposition}
\newtheorem*{proposition*}{Proposition}
\newtheorem{lemma}[theorem]{Lemma}
\newtheorem*{lemma*}{Lemma}
\newtheorem{corollary}[theorem]{Corollary}
\newtheorem*{conjecture*}{Conjecture}
\newtheorem{fact}[theorem]{Fact}
\newtheorem*{fact*}{Fact}
\newtheorem*{hypothesis*}{Hypothesis}
\theoremstyle{definition}
\newtheorem{definition}[theorem]{Definition}
\newtheorem{algorithm}[theorem]{Algorithm}
\newtheorem{problem}[theorem]{Problem}
\theoremstyle{remark}
\newtheorem*{claim*}{Claim}
\newtheorem{remark}[theorem]{Remark}
\newtheorem*{remark*}{Remark}
\newtheorem*{observation*}{Observation}
\let\mathbb\varmathbb
\crefname{lemma}{Lemma}{Lemmas}
\crefname{definition}{Definition}{Definitions}
\newcommand{\Authornote}[2]{}
\newcommand{\Authornotecolored}[3]{}
\newcommand{\Authorcomment}[2]{}
\newcommand{\Authorfnote}[2]{}
\newcommand{\Dnote}{\Authornote{D}}
\newcommand{\Dcomment}{\Authorcomment{D}}
\definecolor{forestgreen(traditional)}{rgb}{0.0, 0.27, 0.13}
\newcommand{\Snote}{\Authornote{S}}
\newcommand{\paren}[1]{(#1)}
\newcommand{\Paren}[1]{\left(#1\right)}
\newcommand{\bigparen}[1]{\big(#1\big)}
\newcommand{\Bigparen}[1]{\Big(#1\Big)}
\newcommand{\Brac}[1]{\left[#1\right]}
\newcommand{\abs}[1]{\lvert#1\rvert}
\newcommand{\Abs}[1]{\left\lvert#1\right\rvert}
\newcommand{\card}[1]{\lvert#1\rvert}
\newcommand{\set}[1]{\{#1\}}
\newcommand{\Set}[1]{\left\{#1\right\}}
\newcommand{\norm}[1]{\lVert#1\rVert}
\newcommand{\Norm}[1]{\left\lVert#1\right\rVert}
\newcommand{\iprod}[1]{\langle#1\rangle}
\newcommand{\Esymb}{\mathbb{E}}
\newcommand{\Psymb}{\mathbb{P}}
\newcommand{\Vsymb}{\mathbb{V}}
\DeclareMathOperator*{\E}{\Esymb}
\DeclareMathOperator*{\Var}{\Vsymb}
\DeclareMathOperator*{\ProbOp}{\Psymb}
\renewcommand{\Pr}{\ProbOp}
\newcommand{\tensor}{\otimes}
\newcommand{\bits}{\{0,1\}}
\newcommand{\sbits}{\{\pm1\}}
\newcommand{\defeq}{\stackrel{\mathrm{def}}=}
\newcommand{\seteq}{\mathrel{\mathop:}=}
\newcommand{\from}{\colon}
\newcommand{\mper}{\,.}
\newcommand{\mcom}{\,,}
\newcommand\bdot\bullet
\DeclareMathOperator{\Ind}{\mathbf 1}
\DeclareMathOperator{\Tr}{Tr}
\DeclareMathOperator{\poly}{poly}
\DeclareMathOperator{\argmax}{argmax}
\DeclareMathOperator{\sign}{sign}
\DeclareMathOperator{\rank}{rank}
\newcommand{\Erdos}{Erd\H{o}s\xspace}
\newcommand{\Renyi}{R\'enyi\xspace}
\newcommand{\N}{\mathbb N}
\newcommand{\R}{\mathbb R}
\newcommand{\cC}{\mathcal C}
\newcommand{\cF}{\mathcal F}
\newcommand{\cI}{\mathcal I}
\newcommand{\cN}{\mathcal N}
\newcommand{\cO}{\mathcal O}
\newcommand{\cU}{\mathcal U}
\renewcommand{\leq}{\leqslant}
\renewcommand{\le}{\leqslant}
\renewcommand{\geq}{\geqslant}
\renewcommand{\ge}{\geqslant}
\let\epsilon=\varepsilon
\numberwithin{equation}{section}
\newcommand\MYcurrentlabel{xxx}
\newcommand{\MYstore}[2]{%
  \global\expandafter \def \csname MYMEMORY #1 \endcsname{#2}%
}
\newcommand{\MYload}[1]{%
  \csname MYMEMORY #1 \endcsname%
}
\newcommand{\MYnewlabel}[1]{%
  \renewcommand\MYcurrentlabel{#1}%
  \MYoldlabel{#1}%
}
\newcommand{\MYdummylabel}[1]{}
\newcommand{\torestate}[1]{%
  \let\MYoldlabel\label%
  \let\label\MYnewlabel%
  #1%
  \MYstore{\MYcurrentlabel}{#1}%
  \let\label\MYoldlabel%
}
\newcommand{\restatetheorem}[1]{%
  \let\MYoldlabel\label
  \let\label\MYdummylabel
  \begin{theorem*}[Restatement of \cref{#1}]
    \MYload{#1}
  \end{theorem*}
  \let\label\MYoldlabel
}
\newcommand{\restatelemma}[1]{%
  \let\MYoldlabel\label
  \let\label\MYdummylabel
  \begin{lemma*}[Restatement of \cref{#1}]
    \MYload{#1}
  \end{lemma*}
  \let\label\MYoldlabel
}
\newcommand{\restateprop}[1]{%
  \let\MYoldlabel\label
  \let\label\MYdummylabel
  \begin{proposition*}[Restatement of \cref{#1}]
    \MYload{#1}
  \end{proposition*}
  \let\label\MYoldlabel
}
\newcommand{\restatefact}[1]{%
  \let\MYoldlabel\label
  \let\label\MYdummylabel
  \begin{fact*}[Restatement of \prettyref{#1}]
    \MYload{#1}
  \end{fact*}
  \let\label\MYoldlabel
}
\newcommand{\restate}[1]{%
  \let\MYoldlabel\label
  \let\label\MYdummylabel
  \MYload{#1}
  \let\label\MYoldlabel
}
\newcommand{\e}{\epsilon}
\newcommand*{\sbm}{\mathrm{SBM}}
\newcommand*{\Dir}{\mathrm{Dir}}
\newcommand*{\corr}{\mathrm{corr}}
\newcommand*{\diag}{\mathrm{diag}}
\newcommand*{\tsigma}{\tilde\sigma}
\newcommand*{\Id}{\mathrm{Id}}
\newcommand*{\Normop}[1]{\Norm{#1}}
\DeclareMathOperator{\pE}{\tilde{\mathbb{E}}}
\DeclareMathOperator{\Span}{Span}
\newcommand*{\dyad}[1]{#1#1{}^{\mkern-1.5mu\mathsf{T}}}
\title{
  Bayesian estimation from few samples:\\community detection and related problems
}
\author{
    Samuel B. Hopkins\thanks{Cornell University. Supported by an NSF graduate research fellowship, a Microsoft Research PhD fellowship, a Cornell University fellowship, and David Steurer's NSF CAREER award. Part of this work was accomplished while this author was an intern at Microsoft Research New England.}
      \and
    David Steurer\thanks{ETH Z\"urich. Much of this work was done while at Cornell University and the Institute for Advanced Study. Supported by a Microsoft Research Fellowship, a Alfred P. Sloan Fellowship, NSF awards, and the Simons Collaboration for Algorithms and Geometry.}
}
\begin{document}

\pagestyle{empty}

\maketitle
\thispagestyle{empty} %

\begin{abstract}

    We propose an efficient meta-algorithm for Bayesian estimation problems that is based on low-degree polynomials, semidefinite programming, and tensor decomposition.
The algorithm is inspired by recent lower bound constructions for sum-of-squares and related to the method of moments.
Our focus is on sample complexity bounds that are as tight as possible (up to additive lower-order terms) and often achieve statistical thresholds or conjectured computational thresholds.

Our algorithm recovers the best known bounds for community detection in the sparse stochastic block model, a widely-studied class of estimation problems for community detection in graphs.
We obtain the first recovery guarantees for the mixed-membership stochastic block model (Airoldi et el.) in constant average degree graphs---up to what we conjecture to be the computational threshold for this model.
We show that our algorithm exhibits a sharp computational threshold for the stochastic block model with multiple communities beyond the Kesten--Stigum bound---giving evidence that this task may require exponential time.

The basic strategy of our algorithm is strikingly simple:
we compute the best-possible low-degree approximation for the moments of the posterior distribution of the parameters and use a robust tensor decomposition algorithm to recover the parameters from these approximate posterior moments.

\end{abstract}

\clearpage

  \microtypesetup{protrusion=false}
  \tableofcontents{}
  \microtypesetup{protrusion=true}

\clearpage

\pagestyle{plain}
\setcounter{page}{1}

\interfootnotelinepenalty=10000

\section{Introduction}

\Dnote{}

\emph{Bayesian\footnote{Here, ``Bayesian'' refers to the fact that there is a prior distribution over the parameters.} parameter estimation} \cite{wiki:Bayes_estimator} is a basic task in statistics with a wide range of applications, especially for machine learning.
The estimation problems we study have the following form:
For a known joint probability distribution $p(x,\theta)$ over data points $x$ and parameters $\theta$ (typically both high-dimensional objects), nature draws a parameter $\theta\sim p(\theta)$ from its marginal distribution and we observe i.i.d.~samples $x_1,\ldots,x_m \sim p(x\mid \theta)$ from the distribution conditioned on $\theta$.
The goal is to efficiently estimate the underlying parameter $\theta$ from the observed samples $x_1,\ldots,x_m$.

A large number of important problems in statistics, machine learning, and average-case complexity fit this description.
Some examples are principal component analysis (and its many variants), independent component analysis, latent Dirichlet allocation, stochastic block models, planted constraint satisfaction problems, and planted graph coloring problems.

For example, in stochastic block models the parameter $\theta$ imposes a community structure on $n$ nodes.
In the simplest case, this structure is a partition into two communities.
Richer models support more than two communities and allow nodes to participate in multiple communities.
The samples $x_1,\ldots,x_m$ are edges between the nodes drawn from a distribution $p(x\mid \theta)$ that respects the community structure $\theta$, which typically means that the edge distribution is biased toward endpoints with the same or similar communitity memberships.
Taken together the samples $x_1,\ldots,x_m$ form a random graph $x$ on $n$ vertices that exhibits a latent community structure $\theta$; the goal is to estimate this structure $\theta$.
This problem becomes easier the more samples (i.e., edges) we observe.
The question is how many samples are required such that we can efficiently estimate the community structure $\theta$?
Phrased differently:
how large an average degree of the random graph $x$ do we require to be able to estimate $\theta$?

In this work, we develop a conceptually simple meta-algorithm for Bayesian estimation problems.
We focus on the regime that samples are scarce.
In this regime, the goal is to efficiently compute an estimate $\hat \theta$ that is positively correlated with the underlying parameter $\theta$ given as few samples from the distribution as possible.
In particular, we want to understand whether for particular estimation problems there is a difference between the sample size required for efficient and inefficient algorithms (say, exponential vs. polynomial time).
In this regime, we show that our meta-algorithm recovers the best previous bounds for stochastic block models \cite{DBLP:conf/stoc/Massoulie14,DBLP:conf/stoc/MosselNS15,DBLP:conf/nips/AbbeS16}.
Moreover, for the case of richer community structures like multiple communities and especially overlapping communities, our algorithm achieves significantly stronger recovery guarantees.\footnote{
  If we represent the community structure by $k$ vectors $y_1,\ldots,y_k \in \bits^n$ that indicate community memberships, then previous algorithms \cite{DBLP:conf/nips/AbbeS16} do not aim to recover these vectors but, roughly speaking, only a random linear combination of them.
  While for some settings it is in fact impossible to estimate the individual vectors, we show that in many settings it is possible to estimate them (in particular for symmetric block models).}

In order to achieve these improved guarantees, our meta-algorithm draws on several ideas from previous lines of work and combines them in a novel way.
Concretely, we draw on ideas from recent analyses of belief propagation and their use of non-backtracking and self-avoiding random walks \cite{DBLP:conf/stoc/Massoulie14,DBLP:conf/stoc/MosselNS15,DBLP:conf/nips/AbbeS16}.
We also use ideas from recent works based on the method of moments and tensor decomposition \cite{DBLP:journals/jmlr/AnandkumarGHKT14,DBLP:journals/jmlr/AnandkumarGHK14, DBLP:conf/stoc/BarakKS15}.
Our algorithm also employs convex-programming techniques, namely the sum-of-squares semidefinite programming hierarchy, and gives a new perspective on how these techniques can be used for estimation.\footnote{
  Previously, convex-programming techniques have been used in this context only as a way to obtain efficient relaxations for maximum-likelihood estimators.
  In contrast, our work uses convex programming to drive the method of moments approach and decompose tensors in an entropy maximizing way.
}

Our meta-algorithm allows for a tuneable parameter which corresponds roughly to running time.
Under mild assumptions on a Bayesian estimation problem $p(x,\theta)$ (that are in particular satisfied for discrete problems such as the stochastic block model), when this parameter is set to allow the meta-algorithm to run in exponential time, if there is any estimator $\hat{\theta}$ of $\theta$ obtaining correlation $\delta$, the meta-algorithm offers one obtaining correlation at least $\delta^{O(1)}$.
While this parameter does not correspond directly to the \emph{degree} paramter used in convex hiararchies such as sum of squares, the effect is similar to the phenomenon that sum of squares convex programs of exponential size can solve any combinatorial optimization problem exactly.
(Since Bayesian estimation problems do not always correspond to optimization problems, this guarantee would not be obtained by sum of squares in our settings.)

For many Bayesian estimation problems there is a critical number of samples $n_0$ such that when the number of samples $n$ is less than $n_0$, computationally-efficient algorithms seem unable to compute good estimators for $\theta$.
This is in spite of the presence of sufficient information to identify $\theta$ (and therefore estimate it with computationally inefficient algorithms), even when $n < n_0$.
Providing rigorous evidence for such \emph{computational thresholds} has been a long-standing challenge.
One popular approach is to prove impossibility of estimating $\theta$ from $n < n_0$ samples using algorithms from some restricted class.
Such results are most convincing the chosen class captures the lowest-sample-complexity algorithms for many Bayesian inference problems, which our meta-algorithm does.\footnote{Recent work in this area has focused on sum of squares lower bounds \cite{DBLP:conf/colt/HopkinsSS15, DBLP:conf/nips/MaW15, DBLP:conf/focs/BarakHKKMP16}. While the sum of squares method is algorithmically powerful, it is not designed to achieve optimal sample guarantees for Bayesian estimation. Lower bounds against our meta-algorithm therefore serve better the purpose of explaining precise computational sample thresholds.}
We prove that in the $k$-community block model, no algorithm captured by our meta-algorithm can tolerate smaller-degree graphs than the best known algorithms.
This provides evidence for a computational phase transition at the \emph{Kesten-Stigum threshold} for stochastic block models.
\Dnote{}

\paragraph{Organization}
In the remainder of this introduction we discuss our results and their relation to previous work in more detail.
In Section~\ref{sec:techniques} (Techniques) we describe the mathematical techniques involved in our meta-algorithm and its analysis, and we illustrate how to apply the meta-algorithm to recover a famous result in the theory of spiked random matrices with a much simplified proof.
In Section~\ref{sec:warmup} (Warmup) we re-prove (up to some loss in the running time) the result of Mossel-Neeman-Sly on the two-community block model as an application of our meta-algorithm, again with very simple proofs.
In Section~\ref{sec:W} (Matrix estimation) we re-interpret the best existing results on the block model, due to Abbe and Sandon, as applications of our meta-algorithm.

In Section~\ref{sec:mm} (Tensor estimation) we apply our meta-algorithm to the mixed-membership block model.
Following that, in Section~\ref{sec:lower-bound} (Lower bounds) we prove that no algorithm captured by our meta-algorithm can recover communities in the block model past the Kesten-Stigum threshold.

In Section~\ref{sec:tdecomp} (Tensor decomposition), which can be read independently of much of the rest of the paper, we give a new algorithm for tensor decomposition and prove its correctness; this algorithm is used by our meta-algorithm as a black box.

\subsection{Meta-algorithm and meta-theorems for Bayesian estimation}
\label{sec:meta-intro}
We first consider a version of the meta-algorithm that is enough to capture the best known algorithms for the stochastic block model with $k$ disjoint communities, which we now define.
Let $\e,d > 0$.
Draw $y$ uniformly from $[k]^n$.
For each pair $i \neq j$, add the edge $\{i,j\}$ to a graph on $n$ vertices with probability $(1 + (1 - \tfrac 1k) \e) \tfrac dn$ if $y_i = y_j$ and $(1- \tfrac \e k) \tfrac dn$ otherwise.
The resulting graph has expected average degree $d$.

A series of recent works has explored the problem of estimating $y$ in these models for the sparsest-possible graphs.
The emerging picture, first conjectured via techniques from statistical physics in the work \cite{DBLP:journals/corr/abs-1109-3041}, is that in the $k$-community block model it is possible to recover a nontrivial estimate of $y$ via a polynomial time algorithm if and only if $d = (1 + \delta)\tfrac{k^2}{\e^2}$ for $\delta \geq \Omega(1)$.
This is called the Kesten-Stigum threshold.
The algorithmic side of this conjecture was confirmed by \cite{DBLP:conf/stoc/Massoulie14,DBLP:conf/stoc/MosselNS15} for $k=2$ and \cite{DBLP:conf/nips/AbbeS16} for general $k$.

One of the goals of our meta-algorithm is that it apply in a straightforward way even to complex Bayesian estimation problems.
A more complex model (yet more realistic for real-world networks) is the \emph{mixed-membership} block model \cite{DBLP:conf/nips/AiroldiBFX08} which we now define informally.
Let $\alpha \geq 0$ be an overlap parameter.
Draw $y$ from $\binom{k}{t}^n$, where $t = \tfrac{k(\alpha+1)}{k+\alpha} \approx \alpha+1$; that is for each of $n$ nodes pick a set $S_j$ of roughly $\alpha +1$ communities.\footnote{In actuality one draws for each node $i \in [n]$ a probability vector $\sigma_i \in \Delta_{k-1}$ from the Dirichlet distribution with parameter $\alpha$; we describe a nearly-equivalent model here for the sake of simplicity---see Section~\ref{sec:result-block-model} for details. Our guarantees for recovery in the mixed-membership model also apply to the model here because it has the same second moments as the Dirichlet distribution.}
For each pair $i,j$, add an edge to the graph with probability $(1 + (\tfrac{|S_i \cap S_j|}{t^2 } - \tfrac 1k)\e)\tfrac dn$.
(That is, with probability which increases as $i$ and $j$ participate in more communities together.)
In the limit $\alpha \rightarrow 0$ this becomes the $k$-community block model.

Returning to the meta-algorithm (but keeping in mind the block model), let $p(x,y)$ be a joint probability distribution over observable variables $x\in \R^n$ and hidden variables $y\in \R^m$.
Nature draws $(x,y)$ from the distribution $p$, we observe $x$ and our goal is to provide an estimate $\hat y(x)$ for $y$.
Often the mean square error $\E_{p(x,y)} \Norm{\hat y(x)-y}^2$ is a reasonable measure for the quality of the estimation.
For this measure, the information-theoretically optimal estimate is the mean of the posterior distribution $\hat y(x) = \E_{p(y \mid x)} y$.
This approach has two issues that we address in the current work.

The first issue is that naively computing the mean of the posterior distribution takes time exponential in the dimension of $y$.
For example, if $y \in \set{\pm 1}^m$, then $\E_{p( y \mid x)} y = \sum_{y \in \set{\pm 1}^m} y \cdot p(y \mid x)$; there are $2^m$ terms in this sum.
There are many well-known algorithmic approaches that aim to address this issue or related ones, for example, belief propagation \cite{gallager1962low, Pearl} or expectation maximization \cite{dempster1977maximum}.
While these approaches appear to work well in practice, they are notoriously difficult to analyze.

In this work, we can resolve this issue in a very simple way:
We analytically determine a low-degree polynomial $f(x)$ so that $\E_{p(x,y)} \Norm{f(x)-y}^2$ is as small as possible and use the fact that low-degree polynomials can be evaluated efficiently (even for high dimensions $n$).\footnote{Our polynomials typically have logarithmic degree and naive evaluation takes time $n^{O(\log n)}$.
However, we show that under mild conditions it is possible to approximately evaluate these polynomials in polynomial time using the idea of color coding \cite{DBLP:journals/jacm/AlonYZ95}.}
Because the maximum eigenvector of an $n$-dimensional linear operator with a spectral gap is an $O(\log n)$-degree polynomial of its entries, our meta-algorithm captures spectral properties of linear operators whose entries are low-degree polynomials of observable variables $x$.
Examples of such operators include adjacency matrices (when $x$ is a graph), empirical covariance matrices (when $x$ is a list of vectors), as well as more sophisticated objects such as linearized belief propagation operators (e.g., \cite{DBLP:conf/focs/AbbeS15}) and the Hashimoto non-backtracking operator.

The second issue is that even if we could compute the posterior mean exactly, it may not contain any information about the hidden variable $y$ and the mean square error is not the right measure to assess the quality of the estimator.
This situation typically arises if there are symmetries in the posterior distribution.
For example, in the stochastic block model with two communities we have $\E_{p(y \mid x)} y =0$ regardless of the observations $x$ because $p(y \mid x)=p(-y | x)$.
A simple way to resolve this issue is to estimate higher-order moments of the hidden variables.
For stochastic block models with disjoint communities, the second moment $\E_{p(y \mid x)} \dyad y$ would suffice.
(For overlapping communities, we need third moments $\E_{p(y \mid x)} y^{\otimes 3}$ due to more substantial symmetries.)

For now, we think of $y$ as an $m$-dimensional vector and $x$ as an $n$-dimensional vector (in the blockmodel on $N$ nodes, this would correspond to $m \approx kN$ and $n = N^2$).
Our algorithms follow a two-step strategy:
\begin{compactenum}
\item Given $x \sim p(x|y)$, evaluate a fixed, low-degree polynomial $P(x)$ taking values in $(\R^m)^{\tensor \ell}$.
(Usually $\ell$ is $2$ or $3$.)
\item Apply a robust eigenvector or semidefinite-programming based algorithm (if $\ell = 2$), or a robust tensor decomposition algorithm (if $\ell = 3$ or higher) to $P$ to obtain an estimator $\hat{y}$ for $y$.
\end{compactenum}
The polynomial $P(x)$ should be an optimal low-degree approximation to $y^{\tensor \ell}$, in the following sense:
if $n$ is sufficiently large that some low-degree polynomial $Q(x)$ has constant correlation with $y^{\tensor \ell}$
\[
\E_{x,y} \iprod{Q, y^{\tensor \ell}} \geq \Omega(1) \cdot (\E_x \|Q\|^2)^{1/2} (\E \|y^{\tensor \ell}\|^2)^{1/2}\mcom
\]
then $P$ has this guarantee.
(The inner products and norms are all Euclidean.)

A prerequisite for applying our meta-algorithm to a particular inference problem $p(x, y)$ is that it be possible to estimate $y$ given $\E \Brac{y^{\tensor \ell} \mid x}$ for some constant $\ell$.
For such a problem, the optimal Bayesian inference procedure (ignoring computational constraints) can be captured by computing $F(x) = \E \Brac{y^{\tensor \ell} \mid x}$, then using it to estimate $y$.
When $p(x,y)$ is such that it is information-theoretically possible to estimate $y$ from $x$, these posterior moments will generally satisfy $\E \iprod{F(x), y^{\tensor \ell}} \geq \Omega(1) \cdot (\E \|F(x)\|^2)^{1/2} (\E \|y^{\tensor \ell}\|^2)^{1/2}$, for some constant $\ell$.
Our observation is that when $F$ is approximately a low-degree function, this estimation procedure can be carried out via an efficient algorithm.

\paragraph{Matrix estimation and prior results for block models}
In the case $\ell = 2$, where one uses the covariance $\E\Brac{\dyad y \mid x}$ to estimate $y$, the preceding discussion is captured by the following theorem.

\begin{theorem}[Bayesian estimation meta-theorem---2nd moment]
  \label{thm:meta-theorem-2nd}
  Let $\delta>0$ and $p(x,y)$ be a distribution over vectors $x\in \bits^n$ and unit vectors $y\in \R^d$.
  Assume $p(x)\ge \cramped{2^{-\cramped{n^{O(1)}}}}$ for all $x\in \bits^n$.\footnote{This mild condition on the marginal distribution of $x$ allows us to rule out pathological situations where a low-degree polynomial in $x$ may be hard to evaluate accurately enough because of coefficients with super-polynomial bit-complexity.}
  Suppose there exists a matrix-valued degree-$D$ polynomial $P(x)$ such that
  \begin{equation}
    \label{eq:second-moment-correlation}
    \E_{p(x,y)} \iprod{P(x),\dyad y} \ge \delta\cdot \Paren{\E_{p(x)} \norm{P(x)}^2_F}^{1/2}\,.
  \end{equation}
  Then, there exists $\delta'\ge \delta^{O(1)}>0$ and an estimator $\hat y(x)$ computable by a circuit of size $n^{O(D)}$ such that
  \begin{equation}
     \E_{p(x,y)} \iprod{\hat y(x), y}^2 \ge \delta'\,.
  \end{equation}
\end{theorem}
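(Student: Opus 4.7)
The plan is to build $\hat y$ by applying a random Gaussian sketch to $P(x)$: set $\hat y(x) = \tilde P(x) g / \|\tilde P(x) g\|$ for a fixed Gaussian vector $g \in \R^d$, where $\tilde P$ is a symmetrized, normalized copy of $P$. Since $yy^\top$ is symmetric we may first replace $P(x)$ by $(P(x) + P(x)^\top)/2$---this preserves $\iprod{P, yy^\top}$ while not increasing $\|P\|_F$---and then rescale to obtain $\tilde P(x)$ with $\E \|\tilde P\|_F^2 = 1$ and $\E \iprod{\tilde P(x), yy^\top} \geq \delta$.

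The heart of the argument is two applications of Cauchy--Schwarz. For each fixed $x$ with $A := \tilde P(x)$, the identity $g^\top A y \sim N(0, \|Ay\|^2)$ together with $(\E_g |U|)^2 \leq \E_g(U^2/V)\cdot \E_g V$ applied to $U = g^\top A y$ and $V = \|Ag\|^2$ gives
\[
  \E_g \iprod{\hat y(x,g), y}^2 \;=\; \E_g \frac{(g^\top A y)^2}{\|Ag\|^2} \;\geq\; \frac{2}{\pi} \cdot \frac{\|Ay\|^2}{\|A\|_F^2}\mper
\]
Bounding $\|Ay\|^2 \geq (y^\top A y)^2$ pointwise and then applying Cauchy--Schwarz across $(x,y)$ in the form $(\E\, y^\top \tilde P y)^2 \leq \E\bigl[(y^\top \tilde P y)^2/\|\tilde P\|_F^2\bigr] \cdot \E \|\tilde P\|_F^2$, I conclude $\E_{x,y,g} \iprod{\hat y, y}^2 \geq (2/\pi)\delta^2$. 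By Fubini we fix a $g$ achieving this bound, yielding a deterministic estimator $\hat y(x)$. The circuit simply evaluates the degree-$D$ polynomial $\tilde P(x)$ (size $n^{O(D)}$), multiplies by the constant vector $g$, and normalizes.

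The main subtlety---and why the naive top-eigenvector estimator is insufficient---is illustrated by $P(x) = \epsilon\, yy^\top + zz^\top$ with $z \perp y$: this $P$ has inner product $\epsilon$ with $yy^\top$ but top eigenvector $z$, giving zero correlation with $y$. The Gaussian sketch succeeds precisely because it extracts signal proportional to $\|Py\|^2$ rather than only from the dominant eigendirection, so ``diffuse'' alignment between $P(x)$ and $yy^\top$ is still converted into a vector estimator correlated with $y$. The assumption $p(x) \geq 2^{-n^{O(1)}}$ is a mild technical condition ensuring that polynomial coefficients admit bounded bit-complexity, so the entire procedure indeed fits in a circuit of size $n^{O(D)}$.
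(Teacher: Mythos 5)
Your argument is correct, and it takes a genuinely different route from the one the paper has in mind. The paper never writes out a proof of Theorem~\ref{thm:meta-theorem-2nd}; its template for this step (visible in Lemma~\ref{lem:recovery-sbm} and Lemma~\ref{lem:mm-warmup-rounding}) is to first pass from the in-expectation correlation guarantee to a pointwise one via a Markov-type argument (Fact~\ref{fact:exp-to-prob}), and then round the matrix $P(x)$ either by a correlation-preserving projection onto a structured convex set followed by hyperplane rounding, or by sampling a random unit vector in the span of its top $1/\delta^{O(1)}$ eigenvectors. Your Gaussian sketch $\hat y(x)=\tilde P(x)g/\norm{\tilde P(x)g}$ with the two Cauchy--Schwarz steps
\[
\E_g \frac{(g^\top A y)^2}{\norm{Ag}^2} \;\ge\; \frac{2}{\pi}\cdot\frac{\norm{Ay}^2}{\norm{A}_F^2}\;\ge\;\frac{2}{\pi}\cdot\frac{(y^\top A y)^2}{\norm{A}_F^2}
\qquad\text{and}\qquad
\E_{x,y}\frac{(y^\top \tilde P y)^2}{\norm{\tilde P}_F^2} \;\ge\; \Paren{\E_{x,y}\, y^\top \tilde P y}^2
\]
avoids both the eigendecomposition and the expectation-to-pointwise conversion: it works directly in expectation over $(x,y,g)$, so you never need the bounded-fluctuation hypothesis of Fact~\ref{fact:exp-to-prob} or any truncation of $\norm{P(x)}_F$, and hardwiring a good $g$ into the circuit is legitimate because the theorem is explicitly nonuniform. (The degenerate events $\tilde P(x)=0$ or $\tilde P(x)g=0$ contribute nothing to either side and are handled by declaring $\hat y=0$ there.) What the paper's heavier machinery buys in exchange is reusability downstream: the correlation-preserving projection (Theorem~\ref{thm:correlation-preserving-projection}) returns an object inside a prescribed convex set (PSD with unit diagonal, pseudo-moments, etc.), which the block-model and tensor-decomposition applications actually require, whereas your sketch produces only a unit vector correlated with $y$---exactly what Theorem~\ref{thm:meta-theorem-2nd} asks for, but no more. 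Both treatments dispose of the bit-complexity hypothesis $p(x)\ge 2^{-n^{O(1)}}$ at the same footnote level of detail, so there is no gap there relative to the paper.
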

To apply this theorem to the previously-discussed setting of samples $x_1,\ldots,x_N$ generated from $p(x \, | \, y)$, assume the samples $x_1,\ldots,x_N$ are in some fixed way packaged into a single $n$-length vector $x$.

One curious aspect of the theorem statement is that it yields a nonuniform algorithm---a family of circuits---rather than a uniform algorithm.
If the coefficients of the polynomial $P$ can themselves be computed in polynomial time, then the conclusion of the algorithm is that an $n^{O(D)}$-time algorithm exists with the same guarantees.

As previously mentioned, the meta-algorithm has a parameter $D$, the degree of the polynomial $P$.
If $D = n$, then whenever it is information-theoretically possible to estimate $y$ from $\E[ yy^\top  \, | \, x]$, the meta-algorithm can do so (in exponential time).
This follows from the fact that every function in $n$ Boolean variables is a polynomial of degree at most $n$.
It is also notable that, while a degree $D$ polynomial can be evaluated by an $n^{O(D)}$-size circuit, some degree-$D$ polynomials can be evaluated by much smaller circuits.
We exploit such polynomials for the block model (computable via \emph{color coding}), obtaining $n^{O(1)}$-time algorithms from degree $\log n$ polynomials.
By using very particular polynomials, which can be computed via powers of \emph{non-backtracking operators}, previous works on the block model are able to give algorithms with near-linear running times \cite{DBLP:conf/stoc/MosselNS15, DBLP:conf/nips/AbbeS16}.\footnote{In this work we choose to work with \emph{self-avoiding} walks rather than non-backtracking ones; while the corresponding polynomials cannot to our knowledge be evaluated in near-linear time, the analysis of these polynomials is much simpler than the analysis needed to understand non-backtracking walks. This helps to make the analysis of our algorithms much simpler than what is required by previous works, at the cost of large polynomial running times. It is an interesting question to reduce the running times of our algorithm for the mixed-membership block model to near-linear via non-backtracking walks, but since our aim here is to distinguish what is computable in polynomial time versus, say, exponential time, we do not pursue that improvement here.}

Using the appropriate polynomial $P$, this theorem captures the best known guarantees for partial recovery in the $k$-community stochastic block model.
Via the same polynomial, applied in the mixed-membership setting, it also yields our first nontrivial algorithm for the mixed-membership model.
However, as we discuss later, the recovery guarantees are weak compared to our main theorem.

Recalling the $\e,d,k$ block model from the previous section, let $y \in \R^n$ be the centered indicator vector of, say, community $1$.
\begin{theorem}[Implicit in \cite{DBLP:conf/stoc/Massoulie14,DBLP:conf/stoc/MosselNS15, DBLP:conf/nips/AbbeS16}, special case of our main theorem, Theorem~\ref{thm:mm-intro}]
  \label{thm:mm-intro-matrix}
  Let $\delta \defeq 1 - \tfrac{k^2(\alpha+1)^2}{\e^2 d}$.
  If $x$ is sampled according to the $n$-node, $k$-community, $\e$-biased, $\alpha$-mixed-membership block model with average degree $d$ and $y$ is the centered indicator vector of community $1$, there is a $n\times n$-matrix valued polynomial $P$ of degree $O(\log n)/\delta^{O(1)}$ such that
  \[
  \E_x \iprod{P(x), \dyad y} \geq \Paren{\frac \delta {k (\alpha+1)}}^{O(1)} (\E \|P(x)\|^2)^{1/2} (\E \|\dyad y\|^2)^{1/2}\mper
  \]
\end{theorem}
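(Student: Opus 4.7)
The plan is to take $P$ to be a self-avoiding-walk (SAW) matrix polynomial and to estimate directly both its expected inner product with $\dyad y$ and its expected Frobenius norm, leveraging conditional edge-independence given the community assignments. Concretely, let $A_{uv}=x_{uv}-d/n$ be the centered adjacency matrix, fix an integer $\ell=\Theta(\log n)/\delta^{O(1)}$ to be determined, and set
\[
P_{ij}(x) \;=\; \sum_{(i=v_0,v_1,\dots,v_\ell=j)\text{ SAW}} \prod_{s=1}^\ell A_{v_{s-1}v_s}\mper
\]
This is manifestly a matrix-valued polynomial in $x$ of degree $\ell$.

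For the signal, $\E \iprod{P(x),\dyad y}$ can be computed by conditioning on the community memberships $\sigma$: edges are then independent and $\E[A_{uv}\mid\sigma]=(\e d/n)\,c(\sigma_u,\sigma_v)$ for a symmetric signal kernel $c$ determined by the mixed-membership parameters (in the disjoint-community case $\alpha=0$, $c(a,b)=\delta_{ab}-1/k$). For each SAW, iterating the expectation over internal $\sigma_{v_j}$ (each of which appears in exactly two $c$-factors) contracts by a factor of $1/(k(\alpha+1))$ per step---the $\alpha+1$ arising from the Dirichlet second-moment structure---and the endpoint integrations couple $y_{v_0},y_{v_\ell}$ against $c$ to produce a $\Theta(1/k^2)$ prefactor. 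Multiplying by the $\sim n^{\ell-1}$ SAWs through each endpoint pair and summing over $n^2$ endpoint choices yields
\[
\E \iprod{P(x),\dyad y} \;\gtrsim\; \frac{n}{k}\,\lambda^\ell\mcom\qquad \lambda \defeq \frac{\e d}{k(\alpha+1)}\mper
\]
For the variance, expand $\E\|P(x)\|_F^2=\sum_{i,j}\E P_{ij}(x)^2$ as a double sum over pairs of SAWs. The dominant contribution comes from pairs whose edge multisets coincide, each edge then contributing $\E A_{uv}^2\approx d/n$; since $A_{uv}$ is centered and edges are conditionally independent, pairs with differing edge sets give lower-order terms by a combinatorial path-counting bookkeeping. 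This yields $\E\|P(x)\|_F^2 \lesssim n\cdot d^\ell$. Since $\E\|\dyad y\|_F^2=\E\|y\|^4\approx(n/k)^2$, the correlation ratio satisfies
\[
\frac{\E \iprod{P(x),\dyad y}}{(\E\|P(x)\|_F^2)^{1/2}(\E\|\dyad y\|_F^2)^{1/2}}
\;\gtrsim\; \frac{1}{\sqrt n}\Paren{\frac{\lambda^2}{d}}^{\ell/2}\mper
\]
By definition $\lambda^2/d=1/(1-\delta)$, so taking $\ell=C\log n/\delta$ for a sufficiently large constant $C$ makes $(1-\delta)^{-\ell/2}\geq\sqrt n\cdot(k(\alpha+1)/\delta)^{O(1)}$, delivering the required correlation.

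The technical heart---and the main obstacle---is the Frobenius-norm upper bound: precisely controlling pairs of overlapping-but-distinct SAWs and showing they contribute only lower-order terms in $\ell$. This is the SAW analogue of the path-counting arguments underlying the non-backtracking spectral analyses of \cite{DBLP:conf/stoc/Massoulie14,DBLP:conf/stoc/MosselNS15,DBLP:conf/nips/AbbeS16}, and is conceptually cleaner here since self-avoidance automatically constrains the patterns of edge repetition one must enumerate. A secondary difficulty is tracking the mixed-membership structure in the signal computation: rather than a simple projector as in the disjoint-community case, $c$ is a Dirichlet-moment-derived operator whose spectral decomposition must be unpacked to yield the exact $(\alpha+1)$ dependence of $\lambda$, and hence of the threshold $\delta$.
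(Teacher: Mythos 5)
Your construction and overall strategy coincide with the paper's: the same self-avoiding-walk matrix polynomial $P_{ij}(x)=\sum_{\alpha\in\saw_\ell(i,j)}\prod_{ab\in\alpha}(x_{ab}-\tfrac dn)$, the same conditioning on $\sigma$ to compute the signal (each internal vertex contracting by the Dirichlet second moment $\tfrac1{k(\alpha+1)}$, which is exactly the eigenvalue $\lambda_2=\e/(k(\alpha+1))$ of the kernel $W$ in \cref{lem:mm-W-conditions}), and the same second-moment expansion over pairs of walks. The paper packages this as \cref{lem:basis-conditions} (unbiased estimators plus approximate pairwise independence), instantiated via \cref{lem:W-unbiased,lem:W-pairwise}; your numerator and denominator computations are those two lemmas in different clothing.

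The one substantive problem is your treatment of the Frobenius-norm bound. You assert that pairs of SAWs "whose edge multisets coincide" dominate and that "pairs with differing edge sets give lower-order terms." Neither half of this is right. Pairs $\alpha\neq\beta$ that overlap in paths emanating from the endpoints $i,j$ contribute, relative to the fully disjoint pairs, a series of the form $\sum_{r\ge 0} r\cdot\bigl(\e^2 d/(k^2(\alpha+1)^2)\bigr)^{-r}$; these terms are \emph{not} lower order --- the series converges to $\delta^{-O(1)}$ precisely because $\e^2d>k^2(\alpha+1)^2$, and this is the only place in the variance computation where the hypothesis $\delta>0$ is used (cf.\ the proof of \cref{lem:indep-sbm}). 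Meanwhile the fully disjoint pairs contribute roughly the square of the mean, i.e.\ $(\e d)^{2\ell}/(n(k(\alpha+1))^{\ell-1})^2$ per entry, which at your choice of $\ell$ is \emph{comparable to} the diagonal term $d^\ell/n$ rather than dominated by it; this is what caps the correlation at $\delta^{O(1)}$ and prevents your ratio $n^{-1/2}(\lambda^2/d)^{\ell/2}$ from exceeding $1$ for larger $\ell$. Your final inequality happens to be achievable at the balance point $\ell=\Theta(\log n)/\delta^{O(1)}$, but the argument as written both omits the step where the threshold condition does its work and misstates which terms control $\E\|P\|_F^2$. To close the gap you need the overlap classification of \cref{lem:W-pairwise}: walks sharing $r$ edges along initial segments from $i$ and $j$ (counted and summed as a geometric series in $(\e^2d/k^2(\alpha+1)^2)^{-1}$), walks sharing more vertices than edges (negligible by a counting argument), and the diagonal $\alpha=\beta$ (which forces $\ell\gtrsim\log n$).
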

Together with Theorem~\ref{thm:meta-theorem-2nd}, up to questions of $n^{O(\log n)}$ versus $n^{O(1)}$ running times, when $\alpha \rightarrow 0$ this captures the previous best efficient algorithms for the $k$-community block model.
(Once one has a unit vector correlated with $y$, it is not hard to approximately identify the vertices in community $1$.)
While the previous works \cite{DBLP:conf/stoc/Massoulie14,DBLP:conf/stoc/MosselNS15, DBLP:conf/nips/AbbeS16} did not consider the mixed-membership blockmodel, this theorem is easily obtained using techniques present in those works (as we show when we rephrase those works in our meta-algorithm, in Section~\ref{sec:W}).\footnote{In fact, if one is willing to lose an additional $2^{-k}$ in the correlation obtained in this theorem, one can obtain a similar result for the mixed-membership model by reducing it to the disjoint-communities with $K \approx 2^k$ communities, one for each subset of $k$ communities. This works when each node participates in a subset of communities; if one uses the Dirichlet version of the mixed-membership model then suitable discretization would be necessary.}

\paragraph{Symmetries in the posterior, tensor estimation, and improved error guarantees}
We turn next to our main theorem on the mixed-membership model, which offers substantial improvement on the correlation which can be obtained via Theorem~\ref{thm:mm-intro-matrix}.
The matrix-based algorithm discussed above, Theorem~\ref{thm:mm-intro-matrix}, contains a curious asymmetry; namely the arbitrary choice of community $1$.
The block model distributions are symmetric under relabeling of the communities, which means that any estimator $P(x)$ of $\dyad y$ is also an estimator of $\dyad {y'}$, where $y'$ is the centered indicator of community $j > 1$.
Since one wants to estimate all the vectors $y_1,\ldots,y_k$ (with $y_i$ corresponding to the $i$-th community), it is more natural to consider the polynomial $P$ to be an estimator of the matrix $M = \sum_{i \in [k]} \dyad{y_i}$.\footnote{In more general versions of the blockmodel studied in \cite{DBLP:conf/nips/AbbeS16}, where each pair $i,j$ of communities may have a different edge probability $Q_{ij}$ it is not always possible to estimate all of $y_1,\ldots,y_k$.
We view it as an interesting open problem to extract as much information about $y_1,\ldots,y_k$ as possible in that setting; the guarantee of \cite{DBLP:conf/nips/AbbeS16} amounts, roughly, to finidng a single vector in the linear span of $y_1,\ldots,y_k$.}
Unsurprisingly, $P$ is a better estimator of $M$ than it is of $y_1$.
In fact, with the same notation as in the theorems,
\[
  \E_{x,y} \iprod{P(x), M(y)} \geq \delta^{O(1)} (\E \|P(x)\|^2)^{1/2} (\E \|M(y)\|^2)^{1/2}\mcom
\]
removing the $k^{O(1)}$ factor in the denominator.
This guarantee is stronger: now the error in the estimator depends only on the distance $\delta$ of the parameters $\e,d,k,\alpha$ from the critical threshold $\tfrac{k^2(\alpha+1)^2}{\e^2 d} = 1$ rather than additionally on $k$.

If given the matrix $M$ exactly, one way to extract an estimator $\hat{y_i}$ for some $y_i$ is just to sample a random unit vector in the span of the top $k$ eigenvectors of $M$.
Such an estimator $\hat{y_i}$ would have $\E \iprod{\hat{y_i},y_i}^2 \geq \tfrac 1 {k^{O(1)}} \|y_i\|$, recovering the guarantees of the theorems above but not offering an estimator $\hat{y_i}$ whose distance to $y_i$ depends only on the distance $\delta$ above the critical threshold.
Indeed, without exploiting additional structure of the vectors $y_i$ is unclear how to remove this $1/k^{O(1)}$ factor.
As a thought experiment, if one had the matrix $M' = \sum_{i \leq k} \dyad{a_i}$, where $a_1,\ldots,a_k$ were random unit vectors, then $a_1,\ldots,a_k$ would be nearly orthonormal and one could learn essentially only their linear span.
(From the linear span it is only possible to find $\hat{a_i}$ with correlation $\iprod{\hat{a_i}, a_i}^2 \geq 1/k^{O(1)}$.)

In the interest of generality we would like to avoid using such additional structure: while in the disjoint-community model the vectors $y_i$ have disjoint support (after un-centering them), no such special structure is evident in the mixed-membership setting.
Indeed, when $\alpha$ is comparable to $k$, the vectors $y_i$ are similar to independent random vectors of the appropriate norm.

To address this issue we turn to tensor methods.
To illustrate the main idea simply: if $a_1,\ldots,a_k$ are orthonormal, then it is possible to recover $a_1,\ldots,a_k$ exactly from the $3$-tensor $T = \sum_{i \leq k} a_i^{\tensor 3}$.
More abstractly, the meta-algorithm which uses $3$rd moments is able to estimate hidden variables whose posterior distributions have a high degree of symmetry, without errors which worsen as the posteriors become more symmetric.

\begin{theorem}[Bayesian estimation meta-theorem---3rd moment]
  \label{thm:meta-theorem-3rd}
  Let $p(x,y_1,\ldots,y_k)$ be a joint distribution over vectors $x\in \bits^n$ and exchangable,\footnote{
    Here, exchangeable means that for every $x\in \bits^n$ and every permutation $\pi\from [k]\to [k]$, we have $p(y_1,\ldots,y_k\mid x)=p(y_{\pi(1)},\ldots,y_{\pi(k)} \mid x)$.
  } orthonormal\footnote{
    Here, we say the vector-valued random variables $y_1,\ldots,y_k$ are orthonormal if with probability $1$ over the distribution $p$ we have  $\iprod{y_i,y_j}=0$ for all $i\neq j$ and $\norm{y_i}^2=1$.
  } vectors $y_1,\ldots,y_k\in \R^d$.
  Assume the marginal distribution of $x$ satisfies $p(x)\ge 2^{-n^{O(1)}}$ for all $x\in \bits^n$.\footnote{As in the previous theorem, this mild condition on the marginal distribution of $x$ allows us to rule out pathological situations where a low-degree polynomial in $x$ may be hard to evaluate accurately enough because of coefficients with super-polynomial bit-complexity.}
  Suppose there exists a tensor-valued degree-$D$ polynomial $P(x)$ such that
  \begin{equation}
    \label{eq:third-moment-correlation}
    \E_{p(x,y_1,\ldots,y_k)} \iprod{P(x),\sum_{i=1}^k y_i^{\otimes 3}}\ge \delta \cdot \Paren{\E_{p(x)} \norm{P(x)}^2}^{1/2} \cdot \sqrt k\,.
  \end{equation}
  (Here, $\norm{\cdot}$ is the norm induced by the inner product $\iprod{\cdot,\cdot}$.
  The factor $\sqrt k$ normalizes the inequality because $\norm{\sum_{i=1}^k y_{i}^{\otimes 3}}=\sqrt k$ by orthonormality.)
  Then, there exists $\delta'\ge \delta^{O(1)}>0$ and a circuit of size $n^{O(D)}$ that given $x\in \bits^n$ outputs a list of unit vectors $z_1,\ldots,z_m$ with $m \leq n^{\poly(1/\delta)}$ so that
  \begin{equation}
    \E_{p(x,y_1,\ldots,y_k)} \E_{i \sim [k]} \max_{j \in [m]} \iprod{y_i, z_j}^2 \geq \delta'\,.
  \end{equation}
\end{theorem}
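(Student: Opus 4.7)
The plan is to evaluate $P(x)$ on input $x$ to obtain a noisy estimator of the tensor $T(y) \defeq \sum_{i=1}^{k} y_i^{\otimes 3}$, and then feed this tensor into the robust tensor decomposition algorithm of Section~\ref{sec:tdecomp} as a black box. Since $P$ is a degree-$D$ polynomial in $n$ Boolean variables, it can be evaluated by a circuit of size $n^{O(D)}$; the hypothesis $p(x)\ge 2^{-n^{O(1)}}$ ensures the coefficients of $P$ can be truncated to $\poly(n)$ bits without spoiling the correlation bound \eqref{eq:third-moment-correlation}. This takes care of the "uniform evaluation" side of the circuit.

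Next I would pass to a normalized view. Writing $\sigma^2 \defeq \E_x \|P(x)\|^2$, inequality \eqref{eq:third-moment-correlation} says that the random variable $Z \defeq \iprod{P(x),T(y)}/(\sigma\sqrt k)$ has $\E Z \ge \delta$. Simultaneously, pointwise Cauchy--Schwarz combined with orthonormality of $y_1,\ldots,y_k$ (so $\|T(y)\|^2 = k$ almost surely) yields $\E Z^2 \le \E\|P(x)\|^2/\sigma^2 = 1$. Paley--Zygmund then produces a "good event" $\cE$ of probability at least $\delta^2/4$ on which $\iprod{P(x),T(y)} \ge (\delta/2)\sigma\sqrt k$ and simultaneously $\|P(x)\|$ is of the correct order. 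On $\cE$, $P(x)$ is a genuine noisy proxy for a sum of cubes of orthonormal vectors.

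I would then invoke the robust tensor-decomposition primitive of Section~\ref{sec:tdecomp}. The required guarantee, to be proved there, is the following list-decoding statement: given any $3$-tensor $Q$ whose normalized correlation with some unknown orthonormal sum of cubes $\sum_{i=1}^k y_i^{\otimes 3}$ is at least $\delta^{O(1)}$, the algorithm outputs, in size $n^{\poly(1/\delta)}$, a list of $m \le n^{\poly(1/\delta)}$ unit vectors $z_1,\ldots,z_m$ such that a typical component $y_i$ has some $z_j$ in the list satisfying $\iprod{y_i,z_j}^2 \ge \delta^{O(1)}$. Applying this primitive to $P(x)$, averaging over the good event $\cE$ (which captures a $\delta^{O(1)}$ fraction of the mass), and using the exchangeability hypothesis on $y_1,\ldots,y_k$ to promote "typical $i$" to an expectation over $i\sim[k]$, yields the bound $\E_{p(x,y)}\E_{i\sim[k]}\max_j \iprod{y_i,z_j}^2 \ge \delta'$ for some $\delta' \ge \delta^{O(1)}$, which is what is claimed.

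The main obstacle is the tensor decomposition primitive itself. Classical algorithms in the spirit of Jennrich require the input to be nearly equal to a sum of cubes of an orthonormal basis, whereas our $P(x)$ has only a small constant correlation with such a tensor and is essentially arbitrary in the perpendicular component---so uniqueness-style guarantees fail and we must instead enumerate candidates. Section~\ref{sec:tdecomp} is engineered for exactly this list-decoding regime: it uses a sum-of-squares relaxation of degree $\poly(1/\delta)$ to search for a maximum-entropy pseudodistribution over unit vectors $v$ satisfying $\iprod{v^{\otimes 3},Q}\ge \delta^{O(1)}$, and extracts candidates by iterated rounding. Once that primitive is in hand, the remaining pieces of the proof---the Paley--Zygmund event $\cE$ and the exchangeability averaging---are routine.
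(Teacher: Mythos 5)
Your proposal is correct and follows essentially the same route as the paper: an averaging/Paley--Zygmund step to pass from the expected correlation \eqref{eq:third-moment-correlation} to a constant-probability event on which $P(x)$ is pointwise $\delta^{O(1)}$-correlated with $\sum_i y_i^{\otimes 3}$ (cf.\ Fact~\ref{fact:exp-to-prob}), followed by the low-correlation tensor decomposition of Section~\ref{sec:tdecomp} as a black box. The only presentational difference is that the paper's decomposition primitive operates on $4$-tensors, so the ``$3$-tensor list-decoding statement'' you defer to Section~\ref{sec:tdecomp} is realized there as the $3$-to-$4$ lifting step (Theorem~\ref{thm:3-to-4-lifting}) composed with Theorem~\ref{thm:tdecomp-main-small}/Corollary~\ref{cor:tdecomp-main}.
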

That the meta-algorithm captured by this theorem outputs a list of $n^{1/\poly(\delta)}$ vectors rather than just $k$ vectors is an artifact of the algorithmic difficulty of multilinear algebra as compared to linear algebra.
However, in most Bayesian estimation problems it is possible by using a very small number of additional samples (amounting to a low-order additive term in the total sample complexity) to cross-validate the vectors in the list $z_1,\ldots,z_m$ and throw out those which are not correlated with some $y_1,\ldots,y_k$.
Our eventual algorithm for tensor decomposition (see Section~\ref{sec:tdecomp-intro} and Section~\ref{sec:tdecomp}) bakes this step in by assuming access to an oracle which evaluates the function $v \mapsto \sum_{i \in [k]} \iprod{v,y_i}^4$.

A key component of the algorithm underlying Theorem~\ref{thm:meta-theorem-3rd} is a new algorithm for very robust orthogonal tensor decomposition.\footnote{An orthogonal $3$-tensor is $\sum_{i=1}^m a_i^{\tensor 3}$, where $a_1,\ldots,a_m$ are orthonormal.}
Previous algorithms for tensor decomposition require that the input tensor is close (in an appropriate norm) to only one orthogonal tensor.
By contrast, our tensor decomposition algorithm is able to operate on a tensor $T$ which is just $\delta \ll 1$ correlated to the orthogonal tensor $\sum y_i^{\tensor 3}$, and in particular might also be $\delta$-correlated with $1/\delta$ other orthogonal tensors.
If one views tensor decomposition as a \emph{decoding} task, taking a tensor $T$ and decoding it into its rank-one components, then our guarantees are analogous to list-decoding.
Our algorithm in this setting involves a novel entropy-maximization program which, among other things, ensures that given a tensor $T$ which for example is $\delta$-correlated with two distinct orthogonal tensors $A$ and $B$, the algorithm produces a list of vectors correlated with both the components of $A$ and those of $B$.

Applying this meta-theorem (plus a simple cross-validation scheme to prune the vectors in the $n^{1/\poly(\delta)}$-length list) to the mixed-membership block model (and its special case, the $k$-disjoint-communities block model) yields the following theorem.
(See Section~\ref{sec:result-block-model} for formal statements.)
\begin{theorem}[Main theorem on the mixed-membership block model, informal]
\label{thm:mm-intro}
  Let $\e,d,k,\alpha$ be paramters of the mixed-membership block model, and let $\delta = 1 - \tfrac{k^2(\alpha+1)^2}{\e^2 d} \geq \Omega(1)$.
  Let $y_i$ be the centered indicator vector of the $i$-th community.
  There is an $n^{1/\poly(\delta)}$-time algorithm which, given a sample $x$ from the $\e,d,k,\alpha$ block model, recovers vectors $\hat y_1(x),\ldots,\hat y_k(x)$ such that there is a permutation $\pi \, : [k] \rightarrow [k]$ with
  \[
    \E \iprod{\hat y_{\pi(i)}, y_i}^2 \geq \delta^{O(1)} (\E\|\hat y_{\pi(i)}\|^2)^{1/2} (\E \|y_i\|^2)^{1/2}\mper
  \]
\end{theorem}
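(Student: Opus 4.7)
The plan is to apply the third-moment meta-theorem (Theorem~\ref{thm:meta-theorem-3rd}) to the mixed-membership block model, with the orthonormal hidden variables $y_1,\ldots,y_k \in \R^n$ taken to be the centered and normalized indicator vectors of the $k$ communities. The posterior is exchangeable under community relabeling and the centered indicators are approximately orthonormal (exactly so after a mild whitening step), so the hypotheses of the meta-theorem apply up to constants that we can absorb into $\delta^{O(1)}$.

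The central technical task is to exhibit a tensor-valued polynomial $P(x)$ of degree $D = O(\log n)/\poly(\delta)$ in the observed adjacency matrix $x$ satisfying
\[
  \E \iprod{P(x),\, {\textstyle\sum_i} y_i^{\otimes 3}} \;\geq\; \delta^{O(1)}\,\sqrt{k}\, \bigl(\E \|P(x)\|^2\bigr)^{1/2}\mper
\]
I would lift the self-avoiding-walk matrix estimator from Theorem~\ref{thm:mm-intro-matrix} to a third-order object: take $P(x)_{a,b,c}$ to be a weighted sum over ``tripods''---triples of self-avoiding walks of length $D$ emanating from a common root and terminating at $a,b,c$---with the walk weight chosen so that the signal eigenvalue of the centered broadcast operator, $\e d/(k(\alpha+1))$, is the quantity being amplified to the $D$-th power. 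Computing $\E \iprod{P(x), y_i^{\otimes 3}}$ reduces to a Markov-chain calculation on the broadcast process along each tripod leg and yields a signal of order $(\e d/(k(\alpha+1)))^{3D}$, while computing $\E\|P(x)\|^2$ requires a collision enumeration on pairs of tripods; standard self-avoiding-walk moment bounds (as used in \cite{DBLP:conf/stoc/MosselNS15}) give a noise term of order $d^{3D}$. The ratio is $\delta^{3D}$, which for $D = O(\log n)/\poly(\delta)$ both saturates the correlation bound above and keeps the polynomial evaluable in $n^{O(D)}$ time (or $n^{O(1)}$ via color coding).

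Plugging this $P$ into Theorem~\ref{thm:meta-theorem-3rd} produces a list $z_1,\ldots,z_m$ of at most $n^{1/\poly(\delta)}$ unit vectors with $\E\bigl[\E_{i\sim[k]}\max_j \iprod{y_i,z_j}^2\bigr] \geq \delta^{O(1)}$. To convert this list into a $k$-tuple $(\hat y_1,\ldots,\hat y_k)$ with the claimed per-community correlation, I would perform a cross-validation step using a logarithmically small number of additional edges held out from $x$: for each $z_j$, estimate $\max_i \iprod{z_j,y_i}^2$ empirically by testing the bias of the held-out edges under the one-dimensional projection $z_j$; discard candidates below threshold $\delta^{O(1)}/2$; then greedily select one representative per community and label them by the permutation $\pi$ that aligns each survivor with its best-matching $y_i$. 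Exchangeability of the posterior guarantees that the pruning is balanced across communities, upgrading the averaged correlation over $i$ to a uniform per-$i$ guarantee after re-indexing.

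The main obstacle I expect is the tripod noise analysis: moving from the matrix (walk) to the tensor (tripod) setting doubles the combinatorial complexity of the collision enumeration, and one must verify that the extra leg does not introduce any contribution outside the Kesten--Stigum-dominated regime. A secondary subtlety is that the centered community indicators are orthonormal only up to $1/\sqrt{n}$-order perturbations, whereas Theorem~\ref{thm:meta-theorem-3rd} is stated for exactly orthonormal $y_i$; handling this will require either a small perturbation argument inside the robust tensor decomposition of Section~\ref{sec:tdecomp} or a pre-whitening of the tensor $\sum_i y_i^{\otimes 3}$ using the second-moment estimator from Theorem~\ref{thm:mm-intro-matrix}. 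In the mixed-membership regime where $\alpha$ is comparable to $k$, the nontrivial overlap between communities means that one genuinely needs the robust (list-decoding-style) tensor decomposition rather than classical Jennrich-type algorithms, which is exactly what the meta-theorem provides.
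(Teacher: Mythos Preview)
Your high-level architecture matches the paper's: a self-avoiding ``tripod'' (the paper calls it a long-armed 3-star) estimator for $\sum_i y_i^{\otimes 3}$, followed by robust tensor decomposition and cross-validation on a holdout set. But the variance computation you sketch is wrong, and this is the heart of the argument.

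You claim the signal-to-noise ratio comes out to $\delta^{3D}$. This cannot be right: with $\delta<1$ and $D\sim\log n$, that quantity is $n^{-\Omega(1)}$, whereas the whole point is to achieve \emph{constant} correlation $\delta^{O(1)}$. The actual calculation (Lemma~\ref{lem:block-model-cond-indep}) goes differently. One bounds $\sum_{\alpha,\beta}\E p_\alpha p_\beta$ by grouping pairs of stars according to how many edges $s$ they share along initial arm-segments; such pairs contribute, after dividing by the squared signal, roughly $\bigl(k^2(\alpha_0+1)^2/(\e^2 d)\bigr)^{s}=(1-\delta)^s$, and summing over $s$ gives a convergent geometric series bounded by $\delta^{-O(1)}$. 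The logarithmic arm-length $D$ enters only to kill the single diagonal term $\alpha=\beta$, which is down by a factor $n$ in the count but up by $\bigl(\e^2 d/(k^2(\alpha_0+1)^2)\bigr)^{3D}$ in magnitude; choosing $D\gtrsim(\log n)/\delta$ makes it negligible. So the correlation is a constant $\delta^{O(1)}$, and your ``ratio is $\delta^{3D}$'' reflects a misreading of where $\delta$ actually appears in the pairwise-independence bound.

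A few smaller discrepancies with the paper's execution: (i) rather than whitening, the paper shifts $v_s\mapsto w_s=v_s+\tfrac{1}{k\sqrt{\alpha+1}}\cdot\mathbf 1$ to obtain vectors orthogonal in expectation, and estimates $\sum_s w_s^{\otimes 3}$ by combining the 3-star estimator for $\sum_s v_s^{\otimes 3}$ with the path estimator for $\sum_s v_s^{\otimes 2}$ via the binomial expansion \eqref{eq:vw}; (ii) the low-correlation tensor decomposition actually operates on 4-tensors, so there is an explicit 3-to-4 lifting step (Theorem~\ref{thm:3-to-4-lifting}) you have not accounted for; (iii) cross-validation in the paper holds out \emph{vertices}, not edges, and uses dedicated degree-3 and degree-4 star estimators $S_3,S_4$ (Lemmas~\ref{lem:xvalid-est}--\ref{lem:xvalid-est-orth}) both to implement the oracle $\cO$ and to resolve the sign ambiguity $x\leftrightarrow -x$, which your pruning step does not address.
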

The eventual goal, as we discuss in Section~\ref{sec:result-block-model}, is to label each vertex by a probability vector $\tau_i$ which is correlated with the underlying label $\sigma_i$, but given the $\hat y$ vectors from this theorem this is easily accomplished.

\paragraph{Comparison to the sum of squares method}
The sum of squares method has been recently been a popular approach for designing algorithms for Bayesian estimation \cite{DBLP:conf/stoc/BarakKS15, DBLP:conf/colt/HopkinsSS15, DBLP:journals/corr/RaghavendraRS16, DBLP:conf/approx/GeM15}.
The technique works best in settings where the maximum-likelihood estimator can be phrased as a polynomial optimization problem (subject to semialgebraic constraints).
Then the strategy is to use the sum of squares method to obtain a strong convex relaxation of the maximum-likelihood problem, solve this relaxation, and round the result.

This strategy has been quite successful, but thus far it does not seem to allow the sharp (up to low-order additive terms) sample-complexity guarantees we study here.
(Indeed, for some problems, including the stochastic block model, it is not clear that maximum likelihood estimation recovers those guarantees, much less the SoS-relaxed version.)

One similarity between our algorithms and these applications of sum of squares is that the rounding procedures used at the end often involve tensor decomposition, which is itself often done via the sum of squares method.
We do employ the SoS algorithm as a black box to solve tensor decomposition problems for versions of our algorithm which use higher moments.

Recent works on SoS show that the low-degree polynomials computed by our meta-algorithm are closely connected to \emph{lower bounds} for the SoS hierarchy, though this connection remains far from fully understood.
The recent result \cite{DBLP:conf/focs/BarakHKKMP16} on the planted clique problem first discovered this connection.
The work \cite{soslb} (written concurrently with the present paper) shows that this connection extends far beyond the planted clique setting.

\paragraph{Comparison to the method of moments}
Another approach for designing statistical estimators for provable guarantees is the method of moments.
Typically one considers parameters $\theta$ (which need not have a prior distribution $p(\theta)$) and iid samples $x_1,\ldots,x_n \sim p(x | \theta)$.
Generally one shows that the moments of the distribution $\{x | \theta\}$ are related to some function of $\theta$: for example perhaps $\E[xx^\top \, | \, \theta] = f(\theta)$.
Then one uses the samples $x_i$ to estimate the moment $M = \E[xx^\top \, | \, \theta]$, and finally to estimate $\theta$ by $f^{-1}(M)$.

While the method of moments is quite flexible, for the high-noise problems we consider here it is not clear that it can achieve optimal sample complexity.
For example, in our algorithms (and existing sample-optimal algorithms for the block model) it is important to exploit the flexibility to compute any polynomial of the samples jointly---given $n$ samples our algorithms can evaluate a polynomial $P(x_1,\ldots,x_n)$, and $P$ often will not be an empirical average of some simpler function like $\sum_{i \leq n} q(x_i)$.
The best algorithm for the mixed-membership block model before our work uses the method of moments and consequently requires much denser graphs than our method \cite{DBLP:journals/jmlr/AnandkumarGHK14}.

\Dcomment{}

\Dnote{}

\Dcomment{}

\Dcomment{}

\Dcomment{}

\Dcomment{}

\subsection{Detecting overlapping communities}
\label{sec:result-block-model}
We turn now to discuss our results for stochastic block models in more detail and compare them to the existing literature.
\Dnote{}

The stochastic block model is a widely studied (family of) model(s) of random graphs containing latent community structure.
It is most common to study the block model in the sparse graph setting: many large real-world networks are sparse, and the sparse graph setting is nearly always more mathematically challenging than the dense setting.
A series of recent works has for the first time obtained algorithms which recover communities in block model graphs under (conjecturally) optimal sparsity conditions.
For an excellent survey, see \cite{DBLP:journals/corr/Abbe17}.

Such sharp results remain limited to relatively simple versions of the block model; where, in particular, each vertex is assigned a single community in an iid fashion.
A separate line of work has developed more sophisticated and realistic random graph models with latent community structure, with the goal of greater applicability to real-life networks.
The mixed-membership stochastic block model \cite{DBLP:conf/nips/AiroldiBFX08} is one such natural extension of the stochastic block model that allows for communities to overlap, as they do in large networks found in the wild.

In addition to the number of vertices $n$, the average degree $d$, the correlation parameter $\e$, and the number of communities $k$, this model has an overlap parameter $\alpha\ge 0$ that controls how many communities a typical vertex participates in.
Roughly speaking, the model generates an $n$-vertex graph by choosing $k$ communities as random vertex subsets of size $(1+\alpha)n/k$ and choosing $d n /2$ random edges, favoring pairs of vertices that have many communities in common.

\begin{definition}[Mixed-membership stochastic block model]
  \label{def:mixed-membership-sbm}
The mixed-membership stochastic block model $\sbm(n,d,\e,k,\alpha)$  is the following distribution over $n$-vertex graphs $G$ and $k$-dimensional probability vectors $\sigma_1,\ldots,\sigma_n$ for the vertices:
\begin{compactitem}
\item draw $\sigma_1,\ldots,\sigma_n$ independently from $\Dir(\alpha)$ the symmetric $k$-dimensional Dirichlet distribution with parameter $\alpha\ge 0$,\footnote{In the symmetric $k$-dimensional Dirichlet distribution with parameter $\alpha> 0$, the probability of a probability vector $\sigma$ is proportional to $\prod_{t=1}^k \sigma(t)^{\alpha/k - 1 }$. By passing to the limit, we define $\Dir(0)$ to be the uniform distribution over the coordinate vectors $\Ind_1,\ldots,\Ind_k$.}
\item for every potential edge $\set{i,j}$, add it to $G$ with probability $\tfrac d n \cdot \Bigparen{1 + \bigparen{\iprod{\sigma_i,\sigma_j}-\tfrac 1k}\e }$.
\end{compactitem}
\end{definition}
Due to symmetry, $\iprod{\sigma_i,\sigma_j}$ has expected value $\tfrac 1k$, which means that the expected degree of every vertex in this graph is $d$.
In the limit $\alpha \to 0$, the Dirichlet distribution is equivalent to the uniform distribution over coordinate vectors $\Ind_1,\ldots,\Ind_k$ and the model becomes $\sbm(n,d,\e,k)$, the stochastic block model with $k$ \emph{disjoint} communities.
For $\alpha=k$, the Dirichlet distribution is uniform over the open $(k-1)$-simplex \cite{wiki:Dirichlet_distribution}.
For general values of $\alpha$, a probability vector from $\Dir(\alpha)$ turns out to have expected collision probability $(1-\tfrac 1k)\tfrac {1}{\alpha+1} + \tfrac 1 k$, which means that  we can think of the probability vector being concentrated on about $\alpha+1$ coordinates.\footnote{When $k$ and $\alpha$ are comparable in magnitude, it is important to interpret this more accurately as $(\alpha + 1) \cdot \tfrac k {k+\alpha}$ coordinates.}
This property of the Dirichlet distribution is what determines the threshold for our algorithm.
Correspondingly, our algorithm and analysis extends to a large class of distributions over probability vectors that share this property.

\paragraph{Measuring correlation with community structures}
In the constant-average-degree regime of the block model, recovering the label of every vertex correctly is information-theoretically impossible.
For example, no information is present in a typical sample about the label of any isolated vertex, and in a typical sample a constant fraction of the vertices are isolated.
Instead, at least in the $k$-disjoint-community setting, normally one looks to label vertices by labels $1,\ldots,k$ so that (up to a global permutation), this labeling has positive correlation with the true community labels.

When the communities are disjoint, one can measure such correlation using the sizes of $|S_j \cap \widehat{S}_j|$, where $S_j \subseteq [n]$ is the set of nodes in community $j$ and $\widehat{S}_j$ is an estimated set of nodes in community $j$.
The original definition of \emph{overlap}, the typical measure of labeling-accuracy in the constant-degree regime, takes this approach \cite{DBLP:journals/corr/abs-1109-3041}.

For present purposes this definition must be somewhat adapted, since in the mixed-membership block model there is no longer a good notion of a discrete set of nodes $S_j$ for each community $j \in [k]$.
We define a smoother notion of correlation with underlying community labels to accommodate that the labels $\sigma_i$ are vectors in $\Delta_{k-1}$.
In discrete settings, for example when $\alpha \rightarrow 0$ (in which case one recovers the $k$-disjoint-community model), or more generally when each $\sigma_i$ is the uniform distribution over some number of communities, our correlation measure recovers the usual notion of overlap.

Let $\sigma=(\sigma_1,\ldots,\sigma_n)$ and $\tau = (\tau_1,\ldots,\tau_n)$ be labelings of the vertices $1,\ldots,n$ by by $k$-dimensional probability vectors.
We define the \emph{correlation} $\corr(\sigma,\tau)$ as
\begin{equation}
\max_{\pi} \E_{i \sim n} \iprod{\sigma_i,\tau_{\pi(i)}} - \frac 1k \label{eq:mixed-membership-correlation}
\end{equation}
where $\pi$ ranges over permutations of the $k$ underlying communities.
This notion of correlation is closely related to the \emph{overlap} of the distributions $\sigma_i, \tau_i$.

To illustrate this notion of correlation, consider the case of disjoint communities (i.e., $\alpha=0$), where the ground-truth labels $\tau_i$ are indicator vectors in $k$ dimensions.
Then, if $\E_i \iprod{\sigma_i, \tau_{\pi(i)}} - \tfrac 1k > \delta$, by looking at the large coordinates of $\sigma_i$ it is possible to correctly identify the community memberships of a $\delta^{O(1)} + \tfrac 1k$ fraction of the vertices, which is a $\delta^{O(1)}$ fraction more than would be identified by randomly assigning labels to the vertices without looking at the graph.

When the ground truth labels $\tau_i$ are spread over more than one coordinate---say, for example, they are uniform over $t$ coordinates---the best recovery algorithm cannot find $\sigma$'s with correlation better than 
\[
\corr(\sigma,\tau) = \frac 1 {t} - \frac 1k\mcom
\]
which is achieved by $\sigma = \tau$. 
This is because in this case $\tau$ has collision probability $\iprod{\tau,\tau} = \tfrac 1 {t}$.

\paragraph{Main result for mixed-membership models}
The following theorem gives a precise bound on the number of edges that allows us to find in polynomial time a labeling of the vertices of an $n$-node mixed membership block model having nontrivial correlation with the true underlying labels.
Here, the parameters $d,\e,k,\alpha$ of the mixed-membership stochastic block model may even depend on the number of vertices $n$.

\begin{theorem}[Mixed-membership SBM---significant correlation]
  \label{thm:mixed-membership-sbm}
  Let $d,\e,k,\alpha$ be such that $k\le n^{o(1)}$, $\alpha\le n^{o(1)}$, and $\e^2 d \le n^{o(1)}$.
  Suppose $\delta \defeq 1 - \tfrac{k^2(\alpha+1)^2}{\e^2 d} > 0$.
  (Equivalently for small $\delta$, suppose $\e^2 d \geq (1 + \delta) \cdot k^2 (\alpha+1)^2$.)
  Then, there exists $\delta'\ge \delta^{O(1)} >0$ and an $n^{1/\poly(\delta)}$-time algorithm that given an $n$-vertex graph $G$ outputs $\tau_1,\ldots,\tau_n \in \Delta_{k-1}$ such that
  \begin{equation}
    \E_{(G,\sigma)\sim \sbm(n,d,\e,k,\alpha)}  \corr(\sigma,\tau)\ge \delta' \cdot \Paren{\frac 1t - \frac 1k}
  \end{equation}
  where $t = (\alpha+1) \cdot \tfrac k {k+\alpha}$ (samples from the $\alpha,k$ Dirichlet distribution are roughly uniform over $t$ out of $k$ coordinates).
  In particular, as $\delta \rightarrow 1$ we have $\E \corr(\sigma,\tau) \rightarrow \tfrac 1t - \tfrac 1k$, while $\E \corr(\sigma,\sigma) = \tfrac 1t - \tfrac 1k$.
\end{theorem}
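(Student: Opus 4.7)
The plan is to invoke the third-moment meta-algorithm (Theorem~\ref{thm:meta-theorem-3rd}) with the right low-degree polynomial $P$, decode via the robust orthogonal tensor decomposition, and convert the resulting vectors into per-vertex probability-vector labelings.

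\textbf{Step 1: orthonormal community representatives.} First I would identify the vector-valued random variables $y_1,\ldots,y_k$ to which the meta-theorem will be applied. The natural choice is, up to centering and normalization, $y_i(\sigma)\propto (\sigma_1(i)-\tfrac 1k, \ldots, \sigma_n(i)-\tfrac 1k)\in\R^n$, i.e.\ the centered coordinate profile of community $i$ across the $n$ vertices. By concentration of the Dirichlet distribution, the random vectors $\{y_i\}$ are approximately orthonormal (exactly exchangeable, and orthonormal up to lower-order terms in $n$); handling the small inner-product errors is a routine perturbation step that I would absorb into $\delta'$.

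\textbf{Step 2: construct $P(x)$ from self-avoiding walks.} Following the paper's Section on tensor estimation and mirroring the matrix construction underlying Theorem~\ref{thm:mm-intro-matrix}, I would take $P(x)$ to be a $(\R^n)^{\otimes 3}$-valued polynomial whose $(u,v,w)$ entry counts (appropriately signed/weighted) self-avoiding walks in the observed graph $x$ that ``merge'' at a common internal vertex connecting $u$, $v$, and $w$. The degree is $D=O(\log n)/\delta^{O(1)}$. The main analytic task is to evaluate $\E_{(x,\sigma)}\iprod{P(x),\sum_i y_i^{\otimes 3}}$ and $\E\|P(x)\|^2$. Expanding both expectations as sums over walk shapes, each walk contributes a product of edge-indicator covariances whose typical value is $(\tfrac{\e d}{n})^{\text{length}} \cdot \iprod{\sigma,\sigma,\sigma}$-type correlators; the Dirichlet collision-probability identity $\E\iprod{\sigma_i,\sigma_j}=\tfrac 1t$ with $t=(\alpha+1)\cdot\tfrac k{k+\alpha}$ is what produces the threshold $\e^2 d >k^2(\alpha+1)^2$. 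The walk contributions telescope into a geometric series that converges (giving constant-sized correlation) exactly when $\e^2 d/[k^2(\alpha+1)^2]>1$, and the surplus $\delta$ controls the rate. I would also need to show $P$ can be evaluated in time $n^{1/\poly(\delta)}$ via color coding, as indicated in Section~\ref{sec:meta-intro}.

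\textbf{Step 3: tensor decomposition and pruning.} Applying Theorem~\ref{thm:meta-theorem-3rd} yields a list $z_1,\ldots,z_m$ of unit vectors with $m\le n^{\poly(1/\delta)}$ such that $\E_i\max_j \iprod{y_i,z_j}^2\ge \delta^{O(1)}$. I would then cross-validate against a small independent hold-out of edges (a low-order additive sample cost) to discard $z_j$'s not correlated with any true $y_i$; this yields, after symmetry-breaking within each cluster, a set of $k$ candidate vectors $\hat y_1,\ldots,\hat y_k$ each correlated with a distinct $y_{\pi(i)}$ as in Theorem~\ref{thm:mm-intro}.

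\textbf{Step 4: from community vectors to per-vertex probability labelings.} Finally, I would define $\tau_j\in\Delta_{k-1}$ by computing, for each vertex $j$, the vector of (clipped, renormalized) inner products $\hat y_i(j)$ across the $k$ candidate vectors. The expected inner product $\E\iprod{\sigma_j,\tau_{\pi(j)}}$ can be written linearly in the per-community correlation guaranteed in the previous step, and the Dirichlet identity $\E\iprod{\sigma_j,\sigma_j}=\tfrac 1t$ gives the target bound $\corr(\sigma,\tau)\ge \delta'\paren{\tfrac 1t-\tfrac 1k}$. The main obstacle throughout is Step~2: the walk-moment calculation is where the Kesten--Stigum-type threshold $k^2(\alpha+1)^2$ is forced, and controlling the variance of $P$ requires carefully isolating the dominant walk shapes and showing higher-genus contributions are lower-order once $D$ is $\Theta(\log n)/\delta^{O(1)}$.
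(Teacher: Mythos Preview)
Your proposal follows essentially the same architecture as the paper's proof (Section~\ref{sec:mm}): long-armed self-avoiding stars as the polynomial $P$, third-moment correlation analysis yielding the $k^2(\alpha+1)^2$ threshold, low-correlation tensor decomposition with a holdout-based cross-validation oracle, and a cleanup step to produce probability vectors. The main calculation you flag in Step~2 is exactly Lemma~\ref{lem:mm-estimator} together with Lemmas~\ref{lem:block-model-unbiased-estimator}--\ref{lem:block-model-cond-indep}.

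Three technical points you gloss over deserve mention, as the paper handles each explicitly. First, the centered community vectors $v_s$ are \emph{not} approximately orthogonal: $\E\iprod{v_s,v_t}\neq 0$ for $s\neq t$ (they sum to zero), so ``concentration of the Dirichlet'' alone does not suffice. The paper shifts to $w_s=v_s+\tfrac{1}{k\sqrt{\alpha+1}}\cdot 1$ to force $\E\iprod{w_s,w_t}=0$ (Lemma~\ref{lem:shift-and-whiten}) and estimates $\sum_s w_s^{\otimes 3}$ rather than $\sum_s v_s^{\otimes 3}$. Second, tensor decomposition only certifies $\iprod{\hat y,y_i}^2\ge\delta^{O(1)}$; to get correlation with the right \emph{sign} (needed for $\corr(\sigma,\tau)>0$ rather than $\corr(\sigma,-\tau)>0$), the paper evaluates an odd-degree estimator $S_3$ on the holdout set and flips signs accordingly (step~\ref{itm:check-signs} of Algorithm~\ref{alg:tiny-delta} and the analogous step in Algorithm~\ref{alg:small-delta}). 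Your ``symmetry-breaking within each cluster'' would need to make this explicit. Third, the paper's cleanup (Lemma~\ref{lem:cleanup}, via Fact~\ref{fact:permutation-small}) produces a permutation with correlation $\delta^5-O(1/\sqrt{k})$, which is only positive when $\delta\gg k^{-1/10}$; for tiny $\delta$ the paper falls back to the matrix algorithm of Theorem~\ref{thm:mm-main-warmup}, whose $(\delta/k)^{O(1)}$ guarantee is still $\delta^{O(1)}$ in that regime. Without this case split your argument would not deliver $\delta'\ge\delta^{O(1)}$ uniformly.
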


Note that in the above theorem, the correlation $\delta'$ that our algorithm achieves depends only on $\delta$ (the distance to the threshold) and in particular is independent of $n$ and $k$ (aside from, for the latter, the dependence on $k$ via $\delta$).
For disjoint communities ($\alpha=0$), our algorithm achieves constant correlation with the planted labeling if $\e^2 d /k^2$ is bounded away from $1$ from below.

We conjecture that the threshold achieved by our algorithm is best-possible for polynomial-time algorithms.
Concretely, if $d,\e,k,\alpha$ are constants such that $\e^2d < k^2 (\alpha+1)^2$, then we conjecture that for every polynomial-time algorithm that given a graph $G$ outputs $\tau_1,\ldots,\tau_n \in \Delta_{k-1}$,
\begin{equation}
  \lim_{n\to \infty} \E_{(G,\sigma)\sim \sbm(n,d,\e,k,\alpha)} \corr(\sigma,\tau)=0\,.
\end{equation}

This conjecture is a natural extension of a conjecture for disjoint communities ($\alpha=0$), which says that beyond the Kesten--Stigum threshold, i.e., $\e^2 d < k^2$, no polynomial-time algorithm can achieve correlation bounded away from $0$ with the true labeling \cite{DBLP:journals/corr/Moore17}.
For large enough values of $k$, this conjecture predicts a computation-information gap because the condition $\e^2 d \ge \Omega(k \log k)$ is enough for achieving constant correlation information-theoretically (and in fact by a simple exponential-time algorithm).
We discuss these ideas further in Section~\ref{sec:intro-gaps}.

\paragraph{Comparison with previous matrix-based algorithms}
We offer a reinterpretation in our meta-algorithmic framework of the algorithms of Mossel-Neeman-Sly and Abbe-Sandon.
This will permit us to compare our algorithm for the mixed-membership model with what could be achieved by the methods in these prior works, and to point out one respect in which our algorithm improves on previous ones even for the disjoint-communities block model.
The result we discuss here is a slightly generalized version of Theorem~\ref{thm:mm-intro-matrix}.

Let $\cU$ be a (possibly infinite or continuous) universe of labels, and let $W$ assign to every $x,y \in \cU$ a nonnegative real number $W(x,y) = W(y,x) \geq 0$.
Let $\mu$ be a probability distribution on $\cU$, which induces the inner product of functions $f,g \, : \, \cU \rightarrow \R$ given by $\iprod{f,g} = \E_{x \sim \mu} f(x) g(x)$.
The function $W$ can be considered as linear operator on $\{f \, : \, \cU \rightarrow \R\}$, and under mild assumptions it has eigenvalues $\lambda_1,\lambda_2,\ldots$ with respect to the inner product $\iprod{\cdot, \cdot}$.

The pair $\mu,W$ along with an average degree parameter $d$ induce a generalized stochastic block model, where labels for nodes are drawn from $\mu$ and an edge between a pair of nodes with labels $x$ and $y$ is present with probability $\tfrac dn \cdot W(x,y)$.
When $\cU$ is $\Delta_{k-1}$ and $\mu$ is the Dirichlet distribution, this captures the mixed-membership block model.

Assume $\lambda_1 = 1$ and that $\mu$ and $W$ are sufficiently \emph{nice} (see Section~\ref{sec:W} for all the details).
Then one can rephrase results of Abbe and Sandon in this setting as follows.

\begin{theorem}[Implicit in \cite{DBLP:conf/nips/AbbeS16}]
  Suppose the operator $W$ has eigenvalues $1 = \lambda_1 > \lambda_2 > \dots > \lambda_r$ (each possibly with higher multiplicity) and $\delta \defeq 1 - \tfrac 1 {d \lambda_2^2} > 0$.
  Let $\Pi$ be the projector to the second eigenspace of the operator $W$.
  For types $x_1,\ldots,x_n \sim \cU$, let $A \in \R^{n \times n}$ be the random matrix $A_{ij} = \Pi(x_i,x_j)$, where we abuse notation and think of $\Pi \colon \cU \times \cU \rightarrow \R$.
  There is an algorithm with running time $n^{\poly(1/\delta)}$ which outputs an $n \times n$ matrix $P$ such that for $x,G \sim G(n,d,W,\mu)$,
  \[
  \E_{x,G} \Tr P \cdot A \geq \delta^{O(1)} \cdot (\E_{x,G} \|A\|^2)^{1/2} (\E_{x,G} \|P\|^2)^{1/2}\mper
  \]
\end{theorem}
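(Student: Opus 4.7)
The plan is to construct $P$ as an appropriately normalized sum of \emph{self-avoiding walk} (SAW) statistics on the observed graph, following the template of this paper's meta-algorithm. Fix a walk length $\ell = C (\log n) / \delta^{C'}$ for large constants $C, C' > 0$, and define
\[
P_{ij} \;=\; \frac{1}{n (d \lambda_2)^\ell} \sum_{\gamma : i \to j} \prod_{t=0}^{\ell-1} \Paren{ G_{v_t v_{t+1}} - \tfrac{d}{n}} ,
\]
where $\gamma = (v_0, \ldots, v_\ell)$ ranges over length-$\ell$ SAWs from $i$ to $j$ in the complete graph and $G_{uv} \in \{0,1\}$ is the indicator of the edge $\{u,v\}$. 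This is a degree-$\ell$ polynomial in the entries of $G$, exactly the kind of estimator the meta-algorithm is designed to exploit.

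First I would compute the \textbf{signal} $\E_{x,G} \Tr(PA)$. Conditioning on the types $x_1,\ldots,x_n$ and using $\E[G_{uv} - d/n \mid x_u,x_v] = (d/n)(W(x_u,x_v)-1) = (d/n) W_0(x_u,x_v)$, where $W_0 \defeq W - \mathbf{1}\mathbf{1}^\top$ is $W$ with its top (constant) eigenspace removed, the independence of edges makes the expected walk product factor as $\prod_t (d/n) W_0(x_{v_t},x_{v_{t+1}})$. Summing over the $(1-o(1)) n^{\ell-1}$ length-$\ell$ SAWs from $i$ to $j$ and averaging the interior types against $\mu$ gives $\E[P_{ij}\mid x] \approx \lambda_2^{-\ell} W_0^\ell(x_i,x_j)$. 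Since the nonzero eigenvalues of $W_0$ are $\lambda_2 > \lambda_3 > \cdots$, the operator $\lambda_2^{-\ell} W_0^\ell$ converges to $\Pi$ geometrically in $\ell$, so taking $\ell$ logarithmic in $n$ makes the error negligible and yields $\E \Tr(PA) = (1-o(1)) \E \|A\|_F^2$.

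The main obstacle is bounding the \textbf{noise} $\E \|P\|_F^2$. This is the standard SAW spectral computation underlying recent progress on sparse block models. I would expand $\|P\|_F^2$ as a sum over ordered pairs of length-$\ell$ SAWs with matching endpoints, classify each pair by the combinatorics of its vertex/edge intersections, and bound the contribution of a pair sharing $s$ edges by roughly $(d/n)^{2\ell - s}$ in expectation over both $x$ and $G$ (this is where the centering of $G_{uv}$ around $d/n$ is essential: a non-shared edge must appear in the product an even number of times or its contribution cancels). Summing over intersection patterns then produces a geometric series whose convergence is governed by $d\lambda_2^2$; the hypothesis $d \lambda_2^2 \ge 1/(1-\delta)$ is exactly what makes the series converge and gives $\E \|P\|_F^2 \le \delta^{-O(1)} \cdot \E \|A\|_F^2$. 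Combined with the signal estimate this yields $\E \Tr(PA) \ge \delta^{O(1)} (\E \|P\|_F^2)^{1/2}(\E \|A\|_F^2)^{1/2}$, as required.

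Finally, naive evaluation of $P_{ij}$ takes $n^{O(\ell)} = n^{O(\log n)/\delta^{O(1)}}$ time, so to obtain the claimed $n^{\poly(1/\delta)}$ running time I would apply the Alon--Yuster--Zwick color-coding technique already invoked elsewhere in the paper: randomly assign each vertex one of $O(\ell)$ colors and restrict the SAW sum to \emph{colorful} walks, whose vertices carry distinct colors. Colorful walks between fixed endpoints can be enumerated by dynamic programming on color subsets in time $2^{O(\ell)}\poly(n) = n^{\poly(1/\delta)}$, and averaging (or derandomizing) over colorings rescales $P$ only by a harmless $O(1)$ factor that is absorbed into the $\delta^{O(1)}$ correlation bound.
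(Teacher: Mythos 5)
Your proposal is correct and follows essentially the same route as the paper: the same self-avoiding-walk polynomial, the same conditional-expectation computation showing it is an unbiased estimator of $\overline{W}^{\ell-1}(x_i,x_j)$ (your $W_0$ is the paper's $\overline W = W-1$), the same second-moment analysis over pairs of walks classified by their intersection pattern with convergence governed by $d\lambda_2^2$, the same observation that $\lambda_2^{-\ell}\overline W^{\ell}$ converges to $\Pi$ for logarithmic $\ell$, and the same use of color coding for the running time. The paper merely packages the signal/noise comparison through its generic Lemma on approximately pairwise-independent unbiased estimators rather than computing $\E\Tr(PA)$ and $\E\|P\|_F^2$ separately, which is a presentational rather than substantive difference.
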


In one way or another, existing algorithms for the block model in the constant-degree regime are all based on estimating the random matrix $A$ from the above theorem, then extracting from an estimator for $A$ some labeling of vertices by communities.
In our mixed-membership setting, one may show that the matrix $A$ is $\sum_{s \in [k]} \dyad{v_s}$, where $v_s \in \R^n$ has entries $v_s(i) = \sigma_i(s) - \tfrac 1k$.
Furthermore, as we show in Section~\ref{sec:W}, the condition $d \lambda_2^2 > 1$ translates for the mixed-membership model to the condition $\e^2 d > k(\alpha+1)^2$, which means that under the same hypotheses as our main theorem on the mixed-membership model it is possible in polynomial time to evaluate a constant-correlation estimator for $\sum_{s \in [k]} \dyad{v_s}$.
As we discussed in Section~\ref{sec:meta-intro}, however, extracting estimates for $v_1,\ldots,v_k$ (or, almost equivalently, estimates for $\sigma_1,\ldots,\sigma_n$) from this matrix seems to incur an inherent $1/k$ loss in the correlation.
Thus, the final guarantee one could obtain for the mixed-membership block model using the techniques in previous work would be estimates $\tau_1,\ldots,\tau_n$ for $\sigma_1,\ldots,\sigma_n$ such that $\corr(\sigma,\tau) \geq \Paren{\frac \delta k}^{O(1)}$.\footnote{In fact, it is not clear one can obtain even this guarantee using strictly matrix methods. Strictly speaking, in estimating, say, $v_1$ using the above described matrix method, one obtains a unit vector $v$ such that $\iprod{v,v_1}^2 \geq \Omega(1) \cdot \|v_1\|^2$. Without knowing whether $v$ or $-v$ is the correct vector it is not clear how to transform estimates for the $v_s$'s to estimates for the $\sigma$'s.
However, matrix methods cannot distinguish between $v_s$ and $-v_s$.
In our main algorithm we avoid this issue because the 3rd moments $\sum v_s^{\tensor 3}$ are not sign-invariant.}
We avoid this loss in our main theorem via tensor methods.

Although this $1/k$ multiplicative loss in the correlation with the underlying labeling is not inherent in the disjoint-community setting (roughly speaking this is because the matrix $A$ is a $0/1$ block-diagonal matrix), previous algorithms nonetheless incur such loss.
(In part this is related to the generality of the work of Abbe and Sandon: they aim to allow $W$ where $A$ might only have rank one, while in our settings $A$ always has rank $k-1$. For low-rank $A$ this $1/k$ loss is probably necessary for polynomial time algorithms.)

Thus our main theorem on the mixed membership model offers an improvement on the guarantees in the previous literature even for the disjoint-communities setting: when $W$ only has entries $1-\e$ and $\e$ we obtain a labeling of the vertices whose correlation with the underlying labeling depends only on $\delta$.
This allows the number $k$ of communities to grow with $n$ without incurring any loss in the correlation (so long as the average degree of the graph grows accordingly).

For further discussion of the these results and a proof of the above theorem, see Section~\ref{sec:W}.

\paragraph{Comparison to previous tensor algorithm for mixed-membership models}
Above we discussed a reinterpretation (allowing a continuous space $\cU$ of labels) of existing algorithms for the constant-average-degree block model which would give an algorithm for the mixed-membership model, and discussed the advantages of our algorithm over this one.
Now we turn to algorithms in the literature which are specifically designed for stochastic block models with overlapping communities.

The best such algorithm requires $\e^2 d \ge O(\log n)^{O(1)} \cdot k^2 (\alpha +1)^2$ \cite{DBLP:conf/colt/AnandkumarGHK13}.
Our bound saves the $O(\log n)^{O(1)}$ factor.
(This situation is analogous to the standard block model, where simpler algorithms based on eigenvectors of the adjacency matrix require the graph degree to be logarithmic.)
Notably, like ours this algorithm is based on estimating the tensor $T = \sum_{s \in [k]} v_s^{\tensor 3}$, where $v_s \in \R^n$ has entries $v_s(i) = \sigma_i(s) - \tfrac 1k$.
However, the algorithm differs from ours in two key respects.
\begin{compactenum}
  \item The algorithm \cite{DBLP:conf/colt/AnandkumarGHK13} estimates the tensor $T$ using a $3$-tensor analogue of a high power of the adjacency matrix of an input graph, while we use self-avoiding walks (which are rather like a tensor analogue of the nonbacktracking operator).
  \item The tensor decomposition algorithm used in \cite{DBLP:conf/colt/AnandkumarGHK13} to decompose the (estimator for the) tensor $T$ tolerates much less error than our tensor decomposition algorithm; the result is that a higher-degree graph is needed in order to obtain a better estimator for the tensor $T$.
\end{compactenum}
The setting considered by \cite{DBLP:conf/colt/AnandkumarGHK13} does allow a more sophisticated version of the Dirichlet distribution than we allow, in which different communities have different sizes.
It is an interesting open problem to extend the guarantees of our algorithm to that setting.

\Dnote{}

\Dnote{}

\Dnote{}

\subsection{Low-correlation tensor decomposition}
\label{sec:tdecomp-intro}
Tensor decomposition is the following problem.
For some unit vectors $a_1,\ldots,a_m \in \R^n$ and a constant $k$ (often $k = 3$ or $4$), one is given the tensor $T = \sum_{i=1}^m a_i^{\tensor k} + E$, where $E$ is some error tensor.
The goal is to recover vectors $b_1,\ldots,b_m \in \R^n$ which are as close as possible to $a_1,\ldots,a_m$.

Tensor decomposition has become a common primitive used by algorithms for parameter learning and estimation problems \cite{comon2010handbook, DBLP:journals/jmlr/AnandkumarGHKT14, gelearning, DBLP:conf/stoc/GoyalVX14, DBLP:conf/stoc/BarakKS15,DBLP:conf/focs/MaSS16,DBLP:conf/colt/SchrammS17}.
In the simplest examples, the hidden variables are orthogonal vectors $a_1,\ldots,a_m$ and there is a simple function of the observed variables which estimates the tensor $\sum_{i \leq m} a_i^{\tensor k}$ (often an empirical $k$-th moment of observed variables suffices).
Applying a tensor decomposition algorithm to such an estimate yields estimates of the vectors $a_1,\ldots,a_m$.

We focus on the case that $a_1,\ldots,a_m$ are orthonormal.
Algorithms for this case are already useful for a variety of learning problems, and it is often possible to reduce more complicated problems to the orthonormal case using a small amount of side information about $a_1,\ldots,a_m$ (in particular their covariance $\sum_{i=1}^m \dyad{a_i}$).
In this setting the critical question is: how much error $E$ (and measured in what way) can the tensor decomposition algorithm tolerate and still produce useful outputs $b_1,\ldots,b_m$?

When we use tensor decomposition in our meta-algorithm, the error $E$ will be incurred when estimating $\sum_{i=1}^m a_i^{\tensor k}$ from observable samples.
Using more samples would decrease the magnitude of $T - \sum_{i=1}^m a_i^{\tensor k}$, but because our goal is to obtain algorithms with optimal sample complexity we need a tensor decomposition algorithm which is robust to greater errors than those in the existing literature.

Our main theorem on tensor decomposition is the following.
\begin{theorem}[Informal]
  For every $\delta > 0$, there is a randomized algorithm with running time $n^{1/\poly(\delta)}$ that given a 3-tensor $T\in(\R^n)^{\otimes 3}$ and a parameter $k$ outputs $n^{\poly(1/\delta)}$ unit vectors $b_1,\ldots,b_m$ with the following property:
  if $T$ satisfies $\iprod{T,\sum_{i=1}^k a_i^{\otimes 3}} \geq \delta \cdot \|T\| \cdot \sqrt{k}$ for some orthonormal vectors $a_1,\ldots,a_k$, then
  \[
  \E_{i \sim [k]} \max_{j \in [m]} \iprod{a_i,b_j}^2 \geq \delta^{O(1)}\mper
  \]
  Furthermore, if the algorithm is allowed to make $n^{1/\poly(\delta)}$ calls to an oracle $\cO$ which correctly answers queries of the form ``does the unit vector $v$ satisfy $\sum_{i=1}^m \iprod{a_i,v}^4 \geq \delta^{O(1)}$?'', then it outputs just $k$ orthonormal vectors, $b_1,\ldots,b_k$ such that there is a permutation $\pi \, : \, [k] \rightarrow [k]$ with
  \begin{displaymath}
    \E_{i \in [k]} \iprod{a_i, b_{\pi(i)}}^2 \geq \delta^{O(1)}\mper
  \end{displaymath}
  (These guarantees hold in expectation over the randomness used in the decomposition algorithm.)
\end{theorem}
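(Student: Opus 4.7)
The plan is to use a sum-of-squares (SoS) relaxation with an entropy-maximization constraint, as foreshadowed by the introduction. Concretely, I would set up the following degree-$D$ program, where $D = \poly(1/\delta)$: find a pseudo-distribution $\pE$ over a vector $v \in \R^n$ satisfying $\|v\|^2 = 1$, that maximizes $\pE\,\iprod{T, v^{\otimes 3}}$ subject to an entropy constraint of the form $\pE[(v^{\otimes D/2})(v^{\otimes D/2})^{\mkern-1.5mu\mathsf{T}}] \sle \tfrac{C}{k}\cdot \Id$ for some absolute constant $C$. The role of the entropy constraint is crucial: it prevents $\pE$ from concentrating on a single top direction, which is what would happen for a plain-vanilla SoS tensor decomposition in the very-low-correlation regime. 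In particular, if $T$ happens to be simultaneously correlated with two or more orthogonal tensors $\sum_i a_i^{\otimes 3}$ and $\sum_i b_i^{\otimes 3}$, the entropy bound forces $\pE$ to spread its mass over both bases, so that the rounding step can extract components of each.

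For feasibility, I would exhibit the honest distribution $\mu$ that picks $v$ uniformly from $\{a_1,\ldots,a_k\}$: its moments satisfy the entropy bound (since $\E_\mu v^{\otimes D/2}(v^{\otimes D/2})^{\mkern-1.5mu\mathsf{T}} = \tfrac1k \sum_i a_i^{\otimes D/2}(a_i^{\otimes D/2})^{\mkern-1.5mu\mathsf{T}}$ has operator norm $\tfrac1k$ by orthonormality), and achieves objective value $\tfrac1k \iprod{T,\sum_i a_i^{\otimes 3}} \ge \delta\, \|T\|/\sqrt k$. So the pseudo-distribution $\pE$ returned by the SoS program inherits at least this value. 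The key structural claim I would then prove is a sum-of-squares identity: under the unit-norm and entropy constraints, the bound $\pE\,\iprod{T, v^{\otimes 3}} \ge \delta\,\|T\|/\sqrt k$ SoS-implies $\pE\,\bigl[\sum_i \iprod{a_i, v}^{\Omega(D)}\bigr] \ge \delta^{O(1)}$, so on average over $i$, the pseudo-distribution places nonnegligible weight on directions $v$ that are correlated with $a_i$.

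For rounding, I would use the standard Gaussian-reweighting / conditioning scheme: sample a random direction $g \sim \mathcal N(0, \pE[vv^{\mkern-1.5mu\mathsf{T}}])$ (or condition $\pE$ on a random linear functional being large), read off the leading eigenvector of the conditioned second moment matrix, add it to the list, and repeat $n^{1/\poly(\delta)}$ times with fresh randomness. The bound on $\pE\bigl[\sum_i \iprod{a_i,v}^{\Omega(D)}\bigr]$ together with the entropy constraint implies that each $a_i$ is recovered (up to $\delta^{O(1)}$ correlation) by this procedure with probability at least $1/n^{\poly(1/\delta)}$, so taking $n^{\poly(1/\delta)}$ independent samples produces, with high probability, a list $b_1,\ldots,b_m$ of size $m \le n^{\poly(1/\delta)}$ with $\E_{i\sim[k]} \max_j \iprod{a_i, b_j}^2 \ge \delta^{O(1)}$. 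For the oracle version, I would filter the list by querying $\cO$ on each candidate $b_j$ and discarding those that fail; among the survivors I would greedily prune near-duplicates (vectors with $\iprod{b_j, b_{j'}}^2 \ge 1 - \delta^{O(1)}$) and argue via a dimension/packing argument that the pruned list has size exactly $k$, up to permutation matching the $a_i$.

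The main obstacle is the SoS-proof step in the analysis: showing that a large value of $\pE\,\iprod{T, v^{\otimes 3}}$, combined with the entropy constraint, forces the pseudo-distribution to have meaningful mass near each component $a_i$ on average. Ordinary tensor decomposition arguments implicitly use that the top eigenvector dominates, which simply fails at correlation $\delta$; instead one needs to exhibit low-degree polynomials that witness the correlation with each individual $a_i$ simultaneously. I expect this to go through by combining orthonormality of $\{a_i\}$ (which makes the $a_i^{\otimes D}$ almost an orthonormal system inside $(\R^n)^{\otimes D}$) with an averaging argument that converts the single scalar constraint on $\pE\,\iprod{T,v^{\otimes 3}}$ into a per-$i$ guarantee, at the cost of polynomial losses in $\delta$.
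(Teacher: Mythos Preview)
Your proposal has the right ingredients (an SoS relaxation, an entropy/spread constraint, Gaussian-contraction rounding) but the optimization you set up is the wrong one, and the ``key structural claim'' you flag as the main obstacle is in fact false. You propose to \emph{maximize} $\pE\iprod{T,v^{\otimes 3}}$ subject to an entropy bound, and then argue that a large value of this objective, together with the entropy bound, forces $\pE\sum_i\iprod{a_i,v}^{\Omega(D)}\ge\delta^{O(1)}$. But consider $T=\sum_{i}a_i^{\otimes 3}+(1+\epsilon)\sum_{j}c_j^{\otimes 3}$ where $c_1,\ldots,c_k$ are orthonormal and orthogonal to all $a_i$. This $T$ is $\Omega(1)$-correlated with $\sum_i a_i^{\otimes 3}$, yet the uniform distribution on $\{c_j\}$ satisfies your entropy constraint and achieves strictly larger objective than the uniform distribution on $\{a_i\}$. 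The maximizer of your program is then (approximately) supported on the $c_j$'s and has $\pE\sum_i\iprod{a_i,v}^{D}=0$. No SoS identity can salvage this: the objective value alone carries no information about the specific $a_i$'s, only about \emph{some} orthogonal tensor correlated with $T$.

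The paper's fix is to flip the optimization: instead of maximizing correlation with $T$ subject to entropy, it \emph{minimizes the Frobenius norm} of $\pE x^{\otimes 4}$ (equivalently, maximizes a R\'enyi-type entropy) subject to a lower bound on correlation with $T$ and spectral constraints $\|\pE xx^\top\|_{\mathrm{op}},\|\pE(xx^\top)^{\otimes 2}\|_{\mathrm{op}}\le 1/k$. The point is the correlation-preserving projection lemma (\cref{thm:correlation-preserving-projection}): because $\tfrac1k\sum_i a_i^{\otimes 4}$ is itself feasible for this program, the minimum-norm feasible point $Q$ automatically satisfies $\iprod{Q,\tfrac1k\sum_i a_i^{\otimes 4}}\ge(\delta/2)\|Q\|\cdot\|\tfrac1k\sum_i a_i^{\otimes 4}\|$ by a Pythagorean argument, and this holds simultaneously for every orthogonal tensor that $T$ is correlated with. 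This is what gives the list-decoding guarantee. Two further differences worth noting: the paper works at SoS degree $4$ throughout (not $\poly(1/\delta)$), and it first lifts the $3$-tensor to a $4$-tensor via another correlation-preserving projection (\cref{thm:3-to-4-lifting}) to sidestep odd-order issues; the oracle version then iterates the basic step with orthogonality constraints to previously found vectors, rather than post-hoc pruning.
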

(For a more formal statement, and in particular the formal requirements for the oracle $\cO$, see Section~\ref{sec:tdecomp}.)

Rescaling $T$ as necessary, one may reinterpret the condition $\iprod{T, \sum_{i=1}^k a_i^{\tensor 3}} \geq \delta \cdot \|T\| \cdot \sqrt{k}$ as $T = \sum_{i=1}^k a_i^{\tensor 3} + E$, where $\iprod{E,\sum_{i=1}^m a_i^{\tensor 3} } = 0$ and $\|E\| \leq \sqrt{k}/\delta$ and $\| \cdot \|$ is the Euclidean norm.
In particular, $E$ may have Euclidean norm which is a large constant factor $1/\delta$ larger than the Euclidean norm of the tensor $\sum_{i=1}^m a_i^{\tensor 3}$ that the algorithm is trying to decompose!
(One way such error could arise is if $T$ is actually correlated with $1/\delta$ unrelated orthogonal tensors; our algorithm in that case ensures that the list of outputs vectors is correlated with every one of these orthogonal tensors.)

In all previous algorithms of which we are aware (even for the case of orthogonal $a_1,\ldots,a_m$), the error $E$ must have spectral norm (after flattening to an $n^2 \times n^2$ matrix) at most $\e$ for $\e < \tfrac 12$,\footnote{Or, mildly more generally, $E$ should have SoS norm less than $\e$ \cite{DBLP:conf/focs/MaSS16}.} or $E$ must have Euclidean norm at most $\e \sqrt m$ \cite{DBLP:conf/colt/SchrammS17}.
The second requirement is strictly stronger than ours (thus our algorithm has weaker requirements and so stronger guarantees).
The first, on the spectral norm of $E$ when flattened to a matrix, is incomparable to the condition in our theorem.
However, when $E$ satisfies such a spectral bound it is possible to decompose $T$ using (sophisticated) spectral methods \cite{DBLP:conf/focs/MaSS16, DBLP:conf/colt/SchrammS17}.
In our setting such methods seem unable to avoid producing only vectors $b$ which are correlated with $E$ but not with any $a_1,\ldots,a_m$.
In other words, such methods would \emph{overfit to the error $E$}.
To avoid this, our algorithm uses a novel maximum-entropy convex program (see Section~\ref{sec:tdecomp} for details).

One a priori unusual requirement of our tensor decomposition algorithm is access to the oracle $\cO$.
In any tensor decomposition setting where $E$ satisfies $\|E\|_{inj} = \max_{\|x\| = 1} \iprod{E, x^{\tensor 3}} \leq o(1)$, the oracle $\cO$ can be implemented just be evaluating $\iprod{T,v^{\tensor 3}} = \sum_{i=1}^k \iprod{a_i,v}^{3} + o(1)$.
All previous works on tensor decomposition of which we are aware either assume that the injective norm $\|E\|_{inj}$ is bounded as above, or (as in \cite{DBLP:conf/colt/SchrammS17}) can accomplish this with a small amount of preprocessing on the tensor $T$.
Our setting allows, for example, $E = 100 \cdot v^{\tensor 3}$ for some unit vector $v$, and does not appear to admit the possibility of such preprocessing, hence the need for an auxiliary implementation of $\cO$.
In our learning applications we are able to implement $\cO$ by straightforward holdout set/cross-validation methods.

\subsection{Information-computation gaps and concrete lower bounds}
\label{sec:intro-gaps}
The meta-algorithm we offer in this paper is designed to achieve optimal sample complexity \emph{among polynomial-time algorithms} for many Bayesian estimation problems.
It is common, though, that computationally inefficient algorithms can obtain accurate estimates of hidden variables with fewer samples than seem to be tolerated by any polynomial-time algorithm.
This appears to be true for the $k$-community stochastic block model we have used as our running example here: efficient algorithms seem to require graphs of average degree $d > k^2/\e^2$ to estimate communities, while inefficient algorithms are known which tolerate $d$ of order $k \log k$ \cite{DBLP:conf/isit/AbbeS16, DBLP:journals/corr/Moore17}.

Such phenomena, sometimes called \emph{information-computation gaps}, appear in many other Bayesian estimation problems.
For example, in the classical \emph{planted clique} problem \cite{DBLP:journals/rsa/Jerrum92, DBLP:journals/dam/Kucera95}, a clique of size $k > (2 + \e) \log n$ is randomly added to a sample $G \sim G(n,\tfrac 12)$; the goal is to find the clique given the graph.
Since the largest clique in $G(n,\tfrac 12)$ has expected size $2 \log n$, so long as $k > (2 + \e)\log n$ it is information-theoretically possible, via brute-force search, to recover the planted clique.
On the other hand, despite substantial effort, no polynomial-time algorithm is known which can find a clique of size $k \leq o(\sqrt n)$, exponentially-larger than cliques which can be found by brute force.

For other examples one need not look far: sparse principal component analysis, planted constraint satisfaction problems, and densest-$k$-subgraph are just a few more problems exhibiting information-computation gaps.
This ubiquity leads to several questions:
\begin{compactenum}
  \item What rigorous evidence can be provided for the claim: no polynomial algorithm tolerates $n < n_*$ samples and produces a constant-correlation estimator $\widehat{\theta}(x_1,\ldots,x_n)$ for particular a Bayesian estimation problem $p(x,\theta)$?
  \item Can information-computation gaps of different problems be explained by similar underlying phenomena? That is, is there a structural feature of a Bayesian estimation problem which determines whether it exhibits an information-computation gap, and if so, what is the critical number of samples $n_*$ required by polynomial-time algorithms?
  \item Are there methods to easily predict the location of a critical number $n_*$ of samples, without analyzing every polynomial-time algorithm one can think of?
\end{compactenum}

\paragraph{Rigorous evidence for computational phase transitions}
The average-case nature of Bayesian estimation problems makes it unlikely that classical tools like (NP-hardness) reductions allow us to reason about the computational difficulty of such problems in the too-few-samples regime.
Instead, to establish hardness of an estimation problem when $n < n_*$ for some critical $n_*$, one proves impossibility results for restricted classes of algorithms.
Popular classes of algorithms for this purpose include Markov-Chain Monte Carlo algorithms and various classes of convex programs, especially arising from convex hierarchies such as the Sherali-Adams linear programming hierarchy and the sum of squares semidefinite programming hierarchy.

Results like this are meaningful only if the class of algorithms for which one proves an impossibility result captures the best known (i.e. lowest-sample-complexity) polynomial-time algorithm for the problem at hand.
Better yet would be to use a class of algorithms which captures the lowest-sample-complexity polynomial-time algorithms for many Bayesian estimation problem simultaneously.

In the present work we study sample complexity up to low-order additive terms in the number $n$ of samples.
For example, in the $k$-community $\alpha$-mixed-membership stochastic block model, we provide an algorithm which estimates communities in graphs of average degree $d \geq (1 + \delta) k^2 (\alpha+1)^2/\e^2$, for any constant $\delta > 0$.
Such precise algorithmic guarantees suggest the pursuit of equally-precise lower bounds.

Proving a lower bound against the convex-programming-based algorithms most commonly considered in previous work on lower bounds for Bayesian estimation problems does not suit this purpose.
While powerful, these algorithms are generally not designed to achieve optimal sample complexity up to low-order additive terms.
Indeed, there is mounting evidence in the block model setting that sum of squares semidefinite programs actually require a constant multiplicative factor more samples than our meta-algorithm \cite{DBLP:conf/stoc/MontanariS16, DBLP:conf/approx/BanksKM17}.

Another approach to providing rigorous evidence for the impossibility side of a computational threshold is \emph{statistical query} complexity, used to prove lower bounds against the class of \emph{statistical query} algorithms \cite{DBLP:conf/stoc/FeldmanGRVX13,DBLP:journals/corr/FeldmanGV15,DBLP:conf/stoc/FeldmanPV15}.
To the best of our knowledge, similar to sum of squares algorithms, statistical query algorithms are not known to achieve optimal sample rates for problems such as community recovery in the block model up to low-order additive terms.
Such sample-optimal algorithms seem to intrinsically require the ability to compute a complicated function of many samples simultaneously (for example, the top eigenvector of the non-backtracking operator).
But even the most powerful statistical query algorithms (using the $1$-STAT oracle) can access one bit of information about each sample $x$ (by learning the value of some function $h(x) \in \{0,1\}$).
This makes it unclear how the class of statistical query algorithms can capture the kinds of sample-optimal algorithms we want to prove lower bounds against.

Our meta-algorithm offers an alternative.
By showing that when $n$ is less than a critical $n_*$ there are no constant-correlation low-degree estimators for a hidden random variable, one rules out any efficient algorithm captured by our meta-algorithm.
Concretely, \cref{thm:meta-theorem-2nd,thm:meta-theorem-3rd} show that in order for an estimation problem to be intractable it is necessary that every low-degree polynomial fails to correlate with the second or third moment of the posterior distribution (in the sense of \cref{eq:second-moment-correlation,eq:third-moment-correlation}).
This kind of fact about low-degree polynomial is something we can aim to prove unconditionally as a way to give evidence for the intractability of a Bayesian estimation problem.
Next we discuss our example result of this form in the block model setting.

\paragraph{Concrete unconditional lower bound at the Kesten--Stigum threshold}

In this work, we show an unconditional lower bound about low-degree polynomials for the stochastic block model with $k$ communities at the Kesten--Stigum threshold.
For $k\ge 4$, this threshold is bounded away from the information-theoretic threshold \cite{DBLP:conf/focs/AbbeS15}.
In this way, our lower bounds gives evidence for an inherent gap between the information-theoretical and computational thresholds.

For technical reasons, our lower bound is for a notion of correlation mildly different from \cref{eq:second-moment-correlation,eq:third-moment-correlation}.
Our goal is to compare the stochastic block model distribution $\sbm(n,d,\e,k)$  graphs to the \Erdos-\Renyi distribution $G(n,\tfrac dn)$ with respect to low-degree polynomials.
As before we represent graphs as adjacency matrices $x\in \bits^{n\times n}$.
Among all low-degree polynomials $p(x)$, we seek one so that the typical value of $p(x)$ for graphs $x$ from the stochastic blocks model is as large as possible compared to its typical for \Erdos-\Renyi graphs.
The following theorem shows that a suitable mathematical formalization of this question exhibits a sharp ``phase transition'' at the Kesten--Stigum threshold.
\begin{theorem}
  \label{thm:lower-bound}
  Let $d,\e,k$ be constants.
  Then,
  \begin{equation}
    \max_{p\in \R[x]_{\le \ell}} \frac{
      \E_{x\sim \sbm(n,d,\e,k)} p(x)
    }{
      \Paren{\E_{x\sim G(n,d/n)} p(x)^2}^{1/2}
    }=
    \begin{cases}
      \ge n^{\Omega(1)} & \text{ if }\e^2 d > k^2,~\ell \ge O(\log n)\\
      \le n^{o(1)} & \text{ if }\e^2 d < k^2, \ell\le n^{o(1)}
    \end{cases}
    \label{eq:lower-bound}
  \end{equation}
\end{theorem}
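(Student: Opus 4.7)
Let $P = \sbm(n,d,\e,k)$ and $Q = G(n,d/n)$, and write $L = \mathrm{d}P/\mathrm{d}Q$ for the likelihood ratio. For any polynomial $p$ of degree $\leq \ell$ in $x \in \bits^{\binom{n}{2}}$, we have $\E_P p = \iprod{p, L}_{L^2(Q)} = \iprod{p, L^{\leq \ell}}_{L^2(Q)}$, where $L^{\leq \ell}$ is the orthogonal projection of $L$ onto polynomials of degree $\leq \ell$; by Cauchy--Schwarz the maximum of the ratio in \cref{eq:lower-bound} over such $p$ equals $\|L^{\leq \ell}\|_{L^2(Q)}$. Since the edges $x_e$ are independent under $Q$, the normalized characters $\chi_S(x) = \prod_{e\in S}(x_e - d/n)/\sqrt{(d/n)(1-d/n)}$ form an orthonormal basis of $L^2(Q)$ and
\[
  \|L^{\leq \ell}\|_{L^2(Q)}^2 \;=\; \sum_{|S| \leq \ell} \widehat{L}(S)^2,\qquad \widehat{L}(S) \defeq \E_{x \sim P} \chi_S(x).
\]
Conditioning on community labels $y \sim [k]^n$ and using $\E[x_e \mid y] = (d/n)\bigl(1 + \e f(y_i,y_j)\bigr)$ with $f(a,b) = \mathbf{1}[a=b] - 1/k$ factors the Fourier coefficient as $\widehat{L}(S) = (1+o(1))^{|S|}\bigl(\e\sqrt{d/n}\bigr)^{|S|} \cdot \E_y \prod_{e=\{i,j\}\in S} f(y_i,y_j)$. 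Applying the Mercer decomposition $f(a,b) = (1/k)\sum_{t=2}^{k}\psi_t(a)\psi_t(b)$, where $\psi_2,\ldots,\psi_k$ is an orthonormal basis of mean-zero functions on $([k],\text{uniform})$, the expectation becomes $(1/k)^{|S|}$ times a sum over edge-colorings $c:S \to \{2,\ldots,k\}$ of $\prod_v \E_{y_v}\prod_{e\ni v}\psi_{c(e)}(y_v)$. A vertex of degree $1$ annihilates the product, so only subgraphs $S$ with minimum degree $\geq 2$ contribute.

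For the supercritical lower bound ($\e^2 d > k^2$), restrict to $S$ ranging over simple cycles. At every vertex of a cycle the orthonormality relation $\E \psi_s \psi_t = \delta_{st}$ forces the two incident edges to receive the same color; propagating around the cycle forces the edge-coloring to be constant, giving $k-1$ nonzero colorings, each contributing $1$. Thus $\widehat{L}(S) \approx (\e\sqrt{d/n})^r (k-1)/k^r$ for an $r$-cycle, and since there are $\approx n^r/(2r)$ labeled $r$-cycles in $[n]$,
\[
  \sum_{\substack{S : |S| = r \\ S \text{ is a cycle}}} \widehat{L}(S)^2 \;\approx\; \frac{(k-1)^2}{2r}\Paren{\frac{\e^2 d}{k^2}}^{\!r}.
\]
Setting $\ell = \Theta(\log n)$ with an appropriate constant, the $r=\ell$ term alone is $n^{\Omega(1)}$, which dominates $\|L^{\leq \ell}\|_{L^2(Q)}^2$ from below.

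For the subcritical upper bound ($\e^2 d < k^2$ and $\ell \leq n^{o(1)}$), group the $S$'s by the isomorphism class of their underlying graph $H = (V_H, E_H)$ with minimum degree $\geq 2$. The number of embeddings of $H$ into $[n]$ is at most $n^{|V_H|}/|\mathrm{Aut}(H)|$, and each contributes at most $(\e^2 d/n)^{|E_H|}$ times a color-counting factor bounded by a function of the vertex degrees of $H$ and the moments of the $\psi_t$. The min-degree-$2$ condition forces $|E_H| \geq |V_H|$, making the net $n$-dependence vanish and reducing each class's contribution to at most $(\e^2 d/k^2)^{|E_H|}$ times a combinatorial factor that grows with the excess $|E_H|-|V_H|$ (and with high-degree/multi-edge structure). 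The main obstacle is to combine these per-class bounds into a summable total: cycles themselves produce a convergent geometric series whenever $\e^2 d < k^2$, but controlling the contribution from more intricate $H$'s (cycles with chords, theta graphs, graphs with repeated edges, high-degree vertices) requires careful combinatorial counting of labeled subgraph copies weighted by higher moments of the $\psi_t$'s, in the spirit of the trace-method and self-avoiding / non-backtracking walk analyses of Massouli\'e, Mossel--Neeman--Sly and Abbe--Sandon. Assuming such an analysis can be carried through uniformly over $|E_H| \leq \ell \leq n^{o(1)}$, the resulting bound is $\|L^{\leq \ell}\|_{L^2(Q)}^2 \leq n^{o(1)}$, completing the subcritical direction.
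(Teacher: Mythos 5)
Your setup and your supercritical direction match the paper's argument essentially exactly: the identity $\max_p (\E_P p)/(\E_Q p^2)^{1/2} = \|L^{\leq \ell}\|_{L^2(Q)}$, the vanishing of $\widehat L(S)$ whenever $S$ has a degree-one vertex, and the restriction to logarithmic-length cycles for the $n^{\Omega(1)}$ lower bound are all the same steps the paper takes, and your cycle computation is correct.

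The subcritical direction, however, has a genuine gap, and you have placed it exactly where the paper's main technical work lies. Saying that each isomorphism class $H$ with $\min\deg \geq 2$ contributes at most $(\e^2 d/k^2)^{|E_H|}$ times "a combinatorial factor" is not enough: the number of isomorphism classes with $E$ edges grows like $E^{\Theta(E)}$, and $E$ can be as large as $\ell = n^{o(1)}$, so without a quantitative count the total sum is not controlled at all. What is actually needed \textemdash{} and what the paper proves via an explicit encoding argument (Lemma~\ref{lem:lb-no-cycles}) \textemdash{} is that the number of \emph{labeled} subgraphs of $K_n$ with $V$ vertices, $E$ edges, all degrees $\geq 2$, and no connected component equal to a cycle is at most $n^{V} V^{O(E-V)}$; combined with the per-graph bound $n^{-E}(1-\delta)^{E} k^{2(E-V+C)}$ and $C \leq E-V$, the excess $E-V$ is charged against a factor $n^{-(E-V)}$, which beats $V^{O(E-V)}k^{O(E-V)}$ whenever $V,k \leq n^{0.01}$. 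This encoding argument (tracking how many "removed pairs" a DFS-style vertex listing incurs, and showing $R \leq 2(E-V)$) is the heart of the proof and cannot be waved through by analogy with trace-method calculations, which typically operate in a different parameter regime.

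A second, smaller point: your remark that "cycles themselves produce a convergent geometric series" understates what happens. The relevant $S$'s include disjoint \emph{unions} of cycles, which must be split off and handled multiplicatively (the paper's Lemma~\ref{lem:lb-cycles}); their total contribution is $\prod_{t \leq \ell}(1+c_t) \approx \exp(k^2 \sum_{t\leq \ell} 1/t) \approx \ell^{k^2}$, which is $n^{o(1)}$ but \emph{not} $O(1)$. This is precisely why the theorem asserts the bound $n^{o(1)}$ rather than $O(1)$ in the subcritical regime, so any complete proof must carry this factor explicitly rather than absorb it into a convergent series.
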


Let $\mu\from \bits^{n\times n}\to \R$ be the relative density of $\sbm(n,d,\e,k)$ with respect to $G(n,\tfrac dn)$.
Basic linear algebra shows that the left-hand side of \cref{eq:lower-bound} is equal to $\norm{\mu^{\le \ell}}_2$, where $\norm{\cdot}_2$ is the Euclidean norm with respect to the measure $G(n,d/n)$ and $\mu^{\le \ell}$ is the  projection (with respect to this norm) of $\mu$ to the subspace of functions of degree at most $\ell$.
This is closely related to the $\chi^2$-divergence of $\mu$ with respect to $G(n,d/n)$, which in the present notation would be given by $\|(\mu -1) \|_2$.
When the latter quantity is small, $\|(\mu - 1)\|_2 \leq o(1)$, one may conclude that the distribution $\mu$ is information-theoretically indistinguishable from $G(n,d/n)$.
This technique is used in the best current bounds on the information-theoretic properties of the block model \cite{mossel2012stochastic, DBLP:conf/colt/BanksMNN16}.

The quantity in Theorem~\ref{thm:lower-bound} is a low-degree analogue of the $\chi^2$-divergence.
If it were true that $\|(\mu^{\leq \ell}-1)\|_2 \leq o(1)$, then by a straightforward application of Cauchy-Schwarz it would follow that no low-degree polynomial $p(x)$ distinguishes the block model from $G(n,d/n)$, since every such $p$ would have (after setting $\E_{G(n,d/n)} p(x) = 0$) that $\E_{SBM} p(x) \leq o(\E_{G(n,d/n)} p(x)^2)^{1/2}$.
This condition turns out to be quite powerful: \cite{DBLP:conf/focs/BarakHKKMP16, soslb} give evidence that for problems such as planted clique, for which distinguishing instances drawn from a null model from instances with hidden structure should be computationally intractable, the condition $\|(\mu^{\leq \ell}-1)\| \leq o(1)$ is closely related to sum of squares lower bounds.\footnote{In particular, the so-called \emph{pseudocalibration} approach to sum of squares lower bounds works only when $\|(\mu^{\leq \ell} -1)\| \leq o(1)$.}

The situation in the $k$-community block model is a bit more subtle.
One has only that $\|\mu^{\leq \ell}\| \leq n^{o(1)}$ below the Kesten-Stigum threshold because even in the latter regime it remains possible to distinguish a block model graph from $G(n,d/n)$ via a low-degree polynomial (simply counting triangles will suffice).
However, we can still hope to rule out algorithms which accurately estimate communities below the Kesten-Stigum threshold.
For this we prove the following theorem.
\begin{theorem}
  Let $d,\e,k,\delta$ be constants such that $\e^2 d < (1 -\delta)k^2$.
  Let $f : \{0,1\}^{n \times n} \rightarrow \R$ be any function, let $i,j \in [n]$ be distinct.
  Then if $f$ satisfies $\E_{x \sim G(n,\tfrac dn)} f(x) = 0$ and is correlated with the indicator $\Ind_{\sigma_i = \sigma_j}$ that $i$ and $j$ are in the same community in the following sense:
  \[
    \frac{\E_{x \sim SBM(n,d,\e,k)} f(x)(\Ind_{\sigma_i = \sigma_j} - \tfrac 1k)}{(\E_{x \sim G(n,\tfrac dn)} f(x)^2)^{1/2}} \geq \Omega(1)
  \]
  then $\deg f \geq n^{c(d,\e,k)}$ for some $c(d,\e,k) > 0$.
\end{theorem}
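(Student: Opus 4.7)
The plan is to reformulate the claimed correlation bound as a Fourier-analytic estimate on a specific function of the input graph, and then show that its low-degree part has vanishing $L^2(G(n,d/n))$ norm below the Kesten--Stigum threshold.

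\textbf{Step 1 (Reformulation via change of measure).}
Let $\mu(x) = p_{SBM}(x)/p_{ER}(x)$ denote the likelihood ratio, where $p_{ER}$ is the density of $G(n,d/n)$, and define $g(x) = \E_{SBM}[\Ind_{\sigma_i=\sigma_j}-\tfrac1k \mid x]$. Setting $h = g\mu$, the change-of-measure identity gives
\[
\E_{SBM}\bigl[f(x)(\Ind_{\sigma_i=\sigma_j}-\tfrac1k)\bigr] \;=\; \E_{ER}[f(x)h(x)] \;=\; \iprod{f,h}_{ER}.
\]
Since $f$ has degree at most $\ell$, this inner product equals $\iprod{f, h^{\leq \ell}}_{ER}$, where $h^{\leq \ell}$ is the projection of $h$ onto polynomials of degree at most $\ell$. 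Cauchy--Schwarz then bounds the desired correlation by $\|f\|_{ER}\cdot \|h^{\leq \ell}\|_{ER}$, so it suffices to prove $\|h^{\leq \ell}\|_{ER} = o(1)$ whenever $\ell \leq n^{c(d,\e,k)}$ for some $c(d,\e,k) > 0$.

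\textbf{Step 2 (Fourier expansion on graphs).}
Equip $L^2(G(n,d/n))$ with the orthonormal basis $\{\chi_S\}_{S\subseteq\binom{[n]}{2}}$, where $\chi_S = \prod_{e\in S}(x_e-d/n)/\sqrt{(d/n)(1-d/n)}$, so that $\|h^{\leq \ell}\|^2_{ER} = \sum_{|S|\leq \ell}\hat h(S)^2$. Conditioning on $\sigma\sim[k]^n$ and using the independence of edges in the SBM, a direct computation gives
\[
\hat h(S) \;=\; (1\pm o(1))\,(\e\sqrt{d/n})^{|S|}\cdot \E_{\sigma}\Bigl[(\Ind_{\sigma_i=\sigma_j}-\tfrac1k)\prod_{\{u,v\}\in S}(\Ind_{\sigma_u=\sigma_v}-\tfrac1k)\Bigr].
\]
The $\sigma$-expectation factorizes across the connected components of the multigraph $S\cup\{ij\}$; components that contain neither $i$ nor $j$ behave exactly as in the Potts-cluster expansion used to prove \cref{thm:lower-bound}, and the component containing $\{i,j\}$ is nonzero only when $S$ itself contains a walk from $i$ to $j$.

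\textbf{Step 3 (Combinatorial summation below Kesten--Stigum).}
Restricting first to $S$ that is a simple path of length $s$ from $i$ to $j$, iterating the identity $\tfrac1k\sum_c(\Ind_{a=c}-\tfrac1k)(\Ind_{c=b}-\tfrac1k) = \tfrac1k(\Ind_{a=b}-\tfrac1k)$ gives $|\hat h(S)|^2 = O((\e^2 d/n)^s/k^{2(s-1)})$. Since there are at most $n^{s-1}$ such paths, their combined contribution to $\|h^{\leq \ell}\|^2_{ER}$ is bounded by
\[
\sum_{s\geq 1} n^{s-1}\cdot O\bigl((\e^2 d/n)^s/k^{2(s-1)}\bigr) \;=\; O\Bigl(\tfrac{\e^2 d/n}{1-\e^2 d/k^2}\Bigr) \;=\; O\bigl(\tfrac{1}{n\delta}\bigr),
\]
using $\e^2 d \leq (1-\delta)k^2$. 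For general $S$, decomposing into an $\{i,j\}$-connected skeleton together with "decorations" (cycles, branching edges, and components disjoint from $\{i,j\}$) reduces to a trace-type encoding argument of the same flavor as in the proof of \cref{thm:lower-bound} and in \cite{DBLP:conf/stoc/MosselNS15,DBLP:conf/nips/AbbeS16}: each additional decoration contributes a factor at most $\e^2 d/k^2 < 1$ per extra edge, and the contributions from components disconnected from $\{i,j\}$ telescope into a factor of $\|\mu^{\leq \ell}\|^2_{ER}\leq n^{o(1)}$ by \cref{thm:lower-bound}. This yields $\|h^{\leq \ell}\|^2_{ER}\leq n^{-\Omega_{d,\e,k,\delta}(1)}$ for $\ell\leq n^{c(d,\e,k)}$, as required.

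\textbf{Main obstacle.}
The main difficulty is the combinatorial analysis of general $S$ in Step~3. Simple paths between $i$ and $j$ contribute a clean geometric sum, but one must handle the combinatorial explosion from subgraphs with repeated vertices, cycles, branches, or components disconnected from $\{i,j\}$. The key leverage provided by the sub-KS hypothesis $\e^2 d < (1-\delta)k^2$ is that the per-edge factor in every such geometric sum is strictly less than one, which turns each combinatorial explosion into a convergent series. In this sense the present argument is a pointed variant of the unconditional Fourier estimate underlying \cref{thm:lower-bound}, with two distinguished vertices $i,j$ carrying the signal rather than the global likelihood ratio.
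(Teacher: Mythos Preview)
Your proposal is correct and follows essentially the same route as the paper: change of measure to reduce the correlation to $\|h^{\leq \ell}\|_{ER}$ with $h=\mu\cdot\E_{SBM}[\Ind_{\sigma_i=\sigma_j}-\tfrac1k\mid x]$, compute the Fourier coefficients $\widehat h(\alpha)$ in terms of $\E_\sigma\prod\iprod{\tsigma_u,\tsigma_v}$, factor over connected components so that the components not touching $\{i,j\}$ contribute the factor $\|\mu^{\leq \ell}\|^2\leq n^{o(1)}$ from \cref{thm:lower-bound}, and then bound the sum over the component $\beta$ containing $i,j$ by an encoding/counting argument analogous to \cref{lem:lb-no-cycles}. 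One small imprecision: nonvanishing of $\widehat h(S)$ only requires that $i,j$ each appear in $S$ with degree $\geq 1$ and all other vertices have degree $\geq 2$; it does not force $i$ and $j$ to lie in the same component of $S$ (the paper notes the two-component case is handled identically).
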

There is one subtle difference between the polynomials ruled out by this theorem and those which could be used by our meta-algorithm.
Namely, this theorem rules out any $f$ whose correlation with the indicator $\Ind_{\sigma_i = \sigma_j}$ is large \emph{compared to $f$'s standard deviation under $G(n,d/n)$}, whereas our meta-algorithm needs a polynomial $f$ where this correlation is large compared to $f$'s standard deviation under the block model.
In implementing our meta-algorithm for the block model and for other problems, we have found that these two measures of standard deviation are always equal (up to low-order additive terms) for the polynomials which turn out to provide sample-optimal constant-correlation estimators of hidden variables.

Interesting open problems are to prove a version of the above theorem where standard deviation is measured according to the block model and to formalize the idea that $\E_{SBM} f(x)^2$ should be related to $\E_{G(n,d/n)} f(x)^2$ for good estimators $f$.
It would also be quite interesting to see how large the function $c(d,\e,k)$ can be made: the above theorem shows that when $d < (1-\delta) k^2/\e^2$ the degree of any good estimator of $\Ind_{\sigma_i = \sigma_j}$ must be polynomial in $n$---perhaps it must be linear, or even quadratic in $n$.

\paragraph{General strategies to locate algorithmic thresholds}
The preceding theorems suggest a general strategy to locate critical a sample complexity $n_*$ for almost any Bayesian estimation problem: compute a Fourier transform of an appropriate relative density $\mu$ and examine the $2$-norm of its low-degree projection.
This strategy has several merits beyond its broad applicability.
One advantage is that in showing $\|(\mu^{\geq \ell} -1)\| \geq \Omega(1)$, one automatically has discovered a degree-$\ell$ polynomial and a proof that it distinguishes samples with hidden structure from an appropriate null model.
Another is the mounting evidence (see \cite{DBLP:conf/focs/BarakHKKMP16, soslb}) that when, on the other hand $\|(\mu^{\leq \ell} -1)\| \leq o(1)$ for large-enough $\ell$, even very powerful convex programs cannot distinguish these cases.
A final advantage is simplicity: generally computing $\|(\mu^{\geq \ell}-1)\|$ is a simple exercise in Fourier analysis.

Finally, we compare this strategy to the only other one we know which shares its applicability across many Bayesian estimation problems, namely the replica and cavity methods (and their attendant algorithm, belief propagation) from statistical physics \cite{mezard2009information}.
These methods were the first used to predict the sharp sample complexity thresholds we study here for the stochastic block model, and they have also been used to predict similar phenomena for many other hidden variable estimation problems \cite{DBLP:conf/allerton/LesieurBBKMZ16, DBLP:conf/allerton/LesieurBBKMZ16, DBLP:conf/allerton/LesieurBBKMZ16}.
Though remarkable, the predictions of these methods are much more difficult than ours make rigorous---in particular, it is notoriously challenging to rigorously analyze the belief propagation algorithm, and often when these predictions are made rigorous, only a modified version (``linearized BP'') can be analyzed in the end.
By contrast, our methods to predict critical sample complexities, design algorithms, and prove lower bounds all study essentially the same low-degree-polynomial algorithm.

We view it as a fascinating open problem to understand why predicted critical sample complexities offered by the replica and cavity methods are so often identical to the predictions of the low-degree-polynomials meta-algorithm we propose here.

\Dcomment{}

\Dnote{}
  
    \section{Techniques}
\label{sec:techniques}

\newcommand{\saw}{\mathrm{SAW}}

To illustrate the idea of low-degree estimators for posterior moments, let's first consider the most basic stochastic block model with $k=2$ disjoint communities ($\alpha=0$).
(Our discussion will be similar to the analysis in \cite{DBLP:conf/stoc/MosselNS15}.)
Let $y\in \set{\pm 1}^n$ be chosen uniformly at random and let $x\in \bits^{n\times n}$ be the adjacency matrix of a graph such that for every pair $i<j\in [n]$, we have $x_{ij}=1$  with probability $(1+\e y_i y_j)\tfrac dn$.
Our goal is to find a matrix-valued low-degree polynomial $P(x)$ that correlates with $\dyad y$.
It turns out to be sufficient to construct for every pair $i,j\in [n]$ a low-degree polynomial that correlates with $y_i y_j$.

The linear polynomial $p_{ij}(x)=\tfrac n {\e d}\Paren{x_{ij}-\tfrac d n}$ is an unbiased estimator for $y_i y_j$ in the sense that $\E[p_{ij}(x) \mid y]=y_i y_j$.
By itself, this estimator is not particular useful because its variance $\E p_{ij}(x)^2\approx\frac { n}{\e^2 d}$ is much larger than the quantity $y_i y_j$ we are trying to estimate.
However, if we let $\alpha\subseteq [n]^2$ be a length-$\ell$ path between $i$ and $j$ (in the complete graph), then we can combine the unbiased estimators along the path $\alpha$ and obtain a polynomial
\begin{equation}
  \label{eq:path-polynomial-techniques}
p_\alpha(x)=\prod_{ab\in \alpha} p_{ab}(x)
\end{equation}
that is still an unbiased estimator $\E [p_\alpha(x)\mid y_i,y_j]=\E \Brac{\prod_{ab\in \alpha} y_a y_b \mid y_i,y_j } = y_i y_j$.
This estimator has much higher variance $\E p_{\alpha}(x)^2 \approx \cramped{\paren{\frac { n}{\e^2 d}}^{\ell}}$.
But we can hope to reduce this variance by averaging over all such paths.
The number of such paths is roughly $n^{\ell -1}$ (because there are $\ell-1$ intermediate vertices to choose).
Hence, if these estimators $\set{p_\alpha(x)}_{\alpha}$ were pairwise independent, this averaging would reduce the variance by a multiplicative factor $n^{\ell-1}$, giving us a final variance of $\cramped{\paren{\frac { n}{\e^2 d}}^{\ell}} \cdot n^{1-\ell}=(\tfrac 1{\e^2 d})^\ell \cdot n$.
We can see that above the Kesten--Stigum threshold, i.e., $\e^ 2 d\ge 1+\delta$ for $\delta>0$, this heuristic variance bound $(\tfrac 1{\e^2 d})^\ell \cdot n\le 1$ is good enough for estimating the quantity $y_i\cdot y_j$ for paths of length $\ell \ge \log_{1+\delta} n$.

Two steps remain to turn this heuristic argument into a polynomial-time algorithm for estimating the matrix $\dyad y$.
First, it turns out to be important to consider only paths that are self-avoiding.
As we will see next, estimators from such paths are pairwise independent enough to make our heuristic variance bound go through.
Second, a naive evaluation of the final polynomial takes quasi-polynomial time because it has logarithmic degree (and a quasi-polynomial number of non-zero coefficients in the monomial basis).
We describe the high-level ideas for avoiding quasi-polynomial running time later in this section (\cref{sec:techniques-color-coding}).

\subsection{Approximately pairwise-independent estimators}

Let $\saw_\ell(i,j)$ be the set of self-avoiding walks $\alpha\subseteq [n]^2$ of length $\ell$ between $i$ and $j$.
Consider the unbiased estimator $p(x)=\tfrac 1{\card{\saw_\ell(i,j)}}\sum_{\alpha \in \saw_\ell(i,j)} p_\alpha(x)$ for $y_iy_j$.
Above the Kesten--Stigum threshold and for $\ell\ge O(\log n)$, we can use the following lemma to show that $p(x)$ has variance $O(1)$ and achieves constant correlation with $z=y_i y_j$.
We remark that the previous heuristic variance bound corresponds to the contribution of the terms with $\alpha=\beta$ in the left-hand side of \cref{eq:approximate-pw-independence}.

\begin{lemma}[Constant-correlation estimator]
  \label[lemma]{lem:basis-conditions}
  Let $(x,z)$ be distributed over $\bits^n\times \R$.
  Let $\set{p_\alpha}_{\alpha\in \cI}$ be a collection of real-valued $n$-variate polynomials with the following properties:
  \begin{compactenum}
  \item unbiased estimators: $\E[p_\alpha(x) \mid z]=z$ for every $\alpha\in \cI$
  \item approximate pairwise independence: for $\delta>0$,
    \begin{equation}
      \label{eq:approximate-pw-independence}
      \sum_{\alpha,\beta\in \cI} \E p_\alpha(x)\cdot p_\beta(x) \le \tfrac 1{\delta^2} \cdot\card{\cI}^2 \E z^2
    \end{equation}
  \end{compactenum}
  Then, the polynomial $p=\tfrac 1 {\card \cI}\sum_{\alpha\in \cI} p_\alpha$ satisfies $\E p(x) \cdot z\ge \delta\cdot \Paren{\E p(x)^2\cdot \E z^2}^{1/2}$.
\end{lemma}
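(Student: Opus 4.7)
The proof is essentially a two-line calculation once the hypotheses are correctly unpacked, so my plan is to simply compute the relevant first and second moments.

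First, I would compute $\E\,p(x)\cdot z$ using the unbiasedness assumption together with the tower property. Since $p = \tfrac{1}{|\cI|}\sum_{\alpha \in \cI} p_\alpha$, linearity of expectation gives $\E[p(x) \mid z] = \tfrac{1}{|\cI|}\sum_{\alpha \in \cI}\E[p_\alpha(x) \mid z] = z$. Hence
\[
\E\,p(x)\cdot z \;=\; \E\bigl[z \cdot \E[p(x)\mid z]\bigr] \;=\; \E\,z^2\mper
\]

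Next I would bound the second moment of $p(x)$ directly from the approximate pairwise independence assumption:
\[
\E\,p(x)^2 \;=\; \frac{1}{|\cI|^2}\sum_{\alpha,\beta \in \cI}\E\,p_\alpha(x)\,p_\beta(x)\;\le\;\frac{1}{\delta^2}\,\E\,z^2\mper
\]

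Combining the two displays, $\delta\sqrt{\E\,p(x)^2 \cdot \E\,z^2} \le \sqrt{\E\,z^2 \cdot \E\,z^2} = \E\,z^2 = \E\,p(x)\cdot z$, which is exactly the claimed inequality. No step looks like an obstacle; the lemma is really just a clean packaging of the standard second-moment/Cauchy--Schwarz heuristic, with the factor $1/\delta^2$ in the pairwise-independence hypothesis absorbing precisely the ratio between the ``diagonal'' and ``ideal'' variance. The only thing to be careful about is not to confuse $\E\,p(x)^2$ with $\Var\,p(x)$ and to note that the inequality is stated (and proved) without any assumption that $\E z = 0$.
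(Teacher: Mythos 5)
Your proof is correct and is essentially identical to the paper's own argument: establish $\E\,p(x)\,z = \E\,z^2$ from unbiasedness, bound $\E\,p(x)^2 \le \tfrac{1}{\delta^2}\E\,z^2$ from the pairwise-independence hypothesis, and combine. Nothing to add.
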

\begin{remark}
  In applying the lemma we often substitute for \cref{eq:approximate-pw-independence} the equivalent condition 
  \[
  \E z^2 \cdot \sum_{\alpha,\beta\in \cI} \E p_\alpha(x)\cdot p_\beta(x) \le \frac 1{\delta^2} \cdot \sum_{\alpha,\beta\in \cI} (\E p_\alpha(x) z) \cdot (\E p_\beta(x)z)
  \]
  which is conveniently invariant to rescaling of the $p_\alpha$'s.
\end{remark}
\begin{proof}
  Since the polynomial $p$ is an unbiased estimator for $z$, we have $\E p(x) z = \E z^2$.
  By \cref{eq:approximate-pw-independence}, $\E p(x)^2 \le (1/\delta^2)\cdot \E z^2$.
  Taken together, we obtain the desired conclusion.
\end{proof}

In \cref{sec:low-degree-simple-community}, we present the short combinatorial argument that shows that above the Kesten--Stigum bound the estimators for self-avoiding walks satisfy the conditions \cref{eq:approximate-pw-independence} of the lemma.

We remark that if instead of self-avoiding walks we were to average over all length-$\ell$ walks between $i$ and $j$, then the polynomial $p(x)$ computes up to scaling nothing but the $(i,j)$-entry of the $\ell$-th power of the centered adjacency $x-\tfrac d n \dyad \Ind$.
For $\ell\approx \log n$, the $\ell$-th power of this matrix converges to $\dyad v$, where $v$ is the top eigenvector of the centered adjacency matrix.
For constant degree $d=O(\log n)$, it is well-known that this eigenvector fails to provide a good approximation to the true labeling.
In particular, the corresponding polynomial fails to satisfy the conditions of \cref{lem:basis-conditions} close to the Kesten--Stigum threshold.

\subsection{Low-degree estimators for higher-order moments}

\label{sec:techniques-estimate-higher-moments}

Let's turn to the general mixed-membership stochastic block model $\sbm(n,d,\e,k,\alpha_0)$.
Let $(G,\sigma)$ be graph $G$ and community structure $\sigma=(\sigma_1,\ldots,\sigma_n)$ drawn from this model.
Recall that $\sigma_1,\ldots,\sigma_n$ are $k$-dimensional probability vectors, each roughly uniform over $\alpha_0+1$ of the coordinates.
Let $x\in \bits^{n\times n}$ be the adjacency matrix of $G$ and let $y_1,\ldots,y_k\in\R^n$ be centered community indicator vectors, so that $(y_s)_i=(\sigma_i)_s-\tfrac 1k$.

It's instructive to see that, unlike for disjoint communities, second moments are not that useful for overlapping communities.
As a thought experiment suppose we are given the matrix $\sum_{s=1}^k \dyad{\paren{y_s}}$ (which we can estimate using the path polynomials described earlier).

In case of disjoint communities, this matrix allows us to ``read off'' the community structure directly (because two vertices are in the same community if and only if the entry in the matrix is $1-O(1/k)$).

For overlapping communities (say the extreme case $\alpha_0\gg k$ for simplicity), we can think of each $\sigma_i$ as a random perturbation of the uniform distribution so that $(\sigma_i)_s = (1+\xi_{i,s})\tfrac 1 k$ for iid Gaussians $\set{\xi_{i,s}}$ with small variance.
Then, the centered community indicator vectors $y_1,\ldots,y_k$ are iid centered, spherical Gaussian vectors.
In particular, the covariance matrix $\sum_{s=1}^k \dyad{y_s}$ essentially only determines the subspace spanned by the vectors $y_1,\ldots,y_k$ but not the vectors themselves.
(This phenomenon is sometimes called the ``rotation problem'' for matrix factorizations.)

In contrast, classical factor analysis results show that if we were given the third moment tensor $\sum_{s=1}^k y_s^{\otimes 3}$, we could efficiently reconstruct the vectors $y_1,\ldots,y_k$ \cite{harshman1970foundations,MR1238921-Leurgans93}.
This fact is the reason for aiming to estimate higher order moments in order to recover overlapping communities.

In the same way that a single edge $x_{i,j}-\tfrac dn$ gives an unbiased estimator for the $(i,j)$-entry of the second moment matrix, a 3-star $(x_{i,c}-\tfrac d n)(x_{j,c}-\tfrac d n)(x_{k,c}-\tfrac dn)$ gives an unbiased estimator for the $(i,j,k)$-entry of the third moment tensor $\sum_{s=1}^k y_s^{\otimes 3}$.
This observation is key for the previous best algorithm for mixed-membership community detection \cite{DBLP:conf/colt/AnandkumarGHK13}.
However, even after averaging over all possible centers $c$, the variance of this estimator is far too large for sparse graphs.
In order to decrease this variance, previous algorithms  \cite{DBLP:conf/colt/AnandkumarGHK13} project the tensor to the top eigenspace of the centered adjacency matrix of the graph.
In terms of polynomial estimators this projection corresponds to averaging over all length-$\ell$-armed 3-stars\footnote{A length-$\ell$-armed 3-star between $i,j,k\in [n]$ consists of three length-$\ell$ walks between $i,j,k$ and a common center $c\in [n]$} for $\ell=\log n$.
Even for disjoint communities, this polynomial estimator would fail to achieve the Kesten--Stigum bound.

In order to improve the quality of this polynomial estimator, informed by the shape of threshold-achieving estimator for second moments, we average only over such long-armed 3-stars that are self-avoiding.
We show that the resulting estimator achieves constant correlation with the desired third moment tensor precisely up to the Kesten--Stigum bound (\cref{sec:estimator-third-moment}).

\subsection{Correlation-preserving projection}

A recurring theme in our algorithms is that we can compute an approximation vector $P$ that is correlated with some unknown ground-truth vector $Y$ in the Euclidean sense $\iprod{P,Y}\ge \delta\cdot \norm{P}\cdot \norm{Y}$, where the norm $\norm{\cdot}$ is induced by the inner product $\iprod{\cdot,\cdot}$.
(Typically, we obtain $P$ by evaluating a low-degree polynomial in the observable variables and $Y$ is the second or third moment of the hidden variables.)

In this situation, we often seek to improve the quality of the approximation $P$---not in the sense of increasing the correlation, but in the sense of finding a new approximation $Q$ that is ``more similar'' to $Y$ while roughly preserving the correlation, so that $\iprod{Q,Y}\ge \delta^{O(1)}\cdot \norm{Q}\cdot \norm{Y}$.
As a concrete example, we may know that $Y$ is a positive semidefinite matrix with all-ones on the diagonal and our goal is to take an arbitrary matrix $P$ correlated with $Y$ and compute a new matrix $Q$ that is still correlated with $Y$ but in addition is positive semidefinite and has all-ones on the diagonal.
More generally, we may know that $Y$ is contained in some convex set $\cC$ and the goal is ``project'' $P$ into the set $\cC$ while preserving the correlation.
We note that the perhaps  most natural choice of $Q$ as the vector closest to $P$ in $\cC$ does not work in general.
(For example, if $Y=(1,0)$, $\cC=\set{(a,b)\mid a\le 1}$, and $P=(\delta\cdot M, M)$, then the closest vector to $P$ in $\cC$ is $(1,M)$, which has poor correlation with $Y$ for large $M$.)

\begin{theorem}[Correlation-preserving projection]
  \label{thm:correlation-preserving-projection}
  Let $\cC$ be a convex set and $Y\in \cC$.
  Let $P$ be a vector with $\iprod{P,Y}\ge \delta \cdot \norm{P}\cdot \norm{Y}$.
  Then, if we let $Q$ be the vector that minimizes $\norm{Q}$ subject to $Q\in \cC$ and $\iprod{P,Q}\ge \delta \cdot \norm{P}\cdot \norm{Y}$, we have
  \begin{equation}
    \iprod{Q,Y}\ge \delta/2 \cdot \norm{Q}\cdot \norm{Y}\,.
  \end{equation}
  Furthermore, $Q$ satisfies $\norm{Q}\ge \delta \norm{Y}$.
\end{theorem}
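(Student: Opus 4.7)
The plan is to argue via first-order optimality: since $Q$ minimizes Euclidean norm over a convex feasible set that contains $Y$, the line segment from $Q$ to $Y$ points ``outward'' from $Q$, which forces $\iprod{Q,Y}\ge \norm{Q}^2$. Combined with a lower bound on $\norm{Q}$ coming from Cauchy--Schwarz, this immediately yields the desired correlation.

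First I would dispose of the ``furthermore'' statement. The feasibility constraint $\iprod{P,Q}\ge \delta\norm{P}\norm{Y}$ together with Cauchy--Schwarz gives $\norm{P}\norm{Q}\ge \iprod{P,Q}\ge \delta\norm{P}\norm{Y}$, so $\norm{Q}\ge \delta\norm{Y}$. Next, note that $Y$ itself is feasible for the optimization defining $Q$ (it lies in $\cC$ by hypothesis and satisfies $\iprod{P,Y}\ge \delta\norm{P}\norm{Y}$ by hypothesis), so the feasible set is nonempty and, by convexity of $\cC$ and linearity of the other constraint, it is convex. In particular, for every $\lambda\in[0,1]$ the vector $Z_\lambda = (1-\lambda)Q + \lambda Y$ is feasible.

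By the defining minimality of $Q$, the function $f(\lambda)=\norm{Z_\lambda}^2$ attains its minimum over $[0,1]$ at $\lambda=0$, so $f'(0)\ge 0$. Computing
\begin{equation*}
f'(0) = 2\iprod{Y-Q,\,Q} = 2\bigparen{\iprod{Q,Y}-\norm{Q}^2},
\end{equation*}
we conclude $\iprod{Q,Y}\ge \norm{Q}^2$. Combining with $\norm{Q}\ge \delta\norm{Y}$ gives $\iprod{Q,Y}\ge \norm{Q}^2\ge \delta\norm{Q}\norm{Y}$, which is even stronger than the claimed bound with factor $\delta/2$ (the extra slack in the statement presumably absorbs any non-attainment or boundary issues in more general settings).

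There is essentially no obstacle: the only subtlety is verifying that $Y$ itself is feasible (so that the minimum is meaningful and bounded by $\norm{Y}$) and that the feasible set is convex, both of which are immediate from the hypotheses. The proof is really just first-order optimality for a convex minimum-norm problem, applied along the segment toward $Y$.
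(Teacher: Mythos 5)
Your proof is correct, and it is a close cousin of the paper's argument with one small but real difference in the key inequality. The paper also views $Q$ as the Euclidean projection of $0$ onto the convex set $\cC'=\set{Q\in\cC \mid \iprod{P,Q}\ge\delta\norm{P}\norm{Y}}$, but it invokes the nonexpansiveness of projection, $\norm{Y-Q}^2\le\norm{Y-0}^2$, which after expanding gives only $\iprod{Q,Y}\ge\norm{Q}^2/2$ and hence the stated factor $\delta/2$. You instead use the first-order optimality (variational) characterization of the projection along the segment from $Q$ to $Y$, which gives $\iprod{Q,Y-Q}\ge 0$, i.e.\ $\iprod{Q,Y}\ge\norm{Q}^2$ --- a factor of $2$ better --- and therefore the sharper conclusion $\iprod{Q,Y}\ge\delta\norm{Q}\norm{Y}$. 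Both proofs use the identical Cauchy--Schwarz step $\norm{Q}\ge\delta\norm{Y}$, and both implicitly assume the minimizer is attained (which the theorem statement presupposes). Your observation that $Y$ itself is feasible, so the feasible set is nonempty and convex, is exactly the point the paper also relies on. In short: same projection viewpoint, but you replaced the Pythagorean inequality with the variational inequality and thereby lost nothing and gained a factor of $2$ in the constant.
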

\begin{proof}
  By construction, $Q$ is the Euclidean projection of $0$ into the set $\cC'\seteq \set{Q \in \cC \mid \iprod{P,Q}\ge \delta \norm{P}\cdot \norm{Y}}$.
  It's a basic geometric fact (sometimes called Pythagorean inequality) that a Euclidean projection into a set decreases distances to points into the set.
  Therefore, $\norm{Y-Q}^2\le \norm{Y-0}^2$ (using that $Y\in \cC'$).
  Thus, $\iprod{Y,Q}\ge \norm{Q}^2/2$.
  On the other hand, $\iprod{P,Q}\ge \delta \norm{P}\cdot \norm{Y}$ means that $\norm{Q}\ge \delta \norm{Y}$ by Cauchy--Schwarz.
  We conclude $\iprod{Y,Q}\ge \delta/2 \cdot \norm{Y}\cdot \norm{Q}$.
\end{proof}

In our applications the convex set $\cC$ typically consists of probability distributions or similar objects (for example, quantum analogues like density matrices or pseudo-distributions---the sum-of-squares analogue of distributions).
Then, the norm minimization in \cref{thm:correlation-preserving-projection} can be viewed as maximizing the \Renyi entropy of the distribution $Q$.
From this perspective, maximizing the entropy within the set $\cC'$ ensures that the correlation with $Y$ is not lost.

\subsection{Low-correlation tensor decomposition}

Earlier  we described how to efficiently compute a 3-tensor $P$ that has correlation $\delta>0$ with a 3-tensor $\sum_{i=1}^k y_i^{\otimes 3}$, where $y_1,\ldots,y_k$ are unknown orthonormal vectors we want to estimate (\cref{sec:techniques-estimate-higher-moments}).
Here, the correlation $\delta$ depends on how far we are from the threshold and may be minuscule (say $0.001$).

It remains to decompose the tensor $P$ into a short list of vectors $L$ so as to ensure that $\E_{i\in [k]} \max_{\hat y\in L}\iprod{\hat y, y}\ge \delta^{O(1)}$.
(Ideally of course $|L| = k$. In the block model context this guarantee requires a small amount of additional work to cross-validate vectors in a larger list.)
To the best of our knowledge, previous tensor decomposition algorithms do not achieve this kind of guarantee and require that the correlation of $P$ with the orthogonal tensor $\sum_{i=1}^k y_i^{\otimes 3}$ is close to $1$ (sometimes even within polynomial factors $1/n^{O(1)}$).

In the current work, we achieve this guarantee building on previous sum-of-squares based tensor decomposition algorithms \cite{DBLP:conf/stoc/BarakKS15,DBLP:conf/focs/MaSS16}.
These algorithms optimize over moments of pseudo-distributions (a generalization of probability distributions) and then apply Jennrich's classical tensor decomposition algorithms to these ``pseudo-moments''.
The advantage of this approach is that it provably works even in situations where Jennrich's algorithm fails when applied to the original tensor.

As a thought experiment, suppose we are able to find pseudo-moments $M$ that are correlated with the orthogonal tensor $\sum_{i=1}^k y_i^{\otimes 3}$.
Extending previous techniques \cite{DBLP:conf/focs/MaSS16}, we show that Jennrich's algorithm applied to $M$ is able to recover vectors that have constant correlation with a constant fraction of the vectors $y_1,\ldots,y_k$.

A priori it is not clear how to find such pseudo-moments $M$ because we don't know the orthogonal tensor $\sum_{i=1}^k y_i^{\otimes 3}$, we only know a 3-tensor $P$ that is slightly correlated with it.
Here, the correlation-preserving projection discussed in the previous section comes in:
by \cref{thm:correlation-preserving-projection} we can efficiently project $P$ into the set of pseudo-moments in a way that preserves correlation.
In this way, we obtain pseudo-moments $M$ that are correlated with the unknown orthogonal tensor $\sum_{i=1}^k y_i^{\otimes 3}$.

When $P$ is a $3$-tensor as above, we encounter technical difficulties inherent to odd-order tensors.
(This is a common phenomenon in the tensor-algorithms literature.)
To avoid these difficulties we give a simple algorithm, again using the correlation-preserving projection idea, to lift a $3$-tensor $P$ which is $\delta$-correlated with an orthogonal tensor $A$ to a $4$-tensor $P'$ which is $\delta^{O(1)}$-correlated with an appropriate orthogonal $4$-tensor.
See Section~\ref{sec:3-to-4}.

\subsection{From quasi-polynomial time to polynomial time}
\label{sec:techniques-color-coding}

In this section, we describe how to evaluate certain logarithmic-degree polynomials in polynomial-time (as opposed to quasi-polynomial time).
The idea is to use color coding \cite{DBLP:journals/jacm/AlonYZ95}.\footnote{We thank Avi Wigderson for suggesting that color coding may be helpful in this context.}

For a coloring $c\from [n]\to [\ell]$ and a subgraph $\alpha\subseteq [n]^2$ on $\ell$ vertices, let $F_{c,\alpha}=\tfrac {\ell^\ell}{\ell!} \cdot \Ind_{c(\alpha)=[\ell]}$ be a scaled indicator variable of the event that $\alpha$ is colorful.

\begin{theorem}[Evaluating colorful-path polynomials]
  There exists a $n^{O(1)}\cdot \exp(\ell)$-time algorithm that given vertices $i,j\in[n]$, a coloring $c\from [n]\to [\ell]$ and an adjacency matrix $x\in \bits^{n\times n}$ evaluates the polynomial
  \begin{equation}
    p_{c}(x) \seteq \tfrac {1}{\card{\saw_\ell(i,j)}} \sum_{\alpha\in \saw_\ell(i,j)} p_\alpha(x) \cdot F_{c,a}\,.
  \end{equation}
  (Here, $p_\alpha\propto \prod_{ab\in \alpha} (x_{ab}-\tfrac dn)$ is the polynomial in \cref{eq:path-polynomial-techniques}.)
\end{theorem}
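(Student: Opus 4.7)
The plan is to evaluate $p_c(x)$ via the standard color-coding dynamic program. The key enabling observation is that $F_{c,\alpha}$ vanishes unless the $\ell$ vertices of $\alpha$ receive $\ell$ \emph{distinct} colors, in which case $\alpha$ is automatically self-avoiding (two equal vertices would share a color). So, inside the sum defining $p_c(x)$, the constraint $\alpha \in \saw_\ell(i,j)$ is redundant under $F_{c,\alpha}\ne 0$: one may equivalently sum over all \emph{colorful} walks of $\ell$ vertices from $i$ to $j$, a task well suited to a subset-indexed DP.

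Concretely, for each vertex $v\in[n]$ and each nonempty $S\subseteq[\ell]$, define
\[
W_v(S) \defeq \sum_{\alpha}\, \prod_{ab \in \alpha} p_{ab}(x),
\]
where the sum ranges over walks $\alpha$ from $i$ to $v$ whose sequence of vertex-colors uses exactly the colors in $S$, each used once (so the walk has $|S|$ vertices). The base case is $W_i(\{c(i)\}) = 1$, and for $|S|\ge 2$ with $c(v)\in S$ the recurrence is
\[
W_v(S) \;=\; \sum_{u\in[n]} W_u\bigl(S \setminus \{c(v)\}\bigr)\cdot p_{uv}(x),
\]
with the convention that $W_u(S')=0$ if $c(u)\notin S'$ (so nonzero contributions automatically maintain colorfulness). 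Note that the inner sum runs over all $u\in[n]$, not just neighbors in $G$, because $p_{uv}(x) = \tfrac{n}{\e d}(x_{uv}-\tfrac dn)$ is a nonzero scalar even when $x_{uv}=0$. After filling the table, the algorithm returns
\[
p_c(x) \;=\; \tfrac{1}{|\saw_\ell(i,j)|}\cdot \tfrac{\ell^\ell}{\ell!} \cdot W_j([\ell]).
\]

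For the running time: the DP table has $n\cdot 2^\ell$ entries, each filled by an $O(n)$-term sum of scalar products, totaling $O(n^2\cdot 2^\ell)$ arithmetic operations, which is $n^{O(1)}\cdot \exp(\ell)$. The normalizer $|\saw_\ell(i,j)|$ depends only on $n$ and $\ell$ and is computable in closed form in constant time; correctness of the reconstruction formula is the pigeonhole remark above, combined with an immediate induction on $|S|$ showing that $W_v(S)$ really does sum $\prod p_{ab}$ over colorful walks from $i$ to $v$ with color-set $S$.

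I do not anticipate a serious obstacle: the only substantive step is the reduction from colorful self-avoiding walks to \emph{all} colorful walks, which is immediate from pigeonhole; the remainder is routine subset-DP bookkeeping with no algebraic subtlety, since the edge-weights $p_{uv}(x)$ are simply scalars in the DP.
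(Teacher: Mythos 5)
Your proof is correct and is essentially the paper's own argument: the paper computes the $\ell$-th power of an $n\cdot 2^\ell$-dimensional transfer matrix indexed by pairs (vertex, color set), which is exactly your subset-indexed dynamic program written as matrix multiplication. Your writeup is if anything more explicit than the paper's about the two points that make it work --- that colorfulness forces self-avoidance (so the $\saw$ constraint is redundant) and that the recurrence's color-set bookkeeping preserves this invariant.
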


\begin{proof}
  We can reduce this problem to computing the $\ell$-th power of the following $n\cdot 2^\ell$-by-$n\cdot 2^\ell$ matrix: The rows and columns are indexed by pairs $(a,S)$ of vertices $a\in [n]$ and color sets $S\subseteq [\ell]$.
  The entry for column $(a,S)$ and row $(b,T)$ is equal to $x_{ab}-\tfrac dn$ if $T=S\cup \set{c(a)}$ and $0$ otherwise.
  If we compute the $\ell$-th power of this matrix, then the entry for column $(i,\emptyset)$ and row $(j,[\ell])$ is the sum over all colorful $\ell$-paths from $i$ to $j$.
\end{proof}

For a fixed coloring $c$, the polynomial $p_c$ does not provide a good approximation for the polynomial $p(x)\seteq \tfrac {1}{\card{\saw_\ell(i,j)}} \sum_{\alpha\in \saw_\ell(i,j)} p_\alpha (x)$.
In order to get a good approximation, we will choose random colorings and average over them.

If we let $c$ be a random coloring, then by construction $\E_c F_{c,\alpha}=1$ for every simple $\ell$-path $\alpha$.
Therefore, $\E_c p_c(x)= p(x)$ for every $x\in \bits^{n\times n}$.
We would like to estimate the variance of $p_c(x)$.
Here, it turns out to be important to consider a typical $x$ drawn from stochastic block model distribution SBM.
\begin{align}
  \E_{x\sim \sbm(n,d,\e)} \E_{c} p_c(x)^2
  &= \tfrac {1}{\card{\saw_\ell(i,j)}^2}\sum_{\alpha,\beta\in \saw_{\ell}(i,j)} \E_c F_{c,\alpha}\cdot F_{c,\beta} \cdot \E_{x\sim \sbm} p_\alpha(x)p_\beta(x)\\
  &\le e^{2\ell} \cdot \tfrac {1}{\card{\saw_\ell(i,j)}} \sum_{\alpha,\beta\in \saw_{\ell}(i,j)} \abs{\E_{x} p_\alpha(x)p_\beta(x)}
    \,.
    \label{eq:colorful-variance}
\end{align}
For the last step, we use that $\E_c F_{c,\alpha}^2\le e^{2\ell}$ (because $\ell^\ell/\ell!\le e^\ell$).

The right-hand side of \cref{eq:colorful-variance} corresponds precisely to our notion of approximate pairwise independence in \cref{lem:basis-conditions}.
Therefore, if we are within the Kesten--Stigum bound, $\e^2 d \ge 1+\delta$, the right-hand side of \cref{eq:colorful-variance} is bounded by $e^{2\ell} \cdot1/\delta^{O(1)}$.

We conclude that with high probability over $x$, the variance of $p_c(x)$ for random $c$ is bounded by $e^{O(\ell)}$.
It follows that by averaging over $e^{O(\ell)}$ random colorings we obtain a low-variance estimator for $p(x)$.

\subsection{Illustration: push-out effect in spiked Wigner matrices}
We turn to a first demonstration of our meta-algorithm beyond the stochastic block model: deriving the critical signal-to-noise ratio for (Gaussian) Wigner matrices (i.e. symmetric matrices with iid entries) with rank-one spikes.
This section demonstrates the use of \cref{thm:meta-theorem-2nd}; more sophisticated versions of the same ideas (for example our 3rd-moment meta-theorem, \cref{thm:meta-theorem-3rd}) will be used in the course of our block model algorithms.

\Dnote{}
Consider the following Bayesian estimation problem:
We are given a spiked Wigner matrix $A=\lambda \dyad v + W$ so that $W$ is a random symmetric matrix with Gaussian entries $W_{ij}\sim \cN(0,\tfrac 1n)$ and $v\sim \cN(0,\tfrac 1n \Id)$.
The goal is to estimate $v$, i.e., compute a unit vector $\hat v$ so that $\iprod{v,\hat v}^2\ge \Omega(1)$.
Since the spectral norm of a Wigner matrix satisfies $\E \norm{W}=\sqrt 2$, it follows that for $\lambda>\sqrt 2$, the top eigenvector $\hat v$ of $A$ satisfies $\iprod{v,\hat v}^2\ge \Omega(1)$.
However, it turns out that we can estimate the spike~$v$ even for smaller values of $\lambda$:
a remarkable property of spiked Wigner matrices is that as soon as $\lambda>1$, the top eigenvector $\hat v$ becomes correlated with the spike $v$ \cite{baik2005phase}.
(This property is sometimes called the ``pushout effect''.)

Unfortunately known proofs of this property a quite involved.
\Dnote{}
In the following, we apply \cref{thm:meta-theorem-2nd} to give an alternative proof of the fact that it is possible to efficiently estimate the spike $v$ as soon as $\lambda>1$.
Our algorithm is more involved and less efficient than computing the top eigenvector of $A$.
The advantage is that its analysis is substantially simpler compared to previous analyses.

\begin{theorem}[implicit in \cite{baik2005phase}]\label{thm:wigner-pushout}
  If $\lambda = 1 + \delta$ for some $1 > \delta > 0$, there is a degree ${\delta^{-O(1)}}\cdot \log n$ matrix-valued polynomial $f(A) = \{f_{ij}(A)\}_{ij \leq n} $ such that
  \[
    \frac{\E_{W,v} \Tr f(A) vv^\top}{(\E \|f(A)\|_F^2)^{1/2} \cdot (\E \|vv^\top\|_F^2)^{1/2}} \geq \delta^{O(1)}\mper
  \]
\end{theorem}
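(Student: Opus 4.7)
My plan is to apply the 2nd-moment meta-theorem (\cref{thm:meta-theorem-2nd}) with a matrix-valued polynomial $f(A)$ built from self-avoiding walks, adapting the template used for the two-community SBM in Section~\ref{sec:techniques}. Because $A_{ij} = \lambda v_i v_j + W_{ij}$ with $W_{ij}$ mean-zero Gaussian independent of $v$, the single-edge polynomial $\lambda^{-1}A_{ij}$ is already an unbiased estimator for $v_i v_j$ (conditional on $v$), but its variance $\approx 1/n$ dwarfs the signal $\E(v_iv_j)^2 = 1/n^2$. To amplify the signal, for each pair $i\neq j$ and each self-avoiding walk $\alpha=(\alpha_0=i,\alpha_1,\ldots,\alpha_\ell=j)$, set
\[
p_\alpha(A) \;\defeq\; \lambda^{-\ell}\prod_{k=0}^{\ell-1} A_{\alpha_k\alpha_{k+1}}.
\]
Since the $W_{\alpha_k\alpha_{k+1}}$ are independent for self-avoiding $\alpha$, one has $\E[p_\alpha(A)\mid v] = v_i v_j \prod_{k=1}^{\ell-1} v_{\alpha_k}^2$. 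Averaging over all walks in $\mathrm{SAW}_\ell(i,j)$ effectively replaces the unknown internal factors $v_{\alpha_k}^2$ by their common mean $1/n$, so I define the $(i,j)$ entry of the matrix polynomial by
\[
f_{ij}(A) \;\defeq\; \frac{n^{\ell-1}}{\card{\mathrm{SAW}_\ell(i,j)}} \sum_{\alpha\in\mathrm{SAW}_\ell(i,j)} p_\alpha(A),
\]
with the prefactor chosen so that the signal is an order-one quantity. A direct Gaussian-moment calculation using independence of the $\ell+1$ distinct vertex-variables along each self-avoiding walk gives $\E[f_{ij}(A)\,v_i v_j] = \Theta(1/n^2)$, hence $\E\iprod{f(A), vv^\top} = \Theta(1)$, and $\E\Norm{vv^\top}_F^2 = \E\Norm{v}^4 = 1 + O(1/n)$.

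The heart of the argument is to bound $\E \Norm{f(A)}_F^2 = \sum_{ij}\E f_{ij}(A)^2$. For each pair $(i,j)$ I would invoke (the rescaled form of) \cref{lem:basis-conditions} with the collection $\{p_\alpha\}_{\alpha\in\mathrm{SAW}_\ell(i,j)}$; one must control $\sum_{\alpha,\beta}\E p_\alpha(A)p_\beta(A)$. For two walks sharing $s$ edges, taking $\E_W$ first (by Wick's formula, every unshared $W_e$ must pair with itself, and since unshared edges appear with multiplicity one they force us to pick $\lambda v_{e_1}v_{e_2}$ on those edges) gives
\[
\E p_\alpha p_\beta \;\approx\; \lambda^{-2\ell}\bigparen{1/n}^s \cdot \E_v\!\!\prod_{e\in \alpha\triangle\beta}\!\!\lambda v_{e_1}v_{e_2},
\]
where I use $\E A_e^2 \approx 1/n$ since $\lambda^2/n^2 \ll 1/n$. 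The remaining $v$-expectation is nonzero only when the symmetric-difference edges pair the endpoint-multiplicities into a Gaussian matching, restricting to pairs $(\alpha,\beta)$ with a combinatorially constrained overlap pattern. The diagonal $\alpha=\beta$ contribution sums to $n/\lambda^{2\ell}$, which is $o(1)$ once $\ell \geq (\log n)/(2\log\lambda)$, i.e., $\ell = \Theta(\log n /\delta)$.

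The principal technical obstacle, exactly as in the SBM proof of Section~\ref{sec:techniques}, is showing that the off-diagonal pairs contribute at most a multiplicative factor $\delta^{-O(1)}$ over the diagonal; this reduces to a counting argument on pairs of self-avoiding walks classified by their shared-edge pattern, made somewhat cleaner than the Bernoulli SBM case by the fact that only even-multiplicity factors of $W$ survive the Gaussian expectation. Given this variance bound, \cref{lem:basis-conditions} yields
\[
\E \iprod{f(A), vv^\top} \;\geq\; \delta^{O(1)} \cdot \bigparen{\E\Norm{f(A)}_F^2}^{1/2}\bigparen{\E\Norm{vv^\top}_F^2}^{1/2},
\]
completing the proof with $\deg f = \ell = \delta^{-O(1)}\log n$ as required. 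Everything apart from the off-diagonal variance bound is an elementary exercise in Gaussian moment computations and path-counting along self-avoiding walks; in particular the ``pushout'' threshold $\lambda = 1$ appears as the location where $\lambda^{2\ell}$ first dominates $n$ for logarithmic $\ell$, mirroring how the Kesten--Stigum threshold arises in the SBM analysis.
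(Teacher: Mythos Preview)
Your proposal is correct and follows essentially the same approach as the paper's proof: both build $f_{ij}$ as an average of self-avoiding-walk polynomials $p_\alpha(A)=\lambda^{-\ell}\prod_{e\in\alpha}A_e$, verify the unbiased-estimator property $\E[p_\alpha\mid v_i,v_j]\propto v_iv_j$, and apply \cref{lem:basis-conditions} after bounding $\sum_{\alpha,\beta}\E p_\alpha p_\beta$ by classifying pairs according to their shared-edge pattern. The paper makes the variance bound slightly more explicit by singling out the dominant overlap pattern (shared initial/terminal segments from $i$ and $j$) and summing the resulting geometric series $\sum_t \lambda^{-t}$, but your sketch points to exactly this computation; one small imprecision is that the $v$-expectation over $\alpha\triangle\beta$ is in fact \emph{always} nonzero (every vertex in the symmetric difference of two $i$--$j$ walks has even degree), so the restriction on contributing pairs comes purely from the vertex-count versus edge-count combinatorics, not from vanishing Gaussian moments.
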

Together with Theorem~\ref{thm:meta-theorem-2nd}, the above theorem gives an algorithm with running time $n^{\log n / \delta^{O(1)}}$ to find $\hat v$ with nontrivial $\E \iprod{\hat v, v}^2$.\footnote{While this algorithm is much slower than the eigenvector-based algorithm---even after using color coding to improve the $n^{\log n/\delta^{O(1)}}$ running time to $n^{1/\delta^{O(1)}}$---the latter requires many sophisticated innovations and ideas from random matrix theory. This algorithm, by contrast, can be derived and analyzed with our meta-theorem, little innovation required.}

The analysis of \cite{baik2005phase} establishes the above theorem for the polynomial $f(A)=A^\ell$ with $\ell=\delta^{-O(1)}\cdot\log n$.
Our proof chooses a different polynomial, which affords a substantially simpler analysis.
\begin{proof}[Proof of Theorem~\ref{thm:wigner-pushout}]
  For $\alpha \subseteq \binom{n}{2}$, let $\chi_\alpha(A) = \prod_{\{i,j\} \in \alpha} A_{ij}$.
  Let $L = \log n / \delta^C$ for $C$ a large enough constant.
  For $ij \in [n]$, let $SAW_{ij}(L)$ be the collection of all self-avoiding paths from $i$ to $j$ in the complete graph on $n$ vertices.
  Observe that $\tfrac {n^{L - 1}} {\lambda^L} \chi_{\alpha}$ for $\alpha \in SAW_{ij}(L)$ is an unbiased estimator of $v_i v_j$:
  \[
  \E\Brac{ \chi_\alpha(A) \, | \, v_i, v_j } = \E_{v}\Brac{ \prod_{k\ell \in \alpha} \E_W (W_{k\ell} + \lambda v_k v_\ell) \, | \, v_i,v_j} = \lambda^L v_i v_j \E \prod_{k \in \alpha \setminus \set{i,j}} v_k^2 = \frac{\lambda^L}{n^{L-1}} \cdot v_i v_j \mper
  \]
  We further claim that the collection $\{\tfrac {n^{L - 1}} {\lambda^L} \chi_{\alpha}\}_{\alpha \in SAW_{ij}(L)}$ is approximately pairwise independent in the sense of Lemma~\ref{lem:basis-conditions}.
  To show this we must check that
  \[
  \frac {n^{2(L-1)}}{\lambda^{2L}} \sum_{\alpha, \beta} \E \chi_\alpha \chi_\beta \leq \frac 1 {\delta^2} |SAW_{ij}(L)|^2 \E v_i^2 v_j^2 
  = \frac 1 {\delta^2} |SAW_{ij}(L)|^2 \cdot \frac 1 {n^2}\mper
  \]
  The dominant contributers to the sum are $\alpha,\beta$ which intersect only on the vertices $i$ and $j$.
  In that case,
  \[
  \frac {n^{2(L-1)}}{\lambda^{2L}} \E \chi_{\alpha} \chi_\beta = {n^{2(L-1)}} \E \prod_{k \in \alpha \cup \beta} v_k^2 = \E v_i^2 v_j^2\mper
  \]
  The only other terms which might contribute to the same order are $\alpha, \beta$ such that $\alpha \cap \beta$ is a union of two paths, one starting at $i$ and one at $j$.
  If the lengths of these paths are $t$ and $t'$, respectively, and $t' + t' < L$, then
  \[
  \frac {n^{2(L-1)}}{\lambda^{2L}} \E \chi_{\alpha} \chi_{\beta} = \frac {n^{2(L-1)}}{\lambda^{2(t + t')}} \E_v\Brac{ \prod_{(k,\ell) \in \alpha \cap \beta} (\E_W A_{k\ell}^2 ) \cdot \prod_{(k,\ell) \in \alpha \triangle \beta} v_k v_\ell}
  = \frac{n^{t + t'}}{\lambda^{t + t'}} \cdot (1 + O(\lambda^2/n))^{t+t'}
  \]
  where we have used that $\E \Brac{A_{k\ell}^2 \, | \, v_k,v_\ell} = \tfrac 1 n (1 + O(\lambda^2/n)) \cdot \E v_i^2 v_j^2 $.

  There are at most $|SAW_{ij}(L)|^2/n^{t+t'}$ choices for such pairs $\alpha,\beta$, so long as $t + t' < L$.
  If $t + t' = L$, then there are $n$ times more choices than the above bound.
  All together,
  \[
  \frac {n^{2(L-1)}}{\lambda^{2L}} \sum_{\alpha,\beta \in SAW_{ij}(L)} \E \chi_\alpha \chi_\beta \leq |SAW_{ij}(L)| \cdot \Paren{\Paren{\sum_{t=0}^L \frac 1 {\lambda^t}}^2 + \frac n {\lambda^L}} \cdot \E v_i^2 v_j^2 \leq \frac {1 + o(1)} {1-1/ \lambda} \cdot |SAW_{ij}(L)| \cdot \E v_i^2 v_j^2
  \]
  where we have used that $\lambda = 1 + \delta > 1$ and chosen $C$ large enough that $n/\lambda^L \leq 1/n$. 
  Rewriting in terms of $\delta = \lambda -1$ and applying Lemma~\ref{lem:basis-conditions} finishes the proof.
\end{proof}

    \section{Warmup: stochastic block model with two communities}
\label{sec:warmup}

We demonstrate our meta-algorithm by applying it to the two-community stochastic block model.
The algorithm achieves here the same threshold for partial recovery as the best previous algorithms \cite{DBLP:journals/corr/MosselNS13a,DBLP:journals/corr/Massoulie13}, which is also known to be the information-theoretic threshold \cite{MR3383334-Mossel15}.

While the original works involved a great deal of ingenuity, the merit of our techniques is to provide a simple and automatic way to discover and analyze an algorithm achieving the same guarantees.
\Snote{}

\begin{definition}[Two-community stochastic block model]
  For parameters $\epsilon, d > 0$, let $\sbm(n,d,\e)$ be the following distribution on pairs $(x,y)$ where $x \in \bits^{\binom{n}2}$ is the adjacency matrix of an $n$-vertex graph and $y\in\sbits^n$ is a labeling of the $n$ vertices.
  First, sample $y \sim \{ \pm 1\}^n$ uniformly.
  Then, independently for every pair $i<j$, add the edge $\{i,j\}$ with probability $(1+\e)\tfrac dn$ if $y_i = y_j$ and with probability $(1-\e)\tfrac dn$ if $y_i\neq y_j$.
\end{definition}

The following theorem gives the best bounds for polynomial-time algorithms for partial recovery in this model.
(We remark that the algorithms in \cite{DBLP:journals/corr/MosselNS13a,DBLP:journals/corr/Massoulie13} actually run in time close to linear.
In this work, we content ourselves with coarser running time bounds.)

\begin{theorem}[\cite{DBLP:journals/corr/MosselNS13a,DBLP:journals/corr/Massoulie13}]
  \label{thm:two-communities}
  Let $\e\in \R$, $d\in \N$ with $\delta \coloneq 1 - \tfrac 1 {\epsilon^2 d}$ and $d \le n^{o(1)}$.
  Then, there exists a randomized polynomial-time algorithm $A$ that given a graph $x\in \bits^{\binom n 2}$ outputs a labeling $\tilde y(x)$ such that for all sufficiently large $n\ge n_0(\e,d)$,
  \begin{displaymath}
    \E_{(x,y) \sim \sbm(n, d, \epsilon)} \iprod{\tilde y(x),y}^2 \geq \delta^{O(1)} \cdot n^2\,.
  \end{displaymath}
\end{theorem}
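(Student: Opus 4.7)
The plan is to exhibit a degree $O(\log n)$ matrix-valued polynomial $P(x)$ that is constantly correlated with $\dyad y$, in the sense of hypothesis \cref{eq:second-moment-correlation}, and then apply \cref{thm:meta-theorem-2nd} together with a standard color-coding step to turn the quasi-polynomial-time evaluation of $P$ into a polynomial-time algorithm. Once we have a unit vector $\hat u(x)$ correlated with $y/\sqrt n$, the labeling $\tilde y(x) = \sign(\hat u(x)) \in \sbits^n$ inherits $\E \iprod{\tilde y(x),y}^2 \ge \delta^{O(1)} \cdot n^2$ after a standard rounding argument (the sign step loses at most a universal constant).

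For the polynomial, I will follow the self-avoiding walk recipe from \cref{sec:techniques}. Set $\ell = C \log n / \delta^{O(1)}$ and for every edge $ab$ let $p_{ab}(x) = \tfrac{n}{\e d}(x_{ab} - \tfrac d n)$, which is an unbiased estimator of $y_a y_b$. For a self-avoiding walk $\alpha \in \saw_\ell(i,j)$, define $p_\alpha(x) = \prod_{ab \in \alpha} p_{ab}(x)$; since $y_v^2 = 1$ for all intermediate vertices, $\E[p_\alpha(x) \mid y_i,y_j] = y_i y_j$. Let $P_{ij}(x) = \tfrac{1}{|\saw_\ell(i,j)|} \sum_{\alpha} p_\alpha(x)$ and form $P(x)$ with these entries. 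To verify the approximate pairwise-independence hypothesis of \cref{lem:basis-conditions}, I will decompose $\sum_{\alpha,\beta} \E p_\alpha(x) p_\beta(x)$ by the overlap pattern of $\alpha \cap \beta$: walks that share only the endpoints $i,j$ contribute the ``diagonal'' term $|\saw_\ell(i,j)|^2 / n^2$, while walks whose shared edges form two subpaths of total length $s$ extending from $i$ and $j$ contribute $(1/\e^2 d)^s$ times that diagonal; more complex overlap shapes are lower order. The resulting geometric series $\sum_{s\ge 0}(\e^2 d)^{-s}$ converges precisely to $O(1/\delta)$ above the Kesten--Stigum threshold, which gives $\E P_{ij}(x)\cdot y_i y_j \ge \delta^{O(1)} \cdot (\E P_{ij}(x)^2)^{1/2}$ via \cref{lem:basis-conditions}. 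Summing over $(i,j)$ (the off-diagonal noise across distinct pairs is lower order by the same pairwise-independence calculation) yields the hypothesis \cref{eq:second-moment-correlation} of \cref{thm:meta-theorem-2nd}.

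To obtain a polynomial rather than quasi-polynomial running time, I will evaluate $P(x)$ via the color-coding reduction from \cref{sec:techniques-color-coding}: draw $e^{O(\ell)} = n^{O(1)}$ random colorings $c \from [n] \to [\ell]$ and for each $c$ compute the colorful variant $p_c(x)$ as an entry of an $\ell$-th matrix power on an $n\cdot 2^\ell$-dimensional space. Averaging over $c$ recovers an estimator with the same mean and variance $e^{O(\ell)}$ times that of $P_{ij}(x)$, which is still $\delta^{O(1)}$-correlated with $y_i y_j$ after absorbing the $e^{O(\ell)} = n^{O(1)}$ factor into the polynomial prefactor. Feeding the resulting matrix into \cref{thm:meta-theorem-2nd} and rounding produces $\tilde y(x)$ with the claimed guarantee.

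The principal technical obstacle is the combinatorial estimate of $\sum_{\alpha,\beta} \E p_\alpha p_\beta$ in the pairwise-independence step. Each shared edge of $\alpha \cap \beta$ inflates the per-pair variance by a factor of $n$ (since $\E p_{ab}^2 \approx n / \e^2 d$) while simultaneously reducing the number of $(\alpha,\beta)$ pairs with that shared structure by a factor of $n$, so the net cost of a shared edge is exactly $1/\e^2 d$; the threshold $\e^2 d > 1$ is then precisely what makes the geometric series over overlap lengths converge. The bookkeeping must carefully rule out pathological overlap shapes --- e.g., walks that diverge and reconverge multiple times, or that share interior segments disjoint from the endpoints --- by showing each such configuration costs an additional factor polynomially small in $n$ relative to the dominant ``two-prong'' overlap pattern anchored at $i$ and $j$. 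Once this enumeration is controlled, the rest of the argument follows mechanically from the meta-theorem.
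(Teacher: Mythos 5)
Your construction and analysis of the estimator $P(x)$ is exactly the paper's: the same self-avoiding-walk polynomial, the same verification of the two hypotheses of \cref{lem:basis-conditions} (unbiasedness, plus the overlap decomposition in which a shared edge costs a net factor $1/\e^2 d$ so that the geometric series converges above the Kesten--Stigum threshold), and the same color-coding route to polynomial time. That part is sound and matches \cref{lem:correlation-sbm}.

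The gap is in your rounding step. You claim that if $\hat u(x)$ is a unit vector with $\iprod{\hat u, y/\sqrt n}^2 \ge \delta'$, then $\tilde y = \sign(\hat u)$ satisfies $\iprod{\tilde y, y}^2 \ge \delta^{O(1)} n^2$, "losing at most a universal constant." This is false for general unit vectors: Euclidean correlation does not survive the sign nonlinearity unless the mass of $\hat u$ is spread roughly evenly over coordinates. Concretely, take $y = \mathbf 1$ and let $\hat u$ put $u_i = \sqrt{2\delta'/n}$ on the first $n/2$ coordinates and distribute the remaining $1-\delta'$ of squared norm over the second half as small \emph{negative} entries (almost all of it on a single coordinate, the rest being $-\eta$ for tiny $\eta>0$). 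Then $\iprod{\hat u, y/\sqrt n} \ge \Omega(\sqrt{\delta'})$ but $\sign(\hat u)$ is $+1$ on exactly half the coordinates, so $\iprod{\sign(\hat u), y} = 0$. Nothing in \cref{thm:meta-theorem-2nd} rules out such localized outputs, and proving delocalization of the relevant eigenvector is not a routine step. The paper circumvents this entirely in \cref{lem:recovery-sbm}: it applies the correlation-preserving projection (\cref{thm:correlation-preserving-projection}) to move $P(x)$ into the convex set $\set{Y \succeq 0,\ \diag(Y) = \mathbf 1}$ while retaining $\delta^{O(1)}$ correlation with $\dyad y$, and the constraint $\diag(Y)=\mathbf 1$ enforces exactly the coordinate-wise delocalization needed for Gaussian hyperplane rounding (via the Alon--Naor analysis of Grothendieck rounding for PSD matrices) to output a $\pm 1$ labeling with $\E\iprod{\tilde y, y}^2 \ge \Omega(\delta'^2) n^2$. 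You need to replace your sign step with this (or an equivalent delocalizing) argument.
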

Here, the factor $n^2$ in the conclusion of the theorem normalizes the vectors $\tilde{y}(x)$ and $y$ because $\norm{\tilde{y}(x)}^2\cdot \norm{y}^2=n^2$.

In the remainder of this section, we will prove the above theorem by specializing our meta-algorithm for two-community stochastic block model.
For simplicity, we will here only analyze a version of algorithm that runs in quasi-polynomial time.
See \cref{sec:techniques-color-coding} for how to improve the running time to $n^{1/\poly(\delta)}$.

\begin{algorithm}
  \label[algorithm]{alg:sbm}
  For a given $n$-vertex graph $x\in \bits^{\binom{n}2}$ with average degree $d$ and some parameter $\delta>0$, execute the following steps:\footnote{The right choice of $\delta'$ will depend in a simple way on the parameters $\e$ and $d$.}
  \begin{compactenum}
  \item  evaluate the following matrix-valued polynomial $P(x) = (P_{ij}(x))$
  \begin{equation}
    P_{ij}(x)
    \coloneq
    \sum_{\alpha\in \saw_\ell(i,j)} p_\alpha(x)\,.
    \label{eq:path-polynomial}
  \end{equation}
  Here as in Section~\ref{sec:techniques}, $\saw_\ell(i,j) \subseteq {\binom {n} 2}^{\ell}$ consists of all sets of vertex pairs that form a simple (self-avoiding) path between $i$ and $j$ of length $\ell=\Theta(\log n) / \delta^{O(1)}$.\footnote{In particular, the paths in $\saw_\ell(i,j)$ are not necessarily paths in the graph $x$ but in the complete graph on $n$ vertices.}
  The polynomial $p_\alpha$ is a product of centered edge indicators, so that $p_\alpha(x)= \prod_{ab \in \alpha} \Paren{x_{ab}-\tfrac dn}$.\footnote{Up to scaling, this polynomial is a $d/n$-biased Fourier character of sparse \Erdos-\Renyi graph.}
\item compute a matrix $Y$ with minimum Frobenius norm satisfying the constraints 
  \begin{equation}
    \Set{
    \begin{aligned}
      \diag(Y)&= \mathbf 1\\
      \tfrac 1 {\norm{P(x)}_F\cdot n}\cdot\iprod{P(x),Y}&\ge \delta'\\
      Y&\succeq 0
    \end{aligned}
  }\,.
  \label{eq:correlation-constraints}
  \end{equation}
  and output a vector $\tilde y\in \sbits^n$ obtained by taking coordinate-wise signs of a centered Gaussian vector with covariance $Y$.\footnote{In other words, we apply the hyperplane rounding algorithm of Goemans and Williamson.}
  \end{compactenum}
\end{algorithm}

The matrix $P(x)$ is essentially the same as the matrix based on self-avoiding walks analyzed in \cite{DBLP:journals/corr/MosselNS13a}.
The main departure from previous algorithms lies in the second step of our algorithm.

As stated, the first step of the algorithm takes quasi-polynomial because it involves a sum over $n^\ell$ terms (for $\ell=\Theta(\log n)/\delta^{O(1)}$).
In prior works this running time is improved by using non-backtracking paths instead of self-avoiding paths.
Non-backtracking paths can be counted in $n^{O(1)}$ time using matrix multiplication, but relating the non-backtracking path polynomial to the self-avoiding path polynomial requires intensive moment-method calculations.
An alternative, described in Section~\ref{sec:techniques-color-coding}, is to compute the self-avoiding path polynomial $P$ using color-coding, requiring time $n^{O(1) + 1/\delta^{O(1)}}$, still polynomial time for any constant $\delta > 0$.

The second step of the algorithm is a convex optimization problem over an explicitly represented spectrahedron.
Therefore, this step can be carried out in polynomial time.

We break the analysis of the algorithm into two parts corresponding to the following lemmas.
The first lemma shows that if $\e^2 d > 1$ then the matrix $P(x)$ has constant correlation with $\dyad y$ for $(x,y)\sim \sbm(n,d,\e)$ and $n$ sufficiently large.
(Notice that this the main preconditon to apply meta-Theorem~\ref{thm:meta-theorem-2nd}.)

\begin{lemma}[Low-degree estimator for posterior second moment]
  \label[lemma]{lem:correlation-sbm}
  Let $\e\in \R$ and $d\in \N$, and assume $d = n^{o(1)}$.
  If $\delta \defeq 1 - \tfrac 1 {\e^2 d} > 0$ and $n>n_0(\e,d,\delta)$ is sufficiently large, then the matrix-valued polynomial $P(x)$ in \cref{eq:path-polynomial} satisfies
  \begin{equation}
    \E_{(x,y)\sim \sbm(n,d,\e)} \iprod{P(x), \dyad y} \ge \delta^{O(1)} \cdot \Paren{\E_{x\sim \sbm(n,d,\e)} \norm{P(x)}_F^2}^{1/2} \cdot n
  \end{equation}
  (Here, the factor $n$ in the conclusion normalizes the matrix $\dyad y$ because $\norm{\dyad y}_F=n$.)
\end{lemma}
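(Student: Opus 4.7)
The plan is to apply the approximately-pairwise-independent estimator lemma (\cref{lem:basis-conditions}) entrywise to the collection $\{p_\alpha(x)\}_{\alpha \in \saw_\ell(i,j)}$ for each off-diagonal pair $i \neq j$. This produces a scalar estimator $P_{ij}(x)$ whose correlation with $z = y_iy_j$ is at least $\delta^{O(1)}$. Combining these entrywise bounds, and using the permutation symmetry of $\sbm(n,d,\e)$ to conclude that the quantities $\E P_{ij}(x)^2$ and $\E P_{ij}(x) y_iy_j$ are constant across $i \neq j$, will yield the desired matrix-level correlation.

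\textbf{Unbiasedness.} Since edges are conditionally independent given $y$ and $\E[x_{ab}-d/n \mid y]=\tfrac{\e d}{n} y_a y_b$, conditional expectation along a simple path $\alpha$ gives
\[
\E[p_\alpha(x)\mid y] \;=\; \Paren{\tfrac{\e d}{n}}^{\ell} \prod_{v} y_v^{\deg_\alpha(v)} \;=\; \Paren{\tfrac{\e d}{n}}^{\ell} y_i y_j\mper
\]
The only odd-degree vertices in a simple $(i,j)$-path are its endpoints, so all interior $y_v^2$ terms equal $1$. Rescaling by $(n/\e d)^{\ell}$ yields unbiased estimators of $z=y_iy_j$.

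\textbf{Approximate pairwise independence.} The main technical step is bounding $\sum_{\alpha,\beta \in \saw_\ell(i,j)} \E[p_\alpha p_\beta]$. Conditioning on $y$ and using independence of edges,
\[
\E\big[p_\alpha p_\beta \mid y\big] \;=\; \prod_{e \in \alpha \cap \beta} \E\big[(x_e-\tfrac dn)^2 \,\big|\, y\big] \cdot \prod_{e \in \alpha \triangle \beta} \E\big[x_e-\tfrac dn \,\big|\, y\big]\mcom
\]
where the first factor is $(d/n)\bigparen{1+O(d/n)}$ per edge and the second equals $\tfrac{\e d}{n} y_a y_b$ per edge. Taking expectation over $y$ kills any pair whose symmetric difference has an odd-degree vertex. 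The main combinatorial work is to organize the surviving pairs by the shape of the multigraph $\alpha \cup \beta$. The dominant contribution comes from pairs sharing only the endpoints $\{i,j\}$, contributing $|\saw_\ell(i,j)|^2 \cdot (\e d/n)^{2\ell}$ exactly matching the right-hand side of \cref{eq:approximate-pw-independence}. Every additional shared edge in $\alpha \cap \beta$ reduces the number of such pairs by a factor of roughly $n$ while increasing the conditional expectation by a factor of roughly $(d/n)/(\e d/n)^2 \cdot n = n/(\e^2 d)$, producing a net suppression of $1/(\e^2 d) = 1-\delta$ per shared edge. Summing the resulting geometric series in the number of shared edges (and accounting for more complicated intersection shapes by standard self-avoiding walk combinatorics, where the choice $\ell = \Theta(\log n)/\delta^{O(1)}$ ensures that exceptional shapes like ``fully overlapping tails'' contribute at most a $1/n^{\Omega(1)}$ correction) gives the bound $\sum_{\alpha,\beta} \E p_\alpha p_\beta \le \delta^{-O(1)} |\saw_\ell(i,j)|^2 (\e d/n)^{2\ell}$, which after rescaling is exactly the hypothesis of \cref{lem:basis-conditions} with parameter $\delta^{O(1)}$. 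This geometric-series case analysis is the main obstacle of the proof.

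\textbf{Assembly.} \Cref{lem:basis-conditions} then gives, for every $i \neq j$,
\[
\E P_{ij}(x)\, y_i y_j \;\ge\; \delta^{O(1)} \cdot \Paren{\E P_{ij}(x)^2}^{1/2}\mper
\]
By the invariance of $\sbm(n,d,\e)$ under vertex permutations, there are constants $C,V$ (depending on $n,d,\e,\ell$) such that $\E P_{ij}(x)\, y_i y_j = C$ and $\E P_{ij}(x)^2 = V$ for all $i \neq j$, with $C \ge \delta^{O(1)}\sqrt{V}$, and $P_{ii}(x)=0$. Hence
\[
\frac{\E \iprod{P(x),\dyad y}}{n \cdot \bigparen{\E\|P(x)\|_F^2}^{1/2}}
\;=\; \frac{n(n-1)\,C}{n \cdot \sqrt{n(n-1)\,V}}
\;=\; \sqrt{1-\tfrac 1n}\cdot \frac{C}{\sqrt V} \;\ge\; \delta^{O(1)}
\]
for $n$ sufficiently large, which is the claimed bound.
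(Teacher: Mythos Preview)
Your proposal is correct and follows essentially the same approach as the paper: establish unbiasedness of the path estimators (\cref{lem:unbiased-sbm}), verify approximate pairwise independence via the geometric-series bound on $\sum_{\alpha,\beta}\E p_\alpha p_\beta$ (\cref{lem:indep-sbm}), and then invoke \cref{lem:basis-conditions} entrywise. Your explicit assembly step using permutation symmetry is a clean addition that the paper leaves implicit; one minor remark is that for two simple $(i,j)$-paths the symmetric difference \emph{always} has all even degrees, so no pairs are actually killed at that step---the real content is that the $\alpha\cap\beta$ factor averages to $(1+O(d/n))^r$ because every subgraph of the intersection contains an odd-degree vertex.
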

By application of Markov's inequality to the conclusion of this theorem one shows that with $P$ has $\Omega(1)$-correlation with $yy^\top$ with $\Omega(1)$-probability.
As we have noted several times, the same theorem would hold if we replaced $P$, an average over self-avoiding walk polynomials, with an average over nonbacktracking walk polynomials.
This would have the advantage that the resulting polynomial can be evaluated in $n^{O(1)}$ time (i.e. with running time independent of $\delta$), rather than $n^{O(\log n)/\poly(\delta)}$ for $P$ (which can be improved to $n^{\poly(1/\delta)}$ via color coding), but at the cost of complicating the moment-method analysis.
Since we are aiming for the simplest possible proofs here we use $P$ as is.

The second lemma shows that given a matrix $P$ that has constant correlation with $\dyad y$ for an unknown labeling $y\in\sbits^n$, we can efficiently compute a labeling $\tilde y\in\sbits^n$ that has constant correlation with $y$.
We remark that for this particular situation simpler and faster algorithms work (e.g., choose a random vector in the span of the top $1/\delta^{O(1)}$ eigenvectors of $P$); these are captured by the meta-Theorem~\ref{thm:meta-theorem-2nd}, which we could use in place of the next lemma.
(We are presenting this lemma, which involves a more complex and slower algorithm, in order to have a self-contained analysis in this warmup and because it illustrates a simple form of a semidefinite programming technique that is important for our tensor decomposition algorithm, which we use for overlapping communities.)

\begin{lemma}[Partial recovery from posterior moment estimate]
  \label[lemma]{lem:recovery-sbm}
  Let $P\in \R^{n\times n}$ be a matrix and $y\in \sbits^n$ be a vector with
  $\delta'\coloneq\tfrac 1 {\norm P\cdot n}\iprod{P,\dyad y}$.
  Let $Y$ be the matrix of minimum Frobenius such that $Y\succeq 0$, $\diag Y = \mathbf 1$, and $\tfrac1 {\norm P \cdot n}\iprod{Y,P}\ge \delta'$ (i.e., the constraints \cref{eq:correlation-constraints}).
  Then, the vector $\tilde y$ obtained by taking coordinate-wise signs of a Gaussian vector with mean $0$ and covariance $Y$ satisfies
  \begin{displaymath}
    \E \iprod{\tilde y,y}^2 \ge \Omega(\delta')^2 \cdot n^2\,.
  \end{displaymath}
  (Here, the factor $n^2$ in the conclusion normalizes the vectors $\tilde y, y$ because $\norm{\tilde y}^2 \cdot \norm{y}^2=n^2$.)
\end{lemma}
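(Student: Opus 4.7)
The plan is to combine the Correlation-Preserving Projection (\cref{thm:correlation-preserving-projection}) with Goemans--Williamson hyperplane rounding, linked by the classical Schur-product inequality $\arcsin(Y) \succeq Y$ for PSD matrices $Y$ with unit diagonal. Let $\cC \defeq \{M \in \R^{n\times n} : M \succeq 0,\ \diag(M) = \mathbf{1}\}$. Since $y \in \sbits^n$, the ground-truth matrix $\dyad y$ lies in $\cC$ and has Frobenius norm $n$.

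First I would apply \cref{thm:correlation-preserving-projection} with convex set $\cC$, ground truth $\dyad y$, and approximation $P$. The hypothesis $\iprod{P, \dyad y} = \delta' \norm{P}_F \cdot n = \delta' \norm{P}_F \norm{\dyad y}_F$ is exactly the correlation requirement of the theorem (with $\delta$ there equal to our $\delta'$). The optimizer $Y$ of that projection is precisely the matrix specified by the lemma, and the theorem yields both $y^\top Y y = \iprod{Y, \dyad y} \geq \tfrac{\delta'}{2}\, \norm{Y}_F \cdot n$ and $\norm{Y}_F \geq \delta' n$. Multiplying these bounds gives the bilinear estimate $y^\top Y y \geq \tfrac{\delta'^2}{2}\, n^2$.

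Next I would analyze hyperplane rounding. Writing $Y = V V^\top$ with unit rows $v_1,\ldots,v_n$ and sampling a standard Gaussian $g$, the vector $\tilde y_i = \sign(\iprod{v_i, g})$ has the distribution described in the algorithm. Sheppard's formula gives $\E \tilde y_i \tilde y_j = \tfrac{2}{\pi}\, \arcsin(Y_{ij})$, so
\begin{align*}
\E \iprod{\tilde y, y}^2 = \tfrac{2}{\pi}\, y^\top \arcsin(Y)\, y,
\end{align*}
with $\arcsin$ applied entrywise. The last step is the PSD inequality $\arcsin(Y) \succeq Y$: since $\lvert Y_{ij}\rvert \le 1$ (by PSDness combined with unit diagonal), the Taylor expansion $\arcsin(t) = t + \sum_{k \geq 1} c_k\, t^{2k+1}$ has nonnegative coefficients $c_k \geq 0$, so $\arcsin(Y) - Y = \sum_{k\geq 1} c_k\, Y^{\odot(2k+1)}$, a nonnegative combination of Hadamard powers of $Y$ each PSD by the Schur product theorem. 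Chaining the inequalities yields $\E \iprod{\tilde y, y}^2 \geq \tfrac{2}{\pi}\, y^\top Y y \geq \tfrac{\delta'^2}{\pi}\, n^2 = \Omega(\delta'^2)\cdot n^2$.

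I do not foresee any real obstacle: the two ingredients---correlation-preserving projection and the Schur/arcsin inequality---mesh cleanly, and the only genuine choice is the convex body $\cC$, which is forced by the constraints \cref{eq:correlation-constraints} in the algorithm. The main conceptual point worth highlighting is why the minimum-norm choice of $Y$ is important: an arbitrary feasible $Y'$ could have $\norm{Y'}_F$ much larger than $\delta' n$, which would weaken the bound $y^\top Y' y \geq \tfrac{\delta'}{2}\norm{Y'}_F n$ into something useless (e.g.\ bounded only by $\norm{Y'}_F n$, with no $\delta'^2$ quadratic improvement).
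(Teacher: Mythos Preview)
Your proof is correct and follows essentially the same route as the paper: apply the correlation-preserving projection (\cref{thm:correlation-preserving-projection}) to obtain $\iprod{Y,\dyad y}\ge (\delta'^2/2)\,n^2$, then use Goemans--Williamson rounding to conclude $\E\iprod{\tilde y,y}^2\ge \tfrac{2}{\pi}\iprod{Y,\dyad y}$. The paper simply cites the Alon--Naor Grothendieck-on-PSD-matrices analysis for the latter step, whereas you spell it out via Sheppard's formula and the Schur-product/arcsin inequality---which is precisely the content of that citation.
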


\begin{proof}
  By \cref{thm:correlation-preserving-projection}, the matrix $Y$ satisfis $\iprod{Y,\dyad y}\ge (\delta'/2) \norm{Y}\cdot \norm{y}^2$ and $\norm{Y}\ge \delta \cdot \norm{y}^2$.
  In particular, $\iprod{Y,\dyad y}\ge \delta^2 n^2/ 2$.
  The analysis of rounding algorithm for the Grothendieck problem on psd matrices \cite{DBLP:conf/stoc/AlonN04}, shows that $\E \iprod{\tilde y, y}^2 \ge \tfrac 2 {\pi} \iprod{Y,\dyad y}\ge \Omega (\delta^2)\cdot n^2$.
  (Here, we use that $\dyad y$ is a psd matrix.)
\end{proof}

Taken together, the above lemmas imply a quasi-polynomial time  algorithm for partial recovery in $\sbm(n,d,\e)$ when $\e^2d>1$.

\begin{proof}[Proof of \cref{thm:two-communities} (quasi-polynomial time version)]
  Let $(x,y)\sim \sbm(n,d,\e)$ with $\delta\coloneq 1- 1/\e^2 d>0$.
  Run \cref{alg:sbm} on $x$ with the parameter $\delta'$ chosen as $\tfrac 1 {10}$ times the correlation factor in the conclusion of \cref{lem:correlation-sbm}.

  Then, by \cref{lem:correlation-sbm},
  $\E_{(x,y)\sim \sbm(n,d,\e)}\iprod{P(x),\dyad y}\ge 10\delta'\cdot \E_{x\sim \sbm(n,d,\e)}\norm{P(x)} \cdot n$.
  By a variant of Markov inequality \cref{fact:exp-to-prob}, the matrix $P(x)$ satisfies with constant probability $\iprod{P(x),\dyad y}\ge \delta'\cdot \norm{P(x)} \cdot n$.
  In this event, by \cref{lem:recovery-sbm}, the final labeling $\tilde y$ satisfies $\E_{\tilde y} \iprod{\tilde y, y}^2\ge \Omega(\delta')^2\cdot n^2$.
  Since this event has constant probability, the total expected correlation satisfies $\E_{(x,y)\sim \sbm(n,d,\e)}\iprod{\tilde y(x),y}^2\ge \Omega(\delta')^2\cdot n^2$ as desired.
\end{proof}

It remains to prove \cref{lem:correlation-sbm}.

\subsection{Low-degree estimate for posterior second moment}
\label{sec:low-degree-simple-community}

We will apply \cref{lem:basis-conditions} to prove Lemma~\ref{lem:correlation-sbm}.
The next two lemmas verify that the conditions of that lemma hold; they immediately imply Lemma~\ref{lem:correlation-sbm}.
\begin{lemma}[Unbiased estimators for $y_i y_j$]\label{lem:unbiased-sbm}
  For $i,j \in [n]$ distinct, let $\saw_\ell(i,j)$ be the set of all simple paths from $i$ to $j$ in the complete graph on $n$ vertices of length $\ell$.
  Let $x_{ij}$ be the $ij$-th entry of the adjacency matrix of $G \sim \sbm(n,d,\e)$, and for $\alpha \in \saw_\ell(i,j)$, let $p_\alpha(x) = \prod_{ab \in \alpha} (x_{ab} - \tfrac d n)$.
  Then for any $y_i,y_j \in \{ \pm 1\}$ and $\alpha \in \saw_\ell(i,j)$,
  \[
    \Paren{\frac n {\e d}}^{\ell} \E \Brac{p_\alpha(x) \, | \, y_i y_j} = y_i y_j\mper
  \]
\end{lemma}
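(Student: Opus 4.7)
The proof is essentially a direct computation using conditional independence of the edges given the community labels, so there is no real obstacle; the plan is just to unpack the definitions carefully.

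First I would observe that, conditioned on the full label vector $y$, the edge indicators $x_{ab}$ for $\{a,b\}\in\binom{[n]}{2}$ are mutually independent Bernoulli random variables, with $\Pr[x_{ab}=1\mid y]=(1+\e y_a y_b)\tfrac{d}{n}$. In particular,
\begin{equation*}
\E\Brac{x_{ab}-\tfrac{d}{n}\;\Big|\;y}=\tfrac{\e d}{n}\cdot y_a y_b\mper
\end{equation*}
Since a self-avoiding walk $\alpha$ uses each edge at most once, the factors in $p_\alpha(x)=\prod_{ab\in\alpha}(x_{ab}-\tfrac{d}{n})$ are independent given $y$, so
\begin{equation*}
\E\Brac{p_\alpha(x)\;\Big|\;y}=\prod_{ab\in\alpha}\tfrac{\e d}{n}\cdot y_a y_b=\Paren{\tfrac{\e d}{n}}^{\ell}\prod_{ab\in\alpha}y_a y_b\mper
\end{equation*}

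Next I would simplify the product $\prod_{ab\in\alpha}y_a y_b$ using the fact that $\alpha$ is a simple path from $i$ to $j$. Every internal vertex $c$ of the path appears in exactly two edges of $\alpha$, contributing $y_c^2=1$, while the endpoints $i$ and $j$ each appear in exactly one edge. Hence
\begin{equation*}
\prod_{ab\in\alpha}y_a y_b=y_i y_j\cdot\prod_{c\text{ internal}}y_c^{2}=y_i y_j\mper
\end{equation*}
Combining, $\E[p_\alpha(x)\mid y]=(\e d/n)^{\ell}\,y_i y_j$, which depends on $y$ only through $y_i y_j$. Therefore conditioning on $y_i y_j$ (and averaging over the remaining coordinates of $y$) leaves the same value, and multiplying by $(n/\e d)^{\ell}$ yields $y_i y_j$ as claimed.

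The only subtle point I would be careful to state explicitly is the use of the self-avoiding property: if $\alpha$ visited any internal vertex more than twice, some $y_c$ would appear to an odd power and would not cancel, and if two edges of $\alpha$ coincided, the factor $(x_{ab}-d/n)^{2}$ would break the conditional independence argument. Both issues are avoided precisely because $\alpha\in\saw_\ell(i,j)$.
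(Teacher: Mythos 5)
Your proof is correct and follows essentially the same route as the paper's: condition on $y$, use conditional independence of the edges and the identity $\E[x_{ab}-\tfrac dn \mid y] = \tfrac{\e d}{n} y_a y_b$, and then cancel the internal vertices because each appears in exactly two edges of the path. Your explicit remark that $y_c^2 = 1$ holds deterministically (rather than only in expectation) is a slightly cleaner way to finish, but the argument is the same.
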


Thus, each simple path $\alpha$ from $i$ to $j$ in the complete graph provides an unbiased estimator $(n/\e d)^\ell p_\alpha(x)$ of $y_i y_j$.
It is straightforward to compute that each has variance $\Paren{\tfrac n {\epsilon^2 d}}^\ell$.
If they were pairwise independent, they could be averaged to give an estimator with variance $\tfrac 1 {|\saw_\ell(i,j)|} \cdot \Paren{\tfrac n {\epsilon^2 d}}^\ell = n (\epsilon^2 d)^{-\ell}$, since there are $n^{\ell - 1}$ simple paths from $i$ to $j$.
If $\ell$ is logarithmic in $n$, this becomes small.
The estimators are not strictly pairwise independent, but they do satisfy an approximate pairwise independence property which will be enough for us.

\begin{lemma}[Approximate conditional independence]\label{lem:indep-sbm}
  Suppose $\delta \defeq 1 - \tfrac 1 {\epsilon^2 d} \geq \Omega(1)$ and $d = n^{o(1)}$.
   For $i,j \in [n]$ distinct, let $\saw_\ell(i,j)$ be the set of all simple paths from $i$ to $j$ in the complete graph on $n$ vertices of length $\ell = \Theta(\log n)/\delta^C$ for a large-enough constant $C$.
  Let $x_{ij}$ be the $ij$-th entry of the adjacency matrix of $G \sim \sbm(n,d,\e)$.
  Let $p_\alpha(x) = \prod_{ab \in \alpha} (x_{ab} - \tfrac d n)$.
  Then
  \[
    \E y_i^2 y_j^2 \sum_{\alpha, \beta \in \saw_\ell(i,j)} \E p_\alpha(x) p_\beta(x) \leq \delta^{- O(1)} \cdot \sum_{\alpha, \beta \in \saw_\ell(i,j)} \Paren{\E p_\alpha(x) y_i y_j }\Paren{\E p_\beta(x) y_i y_j }\mper
  \]
\end{lemma}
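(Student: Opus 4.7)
The key simplification is that $y_i^2 y_j^2 = 1$, so the LHS reduces to $\sum_{\alpha,\beta} \E p_\alpha(x) p_\beta(x)$, and by \cref{lem:unbiased-sbm}, $\E p_\alpha(x) y_i y_j = (\e d/n)^{\ell}$, so the RHS equals $\delta^{-O(1)} \cdot |\saw_\ell(i,j)|^2 (\e d/n)^{2\ell} \sim \delta^{-O(1)} n^{2\ell-2} (\e d/n)^{2\ell}$. The plan is to evaluate each $\E p_\alpha p_\beta$ using conditional independence of edges given $y$, then count pairs of paths by the amount of overlap.

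First, I will evaluate $\E p_\alpha p_\beta$. Conditional on $y$, the edges $\{x_{ab}\}$ are independent, so
\[
  \E[p_\alpha p_\beta \mid y] = \prod_{ab \in S} \E[(x_{ab}-\tfrac dn)^2 \mid y] \cdot \prod_{ab \in T}\E[x_{ab}-\tfrac dn \mid y]\mcom
\]
where $S \defeq \alpha \cap \beta$ and $T \defeq \alpha \Delta \beta$. The second product equals $(\e d/n)^{|T|} \prod_{ab\in T} y_a y_b$. The key combinatorial observation is that every vertex of $T$ has even degree: for $v\notin\{i,j\}$ both $\deg_\alpha(v)$ and $\deg_\beta(v)$ lie in $\{0,2\}$, and for $v\in\{i,j\}$ one has $\deg_T(v) = 2 - 2\deg_S(v) \in \{0,2\}$. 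Consequently $\prod_{ab\in T} y_a y_b \equiv 1$ identically, so taking expectations over $y$ is trivial. The first product equals $(d/n)^{|S|}(1+O(d/n))^{|S|} = (1+o(1))(d/n)^{|S|}$, using $\ell d/n = o(1)$. Putting this together with $|T| = 2\ell-2s$ and $|S|+|T| = 2\ell-s$ gives $\E p_\alpha p_\beta = (1+o(1)) \e^{2\ell-2s} (d/n)^{2\ell-s}$, where $s = |S|$.

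Next, I will count the number $N_s$ of pairs with $|S|=s$. Clearly $N_0 \le |\saw_\ell(i,j)|^2 \le n^{2\ell-2}$. For $s\ge 1$, I exploit that $S\subseteq \alpha$, being a subgraph of a simple path, decomposes into $r \ge 1$ disjoint nontrivial subpaths of $\alpha$ of total edge-length $s$. Fixing $\alpha$, I parametrize $\beta$ by (i) the placement of the $r$ subpaths inside $\alpha$ (at most $\ell^{O(r)}$ choices); (ii) their order and orientation in $\beta$ ($r!\cdot 2^r$ choices); (iii) the $r+1$ intervening new segments through fresh vertices. A vertex count shows $\beta$ has exactly $\ell-1-s-r$ fresh intermediate vertices, yielding at most $n^{\ell-1-s-r}$ ways to choose them. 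Hence $N_s \le n^{\ell-1}\cdot \sum_{r\ge 1} \ell^{O(r)} n^{\ell-1-s-r}$, and since $n \gg \ell^{O(1)}$ the $r=1$ term dominates, giving $N_s \le \ell^{O(1)} n^{2\ell-3-s}$.

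Finally, I assemble using $1/\e^2 d = 1-\delta$:
\[
  \sum_s N_s \e^{2\ell-2s}(d/n)^{2\ell-s}
  = (\e d/n)^{2\ell} \sum_s N_s (n/\e^2 d)^s\mper
\]
The $s=0$ term contributes $\le (1+o(1)) n^{2\ell-2}$. For $s\ge 1$, $N_s (n/\e^2 d)^s \le \ell^{O(1)} n^{2\ell-3}(1-\delta)^s$, and summing the geometric series gives $O(\ell^{O(1)}/(n\delta)) \cdot n^{2\ell-2} = o(n^{2\ell-2})$. Altogether LHS $\le (1+o(1)) n^{2\ell-2} (\e d/n)^{2\ell}$, comfortably within the $\delta^{-O(1)}$ slack afforded by the RHS. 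The main technical obstacle is obtaining the sharp $N_s \le \ell^{O(1)} n^{2\ell-3-s}$ bound, with a saving of $1/n$ over the naive $n^{2\ell-2-s}$; without this saving the sum over $s$ would pick up an extra $\log n$ factor. The saving is exactly what the careful bookkeeping of fresh vertices on $\beta$'s reroutes around the $r$ shared subpaths provides.
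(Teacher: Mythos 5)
Your high-level strategy is the same as the paper's (evaluate $\E p_\alpha p_\beta$ using the parity of $\alpha \triangle \beta$, then count pairs by overlap and sum a geometric series), but two steps are incorrect, and the second is fatal to the argument as written. The first is a fixable slip: conditionally on $y$ one has $\E[(x_{ab}-\tfrac dn)^2 \mid y] = \tfrac dn(1+\e y_a y_b + O(d/n))$, so your claim that the product over $S=\alpha\cap\beta$ equals $(d/n)^{|S|}(1+O(d/n))^{|S|}$ is false pointwise in $y$ — the $\e y_a y_b$ terms are of constant order, not $O(d/n)$. You must average over $y$ and invoke the same parity argument you used for $T$: every nonempty subset of $S$ (a disjoint union of subpaths) contains a degree-one vertex, so all cross terms vanish in expectation.

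The serious problem is the count $N_s \le \ell^{O(1)} n^{2\ell-3-s}$, which is too small by a factor of $n$. The assertion that $\beta$ has ``exactly $\ell-1-s-r$ fresh intermediate vertices'' holds only when no shared subpath touches $i$ or $j$; a shared subpath anchored at $i$ (resp.\ $j$) covers one fewer \emph{intermediate} vertex of $\beta$, and these anchored configurations are exactly the dominant ones (this is why the paper's count is restricted, at leading order, to shared edges forming paths beginning at $i$ and $j$). The correct dominant count is of order $s\cdot n^{2\ell-2-s}$ for $1\le s\le \ell-1$, plus the diagonal $N_\ell = |\saw_\ell(i,j)| \approx n^{\ell-1}$, which carries yet another factor of $n$. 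With the corrected counts the $s\ge 1$ terms contribute a multiplicative $\sum_{s\ge 1} s(1-\delta)^s = \Theta(\delta^{-2})$ rather than the $o(1)$ you claim — this is precisely where the $\delta^{-O(1)}$ in the statement comes from — and the diagonal contributes $n\cdot(\e^2 d)^{-\ell}$ relative to the $s=0$ term, which is small only because $\ell \ge \Omega(\log n)/\delta$. Your argument never uses the lower bound on $\ell$, which is a red flag: the inequality is false for, say, $\ell=1$, where the left side is $\Theta(d/n)$ but the right side is $\delta^{-O(1)}(\e d/n)^2 \ll d/n$.
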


To prove the lemmas we will use the following fact; the proof is straightforward.
\begin{fact}\label{fact:edge-sbm}
  For $x,y \sim \sbm$, the entries of $x$ are all independent conditioned on $y$, and $a,b$ distinct,
  \[
    \E\Brac{x_{ab} - \tfrac d n \, | \, y_a, y_b} = \frac {\e d } n \cdot y_a y_b \quad \text{ and } \quad \E \Brac{\Paren{x_{ab} - \tfrac dn}^2 \, | \, y_a, y_b} = \frac d n \Paren{1 + \epsilon y_a y_b + O(d/n)}\mper
  \]
\end{fact}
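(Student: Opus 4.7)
The plan is to unpack the definition of $\sbm(n,d,\e)$ and reduce each claim to a one-line computation for a Bernoulli random variable, since there is nothing deeper going on here. First I would note that conditional independence of the entries of $x$ given $y$ is immediate from the construction: in Definition of $\sbm(n,d,\e)$, for each pair $\{i,j\}$ the edge indicator $x_{ij}$ is drawn independently of all other edges using only the data $(y_i, y_j)$. So conditioning on the full label vector $y$ turns the edges into a collection of independent Bernoullis, and in particular conditioning on $(y_a, y_b)$ determines the law of $x_{ab}$ and makes it independent of everything else.

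Next I would compute the conditional first moment. By definition $x_{ab}\mid y_a, y_b$ is $\mathrm{Bernoulli}(p_{ab})$ where $p_{ab}=(1+\e y_a y_b)\tfrac{d}{n}$ (check: $y_a y_b=+1$ gives $(1+\e)\tfrac{d}{n}$ and $y_a y_b=-1$ gives $(1-\e)\tfrac{d}{n}$, matching the definition). Therefore $\E[x_{ab}-\tfrac{d}{n}\mid y_a, y_b]=p_{ab}-\tfrac{d}{n}=\tfrac{\e d}{n}\, y_a y_b$, which is the first identity.

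For the second moment, I would use $x_{ab}^2=x_{ab}$ for a $0/1$ random variable, giving
\[
\E\bigl[(x_{ab}-\tfrac{d}{n})^2\mid y_a,y_b\bigr]=p_{ab}-2\tfrac{d}{n}p_{ab}+\tfrac{d^2}{n^2}=p_{ab}\bigl(1-\tfrac{2d}{n}\bigr)+\tfrac{d^2}{n^2}.
\]
Expanding $p_{ab}=\tfrac{d}{n}(1+\e y_a y_b)$ and collecting terms of order $d/n$ versus $(d/n)^2$ yields $\tfrac{d}{n}(1+\e y_a y_b)+O(d^2/n^2)=\tfrac{d}{n}\bigl(1+\e y_a y_b+O(d/n)\bigr)$ (using $y_a y_b\in\{\pm 1\}$ so $(\e y_a y_b)^2=\e^2$ is absorbed into the $O(d/n)$ term for bounded $\e$), which is the second identity.

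There is no real obstacle — the only thing to be careful about is keeping track of which error terms are $O(d/n)$ versus $O(d^2/n^2)$ after factoring out the leading $d/n$, so that the final error bound matches the stated form. Everything else is a direct consequence of independence by construction and the Bernoulli identity $x^2=x$.
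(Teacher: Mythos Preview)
Your proof is correct and is exactly the straightforward Bernoulli computation the paper has in mind; the paper itself omits the argument entirely, simply noting that ``the proof is straightforward.''
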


We can prove both of the lemmas.

\begin{proof}[Proof of Lemma~\ref{lem:unbiased-sbm}]
  We condition on $y$ and expand the expectation.
  \begin{align*}
    \E \Brac{p_\alpha(x) \, | \, y_i y_j} = \E_y\Brac{ \prod_{ab \in \alpha} \E[x_{ab} - \tfrac d n \, | \, y]}
    = \Paren{\frac{\epsilon d}{n}}^\ell \E_y\Brac{\prod_{ab \in \alpha} y_a y_b} \quad \text{  by Fact~\ref{fact:edge-sbm}.}
  \end{align*}
  Because $\alpha$ is a path from $i$ to $j$, every index $a \in [n]$ except for $i$ and $j$ appears exactly twice in the product.
  So, removing the conditioning on $y_a$ for all $a \neq i,j$, we obtain $\E \Brac{p_\alpha(x) \, | \, y_i y_j} = \Paren{\tfrac{\e d} n}^\ell \cdot y_i y_j$ as desired.
\end{proof}

The proof of Lemma~\ref{lem:indep-sbm} is the heart of the proof, and will use the crucial assumption $\epsilon^2 d > 1$.
\begin{proof}[Proof of Lemma~\ref{lem:indep-sbm}]
  Let $\alpha, \beta \in \saw_\ell(i,j)$, and suppose that $\alpha$ and $\beta$ share $r$ edges.
  Let $\alpha \triangle \beta$ denote the symmetric difference of $\alpha$ and $\beta$.
  Then
  \begin{align*}
    \E p_\alpha(x) p_\beta(x) & = \E_{y}\Brac{ \prod_{ab \in \alpha \cap \beta} \E_x \Brac{(x_{ab} - \tfrac dn)^2 \, | y_a, y_b} \cdot \prod_{ab \in \alpha \triangle \beta} \E_x \Brac{x_{ab} - \tfrac dn \, | y_a, y_b}}\\
    & = \Paren{\frac d n}^{2\ell - r} \epsilon^{2\ell -  2r} \E_y\Brac{ \prod_{ab \in \alpha \cap \beta} (1 + \epsilon y_a y_b + O(d/n)) \cdot \prod_{ab \in \alpha \triangle \beta} y_a y_b }
  \end{align*}
  using Fact~\ref{fact:edge-sbm} in the second step.
  Since $\alpha$ and $\beta$ are paths, the graph $\alpha \triangle \beta$ has all even degrees, so $\prod_{ab \in \alpha \triangle \beta} y_a y_b = 1$.
  Furthermore, any subgraph of $\alpha \cap \beta$ contains some odd-degree vertex.
  So $\E_y \prod_{ab \in \alpha \cap \beta} (1 + \epsilon y_a y_b + O(d/n)) = (1 + O(d/n))^{r}$.
  All in all, we obtain
  \begin{align}\label{eq:share-r-sbm}
    \E p_\alpha(x) p_\beta(x) = \Paren{\frac d n}^{2\ell - r} \epsilon^{2\ell -  2r} (1 + O(d/n))^{r}
  \end{align}

  Suppose $r < \ell$.
  Paths $\alpha, \beta$ sharing $r$ edges must share at least $r$ vertices.
  If they share exactly $r$ vertices, then the shared vertices must form paths in $\alpha$ and $\beta$ beginning at $i$ and $j$.
  Since each path has length $\ell$ and therefore contains $\ell-1$ vertices in addition to $i$ and $j$, there are at most $r \cdot n^{2(\ell -1) - r}$ such pairs $\alpha, \beta$ (the multiplicative factor $r$ comes because the shared paths starting from $i$ and $j$ could have lengths between $0$ and $r$).
  Other pairs $\alpha, \beta$ share $r$ edges but $s$ vertices for some $s > r$.
  For each $s$ and $r$, there are at most $n^{2(\ell-1) - s} \ell^{O(s-r)}$ such pairs, because the shared edges must occur as at most $s - r$ paths.
  Furthermore, $\ell^{O(s - r)} n^{-(s-r)} \leq n^{-\Omega(1)}$ when $s > r$.
  Putting all of this together,
  \begin{align*}
    \sum_{\alpha, \beta \in \saw_\ell(i,j)} \E p_\alpha(x) p_\beta(x) & \leq n^{-2} \cdot \Brac{\sum_{r = 0}^{\ell - 1} d^{2\ell - r} \epsilon^{2\ell -  2r} (1 + O(d/n))^{r} \Paren{r + n^{-\Omega(1)}}  + (\e^2 d)^\ell\cdot n} \\
    & = n^{-2} \cdot (1 + n^{-\Omega(1)}) \cdot (\e d)^{2\ell} \cdot \Paren{\sum_{r = 0}^\ell r \cdot (\e^2 d)^{-r} + (\epsilon^2 d)^{-\ell} \cdot n}\mcom
  \end{align*}
  The additive factor of $(\e^2 d)^\ell n$ in the first line comes from the case $r = \ell$ (i.e., $\alpha = \beta$), where there are $n^{\ell - 1}$ paths.
  In the second line we have used the assumption that $d \ll n$ to simplify the expression.
  Finally, by convergence of the series $\sum_{m = 0}^\infty m \cdot z^m$ for $|z| < 1$, and the choice of $\ell$ logarithmic in $n$, this is at most
  \[
    (1 + n^{-\Omega(1)}) \cdot (\e d)^{2\ell} \cdot \Paren{\frac 1 {1 - \tfrac 1 {\epsilon^2 d}}}^{O(1)}\mper
  \]
  So, now our goal is to show that
  \[
    \sum_{\alpha,\beta \in \saw_{\ell}(i,j)} (\E p_\alpha(x) y_i y_j) (\E p_\beta(x) y_i y_j) \geq n^{-2} \cdot (1 + n^{-\Omega(1)}) \cdot (\e d)^{2\ell} \cdot \Paren{\frac 1 {1 - \tfrac 1 {\epsilon^2 d}}}^{O(1)}\mper
  \]
  Each term in the left-hand sum is $(\e d/ n)^{2\ell}$ (by Lemma~\ref{lem:unbiased-sbm}) and there are $\Omega(n^{2\ell - 2})$ such terms, so the left-hand side of the above is at least $\Omega((\e d)^{2\ell} / n^2)$.
  This proves the Lemma.
 \end{proof}

    \section{Matrix estimation for generalized block models}
\label{sec:W}

In this section we phrase a result essentially due to Abbe and Sandon \cite{DBLP:conf/nips/AbbeS16} (and closely related to results by Bordenave et al \cite{DBLP:conf/focs/BordenaveLM15}) in somewhat more general terms.
This turns out to be enough to capture an algorithm to estimate a pairwise-vertex-similarity matrix in the $d,k,\alpha,\e$ mixed-membership block model when $\e^2 d > k^2 (\alpha+1)^2$.

Let $\cU$ be a universe of labels, endowed with some base measure $\nu$, such that $\int 1 \cdot d\nu = 1$.
Let $\mu$ be a probability distribution on $\cU$, with a density relative to $\nu$.
(We abuse notation by conflating $\mu$ and its associated density).
Let $W\colon \cU \times \cU \rightarrow \R_+$ be a bounded nonnegative function with $W(x,y) = W(y,x)$ for every $x,y \in \cU$.
Consider a random graph model $G(n,d,W,\mu)$ sampled as follows.
For each of $n$ vertices, draw a label $x_i \sim \mu$ independently.
Then for each pair $ij \in [n]^2$, independently add the edge $(i,j)$ to the graph with probability $\tfrac dn W(x_i,x_j)$.
(This captures the $W$-random graph models used in literature on graphons.)

Let $\cF$ denote the space of square-integrable functions $f \colon \cU \rightarrow \R$, endowed with the inner product $\iprod{f,g} = \E_{x \sim \mu} f(x) g(x)$.
That is, $f \in \cF$ if $\E_{x \sim \mu} f(x)^2$ exists.

We assume throughout that
\begin{enumerate}
  \item (Stochasticity) For every $x \in \cU$, the average $\E_{y \sim \mu} W(x,y) = 1$.
  \item (Finite rank) $W$ has a finite-rank decomposition $W(x,y) = \sum_{i \leq r} \lambda_i f_i(x) f_i(y)$ where $\lambda_i \in \R$ and $f_i \colon \cU \rightarrow \R$.
  The values $\lambda_i$ are the eigenvalues of $W$ with respect to the inner product generated by $\mu$.
  The eigenfunctions are orthonormal with respect to the $\mu$ inner product.
  Notice that the assumptions on $W$ imply that its top eigenfunction $f_1(x)$ is the constant function, with eigenvalue $\lambda_1 = 1$.
  \item (Niceness I) Certain rational moments of $\mu^{-1}$ exist; that is $\E_{x \sim \mu} \mu(x)^{-t}$ exists for $t = -3/2, -2$.
  \item (Niceness II) $W$ and $\mu$ are nice enough that $W(x,y) \leq 1/ \sqrt{\mu(x) \mu(y)}$ and $|\overline{W}(x,y)| \leq \lambda_2 / \sqrt{\mu(x) \mu(y)}$ for every $x,y \in \cU$, where $\overline{W}(x,y) = W(x,y) - 1$.
  (Notice that in the case of discrete $W$ and $\mu$ this is always true, and for smooth enough $W$ and $\mu$ it is true via a $\delta$-function argument.)
\end{enumerate}
The function $W$ induces a Markov operator $W \colon \cF \rightarrow \cF$.
If $f \in \cF$, then
\[
  (W f) (x) = \E_{y \sim \mu} W(x,y) f(y)\mper
\]
(We abuse notation by conflating the function $W$ and the Markov operator $W$.)

\begin{theorem}[Implicit in \cite{DBLP:conf/nips/AbbeS16}]
\label{thm:W-main}
  Suppose the operator $W$ has eigenvalues $1 = \lambda_1 > \lambda_2 > \dots > \lambda_r$ (each possibly with higher multiplicity) and $\delta \defeq 1 - \tfrac 1 {d \lambda_2^2} > 0$.
  Let $\Pi$ be the projector to the second eigenspace of the operator $W$.
  For types $x_1,\ldots,x_n \sim \mu$, let $A \in \R^{n \times n}$ be the random matrix $A_{ij} = \Pi(x_i,x_j)$, where we abuse notation and think of $\Pi \colon \cU \times \cU \rightarrow \R$.
  There is an algorithm with running time $n^{\poly(1/\delta)}$ which outputs an $n \times n$ matrix $P$ such that for $x,G \sim G(n,d,W,\mu)$,
  \[
  \E_{x,G} \Tr P \cdot A \geq \delta^{O(1)} \cdot (\E_{x,G} \|A\|^2)^{1/2} (\E_{x,G} \|P\|^2)^{1/2}\mper
  \]
\end{theorem}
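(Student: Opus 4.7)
The plan is to follow the template of the two-community warmup (Section~\ref{sec:warmup}) and its combinatorial analysis, but with the discrete Fourier structure of $\{\pm 1\}^n$ replaced by the spectral decomposition of the operator $W$. For each pair $i,j \in [n]$ I will construct the $(i,j)$-entry of the matrix $P(G)$ as a suitably rescaled average
\[
P_{ij}(G) = \frac{1}{|\saw_\ell(i,j)|}\sum_{\alpha \in \saw_\ell(i,j)} \Paren{\frac{n}{d\lambda_2}}^{\ell} p_\alpha(G),\qquad p_\alpha(G) = \prod_{ab \in \alpha}\Paren{G_{ab} - \tfrac{d}{n}},
\]
over self-avoiding walks of length $\ell = \Theta(\log n)/\delta^{O(1)}$, and then apply Lemma~\ref{lem:basis-conditions} with $z = \Pi(x_i,x_j) = A_{ij}$.

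For the unbiasedness step, I would condition on the vertex labels and use $\E[G_{ab}-\tfrac{d}{n} \mid x_a,x_b] = \tfrac{d}{n}\overline{W}(x_a,x_b)$, where $\overline{W} = W - \mathbf{1}\mathbf{1}^\top$. Since $\lambda_1 = 1$ is the top eigenvalue, $\overline{W}$ has eigenvalues $\lambda_2,\ldots,\lambda_r$ with $|\lambda_s| < \lambda_2$ for $s \geq 3$. Marginalizing out the $\ell-1$ intermediate labels along $\alpha$ gives $\E[p_\alpha \mid x_i,x_j] = (d/n)^\ell \overline{W}^\ell(x_i,x_j)$, and for $\ell = \Theta(\log n)/\delta^{O(1)}$ this equals $(d\lambda_2/n)^\ell \Pi(x_i,x_j)$ up to a multiplicative $(1 + (\lambda_3/\lambda_2)^\ell)$ correction absorbed into the constant $\delta^{O(1)}$. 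Thus $P_{ij}$ is (approximately) an unbiased estimator of $\Pi(x_i,x_j)$.

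The heart of the argument is approximate pairwise independence, paralleling Lemma~\ref{lem:indep-sbm}. For $\alpha,\beta \in \saw_\ell(i,j)$ sharing $r$ edges, I would expand
\[
\E p_\alpha p_\beta = \Paren{\tfrac{d}{n}}^{2\ell - r}\E_{x}\Brac{\prod_{ab \in \alpha \cap \beta}\bigl(W(x_a,x_b) + O(d/n)\bigr) \prod_{ab \in \alpha \symdiff \beta}\overline{W}(x_a,x_b)},
\]
using that $\E[(G_{ab}-d/n)^2 \mid x_a,x_b] = (d/n)(W(x_a,x_b) + O(d/n))$. The niceness assumptions---bounded negative moments of $\mu$ together with the pointwise bounds $|\overline{W}(x,y)| \leq \lambda_2/\sqrt{\mu(x)\mu(y)}$ and $W(x,y) \leq 1/\sqrt{\mu(x)\mu(y)}$---let me control the label-averaged product $\E_x \prod \overline{W}(x_a,x_b)$ by $\lambda_2^{|\alpha \symdiff \beta|}$ (up to a $(1 + O(d/n))^r$ factor from the shared edges), exactly as in the discrete case. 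Counting pairs $(\alpha,\beta)$ sharing $r$ edges at $s \geq r$ vertices by at most $n^{2(\ell - 1) - s}\ell^{O(s-r)}$ and summing gives the variance bound as a geometric series $\sum_{t} (\lambda_2^2 d)^{-t}$, which converges precisely because $\delta = 1 - 1/(\lambda_2^2 d) > 0$. The main obstacle is checking that contributions from the other non-trivial eigenspaces $\lambda_s$ with $s \geq 3$ of $\overline{W}$ really are dominated by $\lambda_2$ (the off-diagonal $\alpha \neq \beta$ terms in the pairwise-independence sum are where these corrections could accumulate), and that the niceness assumptions on $\mu$ and $W$ are strong enough to prevent heavy-tailed label configurations from dominating the moments---both handled by the $\ell = \Theta(\log n)/\delta^{O(1)}$ choice and the prescribed spectral gap $\lambda_2 > \lambda_3$.

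Given the per-entry correlation $\E P_{ij} A_{ij} \geq \delta^{O(1)}(\E P_{ij}^2)^{1/2}(\E A_{ij}^2)^{1/2}$ from Lemma~\ref{lem:basis-conditions}, summing over $i,j$ and applying Cauchy--Schwarz in the matrix inner product yields the claimed global bound $\E \Tr P A \geq \delta^{O(1)}(\E\|P\|^2)^{1/2}(\E\|A\|^2)^{1/2}$. Finally, a naive evaluation of $P$ takes time $n^{O(\log n)}$, which I would reduce to $n^{\poly(1/\delta)}$ by the color-coding procedure of Section~\ref{sec:techniques-color-coding}: replace $\saw_\ell(i,j)$ by an expectation over random colorings $c\colon[n]\to[\ell]$ weighted by $F_{c,\alpha}$, compute each colored sum as the $(i,j,\emptyset)$-to-$(j,i,[\ell])$ entry of an $\ell$-th power of an $n\cdot 2^\ell$ matrix, and average over $e^{O(\ell)}$ colorings to control the variance induced by color coding exactly as in the bound \eqref{eq:colorful-variance}.
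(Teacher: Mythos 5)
Your proposal is correct and follows essentially the same route as the paper: self-avoiding-walk polynomials as unbiased estimators of $\overline{W}^{\,\ell}(x_i,x_j)\approx \lambda_2^{\ell}\,\Pi(x_i,x_j)$, the same case analysis (shared paths from $i,j$ versus excess shared vertices controlled by the niceness bounds on $\mu$ and $\overline{W}$) to verify approximate pairwise independence for Lemma~\ref{lem:basis-conditions}, Cauchy--Schwarz to pass from entrywise to matrix correlation, and color coding for the running time. The only cosmetic difference is that the paper applies Lemma~\ref{lem:basis-conditions} with the exactly unbiased target $B_{ij}=\lambda_2^{-(\ell-1)}\overline{W}^{\,\ell-1}(x_i,x_j)$ and then bounds $\E\|B-A\|^2$ separately, whereas you fold the $(\lambda_t/\lambda_2)^{\ell}$ corrections (which are additive, not multiplicative, since the higher eigenprojectors $\Pi_t(x_i,x_j)$ need not be proportional to $\Pi(x_i,x_j)$) directly into the constant --- harmless here since $\ell\gg\log n$ makes them polynomially small in Frobenius norm.
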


When $\cU$ is discrete with $k$ elements one recovers the usual $k$-community stochastic block model, and the condition $\lambda_2^2 > 1$ matches the Kesten-Stigum condition in that setting.
When $\lambda_2^2 > 1 + \delta$, the guarantees of Abbe and Sandon can be obtained by applying the above theorem to obtain an estimator $P$ for the matrix $M = \sum_{s \in [k]} v_s v_s^\top$, where $v_s$ is the centered indicator vector of community $s$.
The estimator $P$ will have at least $\delta^{O(1)}/k$ correlation with $M$, and a random vector in the span of the top $k/\delta^{O(1)}$ eigenvectors of $M$ will have correlation $(\delta/k)^{O(1)}$ with some $v_s$.
Thresholding that vector leads to the guarantees of Abbe and Sandon for the $k$-community block model, with one difference: Abbe and Sandon's algorithm runs in $O(n \log n)$ time, much faster than the $n^{\poly(1/\delta)}$ running time outlined above.
In essence, they achieve this by computing an estimator $P'$ for $M$ which counts only non-backtracking paths in $G$ (the estimator $P$ counts \emph{self-avoiding} paths).

In Section~\ref{sec:mm-matrix-main} we prove a corollary of Theorem~\ref{thm:W-main}.
This yields the algorithm discussed Theorem~\ref{thm:mm-intro-matrix} for the mixed-membership blockmodel.
As discussed before, the quantitative recovery guarantees of this algorithm are weaker than those of our final algorithm, whose recovery accuracy depends only on the distance $\delta$ of the signal-to-noise ratio of the mixed-membership blockmodel to $1$.
In Section~\ref{sec:W-main-proof} we prove Theorem~\ref{thm:W-main}.

\subsection{Matrix estimation for the mixed-membership model}
\label{sec:mm-matrix-main}
We turn to the mixed-membership model and show that Theorem~\ref{thm:W-main} yields an algorithm for partial recovery in the mixed-membership block model.
However, the correlation of the vectors output by this algorithm with the underlying community memberships depends both on the signal-to-noise ratio and the number $k$ of communiteis.
(In particular, when $k$ is super-constant this algorithm no longer solves the partial recovery task.)

\begin{definition}[Mixed-Membership Block Model]
Let $G(n, d, \epsilon, \alpha, k)$ be the following random graph ensemble.
For each node $i \in [n]$, sample a probability vector $\sigma_i \in \R^k_{\geq 0}$ with $\sum_{t \in [k]} \sigma_i(t) = 1$ according to the following (simplified) Dirichlet distribution.
\[
\Pr(\sigma) \propto \prod_{t \in [k]} \sigma_i(t)^{\alpha/k - 1}
\]
For each pair of vertices $i,i' \in [n]$, sample communities $t \sim \sigma_i$ and $t' \sim \sigma_{i'}$.
  If $t = t'$, add the edge $\{ i,i' \}$ to $G$ with probability $\tfrac d n (1 + (1 - \tfrac 1 k) \epsilon)$.
  If $t \neq t'$, add the edge $\{ i,i' \}$ to $G$ with probability $\tfrac d n (1 - \tfrac \epsilon k)$.
(For simplicity, throughout this paper we consider only the case that the communities have equal sizes and the connectivity matrix has just two unique entries.)
\end{definition}

\begin{theorem}[Constant-degree partial recovery for mixed-membership block model, $k$-dependent error]\torestate{\label{thm:mm-main-warmup}
For every $\delta > 0$ and $d(n),\e(n),k(n),\alpha(n)$, there is an algorithm with running time $n^{O(1) + 1/\delta^{O(1)} }$ with the following guarantees when
  \[
  \delta \defeq 1 - \frac{k^2(\alpha +1)^2}{\e^2 d} > 0 \quad \text{ and } \quad k,\alpha  \le n^{o(1)} \text{ and } \epsilon^2 d \le n^{o(1)} \mper
  \]
Let $\sigma, G \sim G(n,d,\epsilon,k,\alpha)$ and for $s \in [k]$ let $v_s \in \R^n$ be given by $v_s(i) = \sigma_i(s) - \tfrac 1 k$.

   The algorithm outputs a vector $x$ such that $\E \iprod{x,v_1}^2 \geq \delta' \|x\|^2 \|v_1\|^2$, for some $\delta' \geq (\delta/k)^{O(1)}$.\footnote{The requirement $\epsilon^2 d \leq n^{o(1)}$ is for technical convenience only; as $\epsilon^2 d$ increases the recovery problem only becomes easier.}}
\end{theorem}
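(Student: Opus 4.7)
The strategy is to instantiate Theorem~\ref{thm:W-main} for the mixed-membership block model, which will give a matrix estimator for $M \defeq \sum_{s=1}^k v_s v_s^\top$ (where $v_s(i) = \sigma_i(s) - \tfrac{1}{k}$), and then use the permutation symmetry of the model to extract from it a vector estimator for $v_1$, at the cost of a factor $\sqrt{k}$ in the correlation.

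First I would phrase $\sbm(n,d,\e,k,\alpha)$ as a $W$-random graph in the language of Section~\ref{sec:W}: take $\cU = \Delta_{k-1}$, let $\mu$ be the symmetric Dirichlet distribution with parameter $\alpha/k$, and set $W(\sigma,\sigma') = 1 + \e(\iprod{\sigma,\sigma'} - \tfrac{1}{k})$. A short calculation verifies that the resulting $W$-random graph on $n$ vertices has the edge probabilities prescribed by Definition~\ref{def:mixed-membership-sbm}, and the stochasticity, finite-rank, and niceness hypotheses of Theorem~\ref{thm:W-main} follow from standard Dirichlet-moment estimates. Next, I would work out the spectral data of $W$: the constant function is the top eigenfunction with $\lambda_1 = 1$, while the centered community indicators $g_s(\sigma) \defeq \sigma(s) - \tfrac{1}{k}$ span a $(k-1)$-dimensional second eigenspace. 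Using the Dirichlet covariance identity $\mathrm{Cov}(\sigma(s),\sigma(t)) = \tfrac{1}{\alpha+1}(\tfrac{\delta_{st}}{k} - \tfrac{1}{k^2})$, one checks that $(Wg_s)(\sigma) = \tfrac{\e}{k(\alpha+1)}\, g_s(\sigma)$, so $\lambda_2 = \tfrac{\e}{k(\alpha+1)}$. The Kesten--Stigum condition $d\lambda_2^2 > 1$ then becomes exactly $\e^2 d > k^2(\alpha+1)^2$, and the two notions of $\delta$ coincide. Because the $g_s$'s are linearly dependent only through $\sum_s g_s = 0$, a short computation gives the projector in closed form, $\Pi(\sigma,\sigma') = k(\alpha+1)\sum_s g_s(\sigma) g_s(\sigma')$, so the ground-truth matrix $A$ in Theorem~\ref{thm:W-main} equals $k(\alpha+1)\cdot M$.

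Applying Theorem~\ref{thm:W-main} then yields, in time $n^{\poly(1/\delta)}$, a matrix $P(G)$ with $\E \Tr(PM) \geq \delta^{O(1)}\cdot(\E \|P\|_F^2)^{1/2}(\E \|M\|_F^2)^{1/2}$. To convert this into a vector estimator, I would exploit that the mixed-membership model is invariant under permutations of $[k]$, so the joint distribution of $(P,v_s)$ is independent of $s$. Hence $\E \iprod{P, v_s v_s^\top} = \tfrac{1}{k}\E \iprod{P, M}$ for every $s$. Moreover, using the Dirichlet covariance once more, the off-diagonal cross terms $\E(v_s^\top v_t)^2$ for $s\neq t$ are smaller than the diagonal terms $\E\|v_s\|^4$ by a factor $\Theta(1/k^2)$, so $\E \|M\|_F^2 = (1 \pm o(1))\cdot k\cdot \E\|v_1\|^4$. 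Combining these facts shows that $P$ is $\delta^{O(1)}/\sqrt{k}$-correlated with $v_1 v_1^\top$; feeding this into Theorem~\ref{thm:meta-theorem-2nd} with $y = v_1/\|v_1\|$ (using that $\|v_1\|^2$ concentrates tightly around its mean in the assumed parameter regime $k,\alpha \le n^{o(1)}$) produces a unit vector $\hat{y}$ with $\E \iprod{\hat{y}, v_1/\|v_1\|}^2 \geq (\delta/k)^{O(1)}$, which is the stated bound. Alternatively, one can mirror the rounding of Section~\ref{sec:warmup}: use correlation-preserving projection (Theorem~\ref{thm:correlation-preserving-projection}) to send $P$ to a PSD matrix $Y$ with $\iprod{Y,M}/(\|Y\|_F\|M\|_F) \geq \delta^{O(1)}$, then sample a vector from the top eigenspace of $Y$ and invoke symmetry over $s$.

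\textbf{Main obstacle.} The hard part is the extraction step. Matrix methods cannot distinguish between the individual $v_s$'s within their $(k-1)$-dimensional joint span; the best one can do with a matrix estimator is output a vector in that span and appeal to symmetry over $s$, which unavoidably incurs the $1/\sqrt{k}$-type loss leading to the $(\delta/k)^{O(1)}$ bound rather than the $\delta^{O(1)}$ bound one might naively hope for (and that is eventually obtained in the main Theorem~\ref{thm:mixed-membership-sbm}). Breaking this symmetry is precisely the reason that the main theorem of the paper moves to the tensor estimator $\sum_s v_s^{\otimes 3}$, which, unlike $\sum_s v_s v_s^\top$, is sensitive to the individual $v_s$'s rather than only their linear span.
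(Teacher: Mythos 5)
Your proposal matches the paper's proof in all essentials: the paper likewise instantiates Theorem~\ref{thm:W-main} via exactly the specialization you describe (this is Lemma~\ref{lem:mm-W-conditions}, with $\lambda_2 = \e/(k(\alpha+1))$ computed from the same Dirichlet covariance identity), and then extracts a vector by an averaging-over-$s$ argument followed by sampling from the top $(k/\delta)^{O(1)}$ eigenvectors of $P$ (Lemma~\ref{lem:mm-warmup-rounding}), which is the eigenspace-sampling route you mention as your alternative. Your symmetry-based accounting of where the $1/\sqrt{k}$ loss enters is the same mechanism as the paper's averaging step, so the two arguments are equivalent.
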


Ideally one would prefer an algorithm which outputs $\tau_1,\ldots,\tau_n \in \Delta_{k-1}$ with $\corr(\sigma,\tau) \geq \delta'/(\alpha+1)$.
If one knew that $\iprod{x,v_1} \geq \delta' \|x\| \|v\|$ rather than merely the guarantee on $\iprod{x,v_1}^2$ (which does not include a guarantee on the sign of $x$), then this could be accomplished by correlation-preserving projection, Theorem~\ref{thm:correlation-preserving-projection}.
The tensor methods we use in our final algorithm for the mixed-membership model are able to obtain a guarantee on $\iprod{x,v_1}$ and hence can output probability vectors $\tau_1,\ldots,\tau_n$.\footnote{Such a guarantee could be obtained here by using a cross-validation scheme on $x$ to choose between $x$ and $-x$. Since we are focused on what can be accomplished by matrix estimation methods generally we leave this to the reader.}

To prove Theorem~\ref{thm:mm-main-warmup} we will apply Theorem~\ref{thm:W-main} and then a simple spectral rounding algorithm; the next two lemmas capture these two steps.
\begin{lemma}[Mixed-membership block model, matrix estimation]\label{lem:mm-W-conditions}
  If $\cU$ is the $(k-1)$-simplex, $\mu$ is the $\alpha,k$ Dirichlet distribution, and $W(\sigma,\sigma') = 1 - \tfrac \e k + \e \iprod{\sigma,\sigma'}$, then $G(n,d,W,\mu)$ is the mixed-membership block model with parameters $k,d,\alpha,\e$.
  In this case, the second eigenvalue of $W$ has multiplicity $k-1$ and has value $\lambda_2 = \tfrac{\e}{k(\alpha +1)}$.
\end{lemma}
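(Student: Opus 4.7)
The plan is to dispense with part~(1) by a direct comparison of edge probabilities, and then to compute the eigendecomposition of the Markov operator $W$ explicitly by exploiting the symmetry of the symmetric Dirichlet distribution.

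For part~(1), I would simply expand: in the $W$-random graph model the edge $\{i,j\}$ appears with probability $\tfrac{d}{n} W(\sigma_i,\sigma_j) = \tfrac{d}{n}\bigl(1 - \tfrac{\e}{k} + \e\iprod{\sigma_i,\sigma_j}\bigr) = \tfrac{d}{n}\bigl(1 + (\iprod{\sigma_i,\sigma_j} - \tfrac{1}{k})\e\bigr)$, which is exactly the edge probability in \cref{def:mixed-membership-sbm}. Since the vertex labels are i.i.d.\ from the Dirichlet distribution in both models, the distributions coincide.

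For part~(2), I would first verify that $\lambda_1 = 1$ with eigenfunction the constant~$1$: $(W\cdot 1)(\sigma) = 1 - \tfrac{\e}{k} + \e\iprod{\sigma,\E_{\sigma'}\sigma'} = 1 - \tfrac{\e}{k} + \tfrac{\e}{k} = 1$, using $\E\sigma' = \tfrac{1}{k}\mathbf{1}$. Next, for each $s\in[k]$ I would define $f_s(\sigma) \defeq \sigma(s) - \tfrac{1}{k}$ and compute $Wf_s$ directly. The constant term in $W(\sigma,\sigma')$ contributes $0$ because $\E f_s = 0$, so
\begin{equation*}
(Wf_s)(\sigma) = \e \sum_{t=1}^k \sigma(t)\bigl(\E[\sigma'(t)\sigma'(s)] - \tfrac{1}{k^2}\bigr)\mper
\end{equation*}
The standard Dirichlet second moments with all parameters equal to $\alpha/k$ give $\E\sigma'(s)^2 = \tfrac{\alpha+k}{k^2(\alpha+1)}$ and $\E\sigma'(t)\sigma'(s) = \tfrac{\alpha}{k^2(\alpha+1)}$ for $t\neq s$, so the bracketed quantity equals $\tfrac{k-1}{k^2(\alpha+1)}$ when $t = s$ and $-\tfrac{1}{k^2(\alpha+1)}$ otherwise. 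Summing, the terms telescope to $\tfrac{\e}{k(\alpha+1)}\bigl(\sigma(s) - \tfrac{1}{k}\bigr) = \tfrac{\e}{k(\alpha+1)} f_s(\sigma)$. Hence each $f_s$ is an eigenfunction with eigenvalue $\tfrac{\e}{k(\alpha+1)}$.

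To finish, I would argue that the span of $\{f_s\}_{s\in[k]}$ has dimension exactly $k-1$, since $\sum_s f_s \equiv 0$ is the only linear dependence (the $\sigma(s)$ are linearly independent functions on the open simplex modulo the constraint $\sum_s \sigma(s) = 1$), and then show that no other eigenfunction shares this eigenvalue by observing that $W - (1-\tfrac{\e}{k})\cdot\mathbf{1}\mathbf{1}^\top$ (as a kernel) has range spanned precisely by the coordinate functions $\sigma \mapsto \sigma(s)$. This forces $W$ to have rank at most $k$, and together with the already-identified eigenfunctions of eigenvalues $1$ (multiplicity $1$) and $\tfrac{\e}{k(\alpha+1)}$ (multiplicity $k-1$), accounts for the entire spectrum. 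The only mild subtlety is handling the $\alpha = 0$ limiting case (where $\mu$ is the uniform distribution on coordinate vectors), which can be verified separately by the same computation in the discrete setting.
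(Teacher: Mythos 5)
Your proposal is correct and follows essentially the same route as the paper: both rest on the rank-$k$ decomposition $W(\sigma,\tau) = 1 + \e\sum_s(\sigma(s)-\tfrac1k)(\tau(s)-\tfrac1k)$ and the second moments of the symmetric Dirichlet distribution (your moment values match Fact~\ref{fact:dirichlet-covariance-warmup} under the paper's $\alpha/k$ parametrization, and the computation $Wf_s = \tfrac{\e}{k(\alpha+1)}f_s$ checks out). The only difference is presentational — the paper tests a Rayleigh quotient on $f(\sigma)=\sigma_1-\tfrac1k$ while you compute the action of $W$ on each $f_s$ explicitly and account for the full spectrum via the rank bound, which is a slightly more complete writeup of the same argument.
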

\begin{proof}
  The first part of the claim follows from the definitions.
  For the second part, note that $W$ has the following decomposition
  \[
  W(\sigma,\tau) = 1 + \sum_{i \leq k} \e (\sigma_i - \tfrac 1k) (\tau_i - \tfrac 1k)\mper
  \]
  The functions $\sigma \mapsto \sigma_i - \tfrac 1k$ are all orthogonal to the constant function $\sigma \mapsto 1$ with respect to $\mu$; i.e.
  \[
  \E_{\sigma \sim \mu} 1 \cdot (\sigma_i - \tfrac 1k) = 0
  \]
  because $\E \sigma_i = \tfrac 1k$.

  It will be enough to test the above Rayleigh quotient
  \[
    \frac{\E_{\sigma \sim \mu} f(\sigma) \cdot (Wf)(\sigma)}{\E_{\sigma \sim \mu} f(\sigma)^2}
  \]
  with any function $f(\sigma)$ in the span of the functions $\sigma \mapsto \sigma_i - \tfrac 1k$.
  If we pick $f(\sigma) = \sigma_1 - \tfrac 1k$ the remaining calculation is routine, using only the second moments of the Dirichlet distribution (see Fact~\ref{fact:dirichlet-covariance-warmup} below).
\end{proof}

\begin{fact}[Special case of Fact~\ref{fact:dirichlet-covariance}]
  \label{fact:dirichlet-covariance-warmup}
  Let $\sigma \in \R^k$ be distributed according to the $\alpha,k$ Dirichlet distribution.
  Let $\tsigma = \sigma - \tfrac 1k \cdot 1$ be centered.
  Then $\E (\tsigma)(\tsigma)^\top = \tfrac 1 {k(\alpha +1)} \cdot \Pi$ where $\Pi$ is the projector to the complement of the all-$1$s vector in $\R^k$.
\end{fact}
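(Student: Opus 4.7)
The proof plan relies on two structural features of the symmetric Dirichlet distribution: its invariance under permutations of the $k$ coordinates, and the identity $\sum_i \sigma_i = 1$. By the first property, the covariance matrix $\E \tsigma \tsigma^\top$ commutes with every permutation of the coordinates of $\R^k$, so it must have the form $a I + b J$ where $J$ is the all-ones matrix. By the second property, $\sum_i \tsigma_i = 0$ with probability $1$, so the all-ones vector lies in the kernel of $\E \tsigma \tsigma^\top$; this forces $a + kb = 0$, equivalently $b = -a/k$, and hence
\begin{equation*}
\E \tsigma \tsigma^\top = a \cdot (I - \tfrac{1}{k} J) = a \cdot \Pi.
\end{equation*}

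It then remains only to identify the scalar $a$. For this I would compute any single diagonal entry, say $a \cdot (1 - \tfrac{1}{k}) = \E \tsigma_1^2 = \Var(\sigma_1)$. Each marginal $\sigma_i$ of the symmetric Dirichlet distribution with concentration parameter $\alpha/k$ in each of $k$ coordinates (total mass $\alpha$) is a $\mathrm{Beta}(\alpha/k,\; \alpha - \alpha/k)$ random variable. The standard Beta variance formula then yields
\begin{equation*}
\Var(\sigma_1) \;=\; \frac{(\alpha/k)(\alpha - \alpha/k)}{\alpha^2(\alpha+1)} \;=\; \frac{k-1}{k^2(\alpha+1)}.
\end{equation*}
Dividing by $1 - \tfrac{1}{k} = \tfrac{k-1}{k}$ gives $a = \tfrac{1}{k(\alpha+1)}$, as claimed.

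The boundary case $\alpha = 0$ (defined as the limit in which $\sigma$ becomes uniform on the coordinate vectors $\Ind_1,\ldots,\Ind_k$) is handled separately: there $\E \sigma_i \sigma_j = \tfrac{1}{k} \Ind_{i=j}$, so $\E \tsigma \tsigma^\top = \tfrac{1}{k} I - \tfrac{1}{k^2} J = \tfrac{1}{k} \Pi$, matching the formula with $\alpha = 0$. There is no substantive obstacle here; the only ``hard'' step is recalling (or rederiving) the variance of a Dirichlet coordinate, and the symmetry argument sidesteps having to compute the off-diagonal entries by hand.
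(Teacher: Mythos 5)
Your proof is correct, and it takes a slightly different route from the paper's. The paper (in the appendix proof of Fact~\ref{fact:dirichlet-covariance}) computes both $\E\sigma_s^2$ and $\E\sigma_s\sigma_t$ explicitly from the Gamma-function normalization of the Dirichlet density and then assembles the matrix; it also states the general fact in a different parameterization (per-coordinate parameter $\alpha$ rather than $\alpha/k$), so the warmup version is obtained by the substitution $\alpha\mapsto\alpha/k$, turning $\tfrac1{k(k\alpha+1)}$ into $\tfrac1{k(\alpha+1)}$. You instead use permutation-invariance to force the form $aI+bJ$, use $\Ind^\top\tsigma=0$ a.s.\ to pin down $b=-a/k$ (so the matrix is $a\Pi$), and then determine the single scalar $a$ from the Beta$(\alpha/k,\alpha-\alpha/k)$ marginal variance; your arithmetic ($\Var(\sigma_1)=\tfrac{k-1}{k^2(\alpha+1)}$, hence $a=\tfrac1{k(\alpha+1)}$) is right, and you correctly use the paper's $\alpha/k$-per-coordinate convention from Definition~\ref{def:mixed-membership-sbm}. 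Your approach buys a shorter computation (one scalar instead of two moment integrals) at the cost of invoking the Beta marginal of the Dirichlet as a known fact; the paper's approach is fully self-contained but computes more. Your separate treatment of the degenerate limit $\alpha=0$ is a nice touch the paper glosses over.
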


We analyze a simple rounding algorithm.
\begin{lemma}\label{lem:mm-warmup-rounding}
  Let $M = \sum_{i=1}^k v_i v_i^\top$ be an $n\times n$ symmetric rank-$k$ PSD matrix.
  Let $P \in \R^{n \times n}$ be another symmetric matrix such that $\iprod{P,M} \geq \delta \|P\| \|M\|$ (where $\| \cdot \|$ is the Frobenious norm).
  Then for at least one vector $v$ among $v_1,\ldots,v_k$, a random unit vector $x$ in the span of the top $(k/\delta)^{O(1)}$ eigenvectors of $P$ satisfies
  \[
    \E \iprod{x,v}^2 \geq (\delta/k)^{O(1)} \|v\|^2 \mper
  \]
\end{lemma}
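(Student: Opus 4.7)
The plan is to split $P$ into its top and tail eigenspaces and use the fact that $M$ has rank at most $k$ to show the tail contributes little to $\iprod{P,M}$; then leverage that $M \succeq 0$ to conclude that some $v_i$ has a large projection onto the top eigenspace.

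Concretely, write $P = \sum_j \mu_j u_j u_j^\top$ with $|\mu_1| \geq |\mu_2| \geq \cdots$ and set $m = C k / \delta^2$ for a sufficiently large constant $C$. Let $\Pi_T$ be the projector onto $\Span(u_1,\ldots,u_m)$ and decompose $P = P_T + P_R$ with $P_T = \Pi_T P \Pi_T$. The first step is to bound $|\iprod{P_R, M}|$. By the variational characterization of eigenvalues, $\|P_R\|_{op} = |\mu_{m+1}| \leq \|P\|_F / \sqrt{m+1}$, and since $M \succeq 0$ has rank at most $k$, $\|M\|_* = \Tr M \leq \sqrt{k}\,\|M\|_F$. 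Combining via the trace duality inequality $|\iprod{A,B}| \leq \|A\|_{op}\|B\|_*$ gives
\[
|\iprod{P_R, M}| \leq \sqrt{k/m}\,\|P\|_F\|M\|_F \leq (\delta/2)\,\|P\|_F\|M\|_F\mper
\]

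Consequently $\iprod{P_T, M} \geq (\delta/2)\,\|P\|_F\|M\|_F$. The second step uses $M \succeq 0$ to upper bound $\iprod{P_T,M}$: writing $\iprod{P_T,M} = \Tr(P \Pi_T M \Pi_T)$ and noting $\Pi_T M \Pi_T \succeq 0$, we obtain
\[
\iprod{P_T, M} \leq \|P\|_{op} \cdot \Tr(\Pi_T M \Pi_T) \leq \|P\|_F \sum_{i=1}^k \|\Pi_T v_i\|^2\mper
\]
Combining the two bounds yields $\sum_{i=1}^k \|\Pi_T v_i\|^2 \geq (\delta/2)\,\|M\|_F$, so by averaging some index $i^*$ satisfies $\|\Pi_T v_{i^*}\|^2 \geq (\delta/(2k))\,\|M\|_F$.

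The final step converts this into a ratio against $\|v_{i^*}\|^2$. Since $\|v_{i^*}\|^2 \leq \Tr M \leq \sqrt{k}\,\|M\|_F$, we get $\|\Pi_T v_{i^*}\|^2 \geq (\delta/(2k^{3/2}))\,\|v_{i^*}\|^2$. For a uniformly random unit vector $x$ in the $m$-dimensional subspace $T$, $\E \iprod{x, v_{i^*}}^2 = \|\Pi_T v_{i^*}\|^2 / m$, which together with $m = O(k/\delta^2)$ gives $\E \iprod{x,v_{i^*}}^2 \geq \Omega(\delta^3 / k^{5/2})\,\|v_{i^*}\|^2 = (\delta/k)^{O(1)}\,\|v_{i^*}\|^2$, as desired.

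The main obstacle I anticipate is the last inequality $\|v_{i^*}\|^2 \leq \sqrt{k}\,\|M\|_F$, which would fail if some $v_i$ were enormously larger than the others; happily it holds because $\|v_{i^*}\|^2 \leq \Tr M$ combined with the rank bound on $M$. Everything else is a routine application of Hölder-type trace inequalities together with the crucial PSD structure of $M$.
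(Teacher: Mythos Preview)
Your proof is correct, but it takes a different route from the paper's. The paper first averages over $i$ to find a single $v$ with $\iprod{P,vv^\top}\ge(\delta/k)\|P\|\,\|M\|\ge(\delta/k)\|P\|\,\|v\|^2$ (the second inequality using $M\succeq vv^\top$), normalizes so that $\|P\|_F=1$ and $\|v\|=1$, and then works with the eigendecomposition $P=\sum_j\lambda_j u_ju_j^\top$: from $\sum_j\lambda_j\iprod{v,u_j}^2\ge\delta/k$ and Cauchy--Schwarz one gets $\sum_j\lambda_j^2\iprod{v,u_j}^2\ge(\delta/k)^2$, and a Markov argument on the weights $\lambda_j^2$ (which sum to $1$) shows the top $(k/\delta)^{O(1)}$ eigenvectors already capture $(\delta/k)^{O(1)}$ of $v$'s mass.

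By contrast, you split $P$ into top and tail eigenspaces up front and kill the tail in one shot via the trace-duality bound $|\iprod{P_R,M}|\le\|P_R\|_{op}\|M\|_*$ together with the rank-$k$ bound $\|M\|_*\le\sqrt{k}\,\|M\|_F$; only afterward do you average over the $v_i$'s. Both arguments use the PSD structure of $M$ in an essential way, just at different points: the paper to pass from $\|M\|_F$ to $\|v\|^2$, you to control $\|M\|_*$ and to bound $\Tr(P\,\Pi_T M\Pi_T)$. Your route is perhaps more transparent about \emph{why} the top $O(k/\delta^2)$ eigenspace suffices (the tail literally cannot see a rank-$k$ matrix), whereas the paper's Cauchy--Schwarz/Markov argument is a bit slicker and avoids the extra $\sqrt{k}$ loss you incur when bounding $\|v_{i^*}\|^2$ by $\sqrt{k}\,\|M\|_F$ rather than directly by $\|M\|_F$; of course this is immaterial inside $(\delta/k)^{O(1)}$.
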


Now we can prove Theorem~\ref{thm:mm-main-warmup}.
\begin{proof}[Proof of Theorem~\ref{thm:mm-main-warmup}]
  Lemma~\ref{lem:mm-W-conditions} shows that the conditions of Theorem~\ref{thm:W-main} hold, and hence (via color coding) there is an $n^{\poly(1/\delta)}$ time algorithm to compute a matrix $P$ such that $\iprod{P,M} \geq \delta^{O(1)} \|P\| \|M\|$ with probability at least $\delta^{O(1)}$, where $M = \sum_{s \in [k]} v_s v_s^\top$.
  (The reader may check that the matrix $A$ of Theorem~\ref{thm:W-main} is in this case the matrix $M$ described here.)

  Applying Lemma~\ref{lem:mm-warmup-rounding} shows that a random unit vector $x$ in the span of the top $(k/\delta)^{O(1)}$ eigenvectors of $P$ satisfies $\iprod{x,v}^2 \geq (\delta/k)^{O(1)} \|v\|^2$, where $v \in \R^n$ has entries $v_i = \sigma_i(1)$.
  (The choice of $1$ is without loss of generality.)
\end{proof}

\subsection{Proof of Theorem~\ref{thm:W-main}}
\label{sec:W-main-proof}
\begin{definition}
  For a pair of functions $A,B \colon \cU \times \cU \rightarrow \R$, we denote by $AB$ their product, whose entries are $(AB)(x,y) = \E_{z \sim \mu} A(x,z)B(z,y)$.
\end{definition}

The strategy to prove Theorem~\ref{thm:W-main} will as usual be to apply Lemma~\ref{lem:basis-conditions}.
We check the conditions of that Lemma in the following Lemmas, deferring their proofs till the end of this section.

\begin{lemma}\label{lem:W-unbiased}
  Let $G_{ij}$ be the $0/1$ indicator for the presence of edge $i \sim j$ in a graph $G$.
  As usual, let $\saw_\ell(i,j)$ be the collection of simple paths of length $\ell$ in the complete graph on $n$ vertices from $i$ to $j$.

  Let $x,G \sim G(n,d,W,\mu)$.
  Let $\alpha \in \saw_\ell(i,j)$.
  Let $p_\alpha(G) = \prod_{ab \in \alpha} (G_{ab} - \tfrac dn)$.
  Let $\overline{W}(x,y) = W(x,y) - 1$.
  Then
  \[
    \E\Brac{p_\alpha(G) \mid x_i, x_j} = \Paren{\frac dn}^{\ell} \overline{W}^{\ell -1}(x_i,x_j)
  \]
\end{lemma}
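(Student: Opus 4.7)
The proof should be a direct calculation. The plan is to condition on the full vector of labels $(x_a)_{a \in V(\alpha)}$ for vertices on the path, use the independence of the edges in $G$ given these labels, and then integrate out the intermediate labels one by one against $\mu$.

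First, for every edge $\{a,b\}$, using Stochasticity ($\E_{y\sim \mu} W(x,y) = 1$) only indirectly, we note that
\[
  \E\Brac{G_{ab} - \tfrac dn \Mid x_a, x_b} = \frac{d}{n}\Paren{W(x_a,x_b) - 1} = \frac{d}{n}\,\overline{W}(x_a,x_b).
\]
Given all labels, the edges $\{G_{ab}\}$ are mutually independent (this is the defining property of the $W$-random graph model). Therefore, conditioning first on the labels of all $\ell+1$ vertices of $\alpha$ and using independence of the factors,
\[
  \E\Brac{p_\alpha(G) \Mid (x_a)_{a \in V(\alpha)}} \;=\; \prod_{ab \in \alpha} \E\Brac{G_{ab} - \tfrac dn \Mid x_a, x_b} \;=\; \Paren{\frac{d}{n}}^{\ell} \prod_{ab \in \alpha} \overline{W}(x_a,x_b).
\]

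Next, write the path as $i = v_0, v_1, \ldots, v_\ell = j$, so the product above equals $\prod_{t=0}^{\ell-1} \overline{W}(x_{v_t}, x_{v_{t+1}})$. The $\ell-1$ intermediate labels $x_{v_1},\ldots,x_{v_{\ell-1}}$ are i.i.d.\ from $\mu$ and independent of $x_i, x_j$. By definition of the Markov operator composition with respect to the $\mu$-inner product, integrating them out one by one telescopes:
\[
  \E_{x_{v_1},\ldots,x_{v_{\ell-1}} \sim \mu}\!\Brac{ \prod_{t=0}^{\ell-1} \overline{W}(x_{v_t}, x_{v_{t+1}}) } \;=\; \overline{W}^{\,\ell-1}(x_i, x_j),
\]
where we are treating $\overline{W}(\cdot,x_j)$ as a function in $\cF$ and applying the operator $\overline{W}$ to it $\ell-1$ times (each integration against $\mu$ composes one additional factor of $\overline{W}$ into the running operator product). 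Combining with the $(d/n)^\ell$ prefactor and taking iterated expectations (tower rule) yields the claim.

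The only subtle step is the bookkeeping of how many operator compositions arise from integrating out $\ell-1$ intermediate vertices out of $\ell$ edge factors; this is a routine induction on $\ell$, and is the sole place where the path structure (every intermediate vertex has degree exactly two in $\alpha$, and $i,j$ each appear in exactly one edge of $\alpha$) is used. No obstacle is expected — the lemma is a direct consequence of conditional independence of edges and the definition of the operator $\overline{W}$.
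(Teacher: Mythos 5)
Your proof is correct and follows the same route as the paper's (one-line) proof: expand the product, use conditional independence of the edges given the labels to get a factor $\tfrac dn\,\overline{W}(x_a,x_b)$ per edge, and integrate out the $\ell-1$ intermediate labels to assemble the operator power. Your explicit remark about the power-counting is apt: the exponent $\ell-1$ matches the $\ell$ kernel factors of the path only under the reading you adopt, namely $\ell-1$ applications of the operator $\overline{W}$ to the function $\overline{W}(\cdot,x_j)$, rather than the $(\ell-1)$-fold product of the paper's formal definition of $AB$.
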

\begin{lemma}\label{lem:W-pairwise}
  With the same notation as in Lemma~\ref{lem:W-unbiased}, as long as $\ell \geq C \log n / \delta^{O(1)}$ for a large-enough constant $C$,
  \[
  \Paren{\frac nd}^{2\ell} \sum_{\alpha,\beta \in \saw_\ell(i,j)} \E p_\alpha(G) p_\beta(G) \leq \delta^{-O(1)} \cdot |\saw_\ell(i,j)|^2 \cdot \E \overline{W}^{\ell-1}(x_i,x_j)^2\mper
  \]
  (The constant $C$ depends on $W$ and the moments of $\mu$.)
\end{lemma}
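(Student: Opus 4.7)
\textbf{Proof plan for Lemma~\ref{lem:W-pairwise}:} The strategy is to follow the blueprint of the two-community warmup (Lemma~\ref{lem:indep-sbm}), replacing the scalar $\epsilon$ by the operator $\overline{W}$ and exploiting both stochasticity of $W$ and the orthonormal spectral decomposition of $\overline{W}$. Conditioning on vertex labels $x=(x_1,\ldots,x_n)$, the edges $G_{ab}$ are independent Bernoulli variables with conditional mean $\tfrac{d}{n}W(x_a,x_b)$ and $\E[(G_{ab}-\tfrac{d}{n})^2\mid x_a,x_b]=\tfrac{d}{n}W(x_a,x_b)(1+O(d/n))$. For any pair $(\alpha,\beta)\in \saw_\ell(i,j)^2$ with $r=|\alpha\cap\beta|$ shared edges, independence of edges gives
\[
\E p_\alpha(G)p_\beta(G) = \Paren{\tfrac{d}{n}}^{2\ell-r}(1+O(d/n))^r\cdot \E_x\Brac{\prod_{ab\in\alpha\cap\beta} W(x_a,x_b)\prod_{ab\in\alpha\triangle\beta}\overline{W}(x_a,x_b)}.
\]
The multiplicative $(1+O(d/n))^r$ is harmless because $\ell=o(n)$.

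I would then classify pairs $(\alpha,\beta)$ by the topology of $\alpha\cup\beta$. The \emph{dominant} class consists of pairs whose shared edges form at most two pieces: a prefix of length $t_1\geq 0$ attached to $i$ and a suffix of length $t_2\geq 0$ attached to $j$, with $t_1+t_2=r$. Writing $i',j'$ for the inner endpoints of these shared pieces, integrating out the $\ell-r-1$ unshared interior vertices of each walk produces $\overline{W}^{\ell-r}(x_{i'},x_{j'})^2$, and integrating out the shared interiors produces the factor $W^{t_1}(x_i,x_{i'})\cdot W^{t_2}(x_{j'},x_j)$. The crucial simplification is stochasticity: since $W\cdot 1=1$, we have $\E_{x_i}W^{t_1}(x_i,x_{i'})=1$ and likewise $\E_{x_j}W^{t_2}(x_{j'},x_j)=1$. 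Using the spectral decomposition $\overline{W}(x,y)=\sum_{k\geq 2}\lambda_k f_k(x)f_k(y)$ and orthonormality of the $f_k$,
\[
\E_x[\cdots]=\E_{x_{i'},x_{j'}}\overline{W}^{\ell-r}(x_{i'},x_{j'})^2=\sum_{k\geq 2}\lambda_k^{2(\ell-r)}\leq O(1)\cdot\lambda_2^{2(\ell-r)}.
\]

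For each $r$ there are $r+1$ choices of $(t_1,t_2)$ and at most $(1+o(1))n^{2\ell-r-2}$ pairs per choice (from picking $2\ell-r-2$ distinct internal vertices). The dominant contribution is therefore bounded by
\[
\sum_{r=0}^{\ell-1}(r+1)\,n^{2\ell-r-2}\Paren{\tfrac{d}{n}}^{2\ell-r}\lambda_2^{2(\ell-r)}=n^{-2}d^\ell\sum_{r=0}^{\ell-1}(r+1)(d\lambda_2^2)^{\ell-r}.
\]
Setting $q=(d\lambda_2^2)^{-1}=1-\delta$, the inner sum equals $(d\lambda_2^2)^\ell\sum_{r\geq 0}(r+1)q^r\leq (d\lambda_2^2)^\ell/\delta^2$. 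Multiplying by $(n/d)^{2\ell}$ gives $\delta^{-O(1)}n^{2\ell-2}\lambda_2^{2\ell}\leq \delta^{-O(1)}n^{2\ell-2}\lambda_2^{2\ell-2}$ using $\lambda_2\leq 1$. Since $|\saw_\ell(i,j)|=(1-o(1))n^{\ell-1}$ and $\E\overline{W}^{\ell-1}(x_i,x_j)^2=\sum_{k\geq 2}\lambda_k^{2(\ell-1)}\asymp \lambda_2^{2(\ell-1)}$, this matches the target RHS up to a constant.

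The \emph{main obstacle} is handling the remaining topologies: (i) the degenerate case $\alpha=\beta$ (so $r=\ell$, no unshared edges) and (ii) pairs where $\alpha\cap\beta$ has a connected component not touching $\{i,j\}$. For (i), stochasticity gives $\E W^\ell(x_i,x_j)=1$ and there are only $|\saw_\ell(i,j)|\approx n^{\ell-1}$ such pairs, so after the $(n/d)^{2\ell}$ rescaling the contribution is $\asymp n^{2\ell-1}/d^\ell$; taking $\ell\geq C\log n/\delta^{O(1)}$ with $C$ large enough makes $n/(d\lambda_2^2)^\ell=n^{-\Omega(1)}$ (using $d\lambda_2^2\geq (1-\delta)^{-1}\geq 1+\delta$), so this term is subdominant. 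For (ii), each extra connected component of the shared subgraph forces at least one additional shared internal vertex relative to the prefix-plus-suffix structure, reducing the count of pairs by a factor of $n$ per extra component; the expectations are controlled in absolute value using the pointwise bounds $|W(x,y)|\leq 1/\sqrt{\mu(x)\mu(y)}$ and $|\overline W(x,y)|\leq \lambda_2/\sqrt{\mu(x)\mu(y)}$ from Niceness~II, together with the negative-moment bound on $\mu$ from Niceness~I, which ensure that integrals of products of at most $O(\ell)$ such factors remain bounded by a constant depending only on $W,\mu$. Since $\ell=n^{o(1)}$, the total contribution of these subdominant topologies is at most $n^{-\Omega(1)}$ times the main term, completing the proof.
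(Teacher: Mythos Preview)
Your approach is correct and essentially identical to the paper's proof: same classification by intersection type, same use of stochasticity of $W$ to integrate out the shared-edge factors, same spectral computation $\E\,\overline W^{\,m}(x,y)^2=\sum_{k\ge2}\lambda_k^{2m}$ for the unshared part (the paper packages this as Fact~\ref{fact:W-frob}), and the same appeal to Niceness~I/II for the residual topologies.

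The one place to tighten is your case (ii). The pointwise Niceness~II bound gives
\[
\Abs{\prod_{ab\in\alpha\cap\beta}W(x_a,x_b)\prod_{ab\in\alpha\triangle\beta}\overline W(x_a,x_b)}\le \lambda_2^{|\alpha\triangle\beta|}\prod_{a}\mu(x_a)^{-\deg_{\alpha\cup\beta}(a)/2},
\]
and after integrating, degree-$2$ vertices contribute exactly $\E_{x\sim\mu}\mu(x)^{-1}=1$, while the at most $O(t-s)$ vertices of higher degree (where $t,s$ are the numbers of shared vertices and shared edges) contribute $C(\mu)^{O(t-s)}$ by Niceness~I. You must retain the factor $\lambda_2^{2(\ell-s)}$ here; your phrase ``bounded by a constant depending only on $W,\mu$'' reads as if the whole expectation is $O(1)$, which would break the comparison to the main term. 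Also, the subdominant class should be described as \emph{all} pairs with $t>s$, which includes paths that cross at an isolated vertex without sharing an edge there, not only pairs whose edge-intersection has an extra component; the paper's parametrization by $(t,s)$ handles both at once.
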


\begin{proof}[Proof of Theorem~\ref{thm:W-main}]
  Let $B_{ij} = \lambda_2^{-(\ell -1)} \overline{W}^{\ell -1}(x_i,x_j)$.
  By Lemma~\ref{lem:W-unbiased}, Lemma~\ref{lem:W-pairwise}, and Lemma~\ref{lem:basis-conditions}, there is matrix polynomial $P(G)$, computable to $1/\poly(n)$-accuracy in time $n^{\poly(1/\delta)}$ by color coding, such that
  \[
    \E \Tr PB^T \geq \delta^{O(1)} (\E \|P\|^2)^{1/2} (\E \|B\|^2)^{1/2}\mper
  \]
  At the same time, $B - A$ has entries
  \[
  (B-A)_{ij} = \sum_{3 \leq t \leq r} \Paren{\frac{\lambda_t}{\lambda_2}}^{\ell -1} \Pi_t(x_i,x_j) 
  \]
  where the $\Pi_t$ projects to the $t$-th eigenspace of $W$.
  Since $W$ is bounded, choosing $\ell$ a large enough multiple of $\log n$ ensures that $\E \|B-A\|^2 \leq n^{-100} \E \|B\|^2$, so the theorem now follows by standard manipulations.
\end{proof}

\subsection{Proofs of Lemmas}
\begin{proof}[Proof of Lemma~\ref{lem:W-unbiased}]
  As usual, we simply expand $p$, obtaining
  \begin{align*}
    \E\Brac{p_\alpha(G) \mid x_i, x_j} & = \E_x\Brac{ \prod_{ab \in \alpha} \frac dn \cdot (W_{x_a,x_b} - 1) \mid x_i, x_j}\\
    & = \Paren{\frac dn}^{\ell} \cdot \overline{W}^{\ell-1}(x_i,x_j)\mper\qedhere
  \end{align*}
\end{proof}

We will need some small facts to help in proving Lemma~\ref{lem:W-pairwise}.
\newcommand{\oW}{\overline{W}}
\begin{fact}\label{fact:W-frob}
  If $\ell - t \geq C \log n$ for large enough $C = C(W)$, then
  \[
    \lambda_2^{2t} \E_{x,y \sim \mu} \oW^{\ell -t}(x,y)^2 \leq (1 + o(1)) \cdot \E_{x,y \sim \mu} \oW^{\ell}(x,y)^2\mper
  \]
  Also, for any $t \leq \ell$,
  \[
    \lambda_2^{2t} \E_{x,y \sim \mu} \oW^{\ell -t}(x,y)^2 \leq r \cdot \E_{x,y \sim \mu} \oW^{\ell}(x,y)^2\mper
  \]
  where $r$ is the rank of $W$.
\end{fact}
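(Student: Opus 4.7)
The plan is to exploit the spectral decomposition of $W$ directly. By the Finite Rank and Stochasticity assumptions we can write $W(x,y) = \sum_{i=1}^r \lambda_i f_i(x) f_i(y)$ with $\lambda_1 = 1$, $f_1 \equiv 1$, and the $f_i$ orthonormal under the $\mu$-inner product. Subtracting the rank-one component along $f_1$ gives $\oW(x,y) = \sum_{i \geq 2} \lambda_i f_i(x) f_i(y)$, and by orthonormality the $(\ell-t)$-fold composition of $\oW$ as an operator has kernel $\oW^{\ell-t}(x,y) = \sum_{i \geq 2} \lambda_i^{\ell-t} f_i(x) f_i(y)$. A further application of orthonormality yields
\[
\E_{x,y \sim \mu} \oW^{\ell-t}(x,y)^2 = \sum_{i \geq 2} \lambda_i^{2(\ell-t)},
\]
so the quantity of interest is the ratio
\[
R(t) \;\defeq\; \frac{\lambda_2^{2t}\, \E_{x,y} \oW^{\ell-t}(x,y)^2}{\E_{x,y} \oW^{\ell}(x,y)^2} \;=\; \frac{\lambda_2^{2t}\sum_{i \geq 2} \lambda_i^{2(\ell-t)}}{\sum_{i \geq 2} \lambda_i^{2\ell}}.
\]

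Next I would use Niceness II to deduce that $|\lambda_i| \leq \lambda_2$ for all $i \geq 2$ (the pointwise bound $|\oW(x,y)| \leq \lambda_2/\sqrt{\mu(x)\mu(y)}$ and the Schur test give $\|\oW\|_{L^2(\mu) \to L^2(\mu)} \leq \lambda_2$, and this operator norm equals the largest magnitude eigenvalue). With this in hand, the second (weaker) inequality is immediate: each of the at most $r-1$ terms in the numerator satisfies $\lambda_2^{2t}\lambda_i^{2(\ell-t)} \leq \lambda_2^{2\ell}$, while the denominator is at least $\lambda_2^{2\ell}$ by isolating the $\lambda_2$-eigenspace contribution, giving $R(t) \leq r-1 \leq r$.

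For the first (sharp) inequality, I would split the eigenvalues into $S = \{i \geq 2 : |\lambda_i| = \lambda_2\}$ and its complement, letting $\lambda^* < \lambda_2$ be the next-largest magnitude among $\{|\lambda_i| : i \geq 2\}$ (or $0$ if $S$ exhausts the spectrum of $\oW$). Then
\[
\lambda_2^{2t}\sum_{i \geq 2} \lambda_i^{2(\ell-t)} \leq |S|\cdot \lambda_2^{2\ell} + (r - |S|)(\lambda^*/\lambda_2)^{2(\ell-t)} \cdot \lambda_2^{2\ell},
\]
and the denominator is at least $|S| \cdot \lambda_2^{2\ell}$, yielding
\[
R(t) \leq 1 + \frac{r}{|S|}\Paren{\frac{\lambda^*}{\lambda_2}}^{2(\ell-t)}.
\]
Choosing $C = C(W)$ any constant larger than $1/\log(\lambda_2/\lambda^*)$ forces the error term to be $o(1)$ whenever $\ell - t \geq C \log n$, which is the claimed bound.

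The only mildly nontrivial step is justifying $|\lambda_i| \leq \lambda_2$ from Niceness II (since the paper orders eigenvalues by value, not magnitude), and I expect this to be the main thing worth writing down carefully; the rest is a direct spectral computation using nothing but orthonormality of the $f_i$ and the geometric decay of $(\lambda^*/\lambda_2)^{\ell-t}$.
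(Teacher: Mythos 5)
Your proof is correct and follows essentially the same route as the paper's: diagonalize $\overline{W}$, reduce both expectations to sums of powers of eigenvalues via orthonormality of the $f_i$, and use the geometric decay of $(\lambda_i/\lambda_2)^{2(\ell-t)}$. You are in fact slightly more careful than the paper, which tacitly assumes $|\lambda_i| < \lambda_2$ for $i > 2$; your Schur-test derivation of $|\lambda_i| \le \lambda_2$ from Niceness~II and your separate treatment of eigenvalues of magnitude exactly $\lambda_2$ fill in details the paper leaves implicit.
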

\begin{proof}
  Using the eigendecomposition of $\oW$, we have that $\E_{x,y \sim \mu} \oW^{\ell -t}(x,y)^2 = \sum_{2 \leq i \leq r} \lambda_i^{2(\ell -t)}$ and similarly $\E_{x,y \sim \mu} \oW^{\ell}(x,y)^2 = \sum_{2 \leq i \leq r} \lambda_i^{2\ell}$.
  If $i > 2$, then
  \[
  \lambda_2^{2t} \lambda_i^{2(\ell -t)} = \lambda_2^{2\ell} (\lambda_i / \lambda_2)^{2(\ell -t)} \leq \lambda_2^{2\ell}/n
  \]
  by our assumption that $\ell -t \geq C \log n$ for large enough $C$.
  This finishes the proof of the first claim; the second one is similar.
\end{proof}

\begin{proof}[Proof of Lemma~\ref{lem:W-pairwise}]
  Pairs $\alpha,\beta$ which share only the vertices $i,j$ each contribute exactly $\E \overline{W}^{\ell -1}(x_i,x_j)^2$ to the left-hand side, by Lemma~\ref{lem:W-unbiased}.
  Consider next the contribution of $\alpha,\beta$ whose shared edges form paths originating at $i$ and $j$.
  Suppose there are $t$ such shared edges.
  Then
  \begin{align*}
    \E p_\alpha p_\beta & = \Paren{\frac dn}^{2\ell - t} \E_x \prod_{ab\in \alpha \triangle \beta} \overline{W}(x_a,x_b) \cdot \prod_{ab \in \alpha \cap \beta} (W(x_a,x_b) + O(d/n))\\
    & = \Paren{\frac dn}^{2\ell -t} (1 + O(d/n))^t \E \overline{W}^{2(\ell -t -1)} (x,y)^2 \mcom
  \end{align*}
  where for the second equality we used the assumption $\E_{x \sim \mu} W(x,y) = 1$ for every $y$.

  If $\ell - t > C \log n$ for the constant in Fact~\ref{fact:W-frob}, then this is at most $(1 + o(1)) \Paren{\frac dn}^{2\ell-t} \lambda_2^{- 2t} \E \overline{W}^{2(\ell -1 )} (x,y)^2$, and for every $t \leq \ell$ it is at most $ r \cdot \Paren{\frac dn}^{2\ell-t} \lambda_2^{-2t} \E \overline{W}^{2(\ell -1 )} (x,y)^2$.

  There are at most $|\saw_\ell(i,j)|^2/n^t \cdot t$ choices for such pairs $\alpha,\beta$, except when $t = \ell$, in which case there are $|\saw_\ell(i,j)|^2 / n^{t-1}$ choices.
  So the total contribution from such $\alpha,\beta$ is at most
  \begin{align*}
    & |\saw_\ell(i,j)|^2 \cdot \E_{x,y} \overline{W}^{\ell-1}(x,y)^2 \cdot \Paren{ \sum_{t \leq \ell/2} t d^{-t}  \lambda_2^{-2t} + nr \cdot \sum_{\ell \geq t > \ell/2} t d^{-t} \lambda_2^{-2t} }\\
    & \leq \delta^{-O(1)} |\saw_\ell(i,j)|^2 \E_{x,y} \overline{W}^{\ell-1}(x,y)^2\mper
  \end{align*}

  It remains to handle pairs $\alpha,\beta$ which share $t$ vertices and $s$ edges for $t > s$.
  If $t,s \leq \ell -2$, then there are only $n^{2(\ell-1) - s} \ell^{O(t-s)}$ choices for such a pair $\alpha,\beta$.
  The contribution of each such pair we bound as follows
  \begin{align*}
  \E p_\alpha p_\beta & = \Paren{\frac dn}^{2\ell - s} \E \prod_{ab \in \alpha \cap \beta} \E(G_{ab} - \tfrac dn)^2 \cdot \prod_{ab \in \alpha \triangle \beta} \overline{W}_{x_a,x_b}\mper
  \end{align*}
  Now, $\E\Brac{ (G_{ab} - \tfrac dn)^2  \, | \, x} = \tfrac dn (W(x_a,x_b) + O(d/n))$ by straightforward calculations, so the above is
  \begin{align*}
      & (1 + O(d/n))^s \Paren{\frac dn}^{2\ell - s} \E_x \prod_{ab \in \alpha \cap \beta} W(x_a,x_b) \cdot \prod_{ab \in \alpha \triangle \beta} \overline{W}(x_a,x_b)\\
      & \leq (1 + O(d/n))^s \Paren{\frac dn}^{2\ell - s} \lambda_2^{2\ell - s} \prod_{a \in \alpha \cup \beta} \mu(x_a)^{-\deg_{\alpha,\beta}(a)/2}
  \end{align*}
  where $\deg_{\alpha,\beta}(a)$ is the degree of the vertex $a$ in the graph $\alpha \cup \beta$.
  Any degree-$2$ vertices simply contribute $1$ in the above, since $\E_{x \sim \mu} 1/\mu(x) = 1$.
  There are at most $t -s$ vertices of higher degree; they may have degree at most $4$.
  They each contribute at most some number $C = C(\mu)$ by the niceness assumptions on $\mu$.
  So the above is at most
  \[
  (1 + o(1)) \Paren{\frac dn}^{2\ell - s} \lambda_2^{2\ell -s} \exp \{ O(t-s) \}\mper
  \]
  Putting things together as in Lemma~\ref{lem:indep-sbm} finishes the proof.
\end{proof}

\begin{proof}[Proof of Lemma~\ref{lem:mm-warmup-rounding}]
  By averaging, there is some $v$ among $v_1,\ldots,v_k$ such that
  \[
  \iprod{P,vv^\top} \geq \frac \delta k \cdot \|P\| \cdot \|M\| \geq \frac \delta k \cdot \|P\| \cdot \|vv^\top\|
  \]
  where the second inequality uses $M \succeq 0$.
  Renormalizing, we may assume $\|P\|$ has Frobenious norm $1$ and $v$ is a unit vector; in this case we obtain $\iprod{v,Pv} \geq \delta/k$.
  Writing out the eigendecomposition of $P$, let $P = \sum_{i=1}^n \lambda_i u_i u_i^\top$ and we get
  \[
    \sum_{i=1}^n \lambda_i \iprod{v,u_i}^2 \geq \delta/k
  \]
  By Cauchy-Schwartz,
  \[
  \sum_{i=1}^n \lambda_i \iprod{v,u_i}^2 \leq \Paren{\sum_{i=1}^n \lambda_i^2 \iprod{v,u_i}^2}^{1/2}
  \]
  and hence $\sum_{i=1}^n \lambda_i^2 \iprod{v,u_i}^2 \geq (\delta/k)^2$, while $\sum_{i=1}^n \lambda_i^2 = 1$.
  Now the Lemma follows from Markov's inequality.
\end{proof}

\section{Tensor estimation for mixed-membership block models}
\label{sec:mm}

\subsection{Main theorem and algorithm}

\begin{theorem}[Constant-degree partial recovery for mixed-membership block model]\label{thm:mm-main}
There is a constant $C$ such that the following holds.
Let $G(n,d,\e,k,\alpha)$ be the mixed-membership block model.
For every $\delta \in (0,1)$ and $d(n),\e(n),k(n),\alpha(n)$, there is an algorithm with running time $n^{O(1) + 1/\delta^{O(1)} }$ with the following guarantees when
  \[
  \delta \defeq 1 - \frac{k^2(\alpha +1)^2}{\e^2 d} > 0 \quad \text{ and } \quad k,\alpha  \le n^{o(1)} \text{ and } \epsilon^2 d \le n^{o(1)} \mper
  \]
Let $\sigma, G \sim G(n,d,\epsilon,k,\alpha)$.
Let $t = (\alpha+1) \cdot \tfrac k {k+\alpha}$ (samples from the $\alpha,k$ Dirichlet distribution are approximately uniform over $t$ coordinates).
Given $G$, the algorithm outputs probability vectors $\tau_1,\ldots,\tau_n \in \Delta_{k-1}$ such that
  \[
    \E \corr(\sigma,\tau) \geq \delta^C \Paren{\frac 1t - \frac 1k}\mper
  \]
  (Recall the definition of correlation from \eqref{eq:mixed-membership-correlation}.)\footnote{The requirement $\epsilon^2 d \leq n^{o(1)}$ is for technical convenience only; as $\epsilon^2 d$ increases the recovery problem only becomes easier.}
\end{theorem}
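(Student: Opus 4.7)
}
The plan is to instantiate the third-moment meta-theorem (Theorem~\ref{thm:meta-theorem-3rd}) with a suitable low-degree estimator of the posterior third moment, decode the result with the robust tensor decomposition of Section~\ref{sec:tdecomp}, and finally post-process the decoded vectors into probability distributions via correlation-preserving projection (Theorem~\ref{thm:correlation-preserving-projection}). Concretely, let $\tsigma_i = \sigma_i - \tfrac{1}{k}\mathbf{1}\in\R^k$ and, for each community $s\in[k]$, let $v_s\in\R^n$ have entries $v_s(i) = \tsigma_i(s)$. By Fact~\ref{fact:dirichlet-covariance-warmup} the vectors $v_s$ are approximately orthogonal with $\E\|v_s\|^2 \approx \tfrac{n}{k(\alpha+1)}$, so after an appropriate deterministic rescaling $y_s = c\, v_s$ with $c = (k(\alpha+1)/n)^{1/2}$ the hidden target $T = \sum_{s\in[k]} y_s^{\otimes 3}$ is essentially an orthonormal 3-tensor to which Theorem~\ref{thm:meta-theorem-3rd} applies (after first centering, and with orthonormality enforced only in expectation, an approximation we absorb by conditioning on standard concentration events).

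The central task is then to build a degree-$O(\log n)/\delta^{O(1)}$ tensor-valued polynomial $P(x)$ satisfying \eqref{eq:third-moment-correlation} with the required correlation, exactly at the threshold $\e^2 d > k^2(\alpha+1)^2$. Following Section~\ref{sec:techniques-estimate-higher-moments}, I would take the triple $(i,j,k)$-entry of $P$ to be the average over ``self-avoiding 3-stars'' of length $\ell = \Theta(\log n)/\delta^{O(1)}$: triples $(\alpha_1,\alpha_2,\alpha_3)$ of vertex-disjoint self-avoiding walks that meet only at a common center $c\in[n]$ and terminate at $i,j,k$, with $p_{\alpha_1,\alpha_2,\alpha_3}(x) = \prod_t\prod_{ab\in\alpha_t}(x_{ab}-\tfrac{d}{n})$ and the normalization $(n/\e d)^{3\ell}\cdot(k(\alpha+1))^{3(\ell-1)}$. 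Using the Dirichlet second moments (Fact~\ref{fact:dirichlet-covariance-warmup}) one checks that each such rescaled star is an unbiased estimator of $\sum_s y_s(i)y_s(j)y_s(k)$. To verify the approximate pairwise independence hypothesis of Lemma~\ref{lem:basis-conditions}, one bounds $\E p_{A} p_{B}$ for two 3-stars by partitioning $A\cap B$ into its shared edges and running the same combinatorial accounting as in Lemma~\ref{lem:indep-sbm}: the geometric series closes precisely when $\e^2 d > k^2(\alpha+1)^2$, matching the stated threshold. The color-coding step from Section~\ref{sec:techniques-color-coding} then lets us evaluate $P$ in time $n^{O(1)+1/\poly(\delta)}$ (applied separately to each of the three arms and to the center).

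Given such a $P(x)$ with $\delta^{O(1)}$ correlation with $\sum_s y_s^{\otimes 3}$, I would feed it to the low-correlation tensor decomposition algorithm of Section~\ref{sec:tdecomp}. The oracle $\cO$ required by that theorem — testing whether a candidate unit vector $v$ satisfies $\sum_s \iprod{v,y_s}^4 \geq \delta^{O(1)}$ — can be implemented by holding out a $o(1)$ fraction of the edges and evaluating a degree-$O(\log n)/\delta^{O(1)}$ polynomial estimator of $\sum_s \iprod{v,y_s}^4$ on the holdout; this adds only a low-order additive cost to the sample/edge budget. The output is a list of $k$ unit vectors $\hat z_1,\ldots,\hat z_k$ with $\E_{i\sim[k]}\iprod{\hat z_{\pi(i)},y_i/\|y_i\|}^2 \geq \delta^{O(1)}$ for some permutation $\pi$. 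Finally I convert each vertex into a probability vector $\tau_i\in\Delta_{k-1}$ by forming the $k$-dimensional vector $w_i$ with entries $\tfrac{1}{k}+ c'\hat z_{\pi(s)}(i)$ for a suitable scalar $c'$, and projecting $w_i$ onto the simplex. Applying correlation-preserving projection to this convex projection (the true $\sigma_i$ lies in $\Delta_{k-1}$) keeps $\E \iprod{\tau_i,\sigma_i} - \tfrac{1}{k}$ within a $\delta^{O(1)}$ factor of $\E\iprod{\hat z_{\pi(s)},y_s}^2$, which in turn is at least $\delta^{O(1)}(\tfrac 1t - \tfrac 1k)$ after undoing the normalization and using $\|y_s\|^2 \approx 1$ and $\E\iprod{\sigma_i,\sigma_i}-\tfrac 1k = \tfrac 1t - \tfrac 1k$.

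The main obstacle I anticipate is the pairwise independence bound for the 3-star estimators: the case analysis is strictly richer than in Lemma~\ref{lem:indep-sbm} because pairs of 3-stars can share arbitrary subsets of their three arms, and one must carefully account for shared prefixes from the root as well as shared centers, keeping track of the extra $k(\alpha+1)$-factors from the Dirichlet covariance along each shared subpath. A secondary difficulty is breaking the global sign/permutation ambiguity when mapping $\hat z_j$ back to a community label: tensor (as opposed to matrix) decomposition fixes the sign thanks to the odd order of $T$, but aligning labels across vertices consistently still requires showing that the decomposition's output is globally consistent, which I expect to handle via a single joint decomposition rather than per-vertex recovery. The remaining steps — color coding, cross-validation, and simplex projection — are straightforward adaptations of techniques already developed in Sections~\ref{sec:techniques} and \ref{sec:warmup}.
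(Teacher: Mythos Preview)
Your overall architecture matches the paper's, but two concrete gaps would cause the argument to fail as written. First, the vectors $v_s$ are \emph{not} even approximately orthogonal: they satisfy the exact linear dependence $\sum_s v_s = 0$, whence $\E\iprod{v_s,v_t}/\E\|v_s\|^2 = -1/(k-1)$ for $s\neq t$, which is $\Theta(1)$ when $k$ is a constant (and the theorem allows this). The paper therefore works instead with the shifted vectors $w_s = v_s + \tfrac{1}{k\sqrt{\alpha+1}}\cdot\mathbf{1}$, which genuinely satisfy $\E\iprod{w_s,w_t}=0$ (Lemma~\ref{lem:shift-and-whiten}), and targets $W=\sum_s w_s^{\otimes 3}$. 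Expanding $(v_s+c\mathbf{1})^{\otimes 3}$ as in~\eqref{eq:vw} then shows that estimating $W$ requires \emph{both} your 3-star estimator for $\sum_s v_s^{\otimes 3}$ and the matrix estimator for $\sum_s v_s v_s^\top$ from Section~\ref{sec:W}; this is how Lemma~\ref{lem:mm-estimator} is actually proved.

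Second, the tensor-decomposition primitive of Section~\ref{sec:tdecomp} is stated only for 4-tensors (Problem~\ref{prob:tdecomp-l2}); the paper first lifts the 3-tensor via Theorem~\ref{thm:3-to-4-lifting}, after which the output carries only a guarantee on $\iprod{b_j,a_i}^2$ and the sign is genuinely lost --- it is recovered by a separate holdout estimator $S_3$ (Lemma~\ref{lem:xvalid-est-orth}), so your claim that odd order automatically resolves signs does not apply here. Finally, the cleanup step converting recovered unit vectors to $\tau_i\in\Delta_{k-1}$ needs a permutation $\pi$ with $\sum_s\iprod{x_{\pi(s)},w_s}\ge\delta^{O(1)}k$; the paper's bound (Fact~\ref{fact:permutation-small}) delivers only $\delta^5-\Theta(k^{-1/2})$, which becomes vacuous when $\delta$ drops below roughly $k^{-1/10}$. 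For this reason the paper splits off a separate ``tiny $\delta$'' regime (Algorithm~\ref{alg:tiny-delta}), in which $k=\delta^{-O(1)}$ and one falls back on the matrix algorithm of Theorem~\ref{thm:mm-main-warmup}, absorbing its $(\delta/k)^{O(1)}$ loss into $\delta^{O(1)}$. Your single tensor pipeline does not cover that regime.
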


Let $c \in (0,1)$ be a small-enough constant.
Let $C(c) \in \N$ be a large-enough constant (different from the constant in the theorem statement above).
There are three important parameter regimes:
\begin{compactenum}
\item Large $\delta$, when $\delta \in [1-c,1)$.
\item Small $\delta$, when $\delta \in (1-c, 1/k^{1/C})$.
This is the main regime of interest. In particular when $k(n) \rightarrow \infty$ this contains most values of $\delta$.
\item Tiny $\delta$, when $\delta \in (0, 1/k^{1/C}]$. (This regime only makes sense when $k(n) \leq O(1)$.)
\end{compactenum}

Let $G_{\text{input}}$ be an $n$-node graph.
\begin{algorithm}[Main algorithm for mixed-membership model]
\label{alg:mm-main}
Let $\eta > 0$ be chosen so that $1 - \tfrac{k^2(\alpha+1)^2}{\e^2 d (1 - \eta)} \geq \delta^2$ and $o(1) \geq \eta \geq n^{-\gamma}$ for every constant $\gamma > 0$.
(This guarantees that enough edges remain in the input after choosing a holdout set.)
\begin{compactenum}
  \item Select a partition of $[n]$ into $A$ and $\overline A$ at random with $|\overline A| = \eta n$.
    Let $G = A \cap G_{\text{input}}$
  \item If $\delta$ is large, run Algorithm~\ref{alg:large-delta} on $(G_{\text{input}},G,A)$.
  \item If $\delta$ is small, run Algorithm~\ref{alg:small-delta} on $(G_{\text{input}},G,A)$.
  \item If $\delta$ is tiny, run Algorithm~\ref{alg:tiny-delta} on $(G_{\text{input}},G,A)$.
\end{compactenum}
\end{algorithm}

\begin{algorithm}[Tiny $\delta$] \color{white}foo\color{black} \\ %
\label{alg:tiny-delta}
  \begin{compactenum}
  \item Run the algorithm from Theorem~\ref{thm:mm-main-warmup} on $G$ with parameters $(1-\eta)d,k,\e,\alpha$ to obtain a vector $x \in \R^{n-\eta n}$.
  \item Evaluate the quantities $s_x^{(3)} = S_3(G_{\text{input}} \setminus G,x)$ and $s_x^{(4)} = S_4(G_{\text{input}} \setminus G, x)$, the polynomials from Lemma~\ref{lem:xvalid-est}.
  If $s_x^{(4)} < C(n,\alpha,k,\e,d,\eta)$, output random labels $\tau_1,\ldots,\tau_n$.
  (The scalar $C(n,\alpha,k,\e,d,\eta)$ depends in a simple way on the parameters.)
  \item If $s_x^{(3)} < 0$, replace $x$ by $-x$. \label{itm:check-signs}
  \item Run the cleanup algorithm from Lemma~\ref{lem:cleanup-2} on the vector $x$, padded with zeros to make a length $n$ vector.
  Output the resulting $\tau_1,\ldots,\tau_n$.
\end{compactenum}
\end{algorithm}

\begin{algorithm}[Small $\delta$]\color{white}foo\color{black} \\
\label{alg:small-delta}
  \begin{compactenum}
    \item Using color coding, evaluate the degree-$\log n / \poly(\delta)$ polynomial $P(G) = (P_{ijk}(G))$ from Lemma~\ref{lem:mm-estimator}.
    (This takes time $n^{\poly(1/\delta)}$.)
  \item Run the $3$-tensor to $4$-tensor lifting algorithm (Theorem~\ref{thm:3-to-4-lifting}) on $P(G)$ to obtain a $4$-tensor $T$.
  \item
  Run the low-correlation tensor decomposition algorithm (Corollary~\ref{cor:tdecomp-main}) on $T$, implementing the cross-validation oracle $\cO$ as follows.
  For each query $x \in \R^{n - \eta n}$, compute $s_x^{(4)} = S_4(G_{\text{input}} \setminus G, x)$, the quantity from Lemma~\ref{lem:xvalid-est-orth}.
  If $s_x^{(4)} > C(n,d,k,\e,\alpha,\eta)$ (distinct from the $C$ above, again depending in a simple way on the parameters), output YES, otherwise output NO.
  The tensor decomposition algorithm returns unit vectors $x_1,\ldots,x_k \in \R^{n - \eta n}$.
  \item For each $x_1,\ldots,x_k$, compute $s_i^{(3)} = S_3(G_{\text{input}} \setminus G, x_i)$ and $s_i^{(4)} = S_4(G_{\text{input}} \setminus G, x_i)$.
  For any $x_i$ for which $s_i^{(4)} < C(n,d,k,\e,\alpha,\eta)$, replace $x_i$ with a uniformly random unit vector.
  For any $x_i$ for which $s_i^{(3)} < 0$, replace $x_i$ with $-x_i$.
  \item Run the algorithm from Lemma~\ref{lem:cleanup} on $(x_1,\ldots,x_k)$ (padded with zeros to make an $n \times k$ matrix) and output the resulting $\tau_1,\ldots,\tau_n$.
  \end{compactenum}
\end{algorithm}

\begin{algorithm}[Large $\delta$]\label{alg:large-delta} \color{white}foo\color{black}\\
  \begin{compactenum}
    \item Using color coding, evaluate the degree-$\log n / \poly(\delta)$ polynomial $P(G) = (P_{ijk}(G))$ from Lemma~\ref{lem:mm-estimator}.
    (This takes time $n^{\poly(1/\delta)}$.)
    \item Run the $3$-tensor to $4$-tensor lifting algorithm (Theorem~\ref{thm:3-to-4-lifting}) on $P(G)$ to obtain a $4$-tensor $T$.
    \item Run the low-correlation tensor decomposition algorithm on $T$, obtaining unit vectors $x_1,\ldots,x_k$.
    \item \label{itm:large-delta-1} For each $x_i$, compute the quantity $s_i^{(4)} = S_4(G_{\text{input}} \setminus G, x_i)$ from Lemma~\ref{lem:xvalid-est-orth}.
    If $s_i^{(4)} < C(n,d,k,\e,\alpha,\eta)$, replace $x_i$ with a uniformly random unit vector.
    (The scalar threshold $C(n,d,k,\e,\alpha,\eta)$ depends in a simple way on the parameters.)
    \item \label{itm:large-delta-2}  For each $x_i$, compute the quantity $s_i^{(3)} = S_3(G_{\text{input}} \setminus G, x_i)$ from Lemma~\ref{lem:xvalid-est-orth}.
    If $s_i^{(3)} < 0$, replace $x_i$ with $-x_i$.
    \item Run the algorithm from Lemma~\ref{lem:cleanup} on the matrix $x = (x_1,\ldots,x_k)$ and output the resulting $\tau_1,\ldots,\tau_n$.
  \end{compactenum}
\end{algorithm}

We will analyze each of these algorithms separately, but we state the main lemmas together because many are shared among tiny, small, and large $\delta$ cases.
Two of the algorithms use the low-correlation tensor decomposition algorithm as a black box; Corollary~\ref{cor:tdecomp-main} in Section~\ref{sec:tdecomp} captures the guarantees of that algorithm.

The first thing we need is Theorem~\ref{thm:mm-main-warmup}, which describes a second-moment based algorithm used as a subroutine by Algorithm~\ref{alg:tiny-delta}.
(This subroutine was already analyzed in Section~\ref{sec:W}.)

\restatetheorem{thm:mm-main-warmup}

The proofs of all the lemmas that follow can be found later in this section.
Next, we state the tensor estimation lemma used to analyze the tensor $P$ computed in Algorithm~\ref{alg:small-delta} and Algorithm~\ref{alg:large-delta}.
\begin{lemma}\label{lem:mm-estimator}
  Suppose
  \[
    \delta \defeq 1 - \frac{k^2(\alpha +1)^2}{\e^2 d} > 0 \quad \text{ and } \quad \epsilon^2 d \leq n^{1 - \Omega(1)} \text{ and } k,\alpha \leq n^{o(1)}\mper
  \]
For a collection $\sigma_1,\ldots,\sigma_n$ of probability vectors, let $V(\sigma) = \sum_{s \in [k]} v_s^{\tensor 3}$, where the vectors $v_s \in \R^n$ have entries $v_s(i) = \sigma_i(s) - \tfrac 1 k$.
Let $w_s \in \R^n$ have entries $w_s(i) = v_s(i) + \tfrac 1 {k\sqrt{\alpha +1}}$.
(Note that $\E \iprod{w_s,w_t} = 0$ for $s \neq t$.)
Let $W(\sigma) = \sum_{s \in [k]} w_s^{\tensor 3}$.

If $G \sim G(n,d,\e,\alpha,k)$, there is a degree $O(\log n /\delta^{O(1)})$ polynomial $P(G) \in (\R^n)^{\tensor 3}$ such that
  \[
    \frac{\E_{\sigma,G} \iprod{P(G), W(\sigma)}}
    {\Paren{\E_{\sigma,G} \Norm{P(G)}^2}^{1/2}  \cdot
    \Paren{\E_{\sigma,G} \Norm{W(\sigma)}^2}^{1/2}} \geq \delta^{O(1)}
  \]
Furthermore, $P$ can be evaluated up to $(1 + 1/\poly(n))$ multiplicative error (whp) in time $n^{\poly(1/\delta)}$.
\end{lemma}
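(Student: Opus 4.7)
The plan is to build $P$ as an average of self-avoiding $3$-armed-star polynomials, following the path-polynomial template from \cref{sec:low-degree-simple-community} but adapted to third moments. Fix $\ell = \Theta(\log n)/\delta^{O(1)}$. For an ordered triple $(i,j,k)\in [n]^3$, let $\scrS_\ell(i,j,k)$ denote the collection of tuples $(\alpha_1,\alpha_2,\alpha_3,c)$ where $c\in [n]$ is a common center and each $\alpha_r$ is a self-avoiding walk of length $\ell$ in the complete graph from the corresponding endpoint to $c$, with the three walks pairwise disjoint outside $c$. Define $Q_{ijk}(G) = \sum_{(\alpha_1,\alpha_2,\alpha_3,c)\in \scrS_\ell(i,j,k)} p_{\alpha_1}(G)\,p_{\alpha_2}(G)\,p_{\alpha_3}(G)$ with $p_\alpha$ the centered-edge product from \cref{eq:path-polynomial-techniques}.

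First I will check that each $3$-star product is an unbiased estimator of $(\sum_s v_s^{\otimes 3})_{ijk}$ up to a deterministic scalar. Conditioning on $\sigma_i,\sigma_j,\sigma_k,\sigma_c$, the identity $\E[x_{ab}-d/n\mid \sigma]=(d\e/n)\iprod{\tsigma_a,\tsigma_b}$ and telescoping with the Dirichlet covariance identity $\E\tsigma\tsigma^{\mathsf T} = \tfrac{1}{k(\alpha+1)}\Pi$ (\cref{fact:dirichlet-covariance-warmup}) collapses each arm to the single inner product $\iprod{\tsigma_x,\tsigma_c}$ with proportionality $(\e d/n)^\ell (k(\alpha+1))^{-(\ell-1)}$. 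Averaging over $\sigma_c$ yields $\iprod{\E\tsigma_c^{\otimes 3},\,\tsigma_i\otimes\tsigma_j\otimes\tsigma_k}$; by the permutation symmetry of the symmetric Dirichlet, $\E\tsigma_c^{\otimes 3}$ is a known scalar multiple of $\sum_s (e_s-\Ind/k)^{\otimes 3}$, so the expectation equals a known constant times $\sum_s v_s(i)v_s(j)v_s(k)$.

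The heart of the argument is approximate pairwise independence in the sense of \cref{eq:approximate-pw-independence}, which I will prove by a case analysis extending \cref{lem:indep-sbm} to all intersection topologies of two $3$-stars. Besides pairs sharing only $\{i,j,k\}$, two stars may share contiguous arm segments starting at any endpoint or at the (possibly coincident) center, or share vertices without sharing edges. For a pattern with $s$ shared edges and $t$ shared vertices, each unshared edge contributes a factor $(d\e/n)\iprod{\tsigma_a,\tsigma_b}$ and each shared edge contributes a variance factor $(d/n)(1+O(d/n))$; the Dirichlet expectations over the $\approx 6\ell - t$ internal vertices produce a factor scaling as $(k(\alpha+1))^{-O(\ell-t)}$, and the number of such pairs is bounded by $n^{6(\ell-1)-t}\ell^{O(t-s)}$. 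Collecting terms reduces the sum to a geometric series in $(\e^2 d)/(k(\alpha+1))^2$ whose sum is $\delta^{-O(1)}$ exactly when $\e^2 d>k^2(\alpha+1)^2$; the dominant contribution (pairs sharing only $\{i,j,k\}$) matches $(\sum_s v_s(i)v_s(j)v_s(k))^2$. \cref{lem:basis-conditions} then yields constant correlation of (a rescaling of) $Q$ with $\sum_s v_s^{\otimes 3}$. Since $\sum_s v_s=0$, expanding gives $W(\sigma)-\sum_s v_s^{\otimes 3}= 3\lambda\cdot \Ind\otimes_{\mathrm{sym}}\bigl(\sum_s v_s^{\otimes 2}\bigr) + k\lambda^3\Ind^{\otimes 3}$ with $\lambda=1/(k\sqrt{\alpha+1})$; supplementing $Q$ with the matrix self-avoiding-path estimator of $\sum_s v_s^{\otimes 2}$ from \cref{sec:W} (symmetrically tensored with the deterministic $\Ind$) and the constant $\Ind^{\otimes 3}$ correction converts constant correlation with $V$ into constant correlation with $W$.

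Efficient evaluation in time $n^{\poly(1/\delta)}$ will use color coding as in \cref{sec:techniques-color-coding}: draw a random coloring $c\colon[n]\to[3\ell+1]$, restrict each sum to colorful stars (so every vertex of the star receives a distinct color), compute the colorful estimator by the subset-DP used there separately on each arm in time $n^{O(1)}2^{O(\ell)}=n^{\poly(1/\delta)}$, and average over $2^{O(\ell)}$ independent colorings, with the color-coding variance controlled by the same pairwise-independence bound. The main obstacle I anticipate is the combinatorial bookkeeping in the pairwise-independence step: two $3$-stars admit substantially more intersection topologies than two paths --- arms can merge at interior points of either star, centers can coincide or lie on an arm of the other star, and degree-$4$ junctions can appear --- and one must verify that every non-dominant pattern contributes at most $n^{-\Omega(1)}$ relative to the dominant $\{i,j,k\}$-only pairs in order to pin down the threshold at exactly $k^2(\alpha+1)^2$.
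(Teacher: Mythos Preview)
Your proposal is correct and follows essentially the same approach as the paper: define the estimator via self-avoiding $3$-armed stars $\star_\ell(i,j,k)$, verify the unbiased-estimator and approximate-pairwise-independence conditions of \cref{lem:basis-conditions}, reduce the $W$-estimator to a $V$-estimator via the expansion $\sum_s w_s^{\otimes 3}=\sum_s v_s^{\otimes 3}+c\cdot(\text{second-moment terms})+\Ind^{\otimes 3}$ using $\sum_s v_s=0$, and evaluate by color coding. The paper's case analysis for pairwise independence is in fact somewhat coarser than what you anticipate: it separates only (i) pairs sharing just $\{i,j,k\}$, (ii) pairs whose shared edges form paths emanating from $i,j,k$ (handled by a geometric series in $k^2(\alpha+1)^2/(\e^2 d)$), and (iii) all remaining patterns, which necessarily have strictly more shared vertices than shared edges and are killed by a factor $n^{-(s-r)}\ell^{O(s-r)}$---so the intricate center-coincidence and degree-$4$-junction topologies you flag are absorbed into a single crude bound rather than enumerated individually.
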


Two of our algorithms use the low-correlation tensor decomposition algorithm of Corollary~\ref{cor:tdecomp-main}.
That corollary describes an algorithm which recovers an underlying orthogonal tensor, but the tensor $W$ is not quite orthogonal.
The following lemma, proved via standard matrix concentration, captures the notion that $W$ is close to orthogonal.
\begin{lemma}\label{lem:shift-and-whiten}
  Let $\sigma_1,\ldots,\sigma_n$ be iid draws from the $\alpha,k$ Dirichlet distribution.
  Let $w_s \in \R^n$ be given by $w_s(i) = \sigma_i(s) - \tfrac 1k ( 1 - 1/\sqrt{\alpha+1})$.
  Then as long as $k,\alpha \leq n^{o(1)}$, with high probability
  \[
     (1 + n^{-\Omega(1)}) \cdot \Id \preceq \frac 1{k} \sum_{s=1}^k \frac{w_s w_s^\top}{\E \|w_s\|^2} \preceq (1 + n^{-\Omega(1)}) \cdot \Id\mper
  \]
\end{lemma}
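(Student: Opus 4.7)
My plan is to prove the lemma by a direct matrix concentration argument, reducing the claimed operator-norm bound to entrywise concentration of the $k\times k$ Gram-type matrix $G$ with entries $G_{st} = \iprod{w_s,w_t}/\E\|w_s\|^2$ (which is what controls the action of $\frac 1k \sum_s w_s w_s^\top / \E\|w_s\|^2$ on the span of the $w_s$).

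\emph{Step 1 (expectations).} Using Fact~\ref{fact:dirichlet-covariance} for the Dirichlet distribution, $\E\sigma_i(s) = 1/k$, $\Var \sigma_i(s) = (k-1)/(k^2(\alpha+1))$, and $\mathrm{Cov}(\sigma_i(s),\sigma_i(t)) = -1/(k^2(\alpha+1))$ for $s\ne t$. The shift by $\tfrac{1}{k}(1 - 1/\sqrt{\alpha+1})$ in the definition of $w_s$ is chosen precisely so that $\E w_s(i) = 1/(k\sqrt{\alpha+1})$, and therefore $\E[w_s(i) w_t(i)] = -1/(k^2(\alpha+1)) + 1/(k^2(\alpha+1)) = 0$ for $s\ne t$, while $\E w_s(i)^2 = 1/(k(\alpha+1))$. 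Summing over $i\in[n]$ yields $\E\|w_s\|^2 = n/(k(\alpha+1))$ and $\E\iprod{w_s,w_t} = 0$ for $s\ne t$, so $\E G = \Id_k$.

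\emph{Step 2 (entrywise concentration).} For every fixed $s,t\in[k]$, the quantity $\iprod{w_s,w_t} = \sum_{i=1}^n w_s(i) w_t(i)$ is a sum of $n$ i.i.d.\ random variables. Each summand is bounded by $1$ in magnitude (since $\sigma_i(s)\in[0,1]$ forces $|w_s(i)|\le 1$) and has variance at most $1/(k(\alpha+1))$ (by Cauchy--Schwarz together with $w_s(i)^4\le w_s(i)^2$). Bernstein's inequality therefore gives, for any sufficiently small constant $\gamma>0$,
\[
\Pr\!\Brac{\Abs{\iprod{w_s,w_t} - \E\iprod{w_s,w_t}} > n^{1-\gamma}/(k(\alpha+1))} \le \exp(-n^{1-2\gamma - o(1)})\mper
\]
A union bound over the $k^2 \le n^{o(1)}$ pairs $(s,t)$ then shows that with probability $1 - \exp(-n^{\Omega(1)})$ every entry of $G - \Id_k$ has magnitude at most $n^{-\gamma}$. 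Since $k\le n^{o(1)}$, a crude Frobenius-to-operator bound converts this entrywise control into $\Normop{G - \Id_k} \le k\cdot n^{-\gamma} = n^{-\Omega(1)}$, which is the conclusion of the lemma.

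\emph{Obstacle.} The argument is essentially routine matrix concentration; the one point requiring care is that $k$ and $\alpha$ may grow as $n^{o(1)}$. The binding term in Bernstein is the uniform magnitude rather than the variance, so the deviation scale is of order $\sqrt{n}$ up to polylog factors, which remains well below the mean $\E\|w_s\|^2 = n/(k(\alpha+1))$ precisely because $k(\alpha+1)\le n^{o(1)}$. No sharper concentration tool is needed.
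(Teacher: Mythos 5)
Your proof is correct and follows essentially the same route as the paper's, which computes $\E \tsigma\tsigma^\top = \tfrac{1}{k(\alpha+1)}\Id_k$ for the shifted Dirichlet vector and then invokes standard concentration for the $k\times k$ empirical matrix $\tfrac1n\sum_{i\le n}\tsigma_i\tsigma_i^\top$ (which is exactly the Gram matrix of the $w_s$ up to scaling); your entrywise Bernstein bound plus the Frobenius-to-operator conversion is a valid instantiation of that concentration step under $k,\alpha\le n^{o(1)}$. Like the paper's own sketch, you correctly interpret the lemma as the statement that the normalized $w_s$ are nearly orthonormal (the literal $n\times n$ sandwich with the extra $\tfrac1k$ is a misprint), which is the form actually used downstream.
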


All of the algorithms perform some cross-validation using the holdout set $\overline A$.
The next two lemmas offer what we need to analyze the cross-validations.
\begin{lemma}
  \label{lem:xvalid-est}
  Let $n_0,n_1$ satisfy $n_0 + n_1 = n$.
  Let $A \subseteq [n]$ have size $|A| = n_1 \geq n^{\Omega(1)}$.
  Let $k = k(n), d = d(n), \e = \e(n), \alpha = \alpha(n) > 0$ and $\alpha,k,\e^2 d \leq n^{o(1)}$.
  Let $\sigma \in \Delta_{k-1}^{n_0}$.
  Let $v_s \in \R^{n_0}$ have entries $v_s(i) = \sigma_i(s) - \tfrac 1k$.
  Let $\tau_1,\ldots,\tau_{n_1}$ be iid from the $\alpha,k$ Dirichlet distribution.

  Let $G$ be a random bipartite graph on vertex sets $A, [n] \setminus A$, with edges distributed according to the $n,d,\e,k,\alpha$ mixed-membership model with labels $\sigma, \tau$.
 Let $x \in \R^{n_0}$.
  For $a \in A$, let $P_a(G,x)$ be the expression
  \[
  P_a(G,x) = \sum_{ijk \in \overline{A} \text{ distinct}} (G_{ai} - \tfrac dn)(G_{aj} - \tfrac dn)(G_{ak} - \tfrac dn) x_i x_j x_k\mper
  \]
  Let $S_3(G,x)$ be
  \[
  S_3(G,x) = \sum_{a \in A} P_a(G,x)\mper
  \]
  There is a number $C = C(n,d,k,\e,\alpha,n_1)$ such that
  \[
    \Pr_{G,\tau} \left \{ \Abs{C \cdot S_3(G,x) - \sum_{s \in [k]} \frac{\iprod{v_s,x}^3}{\|v_s\|^3} } > n^{ - \Omega(1)} \right \} \leq \exp(-n^{\Omega(1)}) \mper
  \]

  Similarly, there are scalars $C(n,d,k,\e,\alpha,n_1), C'(n,d,k,\e,\alpha,n_1)$ such that the following holds.
  For $a \in A$, let
  \[
  Q_a(G,x) = \sum_{ijk\ell \in \overline{A} \text{ distinct}} (G_{ai} - \tfrac dn)(G_{aj} - \tfrac dn)(G_{ak} - \tfrac dn)(G_{a\ell} - \tfrac dn) x_i x_j x_k x_\ell\mper
  \]
  and let
  \[
  R_a(G,x) = \sum_{ij \in \overline A \text{ distinct}} (G_{ai} - \tfrac dn)(G_{aj} - \tfrac dn) x_i x_j\mper
  \]
  Finally let
  \[
    S_4(G,x) = C \cdot \sum_{a \in A} Q_a(G,x) - C' \cdot \Paren{\sum_{a \in A} R_a(G,x) }^2\mper
  \]
  Then
  \[
    \Pr_{G,\tau} \left \{ \Abs{S_4(G,x) - \sum_{s \in [k]} \frac{\iprod{v_s,x}^4}{\|v_s\|^4}   } > n^{- \Omega(1)} \right \} \leq \exp(-n^{\Omega(1)})\mper
  \]
\end{lemma}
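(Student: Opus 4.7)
The plan is to prove both approximate identities in three steps: (i) compute the conditional expectations $\E[S_3(G,x) \mid \sigma, \tau]$ and $\E[S_4(G,x) \mid \sigma, \tau]$ using that edges of $G$ are independent given the labels; (ii) average over the iid Dirichlet labels $\tau_a$ using low-order central moments of the Dirichlet distribution; and (iii) establish subexponential concentration by exploiting that $\sum_{a \in A} P_a$, $\sum_{a \in A} Q_a$, and $\sum_{a \in A} R_a$ are sums of mutually independent random variables (with $\sigma$ fixed), since edges incident to distinct $a \neq a'$ in $A$ are independent and the labels $\tau_a$ are iid.

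For step (i) applied to $S_3$, the basic edge-expectation identity $\E[G_{ai} - \tfrac dn \mid \sigma_i, \tau_a] = (d\e/n) \iprod{\tsigma_i, \tilde\tau_a}$ together with conditional independence of edges yields
\[
\E\Bigbrac{(G_{ai} - \tfrac dn)(G_{aj} - \tfrac dn)(G_{ak} - \tfrac dn) \Bigmid \sigma,\tau} = \Bigparen{\tfrac{d\e}{n}}^3 \iprod{\tsigma_i, \tilde\tau_a}\iprod{\tsigma_j, \tilde\tau_a}\iprod{\tsigma_k, \tilde\tau_a}
\]
for distinct $i,j,k \in \overline A$, and substituting $\iprod{\tsigma_i,\tilde\tau_a} = \sum_s v_s(i)\tilde\tau_a(s)$ collapses the distinct-index sum inside $P_a$ to $y_a^3$ up to lower-order corrections, where $y_a \defeq \sum_s \iprod{v_s,x}\tilde\tau_a(s)$. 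For step (ii), writing $a_s \defeq \iprod{v_s,x}$ and using the symmetry of the Dirichlet together with the centering identity $\sum_s v_s = 0$ (which forces $\sum_s a_s = 0$ and thereby $\sum_{s \ne t} a_s^2 a_t = -\sum_s a_s^3$ and $\sum_{s < t < u} a_s a_t a_u = \tfrac 13 \sum_s a_s^3$ by elementary symmetric-function manipulations), the quantity $\E_\tau y_a^3$ reduces to a scalar multiple (depending on $k,\alpha$ alone) of $\sum_s \iprod{v_s,x}^3$. Absorbing this scalar together with $|A|(d\e/n)^3$ and the normalization $\|v_s\|^3$ (which is nearly uniform in $s$ for $\sigma$ that is itself drawn from the Dirichlet, up to additive error $n^{-\Omega(1)}$) into $C$ proves the $S_3$ claim in expectation.

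For $S_4$, the analogous calculation yields
\[
\E_\tau y_a^4 = c_1 \sum_s \iprod{v_s,x}^4 + c_2 \Bigparen{\sum_s \iprod{v_s,x}^2}^2
\]
for constants $c_1, c_2$ depending only on $k,\alpha$, where the $\bigparen{\sum_s \iprod{v_s,x}^2}^2$ term arises from the $(2,2)$-partition part of the fourth-order Dirichlet central moment. This bias is exactly what is cancelled by subtracting a suitable multiple of $(\sum_{a\in A} R_a)^2$: step (i) together with Fact~\ref{fact:dirichlet-covariance-warmup} gives $\E[R_a \mid \sigma] \approx (d\e/n)^2 \cdot \tfrac{1}{k(\alpha+1)} \sum_s \iprod{v_s,x}^2$, so $(\sum_a R_a)^2$ has mean proportional to $|A|^2(d\e/n)^4 (\sum_s \iprod{v_s,x}^2)^2$, and a matching choice of $C'$ isolates $\sum_s \iprod{v_s,x}^4$.

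Step (iii), concentration, uses that $\sum_a P_a$, $\sum_a Q_a$, and $\sum_a R_a$ are each sums of $|A| = n^{\Omega(1)}$ mutually independent summands (after conditioning on $\sigma$), where each summand is a bounded-degree polynomial in the independent edge variables $\{G_{ai}\}_{i \in \overline A}$ and the single Dirichlet draw $\tau_a$. Standard concentration for low-degree polynomials in independent variables (e.g.\ a Bernstein-type inequality or a Hanson-Wright-style bound applied conditionally on $\tau$, combined with independence across $a$, or equivalently a high-moment Markov bound on $\sum_a(P_a - \E P_a)$) yields subexponential tails $\exp(-n^{\Omega(1)})$, and concentration of $(\sum_a R_a)^2$ follows from concentration of $\sum_a R_a$ by squaring. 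The main obstacle I anticipate is the careful bookkeeping of lower-order correction terms in steps (i)--(ii): the distinct-index restrictions in $P_a$ and $Q_a$ remove diagonal contributions but leave several subleading terms (e.g.\ those where two or more indices coincide, or where $(G_{ai} - \tfrac dn)^2$-type terms leak through $\E(G_{ai} - \tfrac dn)^2 = \tfrac dn(1 + O(\e + d/n))$), each of which must be shown to contribute at most $n^{-\Omega(1)}$ relative to the main term, uniformly in the choice of $x$, so that the stated additive error $n^{-\Omega(1)}$ is achieved rather than a larger polynomial in $n^{-1}$.
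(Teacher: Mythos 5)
Your proposal follows essentially the same route as the paper's proof: compute the conditional mean of each $P_a$ (resp.\ $Q_a$, $R_a$) via edge independence and low-order Dirichlet moments (the paper's Facts~\ref{fact:xvalid-1} and \ref{fact:diagonal-moments}), then apply a truncated Bernstein inequality (Proposition~\ref{prop:bernstein-tails}) to the $n_1 \geq n^{\Omega(1)}$ iid summands indexed by $a \in A$, and finally control the non-distinct-index and normalization corrections by $n^{-\Omega(1)}$. You in fact supply more detail than the paper on the $S_4$ part (which the paper dismisses as ``similar''), correctly identifying that the $-C'(\sum_a R_a)^2$ term exists to cancel the $(\sum_s \iprod{v_s,x}^2)^2$ contribution from the $(2,2)$-partition of the fourth Dirichlet moment.
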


\begin{lemma}
\label{lem:xvalid-est-orth}
  Under the same hypotheses as Lemma~\ref{lem:xvalid-est}, there are $S_3(G,x), S_4(G,x)$, polynomials of degree $3$ and $4$, respectively, in $x$ and in the edge indicators of $G$, such that
  \[
    \Pr_{G,\tau} \left \{ \Abs{S_4(G,x) - \sum_{s \in [k]} \frac{\iprod{w_s,x}^4}{\|w_s\|^4} } > n^{- \Omega(1)} \right \} \leq \exp(-n^{\Omega(1)})\mcom
  \]
  and
  \[
    \Pr_{G,\tau} \left \{ \Abs{C \cdot S_3(G,x) - \sum_{s \in [k]} \frac{\iprod{w_s,x}^3}{\|w_s\|^3} } > n^{ - \Omega(1)} \right \} \leq \exp(-n^{\Omega(1)}) \mcom
  \]
  where $w_1,\ldots,w_k$ are the vectors $w_s(i) = v_s(i) + \tfrac 1 {k\sqrt{\alpha+1}}$.
\end{lemma}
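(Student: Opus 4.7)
The plan is to reduce Lemma~\ref{lem:xvalid-est-orth} to Lemma~\ref{lem:xvalid-est} by exploiting that $w_s$ is a shift of $v_s$ by a common vector. Writing $c = 1/(k\sqrt{\alpha+1})$, we have $w_s = v_s + c \cdot \mathbf{1}$, so the binomial theorem gives
\[
\iprod{w_s,x}^t = \sum_{j=0}^t \binom{t}{j} c^{t-j} \iprod{\mathbf{1},x}^{t-j} \iprod{v_s,x}^j\mper
\]
Summing over $s \in [k]$ and using the identity $\sum_s v_s = 0$ (since $\sum_s \sigma_i(s) = 1$ for every $i$), the linear-in-$v_s$ term drops out and we obtain
\[
\sum_{s=1}^k \iprod{w_s,x}^t = \sum_{j \in \{0,2,\ldots,t\}} \binom{t}{j} c^{t-j} \iprod{\mathbf{1},x}^{t-j} \sum_{s=1}^k \iprod{v_s,x}^j\mper
\]
Hence it suffices to build, from observable $G$ and $x$, polynomial estimators of $\sum_s \iprod{v_s,x}^j$ for $j \in \{2,3,4\}$; the $j=0$ term is just $k$, and $\iprod{\mathbf{1},x}^{t-j}$ is already a polynomial in $x$ alone.

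For $j = 3$ and $j = 4$, Lemma~\ref{lem:xvalid-est} directly supplies polynomials approximating $\sum_s \iprod{v_s,x}^j/\|v_s\|^j$ to additive error $n^{-\Omega(1)}$ with exponentially small failure probability. By standard concentration of Dirichlet moments, $\|v_s\|^2$ concentrates simultaneously for all $s$ around $(k-1)/(k^2(\alpha+1)) \cdot n$ (this is essentially the content of Lemma~\ref{lem:shift-and-whiten}), so multiplying by the appropriate deterministic scalar converts these into estimators of the un-normalized sums $\sum_s \iprod{v_s,x}^j$. For $j=2$, the two-edge statistic $R(G,x) = \sum_{a \in A} R_a(G,x)$ already appearing inside $S_4$ of Lemma~\ref{lem:xvalid-est} does the job: a routine computation using Fact~\ref{fact:Ge-given-sigma} and Fact~\ref{fact:dirichlet-covariance-warmup} shows
\[
\E_{\sigma_a, G}\Bigbrac{(G_{ai}-\tfrac{d}{n})(G_{aj}-\tfrac{d}{n}) \mid \sigma_i,\sigma_j}
= \Paren{\tfrac{d}{n}}^{2} \cdot \tfrac{\e^{2}}{k(\alpha+1)} \cdot \Bigparen{\iprod{\sigma_i,\sigma_j}-\tfrac{1}{k}}\mcom
\]
and $\sum_{i,j}(\iprod{\sigma_i,\sigma_j}-\tfrac 1k)x_i x_j = \sum_s \iprod{v_s,x}^2$. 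Bernstein-type tail bounds analogous to those used in Lemma~\ref{lem:xvalid-est} give concentration of $R(G,x)$ around its conditional mean to additive error $n^{-\Omega(1)}$ with the required probability.

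Combining these three estimators with the binomial expansion, and dividing by the (concentrated, deterministic) values of $\|w_s\|^3$ and $\|w_s\|^4$, produces the desired polynomials $S_3(G,x)$ and $S_4(G,x)$. By construction each term in the expansion has total degree exactly $t$ in $x$, so $S_3$ is degree $3$ in $x$ and $S_4$ is degree $4$ in $x$; the degrees in the edge indicators of $G$ are inherited from Lemma~\ref{lem:xvalid-est} together with the degree-$2$ two-edge estimator. The main obstacle here is purely bookkeeping: tracking the constants in front of each term of the binomial expansion, managing the simultaneous concentration of $\|v_s\|$, $\|w_s\|$, and each of the several subsidiary estimators via a union bound, and verifying that the combined additive error remains $n^{-\Omega(1)}$. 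No new probabilistic ideas beyond those already present in the proof of Lemma~\ref{lem:xvalid-est} are required.
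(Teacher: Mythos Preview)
Your proposal is correct and matches the paper's intended approach: the paper explicitly says the proof of Lemma~\ref{lem:xvalid-est-orth} is left to the reader and should proceed by ``using the discussion in Section~\ref{sec:estimator-third-moment} to turn estimators for moments of the $v$ vectors into estimators for moments of the $w$ vectors,'' which is precisely the binomial expansion of $w_s = v_s + c\cdot \mathbf 1$ combined with the existing $S_3,S_4,R$ estimators from Lemma~\ref{lem:xvalid-est}.
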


Finally, all of the algorithms have a cleanup phase to transform $n$-length vectors to probability vectors $\tau_1,\ldots,\tau_n \in \Delta_{k-1}$.
The following lemma describes the guarantees of the cleanup algorithm used by the small and large $\delta$ algorithms, which takes as input vectors $x$ correlated with the vectors $w$.

\begin{lemma}
\torestate{
  \label{lem:cleanup}
  Let $\delta \in (0,1)$ and $k = k(n) \in \N$ and $\alpha = \alpha(n) \geq 0$, with $\alpha,k \leq n^{o(1)}$.
  Suppose $\delta \geq 1/k^{1/C}$ for a big-enough constant $C$.
  There is a $\poly(n)$-time algorithm with the following guarantees.

  Let $\sigma_1,\ldots,\sigma_n$ be iid draws from the $\alpha,k$ Dirichlet distribution.
  Let $v_1,\ldots,v_k \in \R^n$ be the vectors given by $v_s(i) = \sigma_i(s) - \tfrac 1k$.
  Let $w_1,\ldots,w_k \in \R^n$ be the vectors given by $w_s(i) = v_s(i) + \tfrac 1 {k\sqrt{\alpha +1}}$, so that $\E \iprod{w_s,w_t} = 0$ for $s \neq t$.
  Let $M = \sum_s \dyad{w_s}$.
  Let $E$ be the event that
  \begin{enumerate}
    \item $\Norm{M^{-1/2} w_s - \tfrac {w_s}{(\E\|w_s\|^2)^{1/2}}} \leq \tfrac 1 {\poly n}$ for every $s \in [k]$.
    \item $\|w_s\| = (1 \pm 1/\poly(n)) (\E \|w_s\|^2)^{1/2}$ for every $s \in [k]$.
    \item $\|v_s\| = (1 \pm 1/\poly(n)) (\E \|v_s\|^2)^{1/2}$ for every $s \in [k]$.
  \end{enumerate}

  Suppose $x_1,\ldots,x_k \in \R^n$ are unit vectors such that for at least $\delta k$ vectors $w_1,\ldots,w_m$ there exists $t \in [k]$ such that $\iprod{w_s,x_t} \geq \delta \|w_s\|$.

  The algorithm takes input $x_1,\ldots,x_k$ and when $E$ happens returns probability vectors $\tau_1,\ldots,\tau_n \in \Delta_{k-1}$ such that
  \[
    \corr(\sigma,\tau) \geq \delta^{O(1)} \E \|v\|^2 = \delta^{O(1)} \Paren{\frac 1 {\alpha+1} \cdot \frac{k+\alpha}{k} - \frac 1k}\mper
  \]
  }
\end{lemma}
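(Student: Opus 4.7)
The plan is to turn the given unit vectors $x_1,\ldots,x_k$ into probability-vector labels in three stages: identify a near-matching between the $x_t$'s and communities, convert each matched $x_t$ into an estimate of the centered indicator $v_s$ (using the shift $w_s = v_s + c\mathbf{1}$ with $c = 1/(k\sqrt{\alpha+1})$), and finally project vertex-wise onto $\Delta_{k-1}$. Concretely I would let $\tilde x_t := x_t - \bar x_t \mathbf{1}$ with $\bar x_t = \tfrac 1n \sum_i x_t(i)$, set $\hat v_t := \gamma \tilde x_t$ for a scalar $\gamma$ calibrated (from the concentration of $\|v_s\|$ in event $E$) so that $\|\hat v_t\|$ matches $(\E \|v_s\|^2)^{1/2}$, define $\tilde \tau_i(t) := \tfrac 1k + \hat v_t(i)$, and then project each $\tilde\tau_i$ to the simplex by coordinate-wise clipping to $[0,1]$ followed by renormalization.

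The first analytical step is to promote the hypothesis ``for at least $\delta k$ of the $w_s$'s some $x_t$ satisfies $\iprod{w_s, x_t} \ge \delta \|w_s\|$'' into an (information-theoretic, not algorithmic) injective matching $\pi$ from a set $T \subseteq [k]$ of size $|T| \ge \delta^{O(1)} k$ into communities. This follows from event $E$: since the whitened vectors $M^{-1/2} w_s$ are orthonormal up to $1/\poly(n)$ error, a Bessel-type argument bounds $\sum_t \iprod{w_s, x_t/\|w_s\|}^2 \le 1 + o(1)$, so only a bounded number of $x_t$'s can be $\delta$-correlated with the same $w_s$ and a greedy pruning yields the injective matching; extend $\pi$ arbitrarily to a permutation of $[k]$. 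Crucially, the algorithm does not need to know $\pi$ because correlation is maximized over permutations in its definition.

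Next I would translate correlation with $w_s$ into correlation with $v_s$. Since $v_s = w_s - c\mathbf{1}$ and $\iprod{v_s, \mathbf{1}} = \sum_i(\sigma_i(s) - \tfrac 1k)$ concentrates around $0$, one has $\iprod{v_s, \tilde x_t} = \iprod{v_s, x_t} = \iprod{w_s, x_t} - c\iprod{\mathbf{1}, x_t}$. The correction term $c \cdot n \bar x_t$ has magnitude at most $c\sqrt n$, which is dominated by $\delta \|w_s\|$ precisely because our hypothesis $\delta \ge 1/k^{1/C}$ ensures $1/(\delta \sqrt k) = o(\delta)$. Hence for $t \in T$ we retain $\iprod{v_{\pi(t)}, \hat v_t} \ge \delta^{O(1)} \|v_{\pi(t)}\|^2$ after rescaling, while simultaneously $\|\hat v_t\| \le (1+o(1))(\E\|v\|^2)^{1/2}$.

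The correlation calculation then proceeds as follows: for the permutation $\pi$, summing $\E_i \iprod{\sigma_i, \tau_i \circ \pi^{-1}} - \tfrac 1k$ over vertices yields (before simplex projection) $\tfrac{1}{nk}\sum_{t \in T} \iprod{v_{\pi(t)}, \hat v_t}$, which by the above is at least $\delta^{O(1)} \cdot \tfrac{|T|}{k} \cdot \tfrac{1}{n}\E\|v\|^2 \ge \delta^{O(1)}(\tfrac 1t - \tfrac 1k)$; spurious $x_t$'s with $t \notin T$ contribute at most noise of the same order but with a sign-indefinite inner product, so averaging over $\pi$ (or simply choosing the better of $\pi$ or its variants) preserves the bound. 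The main obstacle will be the last step: showing that the coordinate-wise clip-and-renormalize projection onto $\Delta_{k-1}$ preserves correlation up to a constant. The cleanest way is to verify that, because $\hat v_t(i) = \gamma \tilde x_t(i)$ has magnitudes comparable to $1/(k\sqrt{\alpha+1})$ on a $1-o(1)$ fraction of coordinates (a consequence of the concentration of $v_s$'s entries), the projection is Lipschitz in the relevant $\ell_2$ sense on that bulk, while the negligible fraction of outlier coordinates costs only an $n^{-\Omega(1)}$ additive term — well within the $\delta^{O(1)}$ slack.
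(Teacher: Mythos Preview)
Your high-level plan is sensible, but two of its steps do not go through as written, and the paper's proof handles exactly these steps with tools you are missing.

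First, your Bessel argument is in the wrong direction: you write $\sum_t \iprod{w_s,x_t/\|w_s\|}^2 \le 1+o(1)$, which would require the $x_t$'s to be orthonormal, but the lemma only assumes they are unit vectors. What you need (and what the paper uses in Fact~\ref{fact:permutation-small}) is the opposite direction, $\sum_s \iprod{w_s/\|w_s\|,x_t}^2 \le 1+o(1)$, which follows from near-orthonormality of the $w_s$'s and bounds how many communities any single $x_t$ can be $\delta$-correlated with. More seriously, your treatment of the ``spurious'' indices $t \notin T$ is the real gap. Saying these contribute ``noise of the same order but with a sign-indefinite inner product'' is not enough: a priori each unmatched pair can contribute $-\|v_{\pi(t)}\|\,\|\hat v_t\|$, and there are $(1-\delta^{O(1)})k$ of them, which swamps the $\delta^{O(1)} k$ positive terms. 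The paper resolves this by extending the partial matching \emph{randomly} and using the correct Bessel direction again to show $\E\iprod{x_t,w_{\pi^{-1}(t)}} \ge -1/\sqrt{k}$ for each unmatched $t$, so the total negative contribution is only $O(\sqrt{k})$, dominated by $\delta^{O(1)} k$ precisely when $\delta \ge 1/k^{1/C}$. This is where that hypothesis is really used (your shift-correction argument also uses it, but only needs $C\ge 2$; the permutation argument needs $C$ larger).

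Second, your clip-and-renormalize projection onto $\Delta_{k-1}$ is not justified. You claim $\hat v_t(i)$ has magnitudes comparable to $1/(k\sqrt{\alpha+1})$ on most coordinates as ``a consequence of the concentration of $v_s$'s entries,'' but $\hat v_t$ is built from the \emph{input} $x_t$, about whose entries you have no pointwise control. A single $x_t$ could be $e_1$, making one row of $\tilde\tau$ wildly outside the simplex. The paper avoids this entirely: once it has a permutation $\pi$ with $\iprod{x,\pi\cdot w} \ge \delta^{O(1)}\|x\|\|w\|$, it applies the correlation-preserving projection of Theorem~\ref{thm:correlation-preserving-projection} (minimize $\|y\|$ over the convex set of row-wise shifted probability matrices subject to $\iprod{x,y}\ge \delta\|x\|\|w\|$), which provably returns $y$ with $\iprod{y,w}\ge (\delta/2)\|y\|\|w\|$ and $\|y\|\ge \delta\|w\|$; then Fact~\ref{fact:shift-1} converts this to correlation with $v$ at the cost of $c^2 nk = \E\|w\|^2/k$, absorbed into $\delta^{O(1)}$. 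Your explicit scheme could perhaps be repaired with extra assumptions on the $x_t$'s, but as the lemma is stated the correlation-preserving projection is the tool that makes it work.
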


Finally, the last lemma captures the cleanup algorithm used by the tiny-$\delta$ algorithm, which takes a single vector $x$ correlated with $v_1$.
\begin{lemma}
  \label{lem:cleanup-2}
  Let $\delta \in (0,1)$ and $k = k(n) \in \N$ and $\alpha = \alpha(n) \geq 0$, with $\alpha,k \leq n^{o(1)}$.
  Suppose $\delta \leq k^{1/C}$ for any constant $C$.
  There is a $\poly(n)$-time algorithm with the following guarantees.

  Let $\sigma_1,\ldots,\sigma_n$ be iid draws from the $\alpha,k$ Dirichlet distribution.
  Let $v_1,\ldots,v_k \in \R^n$ be the vectors given by $v_s(i) = \sigma_i(s) - \tfrac 1k$.
  Let $x \in \R^n$ be a unit vector satisfying $\iprod{x,v_s} \geq \delta \|v_s\|$ for some $s \in [k]$.
  On input $x$, the algorithm produces $\tau_1,\ldots,\tau_n \in \Delta_{k-1}$ such that
  \[
    \corr(\sigma,\tau) \geq \Paren{\frac{\delta}{k}}^{O(1)} \cdot \E\|v\|^2 = \delta^{O(1)} \Paren{\frac 1 {\alpha+1} \cdot \frac{k+\alpha}{k} - \frac 1k}\mper
  \]
  so long as the event $E$ from Lemma~\ref{lem:cleanup} occurs.
\end{lemma}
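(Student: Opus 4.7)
My plan is to construct each $\tau_i$ by a simple truncated-linear rounding of $x_i$ --- setting roughly $\tau_i(1) = 1/k + c\tilde x_i$ for a carefully chosen truncation $\tilde x$ of $x$ --- and then to analyze the resulting correlation directly, using the hypothesis $\iprod{x,v_s}\geq \delta \|v_s\|$ together with the bounds on $\|v_s\|$ coming from the event $E$. Because $\corr(\sigma,\tau)$ is defined as a maximum over permutations of community labels in $\tau$, I may assume without loss of generality that $s=1$ in the hypothesis, so $\iprod{x,v_1}\geq \delta\|v_1\|$; any other $s$ is absorbed by the corresponding permutation in the definition of $\corr$.

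The concrete construction I would use: pick the truncation level $T = 2/(\delta\|v_1\|)$ and set $\tilde x_i = \sign(x_i)\min(|x_i|,T)$. Then define
\[
\tau_i(1) \;=\; \tfrac1k + \tfrac{\tilde x_i}{kT}, \qquad \tau_i(t) \;=\; \tfrac1k - \tfrac{\tilde x_i}{kT(k-1)} \text{ for } t\neq 1.
\]
The bound $|\tilde x_i/(kT)|\leq 1/k$ guarantees $\tau_i\in\Delta_{k-1}$. A short algebraic expansion (essentially the same one used in the analogous cleanup steps in Section~\ref{sec:warmup}) gives $\iprod{\sigma_i,\tau_i} - \tfrac1k = \tilde x_i v_1(i)/((k-1)T)$, and hence
\[
\corr(\sigma,\tau) \;\geq\; \frac{\iprod{\tilde x, v_1}}{n(k-1)T}.
\]

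The heart of the argument --- and the step I expect to be the main obstacle --- is controlling the truncation loss $|\iprod{x-\tilde x,v_1}|$. Since $x$ is only a generic unit vector, $\|x\|_\infty$ may be as large as $1$, and a naive linear scaling $\tau_i(1) = 1/k + cx_i$ would force $c$ to be as small as $1/k$ and so destroy the correlation. Truncation lets us take $c\approx 1/(kT)$ instead, but we must be sure the inner product survives. The key estimate is
\[
|\iprod{x-\tilde x, v_1}| \;\leq\; \|v_1\|_\infty\,\|x-\tilde x\|_1 \;\leq\; 1\cdot \tfrac1T,
\]
using the deterministic bound $|v_1(i)|\leq 1$ together with the Markov/Cauchy-Schwarz observation that $|\{i:|x_i|>T\}|\leq 1/T^2$ (since $\|x\|_2=1$), so that $\|x-\tilde x\|_1\leq \sqrt{|\{i:|x_i|>T\}|}\leq 1/T$. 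With $T = 2/(\delta\|v_1\|)$ this gives $\iprod{\tilde x,v_1}\geq \delta\|v_1\|/2$, so $\corr(\sigma,\tau)\geq \delta^2\|v_1\|^2/(4n(k-1))$.

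Finally, the event $E$ guarantees $\|v_1\|^2 \geq (1-o(1))\cdot n(k-1)/(k^2(\alpha+1))$, so we obtain $\corr(\sigma,\tau) \geq (1-o(1))\,\delta^2/(4k^2(\alpha+1))$. Since $\E\|v\|^2 = (k-1)/(k(\alpha+1))$, this is at least $(\delta/k)^{O(1)}\cdot \E\|v\|^2$ with $(\delta/k)^{O(1)}$ interpreted with a modest constant (e.g.\ $C=3$ or $4$), completing the proof. The construction is clearly implementable in polynomial time: just compute $\tilde x$ and evaluate the two affine functions above.
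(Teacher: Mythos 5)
Your proof is correct, and it takes a genuinely different route from the one the paper intends. The paper leaves this lemma to the reader with the remark that it is ``very similar'' to \cref{lem:cleanup}, whose proof goes through the correlation-preserving projection of \cref{thm:correlation-preserving-projection}: one would project $x$ into the convex set $\{y : y + \tfrac1k\mathbf 1 \in [0,1]^n\}$ of feasible community columns, obtaining $y$ with $\iprod{y,v_s}\ge \tfrac{\delta}{2}\|y\|\|v_s\|$ and $\|y\|\ge\delta\|v_s\|$, and then spread $1-\tau_i(1)$ uniformly over the remaining coordinates. You instead do an explicit truncated-linear rounding, and the one step that genuinely needs care --- that truncating at level $T=2/(\delta\|v_1\|)$ costs at most $\|x-\tilde x\|_1\le \sqrt{|\{i:|x_i|>T\}|}\le 1/T = \delta\|v_1\|/2$ in the inner product --- is handled correctly, so $\iprod{\tilde x,v_1}\ge\delta\|v_1\|/2$ survives and the final bound $\delta^2\|v_1\|^2/(4n(k-1)) \ge (\delta/k)^{O(1)}\E\|v\|^2$ follows from event $E$. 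Your algebraic identity $\iprod{\sigma_i,\tau_i}-\tfrac1k = \tilde x_i v_1(i)/((k-1)T)$ checks out, the simplex membership is verified, and the WLOG reduction to $s=1$ is legitimately absorbed by the maximum over permutations in the definition of $\corr$. What each approach buys: yours is more elementary (no convex program, explicit constants), while the projection route is the one that scales to the matrix version in \cref{lem:cleanup}, where one must handle all $k$ columns simultaneously. Two minor housekeeping points: your construction needs $\delta$ and an estimate of $\|v_1\|$ to set $T$, which is fine since event $E$ pins $\|v_1\|$ to $(\E\|v_1\|^2)^{1/2}$, a quantity computable from $n,k,\alpha$; and you never use the hypothesis $\delta\le k^{1/C}$, which is harmless since the $(\delta/k)^{O(1)}$ loss in the conclusion is permitted unconditionally.
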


\paragraph{Analysis for tiny $\delta$ (Algorithm~\ref{alg:tiny-delta})}

\begin{proof}[Proof of Theorem~\ref{thm:mm-main}, tiny-$\delta$ case]
  Let $C \in \N$ and $1 \geq \delta >0$ be any fixed constants.
  We will prove that if $k \leq \delta^C$ then the output of Algorithm~\ref{alg:mm-main} satisfies the conclusion of Theorem~\ref{thm:mm-main}.
  Let $x \in \R^{(1 - \eta)n}$ be the output of the matrix estimation algorithm of Theorem~\ref{thm:mm-main-warmup}.
  By Markov's inequality, with probability $(\delta/k)^{O(1)}$ over $G$ and $\sigma_1,\ldots,\sigma_{(1-\eta)n}$, the vector $x$ satisfies $\iprod{v,x}^2 \geq (\delta/k)^{O(1)} \|v\|^2 \|x\|^2$, where $v \in \R^{(1-\eta)n}$ is the vector $v(i) = \sigma_i(1) - \tfrac 1k$.
  By our assumption $k \leq \delta^C$, this means that with probability $\delta^{O(1)}$ the vector $x$ satisfies $\iprod{x,v}^2 \geq \delta^{O(1)} \|x\|^2 \|v\|^2$.

  Now, the labels $\sigma_{(1-\eta)n},\ldots,\sigma_n$ and the edges from nodes $1,\ldots,(1-\eta)n$ to nodes $(1 - \eta)n,\ldots,n$ are independent of everything above.
  So, invoking Lemma~\ref{lem:xvalid-est}, we can assume that the quantity $s_x^{(4)}$ computed by Algorithm~\ref{alg:tiny-delta} satisfies
  \[
    \Abs{ s_x^{(4)} - \sum_{s \in [k]} \frac{\iprod{v_s,x}^4}{\|v_s\|^4} } \leq n^{-\Omega(1)}\mper
  \]
  Now, if $x$ satisfies $\iprod{v_s,x}^2 \geq \delta^{O(1)} \|v_s\|^2$ for some $v_s$, then also $s_x^{(4)} \geq \delta^{O(1)}$.
  On the other hand, if $s_x^{(4)} \geq \delta^{O(1)}$ then there is some $s$ such that $\iprod{x,v_s}^2 \geq \tfrac{\delta^{O(1)}}{k} \|v_s\|^2$.
  So choosing the threshold $C$ in Algorithm~\ref{alg:tiny-delta} appropriately, we have obtained that with probability $\delta^{O(1)}$ the algorithm reaches step~\ref{itm:check-signs} with a vector $x$ which satisfies $\iprod{x,v_s}^2 \geq \delta^{O(1)} \|v_s\|^2$, and otherwise the algorithm outputs random $\tau_1,\ldots,\tau_n$.

  Step~\ref{itm:check-signs} is designed to check the sign of $\iprod{x,v_s}$.
  Call $x$ good if there is $s \in [k]$ such that $\iprod{x,v_s} \geq \delta^{O(1)} \|v_s\|$.
  If $|s_x^{(3)}| \leq \delta^{O(1)}$ then $x$ there are $v_s,v_t$ such that $\iprod{v_s,x} \geq \delta^{O(1)} \|v_s\|$ and $\iprod{v_t,x} \leq - \delta^{O(1)} \|v_t\|$, so both $x$ and $-x$ are good
  If $|s_x^{(3)}| > \delta^{O(1)}$ then clearly step~\ref{itm:check-signs} outputs a good vector.
  Since after step~\ref{itm:check-signs} the vector $x$ is good, applying Lemma~\ref{lem:cleanup-2} finishes the proof in the tiny $\delta$ case.
\end{proof}

\paragraph{Analysis for small and large $\delta$ (Algorithm~\ref{alg:small-delta}, Algorithm~\ref{alg:large-delta})}
\begin{proof}[Proof of Theorem~\ref{thm:mm-main}, small $\delta$ case]
  Let $n_0 = (1 - \eta)n$ and $n_1 = \eta n$ with $\eta$ as in Algorithm~\ref{alg:mm-main}.

   By Markov's inequality applied to Lemma~\ref{lem:mm-estimator}, with probability $\delta^{O(1)}$ the tensor $P$ satisfies $\iprod{P,W} \geq \delta^{O(1)} \|P\|\|W\|$, where $W \in (\R^{n_0})^{\tensor 3}$ is as in Lemma~\ref{lem:mm-estimator}.
  Let $M = \sum_{s \in [k]} w_s w_s^\top$, where $w_s$ is as in Lemma~\ref{lem:mm-estimator}.
  The vectors $M^{-1/2} w_s$ are orthonormal, and Lemma~\ref{lem:shift-and-whiten} guarantees that $\|\tfrac{w_s}{\|w_s\|} - M^{-1/2} w_s\| \leq n^{-\Omega(1)}$ with high probability.
  Let $W' = \sum_{s \in [k]} (M^{-1/2} w_s)^{\tensor 3}$ and let $W'_4 = \sum_{s \in [k]} (M^{-1/2} w_s)^{\tensor 4}$.
  Then also $\iprod{P,W'} \geq \delta^{O(1)} \|P\| \|W'\|$.
  By the guarantees of the 3-to-4 lifting algorithm (Theorem~\ref{thm:3-to-4-lifting}), finally we get $\iprod{T,W'_4} \geq \delta^{O(1)} \|T\|\|W'_4\|$.

  In order to conclude that Algorithm~\ref{alg:small-delta} successfully runs the low-correlation tensor decomposition algorithm, we have to check correctness of its implementation of the cross-validation oracle.
  This follows from Lemma~\ref{lem:shift-and-whiten}, Lemma~\ref{lem:xvalid-est-orth}, the size of $\eta$, and a union bound over the $\exp(k/\poly(\delta)) \leq \exp(n^{o(1)})$ queries made by the nonadaptive implementation of the low-correlation tensor decomposition algorithm, and independence of the randomness in the holdout set.

  We conclude that with probability at least $\delta^{O(1)}$, the tensor decomposition algorithm returns unit vectors $x_1,\ldots,x_k \in\R^{n_0}$ such that a $\delta^{O(1)}$ fraction of $w_s$ among $w_1,\ldots,w_k$ have $x_t$ such that $\iprod{w_s,x_t}^2 \geq \delta^{O(1)} \|w_s\|^2$.
  By the same reasoning as in the tiny $\delta$ case, using Lemma~\ref{lem:xvalid-est-orth} after the sign-checking step the same guarantee holds with the strengthened conclusion $\iprod{w_s,x_t} \geq \delta^{O(1)} \|w_s\|$.
  Finally, we apply Lemma~\ref{lem:cleanup} (along with elementary concentration arguments to show that the event $E$ occurs with high probability) to conclude that the last step of Algorithm~\ref{alg:small-delta} gives $\tau_1,\ldots,\tau_n$ such that (in expectation) $\corr(\sigma,\tau) \geq \delta^{O(1)} \Paren{\frac 1 {\alpha+1}\cdot \frac {k}{k +\alpha} - \frac 1k }$ as desired.
\end{proof}

\subsection{Low-degree estimate for posterior third moment}
\label{sec:estimator-third-moment}
In this section we prove Lemma~\ref{lem:mm-estimator}.
The strategy is to apply Lemma~\ref{lem:basis-conditions} to find an estimator for the $3$-tensor $\sum_{s \in [k]} v_s^{\tensor 3}$.
With that in hand, combining with the estimators in Section~\ref{sec:W} for the second moments $\sum_{s \in [k]} \dyad{v_s}$ is enough to obtain an estimator for $W$, since
\begin{align}
  \sum_{s \in [k]} w_s^{\tensor 3} & = \sum_{s \in [k]} (v_s + c \cdot 1)^{\tensor 3}\\
  & = \sum_{s \in [k]} v_s^{\tensor 3} + c (v_s \tensor v_s \tensor 1 + v_s \tensor 1 \tensor v_s + 1 \tensor v_s \tensor v_s) + 1^{\tensor 3}\label{eq:vw}
\end{align}
where $1$ is the all-$1$s vector, $c = \tfrac 1 {k\sqrt{\alpha+1}}$, and we have used that $\sum_{s \in [k]} v_s = 0$.
Thus if $R$ is a degree $n^{\poly(1/\delta)}$ polynomial such that
\[
  \iprod{R,\sum_{s \in [k]} v_s^{\tensor 3}} \geq \delta^{O(1)} (\E \|R\|^2)^{1/2} (\E \Norm{\sum_{s \in [k]} v_s^{\tensor 3}}^2)^{1/2}
\]
and $Q$ is similar but estimates $\sum_{s \in [k]} \dyad{v_s}$, then $R$ and $Q$ can be combined according to \eqref{eq:vw} to obtain the polynomial $P$ from the lemma statement.

Thus in the remainder of this section we focus on obtaining such a polynomial $R$; we change notation to call this polynomial $P$.
The first step will be to define a collection of polynomials $\{G^\alpha\}_{\alpha}$ for all distinct $i,j,k \in [n]$.

\renewcommand{\star}{\mathrm{STAR}}

\begin{definition}
Any $\alpha \subseteq {\binom{n}{2}}$ can be interpreted as a graph on some nodes in $[n]$.
Such an $\alpha$ is a long-armed star if it consists of three self-avoiding paths, each with $\ell$ edges, joined at one end at a single central vertex, at the other end terminating at distinct nodes $i,j,k \in [n]$.
(See figure.)
Let $\star_\ell(i,j,k)$ be the set of $3$-armed stars with arms of length $\ell$ and terminal vertices $i,j,k$.
For any $\alpha \subseteq \binom{n}{2}$ let $G^\alpha = \prod_{ab \in \alpha} (x_{ab} - \tfrac dn)$ be the product of centered edge indicators.
\end{definition}

\begin{figure}
\centering
\begin{tikzpicture}
\draw[fill=black] (2,2) circle (4pt);
\draw[fill=black] (2,3) circle (4pt);
\draw[fill=black] (2,4) circle (4pt);
\draw[fill=black] (2,1) circle (4pt);
\draw[fill=black] (2,0) circle (4pt);
\draw[fill=black] (0,2) circle (4pt);
\draw[fill=black] (1,2) circle (4pt);
\draw[fill=black] (2,2) circle (4pt);

\node at (1.5,0) {$i$};
\node at (1.5,4) {$j$};
\node at (0,2.5) {$k$};
\draw[thick] (2,0) -- (2,4);
\draw[thick] (0,2) -- (2,2);

\end{tikzpicture}

\begin{caption}[1]
A $3$-armed star with arms of length $2$.
We will eventually use arms of length $t \approx \log n$.
\end{caption}
\label{fig:star}
\end{figure}

The next two lemmas check the conditions to apply Lemma~\ref{lem:basis-conditions} to the sets $\{G^\alpha\}_{\alpha \in \star_\ell(i,j,k)}$.
\begin{lemma}[Unbiased Estimator]\label{lem:block-model-unbiased-estimator}
  Let $i,j,k \in [n]$ all be distinct.
  Let $\alpha \in \star_\ell(i,j,k)$.

  For a collection of probability vectors $\sigma_1,\ldots,\sigma_k$, let $V(\sigma) = \sum_{s \in [k]} v_s^{\tensor 3}$ where $v_s(i) = \sigma_i(s) - \tfrac 1k$.
  Let $G \sim G(n,d,\e,\alpha_0,k)$.
  \[
    \E \Brac{G^\alpha \mid \sigma_i, \sigma_j, \sigma_k } = \Paren{\frac{\e d}{n}}^{3\ell} \Paren{\frac 1 {k(\alpha_0+1)}}^{3(\ell-1)} \cdot C_3 \cdot V(\sigma)_{ijk}\mper
  \]
  Here $\alpha_0 \geq 0$ is the Dirichlet concentration paramter, unrelated to the graph $\alpha$, and $C_3 = 1/(k^{O(1)} \alpha_0^{O(1)})$ is a constant related to third moments of the Dirichlet distribution.
\end{lemma}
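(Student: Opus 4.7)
The plan is to evaluate the expectation in three stages, integrating out first the random graph given all the Dirichlet labels, then the $\ell-1$ intermediate labels on each of the three arms of the star, and finally the label of the central vertex.

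First I would condition on the full label vector $\sigma = (\sigma_1,\ldots,\sigma_n)$. By conditional independence of the edges given $\sigma$ together with Fact~\ref{fact:Ge-given-sigma}, each centered edge indicator in $\alpha$ contributes a factor of $\tfrac{\e d}{n}\iprod{\tsigma_a,\tsigma_b}$, yielding
\[
\E[G^\alpha \mid \sigma] \;=\; \Paren{\tfrac{\e d}{n}}^{3\ell}\prod_{ab\in\alpha}\iprod{\tsigma_a,\tsigma_b}\mper
\]

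Next, fix the central vertex $c$ and the endpoints $i,j,k$, and integrate out the labels of the intermediate vertices along each arm in turn. Each intermediate vertex appears in exactly two inner products, namely those for its two incident edges. Using the identity $\E \tsigma\tsigma^\top = \tfrac{1}{k(\alpha_0+1)}\Pi$ from Fact~\ref{fact:dirichlet-covariance-warmup} and the observation that $\tsigma$ lies in the image of $\Pi$ (so $\Pi$ acts as the identity on any $\tsigma$), each such integration produces one factor of $\tfrac{1}{k(\alpha_0+1)}$ and telescopes along the arm. Applied to all three arms, each of which has $\ell-1$ intermediate vertices, this gives
\[
\E[G^\alpha \mid \sigma_c,\sigma_i,\sigma_j,\sigma_k] \;=\; \Paren{\tfrac{\e d}{n}}^{3\ell}\Paren{\tfrac{1}{k(\alpha_0+1)}}^{3(\ell-1)}\iprod{\tsigma_c,\tsigma_i}\iprod{\tsigma_c,\tsigma_j}\iprod{\tsigma_c,\tsigma_k}\mper
\]

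The hard part is the final integration over $\sigma_c$, which amounts to computing the third central moment tensor $T \defeq \E \tsigma_c^{\otimes 3}$ contracted with $\tsigma_i \tensor \tsigma_j \tensor \tsigma_k$. Here I would argue: the Dirichlet distribution is invariant under permutations of the $k$ coordinates, so $T$ (as a symmetric $3$-tensor on $\R^k$) must lie in the span of the symmetric invariants $\sum_s e_s^{\otimes 3}$, $\operatorname{Sym}(\sum_s e_s^{\otimes 2}\otimes\mathbf{1})$, and $\mathbf{1}^{\otimes 3}$. The crucial observation is that any term involving at least one factor of $\mathbf{1}$ vanishes upon contracting with any of the centered vectors $\tsigma_i,\tsigma_j,\tsigma_k$, since $\iprod{\mathbf{1},\tsigma}=0$ on the simplex. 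Hence only the diagonal part $C_3\sum_s e_s^{\otimes 3}$ survives, where $C_3 = \E[\tsigma_c(1)^3]$ is a scalar depending only on $k$ and $\alpha_0$ (computable from known Dirichlet moments, and of order $1/(k^{O(1)}\alpha_0^{O(1)})$). The contraction then yields exactly $C_3\sum_{s\in[k]}\tsigma_i(s)\tsigma_j(s)\tsigma_k(s) = C_3\cdot V(\sigma)_{ijk}$, which combined with the prefactors above gives the claimed identity.
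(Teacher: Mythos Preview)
Your proposal is correct and follows essentially the same three-stage approach as the paper: condition on all labels to get the product of $\iprod{\tsigma_a,\tsigma_b}$'s, telescope along each arm using the Dirichlet second-moment identity, and then handle the central vertex via the third-moment structure. The only difference is cosmetic: for the last step the paper simply invokes Fact~\ref{fact:diagonal-moments}, whereas you spell out the underlying symmetry/invariant-tensor argument (which is in fact how that Fact is proved in the appendix). One small caution: your identification $C_3 = \E[\tsigma_c(1)^3]$ is not literally the coefficient appearing in the decomposition (the $\mathbf 1$-terms also contribute to the diagonal entry of $\E\tsigma_c^{\otimes 3}$), but since the lemma only requires $C_3 = 1/(k^{O(1)}\alpha_0^{O(1)})$ this does not affect the argument.
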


\begin{lemma}[Approximate conditional independence]\label{lem:block-model-cond-indep}
If
  \[
  \delta \defeq 1 - \frac{k^2(\alpha_0 +1)^2}{\e^2 d} > 0 \quad \text{ and } \quad k,\alpha_0  \le n^{o(1)} \text{ and } \epsilon^2 d \le n^{o(1)} \mper
  \]
  and $\ell \geq C \log n / \delta^{O(1)}$ for a large enough constant $C$, then for $G \sim G(n,d,\e,k,\alpha_0)$,
  \[
    \E \Brac{V(\sigma)_{ijk}^2} \cdot \sum_{\alpha,\beta \in \star_\ell(i,j,k)} \E G^\alpha G^\beta \leq 1/\delta^{O(1)}\cdot \sum_{\alpha,\beta \in \star_\ell(i,j,k)} \E \Brac{G^\alpha V(\sigma)_{i,j,k}} \cdot \E \Brac{G^\beta V(\sigma)_{i,j,k}}\mper
  \]
\end{lemma}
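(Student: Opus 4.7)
}
The plan is to follow the same template as the two-community analysis in Lemma~\ref{lem:indep-sbm}, but adapted to the star geometry and the Dirichlet moment structure. First I would condition on $\sigma$ to factor
\[
\E G^\alpha G^\beta \;=\; \E_\sigma\!\Bigparen{\prod_{ab\in \alpha\cap\beta}\!\E\!\Brac{(G_{ab}-\tfrac dn)^2\mid\sigma}\cdot \prod_{ab\in \alpha\triangle\beta}\!\E\!\Brac{G_{ab}-\tfrac dn\mid\sigma}}.
\]
Using Fact~\ref{fact:Ge-given-sigma}, each shared edge contributes $(d/n)(1+O(d/n)+\e\iprod{\tsigma_a,\tsigma_b})$ and each symmetric-difference edge contributes $(\e d/n)\iprod{\tsigma_a,\tsigma_b}$. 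So, up to factors of $(1+O(d/n))^{O(\ell)}=1+o(1)$, the quantity $\E G^\alpha G^\beta$ equals $(d/n)^{|\alpha\cap\beta|}(\e d/n)^{|\alpha\triangle\beta|}$ times $\E_\sigma\!\prod_{ab\in\alpha\triangle\beta}\iprod{\tsigma_a,\tsigma_b}$.

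The second step is to classify pairs $(\alpha,\beta)\in\star_\ell(i,j,k)^2$ by the combinatorial structure of their intersection and union. Because both $\alpha$ and $\beta$ have the same three odd-degree terminals $i,j,k$ (and each has one internal odd-degree vertex, its center), the graph $\alpha\triangle\beta$ is either Eulerian at every vertex (if $\alpha,\beta$ share a center) or has exactly two odd-degree vertices, the two distinct centers. I would parametrize pairs by the number $s$ of vertices and $r$ of edges they share, further split by how many of these shared edges lie on the three arms. For each parameter choice there are at most $|\star_\ell(i,j,k)|^2\cdot n^{-s}\cdot\ell^{O(s-r)}$ pairs, matching the path-counting bound in the proof of Lemma~\ref{lem:indep-sbm}, and the dominant $\alpha=\beta$ contribution gives an extra multiplicative factor of $n$ as in that proof.

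The third step is to evaluate $\E_\sigma\prod_{ab\in \alpha\triangle\beta}\iprod{\tsigma_a,\tsigma_b}$. Here I would use only the second and third Dirichlet moments from Fact~\ref{fact:dirichlet-covariance-warmup} (and the explicit $C_3$ computation used in Lemma~\ref{lem:block-model-unbiased-estimator}): every degree-$2$ internal vertex contributes a factor $\tfrac1{k(\alpha_0+1)}$ (the trace of $\E\tsigma\tsigma^\top$ on each coordinate), and each ``meeting'' at a center where three arms join contributes $C_3\cdot\tfrac1{k^2(\alpha_0+1)^2}$ from the third moment. Combining with the $(\e d/n)$ per unshared edge gives, for each unshared edge, an effective variance-to-signal ratio of exactly $k^2(\alpha_0+1)^2/(\e^2 d)=1-\delta$, while each internal-vertex summation supplies a compensating factor of $n$. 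This is why the Kesten--Stigum-type threshold in the lemma statement is $k^2(\alpha_0+1)^2$ rather than $k(\alpha_0+1)^2$: the center, being the common endpoint of three arms rather than two, pays the third-moment price.

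Assembling everything, the total sum $\sum_{\alpha,\beta}\E G^\alpha G^\beta$ becomes a geometric series in $(1-\delta)$, indexed by the number of unshared edges, plus the $\alpha=\beta$ term that—by the choice $\ell\ge C\log n/\delta^{O(1)}$—is an $n^{-\Omega(1)}$ lower-order correction. This yields
\[
\sum_{\alpha,\beta}\E G^\alpha G^\beta \;\le\; \frac{1}{\delta^{O(1)}}\cdot |\star_\ell(i,j,k)|^2\cdot \Paren{\frac{\e d}{n}}^{\!6\ell}\Paren{\frac{1}{k(\alpha_0+1)}}^{\!6(\ell-1)}\!\!\cdot C_3^2\cdot\E V(\sigma)_{ijk}^2,
\]
which is exactly $\delta^{-O(1)}\cdot(\E V_{ijk}^2)^{-1}\cdot(\sum_\alpha\E G^\alpha V_{ijk})^2$ by Lemma~\ref{lem:block-model-unbiased-estimator}, as required. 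The main technical obstacle will be the bookkeeping for overlap shapes where $\alpha\cap\beta$ is a nontrivial subforest of the star: one must show that the $\ell^{O(s-r)}$ combinatorial overcount from having $s>r$ shared vertices is controlled by the $\e^2d/(k^2(\alpha_0+1)^2)>1$ gap, exactly as the path overcount is controlled by $\e^2d>1$ in Lemma~\ref{lem:indep-sbm}; the fact that Eulerian subgraphs of $\alpha\triangle\beta$ can occur through many different combinatorial configurations (not just disjoint paths off $i,j,k$) is what makes this step nontrivial.
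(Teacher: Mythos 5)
Your plan follows essentially the same route as the paper's proof: condition on $\sigma$ to factor the edge expectations, classify pairs $(\alpha,\beta)$ by shared vertices $s$ and shared edges $r$ with the count $\card{\star_\ell(i,j,k)}^2\, n^{-s}\,\ell^{O(s-r)}$, evaluate the Dirichlet expectation vertex-by-vertex exactly as in the paper's Facts~\ref{fact:mm-path-intersection} and~\ref{fact:mm-nonpath-intersection}, and sum a geometric series with ratio $k^2(\alpha_0+1)^2/(\e^2 d)=1-\delta$ indexed by the number of \emph{shared} (not, as you write, unshared) arm edges, with the $\alpha=\beta$ term handled separately using $\ell\ge C\log n/\delta^{O(1)}$. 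One aside in your write-up is incorrect, though not load-bearing: the threshold $k^2(\alpha_0+1)^2$ is not a ``third-moment price'' paid at the center---the factor $C_3$ from the degree-$3$ center appears identically in $\E G^\alpha G^\beta$ and in $\E\brac{G^\alpha V(\sigma)_{ijk}}\cdot\E\brac{G^\beta V(\sigma)_{ijk}}$ and cancels in the ratio, so the threshold is exactly the per-shared-arm-edge factor $1/(\e^2 d\, C_2^2)$ with $C_2=1/(k(\alpha_0+1))$, i.e.\ the same Kesten--Stigum condition that already governs the second-moment path estimator (cf.\ Lemma~\ref{lem:mm-W-conditions}).
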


Now we can prove Lemma~\ref{lem:mm-estimator}.
\begin{proof}[Proof of Lemma~\ref{lem:mm-estimator}]
As discussed at the beginning of this section, it is enough to find an estimator for the tensor $V(\sigma)$.
  Lemma~\ref{lem:block-model-unbiased-estimator} and Lemma~\ref{lem:block-model-cond-indep} show that Lemma~\ref{lem:basis-conditions} applies to each set of polynomials $\star_{\ell}(i,j,k)$.
  The conclusion is that for every distinct $i,j,k \in [n]$ there is a degree $\log n \poly(1/\delta)$ polynomial $P(G)_{ijk}$ so that
  \[
    \frac{\E P(G)_{ijk} V(\sigma)_{ijk}}{(\E P(G)_{ijk}^2)^{1/2} \cdot (\E V(\sigma)_{ijk}^2)^{1/2} } \geq \Omega(1)\mper
  \]
  One may check that the entries $i,j,k$ for $i,j,k$ all distinct of the tensor $V(\sigma)$ comprise nearly all of its $2$-norm.
  That is,
  \[
    \sum_{i,j,k \text{ distinct}} \E V(\sigma)_{i,j,k}^2 \geq (1 - o(1)) \E \|V(\sigma)\|^2\mper
  \]
  This is sufficient to conclude that the tensor-valued polynomial $P(G)$ whose $(i,j,k)$-th entry is $P_{i,j,k}(G)$ when $i,j,k$ are all distinct and is $0$ otherwise is a good estimator of $V(\sigma)$ (see Fact~\ref{fact:scalar-to-vector}).
  Thus,
  \[
    \frac{\E_{\sigma,G} \iprod{P(G), V(\sigma)}}
    {\Paren{\E_{\sigma,G} \Norm{P(G)}^2}^{1/2}  \cdot
    \Paren{\E_{\sigma,G} \Norm{V(\sigma)}^2}^{1/2}} \geq \Omega(1)\mper\qedhere
  \]
\end{proof}

\subsubsection{Details of unbiased estimator}
We work towards proving Lemma~\ref{lem:block-model-unbiased-estimator}.
We will need to assemble a few facts.
The first will help us control moment tensors of the Dirichlet distribution.
The proof can be found in the appendix.

\begin{fact}[Special case of Fact~\ref{fact:dirichlet-covariance}]
\label{fact:diagonal-moments}
  Let $\sigma$ be distributed according to the $\alpha,k$ Dirichlet distribution.
  Let $\tsigma = \sigma - \tfrac 1 k 1$.
  There are numbers $C_2, C_3$ depending on $\alpha,k$ so that for every $x_1, x_2, x_3$ in $\R^k$ with $\sum_{s \in [k]} x_i(s) = 0$,
  \[
    \E_\sigma \iprod{\tsigma, x_1}\iprod{\tsigma, x_2}   = C_2 \iprod{x_1, x_2}
  \]
  and
  \[
    \E_\sigma \iprod{\tsigma, x_1}\iprod{\tsigma, x_2}\iprod{\tsigma,x_3} = C_3 \sum_{s \in [k]} x_1(s) x_2(s) x_3(s)\mper
  \]
  Furthermore,
  \[
    C_2 = \frac 1 {k(\alpha + 1)} \quad \text{ and } \quad C_3 = \frac 1 {k^{O(1)} \alpha^{O(1)}}\mper
  \]
\end{fact}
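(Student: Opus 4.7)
The plan is to exploit the $S_k$-invariance of the symmetric Dirichlet distribution together with the identity $\sum_{s\in[k]}\tsigma(s)=0$. Since $\sigma\sim\Dir(\alpha)$ is invariant under any coordinate permutation $\pi\in S_k$ acting on $\R^k$, so is the centered vector $\tsigma$. Consequently the second-moment matrix $M\seteq\E\,\tsigma\tsigma^{\!\top}\in\R^{k\times k}$ and the third-moment tensor $T\seteq\E\,\tsigma^{\otimes 3}\in(\R^k)^{\otimes 3}$ are both $S_k$-equivariant. Combined with the constraint $\tsigma^{\!\top}\mathbf 1=0$ (which gives $M\mathbf 1=0$ and contractions of $T$ against $\mathbf 1$ vanish), this pins down their form on the mean-zero subspace up to a single scalar.

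For the second moment, the $S_k$-invariant symmetric $k\times k$ matrices form a two-dimensional space spanned by $\Id$ and the all-ones matrix $J$, so $M=a\Id+bJ$ for some $a,b$. The condition $M\mathbf 1=0$ forces $b=-a/k$, i.e.\ $M=a\cdot(\Id-\tfrac1k J)=a\cdot\Pi$, where $\Pi$ is the projector onto $\{x:\sum_s x(s)=0\}$. Thus for any $x_1,x_2$ in this subspace, $\iprod{x_1,Mx_2}=a\iprod{x_1,x_2}$. To identify $a$, evaluate a diagonal entry: using the standard Dirichlet variance formula $\Var\sigma(1)=\tfrac{k-1}{k^2(\alpha+1)}$, the diagonal $a(1-\tfrac1k)=\tfrac{k-1}{k^2(\alpha+1)}$ gives $a=C_2=\tfrac1{k(\alpha+1)}$.

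For the third moment, the space of $S_k$-invariant symmetric $3$-tensors is spanned by (i) the diagonal tensor $D=\sum_{s\in[k]}e_s^{\otimes 3}$, (ii) the symmetrization of $\mathbf 1\otimes\sum_s e_s\otimes e_s$, and (iii) $\mathbf 1^{\otimes 3}$. The constraint that contracting $T$ against $\mathbf 1$ in any coordinate yields zero (which follows from $\sum_s\tsigma(s)=0$) eliminates both tensors (ii) and (iii) on the mean-zero subspace. Hence $T(x_1,x_2,x_3)=C_3\sum_{s\in[k]}x_1(s)x_2(s)x_3(s)$ for any $x_1,x_2,x_3$ of coordinate-sum zero. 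To extract $C_3$, I plug in $x_1=x_2=x_3=e_1-\tfrac1k\mathbf 1$ and use the Dirichlet third-moment identity $\E\sigma(1)^r=\tfrac{(\alpha/k)(\alpha/k+1)\cdots(\alpha/k+r-1)}{\alpha(\alpha+1)\cdots(\alpha+r-1)}$ for $r=1,2,3$, combined with the binomial expansion of $\E(\sigma(1)-\tfrac1k)^3$. A short calculation yields a closed form for $C_3$ which, after inspection, is bounded by $1/(k^{O(1)}\alpha^{O(1)})$ in the relevant parameter regime (and one verifies it is strictly positive).

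The computation is essentially mechanical; the only mild subtlety is the third-moment scalar $C_3$, where one must check the asymptotics carefully as $\alpha\to 0$ or $\alpha\to\infty$ to confirm the claimed bound $C_3=1/(k^{O(1)}\alpha^{O(1)})$. Everything else is a direct consequence of representation-theoretic decomposition under $S_k$ together with the linear constraint imposed by $\tsigma\in\mathbf 1^{\perp}$, so I do not expect any real obstacle.
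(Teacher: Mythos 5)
Your proposal is correct and follows essentially the same route as the paper: permutation-invariance plus the zero-sum constraint pins down the moment matrix/tensor up to a scalar on $\mathbf 1^{\perp}$, and the scalar is then extracted from explicit symmetric-Dirichlet moment formulas (the paper's version of this argument is sketched for the fourth-moment analogue via the Johnson scheme and defers the constants to Fact~\ref{fact:dirichlet-covariance}). Your treatment of $C_3$ is if anything more explicit than the paper's; just note that the test vector $e_1-\tfrac1k\mathbf 1$ degenerates at $k=2$ (both sides vanish identically there), which is harmless since any $C_3$ then satisfies the identity.
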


Now we can prove Lemma~\ref{lem:block-model-unbiased-estimator}.
\begin{proof}[Proof of Lemma~\ref{lem:block-model-unbiased-estimator}]
  For any collection of $\sigma$'s and $\alpha \in \star_\ell(i,j,k)$,
  \begin{align*}
    \E_{G} \Brac{G^\alpha \mid \sigma} & = \Paren{\frac{\epsilon d}{n}}^{3\ell} \prod_{(a,b) \in \alpha} \iprod{\tsigma_a,\tsigma_b}
  \end{align*}
  Let $a$ be the central vertex of the star $\alpha$.
  Taking expectations over all the vertices in the arms of the star,
  \[
  \E \Brac{G^\alpha \mid \sigma_i, \sigma_j, \sigma_k } = \Paren{\frac{\epsilon d}{n}}^{3\ell} \Paren{\frac 1 {k(\alpha_0 +1)}}^{3(\ell-1)} \E_{\sigma_a} \iprod{\tsigma_i,\tsigma_a} \iprod{\tsigma_j,\tsigma_a} \iprod{\tsigma_k,\tsigma_a}\mper
  \]
  Finally, using the second part of Fact~\ref{fact:diagonal-moments} completes the proof.
\end{proof}

\subsubsection{Details of approximate conditional independence}
We prove Lemma~\ref{lem:block-model-cond-indep}, first gathering some facts.
In the sum $\sum_{\alpha, \beta \in \star_\ell(i,j,k) } G^\alpha G^\beta$, the terms $\alpha, \beta$ which (as graphs) share only the vertices $i,j,k$ will not cause us any trouble, because such $G^\alpha$ and $G^\beta$ are independent conditioned on $\sigma_i, \sigma_j, \sigma_k$.
\begin{fact}\label{fact:mm-indep}
  If $\alpha, \beta \in \star_\ell(i,j,k)$ share only the vertices $i,j,k$, then for any collection $\sigma$ of probability vectors,
  \[
    \E\Brac{ G^\alpha G^\beta  \mid \sigma_i, \sigma_j, \sigma_k} = \E \Brac{ G^\alpha \mid \sigma_i, \sigma_j, \sigma_k} \cdot \E \Brac{ G^\beta \mid \sigma_i, \sigma_j, \sigma_k}\mper
  \]
\end{fact}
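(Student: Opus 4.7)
The plan is to show that under the hypothesis $V(\alpha) \cap V(\beta) = \{i,j,k\}$, the random variables $G^\alpha$ and $G^\beta$ are in fact conditionally independent given $\sigma_i, \sigma_j, \sigma_k$, from which the factorization of expectations follows immediately. This should be a short, purely structural argument; no combinatorial or moment estimates are needed.

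First, I would record precisely the random variables on which each factor depends: $G^\alpha = \prod_{ab\in\alpha}(x_{ab} - d/n)$ is a function only of the edge indicators $\{x_{ab}\}_{ab \in \alpha}$, and the joint distribution of these edge indicators conditioned on $\sigma$ depends only on $\{\sigma_a\}_{a\in V(\alpha)}$. Likewise for $G^\beta$. The two key structural observations are then: (a) the sets $V(\alpha)\setminus\{i,j,k\}$ and $V(\beta)\setminus\{i,j,k\}$ are disjoint, so the corresponding collections of $\sigma_a$'s are drawn independently under the Dirichlet prior; and (b) the edge sets $\alpha$ and $\beta$ are disjoint, because each arm of a long-armed star is a self-avoiding path of length $\ell\ge 1$, so every edge of $\alpha$ has at least one endpoint in $V(\alpha)\setminus\{i,j,k\}$, and hence cannot coincide with any edge of $\beta$.

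From (a) and (b), conditioning on $\sigma_i,\sigma_j,\sigma_k$ and writing
\[
\E[G^\alpha G^\beta \mid \sigma_i,\sigma_j,\sigma_k] = \E\Bigl[\E\bigl[G^\alpha G^\beta \bigm| \{\sigma_a\}_{a\in V(\alpha)\cup V(\beta)}\bigr] \bigm| \sigma_i,\sigma_j,\sigma_k\Bigr]
\]
lets me invoke full conditional independence of the edge indicators given all $\sigma$'s to split the inner conditional expectation into $\E[G^\alpha \mid \{\sigma_a\}_{a\in V(\alpha)}] \cdot \E[G^\beta \mid \{\sigma_a\}_{a\in V(\beta)}]$. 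Then by independence of $\{\sigma_a\}_{a\in V(\alpha)\setminus\{i,j,k\}}$ from $\{\sigma_a\}_{a\in V(\beta)\setminus\{i,j,k\}}$ (both conditional on $\sigma_i,\sigma_j,\sigma_k$, which is vacuous since they are independent unconditionally), the outer expectation factors as a product, yielding the claim. There is no real obstacle here; the entire content is bookkeeping of independence, and the hypothesis on shared vertices is exactly what is needed to make the two supports disjoint.
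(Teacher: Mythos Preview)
Your proposal is correct and follows essentially the same approach as the paper: both argue that $G^\alpha$ and $G^\beta$ are conditionally independent given $\sigma_i,\sigma_j,\sigma_k$ because the only shared dependencies are through those three labels. The paper's proof is a two-sentence version of your argument, simply observing that sampling $G^\alpha$ requires only the $\sigma_a$'s for $a\in V(\alpha)$ (and the corresponding edge randomness), and that the overlap with $V(\beta)$ is exactly $\{i,j,k\}$; your write-up makes the edge-disjointness and the tower-of-expectations step explicit, which is a fine elaboration but not a different idea.
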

\begin{proof}
  To sample $G^\alpha$, one needs to know $\sigma_a$ for any $a \in [n]$ with nonzero degree in $\alpha$, and similar for $b \in [n]$ and $G^\beta$.
  The only overlap is $\sigma_i, \sigma_j, \sigma_k$.
\end{proof}

The next fact is the key one.
Pairs $\alpha, \beta$ which share vertices forming paths originating at $i,j,$ and $k$ make the next-largest contribution (after $\alpha, \beta$ sharing only $i,j,k$) to $\sum_{\alpha, \beta} \E G^\alpha G^\beta$.
\begin{fact}\label{fact:mm-path-intersection}
  Let $i,j,k \in [n]$ be distinct.
  Let $V(\sigma)_{ijk}$ be as in the Lemma~\ref{lem:block-model-cond-indep}.
  Let $C_2 \in \R$ be as in Fact~\ref{fact:diagonal-moments}.

  Let $\alpha, \beta \in \star_\ell(i,j,k)$ share $s$ vertices (in addition to $i,j,k$) for some $s \leq \tfrac t 2$, and suppose the shared vertices form paths in $\alpha$ and $\beta$ starting at $i, j, $ and $k$.
  Then
  \[
    \E V(\sigma)_{ijk}^2 \cdot \E G^\alpha G^\beta \leq \e^{-2s} \Paren{\frac{ d} n}^{-s} (1 + O(d/n))^{-s} \cdot \Paren{\frac 1 {k(\alpha_0+1)}}^{-2s} \cdot \E\Brac{ G^{\alpha} V(\sigma)_{ijk}} \cdot \E \Brac{G^{\beta} V(\sigma)_{ijk}} \mper
  \]
\end{fact}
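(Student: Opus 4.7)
My plan is to compute $\E G^\alpha G^\beta$ by conditioning on the labels $\sigma$ (so the edges become independent), then evaluating the resulting $\sigma$-expectation through iterated Dirichlet moment contractions via Fact~\ref{fact:diagonal-moments}, and finally comparing with the product $\E[G^\alpha V(\sigma)_{ijk}]\cdot\E[G^\beta V(\sigma)_{ijk}]$ that we already have from Lemma~\ref{lem:block-model-unbiased-estimator}. Edges of $\alpha\cap\beta$ appear with second powers and those of $\alpha\triangle\beta$ with first powers, so the conditional moment formulas $\E[x_{ab}-\tfrac dn\mid\sigma]=\tfrac{\e d}{n}\iprod{\tsigma_a,\tsigma_b}$ and $\E[(x_{ab}-\tfrac dn)^2\mid\sigma]=\tfrac dn(1+O(\tfrac dn)+\e\iprod{\tsigma_a,\tsigma_b})$ give
\[
\E G^\alpha G^\beta = \Paren{\tfrac{\e d}{n}}^{6\ell-2s}\Paren{\tfrac dn}^s\cdot \E_\sigma\Brac{\prod_{ab\in\alpha\cap\beta}\Paren{1+O(\tfrac dn)+\e\iprod{\tsigma_a,\tsigma_b}}\prod_{ab\in\alpha\triangle\beta}\iprod{\tsigma_a,\tsigma_b}}\mper
\]

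The key step is to argue that within this $\sigma$-expectation the only nonvanishing contribution from the $\alpha\cap\beta$ product comes from choosing the constant $1+O(\tfrac dn)$ term on every shared edge. Any other choice selects $\e\iprod{\tsigma_a,\tsigma_b}$ on a nonempty subset of shared-path edges; since the shared paths are chains terminating at $i,j,k$ on one end and at branching vertices $u_i,u_j,u_k$ on the other, and since neither $\sigma_i,\sigma_j,\sigma_k$ nor any interior shared-path vertex appears in any edge of $\alpha\triangle\beta$, such a subset must leave at least one of those vertices with odd degree in the combined graph, killing the expectation because $\E\tsigma=0$. Thus, up to a factor of $(1+O(\tfrac dn))^s$, the $\sigma$-expectation reduces to $\E_\sigma\prod_{ab\in\alpha\triangle\beta}\iprod{\tsigma_a,\tsigma_b}$, which I evaluate by contracting along the six arms of $\alpha\triangle\beta$: each of the $6\ell-2s-6$ interior arm vertices (degree $2$) contributes $C_2=1/(k(\alpha_0+1))$ via the second-moment identity; each branching vertex $u_i,u_j,u_k$ (also degree $2$ in $\alpha\triangle\beta$, since the shared-path edges are outside $\alpha\triangle\beta$) contributes another $C_2$; and the two centers $c_\alpha,c_\beta$ (degree $3$) contract via the third-moment identity of Fact~\ref{fact:diagonal-moments}, producing a combined factor $C_3\cdot M_3$ with $M_3=k\,\E\tsigma(1)^3$ depending only on $k,\alpha_0$.

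Collecting factors gives $\E G^\alpha G^\beta = (1+O(\tfrac dn))^s (\tfrac{\e d}{n})^{6\ell-2s}(\tfrac dn)^s\,C_2^{6\ell-2s-3}\,C_3 M_3$ up to a multiplicative constant depending only on $k,\alpha_0$. Dividing by $\E[G^\alpha V]\cdot\E[G^\beta V]=(\tfrac{\e d}{n})^{6\ell}C_2^{6(\ell-1)}C_3^2(\E V^2)^2$ from Lemma~\ref{lem:block-model-unbiased-estimator}, and using $\E V^2=C_2^3 D_3$ with $D_3=(k-1)(k-2)/k$ (a short direct computation from Fact~\ref{fact:dirichlet-covariance-warmup}), the $\ell$-dependence cancels and the ratio $\E V^2\cdot\E G^\alpha G^\beta/(\E G^\alpha V\cdot\E G^\beta V)$ simplifies to $\e^{-2s}(\tfrac dn)^{-s}(1+O(\tfrac dn))^s(1/(k(\alpha_0+1)))^{-2s}\cdot M_3/(C_3 D_3)$. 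The identity $M_3=C_3 D_3$ (verifiable directly, or forced by comparing the $s=0$ case to the independence statement of Fact~\ref{fact:mm-indep}) makes the residual constant equal $1$, giving the claimed inequality (with $(1+O(\tfrac dn))^s\le(1+O(\tfrac dn))^{-s}$ once the $O$-term is interpreted with the correct sign as in the proof of Lemma~\ref{lem:indep-sbm}).

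The main obstacle is making the odd-degree vanishing argument fully rigorous, especially in the boundary cases where some $r_\cdot=0$ so that $u_\cdot$ coincides with $i$, $j$, or $k$ and the degree accounting shifts, and in controlling the subleading $O(\tfrac dn)$ corrections inside each cap factor. A secondary technical point is to verify that the intermediate vectors produced by successive Dirichlet contractions remain centered (so that Fact~\ref{fact:diagonal-moments} applies at each step), which follows by induction because contraction with a centered vector preserves centeredness, but deserves explicit confirmation.
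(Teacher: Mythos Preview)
Your argument is correct and follows essentially the same route as the paper's proof: both condition on $\sigma$, show that the $\alpha\cap\beta$ factor contributes only $(d/n)^s(1+O(d/n))^s$ via the odd-degree vanishing argument, and then evaluate the $\alpha\triangle\beta$ contribution through Dirichlet-moment contractions.

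The one organizational difference worth noting is that the paper short-circuits your explicit contraction. Instead of integrating out all vertices and landing on the constant $C_3 M_3$ (forcing you to verify $M_3=C_3 D_3$), the paper conditions on $\sigma_{i'},\sigma_{j'},\sigma_{k'}$ at the branching points $i'=u_i,\,j'=u_j,\,k'=u_k$ and observes that $G^{\alpha\setminus\beta}$ and $G^{\beta\setminus\alpha}$ are themselves long-armed stars with terminal vertices $i',j',k'$. Lemma~\ref{lem:block-model-unbiased-estimator} then gives their conditional expectations directly as scalar multiples of $V(\sigma)_{i'j'k'}$, so the whole expression collapses to a constant times $\E V(\sigma)_{i'j'k'}^2=\E V(\sigma)_{ijk}^2$. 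This reuse of the unbiased-estimator lemma is exactly your iterated contraction carried out in one stroke, and it makes the identity $M_3=C_3 D_3$ automatic rather than something to be checked separately. Your worries about the boundary cases $r_\cdot=0$ and about centeredness being preserved under contraction are not real obstacles: when $r_\cdot=0$ the branching point coincides with the terminal and still has degree~$2$ in $\alpha\triangle\beta$, and every vector fed into Fact~\ref{fact:diagonal-moments} is a $\tsigma$-vector, hence centered.
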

\begin{proof}
  Let $\sigma_{\alpha \cap \beta}$ be the $\sigma$'s corresponding to vertices sharerd by $\alpha, \beta$.
  Let $i', j', k'$ be the last shared vertices along the paths beginning at $i,j,k$ respectively.
  We expand $G^\alpha G^\beta$ and use conditional independence of the $G_e$'s given the $\sigma$'s:
  \[
    \E G^\alpha G^\beta = \E_{\sigma_{i', j', k'} }\Brac{ \E \Brac{ (G^{\alpha \cap \beta})^2 \, | \sigma_{i'}, \sigma_{j'}, \sigma_{k'}} \cdot \E \Brac{G^{\alpha \setminus \beta} \mid \sigma_{i'} \sigma_{j'} \sigma_{k'}} \cdot \E \Brac{G^{\beta \setminus \alpha } \mid \sigma_{i'} \sigma_{j'} \sigma_{k'}}} \mper
  \]
  Both $G^{\alpha \setminus \beta}$ and $G^{\beta \setminus \alpha}$ are long-armed stars with terminal vertices $i', j', k'$.
  The arm lengths of $G^{\alpha \setminus \beta}$ total $3\ell - s$.
  By a similar argument to Lemma~\ref{lem:block-model-unbiased-estimator}, $G^{\alpha \setminus \beta}$ is an unbiased estimator of $V(\sigma)_{i'j'k'}$ with
  \[
    \E\Brac{G^{\alpha \setminus \beta} \mid \sigma_{i'}, \sigma_{j'}, \sigma_{k'}}
    = \Paren{\frac {\e d}n}^{3\ell -s} \Paren{\frac 1 {k(\alpha_0 +1)}}^{3(\ell -1) -s} \cdot C_3 \cdot V(\sigma)_{i', j', k'}
  \]
  and the same goes for $G^{\beta \setminus \alpha}$.
  Furthermore,
  \[
    \E \Brac{ (G^{\alpha \cap \beta})^2 \mid \sigma_{i'}, \sigma_{j'}, \sigma_{k'}}
    = \Paren{\frac dn}^{|\alpha \cap \beta|} \E \Brac{\prod_{(a,b) \in \alpha \cap \beta} (1 + \e \iprod{\tsigma_a, \tsigma_b} + O(d/n)) \, \Big{|} \, \sigma_{i'}, \sigma_{j'}, \sigma_{k'}}\mper
  \]
  By our assumption that $\alpha \cap \beta$ consists just of paths, every subset of edges in the graph $\alpha \cap \beta$ contains a vertex of degree $1$.
  Hence, $\E \Brac{ (G^{\alpha \cap \beta})^2 \mid \sigma_{i'}, \sigma_{j'}, \sigma_{k'}} = (1 + O(d/n))^{|\alpha \cap \beta|} (d/n)^{|\alpha \cap \beta|}$.
  Putting these together,
  \[
    \E G^\alpha G^\beta = (1 + O(d/n))^s \e^{6\ell - 2s} \Paren{\frac dn}^{6\ell - s} \Paren{\frac 1 {k(\alpha_0 +1)}}^{6(\ell-1) - 2s} C_3^2 \E V(\sigma)_{ijk}^2
  \]
  At the same time, one may apply Lemma~\ref{lem:block-model-unbiased-estimator} to $\E G^{\alpha} V(\sigma)_{ijk}$ to obtain
  \[
    \E\Brac{ G^{\alpha} V(\sigma)_{ijk}} \cdot \E \Brac{G^{\beta} V(\sigma)_{ijk}}
    = \Paren{\frac {\e d}n}^{6\ell} \Paren{\frac 1 {k(\alpha_0 + 1)}}^{6(\ell-1)} C_3^2 \cdot \Paren{\E_{\sigma_{i},\sigma_{j}, \sigma_{k}} V(\sigma)_{ijk}^2}^2 \mper
  \]
  The lemma follows.
\end{proof}

The last fact will allow us to control $\alpha, \beta$ which intersect in some way other than paths starting at $i,j,k$.
The key idea will be that such pairs $\alpha, \beta$ must share more vertices than they do edges.
\begin{fact}\label{fact:mm-nonpath-intersection}
  Let $i,j,k \in [n]$ be distinct.
  Let $V(\sigma)_{ijk}$ be as in the Lemma~\ref{lem:block-model-cond-indep}.
  Let $C_2 \in \R$ be as in Fact~\ref{fact:diagonal-moments}. $C_2 = \tfrac 1 {k(\alpha_0+1)}$.

  Let $\alpha, \beta \in \star_\ell(i,j,k)$ share $s$ vertices (in addition to $i,j,k$) and $r$ edges.
  Then
  \[
    \E V(\sigma)_{ijk}^2 \cdot \E G^\alpha G^\beta \leq \e^{-2r} \Paren{\frac dn}^{-r} \cdot C_2^{-2s} \cdot k^{O(s-r)} (1+\alpha_0)^{O(s-r)} \cdot \E\Brac{ G^{\alpha} V(\sigma)_{ijk}} \cdot \E \Brac{G^{\beta} V(\sigma)_{ijk}} \mper
  \]
\end{fact}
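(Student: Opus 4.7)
The plan is to follow the same template as the proof of \cref{fact:mm-path-intersection}, extended to handle the more complex combinatorial structure of general intersections. First, I would expand $\E G^\alpha G^\beta$ using conditional independence of edges given $\sigma$: the shared edges in $\alpha \cap \beta$ contribute a factor $\E[(x_{ab} - d/n)^2 \mid \sigma_a, \sigma_b] = (d/n)(1 + \e\iprod{\tsigma_a, \tsigma_b} + O(d/n))$, while edges in $\alpha \triangle \beta$ contribute the first-moment factor $(d/n)\e\iprod{\tsigma_a, \tsigma_b}$. Expanding the shared-edge product, the leading term (when all shared edges contribute their constant ``$1$'') gives
\[
\E G^\alpha G^\beta = (1 + O(d/n))^r \cdot (d/n)^{6\ell - r} \cdot \e^{6\ell - 2r} \cdot \E_\sigma \prod_{ab \in \alpha \triangle \beta} \iprod{\tsigma_a, \tsigma_b} + (\text{lower-order terms}),
\]
and it suffices to bound the displayed $\sigma$-expectation in terms of $\E V(\sigma)_{ijk}^2$.

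For that, I would expand each inner product as $\sum_s \tsigma_a(s) \tsigma_b(s)$, swap sum and expectation, and use independence of the $\sigma_v$'s across $v \in [n]$ to factor the computation vertex-by-vertex over $V(\alpha \cup \beta)$. By \cref{fact:diagonal-moments} and the analogue for higher moments of zero-sum functions on the Dirichlet simplex, each degree-$2$ vertex in $\alpha \triangle \beta$ contributes a clean factor $C_2 = 1/[k(1+\alpha_0)]$, effectively collapsing maximal paths in $\alpha \triangle \beta$; each higher-degree vertex contributes a factor bounded by $C_2$ times a multiplicative penalty that is at worst polynomial in $k$ and $1+\alpha_0$.

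The key graph-theoretic observation drives the penalty count. Since $\alpha$ and $\beta$ are trees (3-armed stars), the subgraph $\alpha \cap \beta$ (viewed inside $\alpha$) is a forest on $s + 3$ vertices with $r$ edges, hence has exactly $s + 3 - r$ connected components. In the path case of \cref{fact:mm-path-intersection} there are exactly three such components (the paths starting at $i$, $j$, $k$), so $s - r = 0$; in general, each ``excess'' shared component beyond three forces the appearance of extra branching vertices in $\alpha \triangle \beta$, and each such branching vertex costs a multiplicative factor of at most $k(1+\alpha_0)$ when integrated out, producing the $k^{O(s-r)}(1+\alpha_0)^{O(s-r)}$ penalty. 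Comparing to $\E[G^\alpha V(\sigma)_{ijk}] \cdot \E[G^\beta V(\sigma)_{ijk}] = (\e d/n)^{6\ell} C_2^{6(\ell-1)} C_3^2 \E V(\sigma)_{ijk}^2$ from \cref{lem:block-model-unbiased-estimator}, the claimed ratio then drops out: $(d/n)^{-r}$ from the $r$ missing edges in $\alpha \cup \beta$ versus two disjoint copies, $\e^{-2r}$ from the $2r$ edge factors turned into second moments, and $C_2^{-2s}$ from the $2s$ collapsed vertex integrations (once from each of $\alpha, \beta$).

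The main obstacle will be the precise combinatorial accounting: $\alpha \cap \beta$ can be any subforest of $\alpha$ matched with a subforest of $\beta$, so a priori $\alpha \triangle \beta$ admits a variety of degree profiles, and in particular the centers of $\alpha$ and $\beta$ may or may not coincide with each other or with high-degree interior vertices. I would handle this by case analysis on the degree sequence of $\alpha \cap \beta$ in $\alpha$ and $\beta$, bounding the contribution of each vertex whose degree in $\alpha \triangle \beta$ differs from the ``generic'' value of $2$. A secondary technical point is controlling the lower-order terms from expanding $\prod_{ab \in \alpha \cap \beta}(1 + \e \iprod{\tsigma_a, \tsigma_b} + O(d/n))$; each such correction replaces a factor $1$ with $\e$ times an inner product whose variance contributes a further $C_2$, so these corrections are suppressed by $O(\e^2 C_2) = O(1/[k(1+\alpha_0)])$ per shared edge, which is harmless in our regime.
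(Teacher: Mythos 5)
Your proposal is correct and follows essentially the same route as the paper's proof: expand $\E G^\alpha G^\beta$ via conditional independence, absorb the shared edges into $(1+O(d/n))^r (d/n)^{6\ell-r}\e^{6\ell-2r}$, integrate out degree-$2$ vertices of $\alpha\triangle\beta$ at a factor $C_2$ each with only $O(s-r)$ exceptional vertices, and compare against $\E[G^\alpha V_{ijk}]\,\E[G^\beta V_{ijk}]$. The only (cosmetic) difference is that the paper disposes of the cross-terms from $\prod_{ab\in\alpha\cap\beta}(1+\e\iprod{\tsigma_a,\tsigma_b}+O(d/n))$ exactly, via the degree-one-vertex/mean-zero argument, whereas you bound them as suppressed corrections; your explicit forest/component count $s+3-r$ is a slightly more careful version of the paper's implicit ``at least $3\ell-O(s-r)$ degree-$2$ vertices'' claim.
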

\begin{proof}
  Expanding as usual,
  \begin{align*}
  \E G^{\alpha} G^{\beta} = \Paren{\frac dn}^{6\ell - r} \E_\sigma \prod_{ab \in \alpha \triangle \beta} \iprod{\tsigma_a,\tsigma_b} \cdot \prod_{ab \in \alpha \cap \beta} (1 + \e \iprod{\tsigma_a,\tsigma_b} + O(d/n))\mper
  \end{align*}
  Any nontrivial edge-induced subgraph of $\alpha \cap \beta$ contains a degree-1 vertex; using this to expand the second product and simplifying with $\E \tsigma_a = 0$, the above is
  \[
    \Paren{\frac dn}^{6\ell - r} \E_\sigma \prod_{ab \in \alpha \triangle \beta} \iprod{\tsigma_a,\tsigma_b} \cdot (1 + O(d/n))^{r}\mper
  \]
  For every degree-2 vertex in $\alpha \triangle \beta$ we can use Fact~\ref{fact:dirichlet-covariance} to take the expectation.
  Each such vertex contributes a factor of $C_2$ and there are at least $3\ell - O(s-r)$ such vertices.
  The remaining expression will be bounded by $1$.
  The fact follows.
\end{proof}

Now we can prove Lemma~\ref{lem:block-model-cond-indep}.
\begin{proof}[Proof of Lemma~\ref{lem:block-model-cond-indep}]
  Let us recall that our goal is to show
  \[
    \E \Brac{V(\sigma)_{ijk}^2} \cdot \sum_{\alpha,\beta \in \star_\ell(i,j,k) } \E G^\alpha G^\beta \leq \delta^{O(1)} \cdot \sum_{\alpha,\beta \in \star_\ell(i,j,k)} \E \Brac{G^\alpha V(\sigma)_{ijk}} \cdot \E \Brac{G^\beta V(\sigma)_{ijk}}
  \]
  where $\delta = 1 - \tfrac{k^2 (\alpha_0 +1)^2}{\e^2 d}$.
  Let $c = \E \Brac{G^\alpha V(\sigma)_{ijk}} \cdot \E \Brac{G^\beta V(\sigma)_{ijk}}$.
  (Notice this number does not depend on $\alpha$ or $\beta$.)
  The right-hand side above simplifies to $|\star_\ell(i,j,k)|^2 \cdot c$.

  On the left-hand side, what is the contribution from $\alpha,\beta$ sharing $s$ vertices?
  First consider what happens with $s \leq t/2$ and the intersecting vertices form paths in $\alpha$ and $\beta$ starting at $i,j,k$.
  Choosing a random pair $\alpha, \beta$ from $\star_\ell(i,j,k)$, the probability that they intersect along paths of length $s_1, s_2, s_3$ starting at $i,j,k$ respectively is at most $n^{-s_1 - s_2 - s_3}$.
  There are at most $(1 + s^2)$ choices for nonnegative integers $s_1,s_2,s_3$ with $s_1 + s_2 + s_3 = s$.
  By Fact~\ref{fact:mm-path-intersection}, such terms therefore contribute at most
  \[
    c \cdot \frac{|\star_\ell(i,j,k)|^2}{n^{-s}} \cdot \Paren{\epsilon \sqrt{\tfrac d n}(1 + O(d/n))}^{-2s} C_2^{-2s} \cdot s^2
    = c \cdot |\star_\ell(i,j,k)|^2 \cdot (\epsilon^2 d C_2^2 (1 + O(d/n)))^{-s} \cdot s^2
  \]
  where $C_2 = \tfrac 1 {k(\alpha_0 +1)}$.
By hypothesis, $\delta > 0$.
  Consider the sum of all such contributions for $s \leq t/2$; this is at most
  \[
    c \cdot |\star_\ell(i,j,k)|^2 \cdot \sum_{s = 0}^{t/2} (1 + s^2) \cdot \Paren{\tfrac{k^2(\alpha_0+1)^2}{\e^2 d}}^{s} \leq \delta^{O(1)} \cdot c \cdot |\star_\ell(i,j,k)|^2 \mper
  \]

  Next, consider the contribution from $\alpha, \beta$ which share $s$ vertices in some pattern other than those considered above.
  Unless $\alpha = \beta$, this means $\alpha, \beta$ share at least one more vertex than the number $r$ of edges that they share.
  Suppose $\alpha \neq \beta$ and let $s - r = q$.
  There are $t^{O(q)}$ patterns in which such an intersection might occur, and each occurs for a random pair $\alpha, \beta \in \star_\ell(i,j,k)$ with probabilty $n^{-s}$.
  So using Fact~\ref{fact:mm-nonpath-intersection}, the contribution is at most
  \[
    c \cdot |\star_\ell(i,j,k)|^2 \cdot \sum_{q = 1}^{t} \Paren{\frac{\epsilon^2 d}{n}}^q \cdot k^{O(q)} (1+\alpha_0)^{O(q)} t^{O(q)}
  \]
By the hypotheses $k, \alpha = n^{o(1)}$ and $\epsilon^2 d = n^{1 - \Omega(1)}$, this is all $o(c |\star_\ell(i,j,k)|^2)$.

  Finally, consider the case $\alpha = \beta$.
  Then, using Fact~\ref{fact:mm-nonpath-intersection} again, the contribution is at most
  \[
    c \cdot |\star_\ell(i,j,k)|^2 \Paren{\frac{\epsilon^2 d}{k^2 (\alpha_0 + 1)^2}}^{-t} k^{O(1)} \alpha^{O(1)}
  \]
  which is $o(c |\star_\ell(i,j,k)|^2)$ because $t \gg \log(n)$.
  Putting these things together gives the lemma.
\end{proof}

\subsection{Cross validation}
\Snote{}
In this section we show how to use a holdout set of vertices to cross-validate candidate community membership vectors.
The arguments are all standard, using straightforward concentration inequalities.
At the end we prove the first part of Lemma~\ref{lem:xvalid-est}, on the estimator $S_3$.
The proof of the second part, on $S_4$ is similar, using standard facts about moments of the Dirichlet distribution (see Fact~\ref{fact:dirichlet-covariance}).
The proof of Lemma~\ref{lem:xvalid-est-orth} is also similar, using the discussion in Section~\ref{sec:estimator-third-moment} to turn estimators for moments of the $v$ vectors into estimators for moments of the $w$ vectors---we leave it to the reader.

We will need a few facts to prove the lemma.
\begin{fact}
  \label{fact:xvalid-1}
  Let $n_0,n_1,A,k,d,\e,\alpha,\sigma,v,\tau,G,x,P$ be as  in Lemma~\ref{lem:xvalid-est}.
  Let $a \in A$.
  There is a number $C = C(k,\alpha) \leq \poly(k,\alpha)$ such that
  \begin{align*}
  \E_{G,\tau} P_a(G,x)  = \Paren{ \frac{\e d}{n}}^3 \cdot C \cdot \sum_{ijk \in \overline{A} \text{ distinct}} \sum_{s \in [k]} \sigma_i(s) \sigma_j(s) \sigma_k(s) x_i x_j x_k\mper
  \end{align*}
\end{fact}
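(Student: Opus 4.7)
The plan is a direct two-step conditional-expectation calculation that mirrors the unbiased-estimator arguments used earlier (e.g., Fact~\ref{fact:mm-warmup-unbiased} and Lemma~\ref{lem:block-model-unbiased-estimator}), except now the ``center'' vertex $a \in A$ is labeled by an independent $\tau_a$ while the three leaves $i,j,k$ lie in the holdout complement $\overline{A}$ with labels $\sigma_i,\sigma_j,\sigma_k$.

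First, I will fix $\sigma$ and $\tau_a$ and take expectation over the graph. By definition of the mixed-membership model, for each potential edge $(a,b)$ with $b\in \overline{A}$ one has $\E[G_{ab}-\tfrac dn \mid \tau_a,\sigma_b] = \tfrac{\e d}{n}\iprod{\tilde\tau_a,\tsigma_b}$, where tildes denote centering by $\tfrac 1k \mathbf 1$. For distinct $i,j,k\in\overline{A}$ the three edge indicators $G_{ai},G_{aj},G_{ak}$ depend on disjoint Bernoulli randomness and are conditionally independent given $(\tau_a,\sigma)$, so
\[
\E_G\bigl[(G_{ai}-\tfrac dn)(G_{aj}-\tfrac dn)(G_{ak}-\tfrac dn) \mid \tau_a,\sigma\bigr] = \Paren{\tfrac{\e d}{n}}^3 \iprod{\tilde\tau_a,\tsigma_i}\iprod{\tilde\tau_a,\tsigma_j}\iprod{\tilde\tau_a,\tsigma_k}.
\]

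Next, I will average over $\tau_a\sim\Dir(\alpha)$. Each centered vector $\tsigma_b$ has coordinates summing to zero, so the third-moment identity in Fact~\ref{fact:diagonal-moments} applies and gives
\[
\E_{\tau_a}\iprod{\tilde\tau_a,\tsigma_i}\iprod{\tilde\tau_a,\tsigma_j}\iprod{\tilde\tau_a,\tsigma_k} = C_3(k,\alpha)\sum_{s\in[k]} \tsigma_i(s)\tsigma_j(s)\tsigma_k(s),
\]
where $C_3(k,\alpha)$ is the same constant as in Fact~\ref{fact:diagonal-moments} and is bounded by $1\le\poly(k,\alpha)$. Summing over distinct triples $(i,j,k)\in\overline{A}^3$ with weights $x_ix_jx_k$, pulling the numerical factor out, and setting $C=C_3(k,\alpha)$ yields the claimed identity (reading the $\sigma_i(s)$ in the statement as the centered quantity $v_s(i)=\sigma_i(s)-\tfrac1k$, as forced by the centering in the definition of $P_a$ and as needed downstream when $S_3$ is identified with $\sum_s \iprod{v_s,x}^3/\|v_s\|^3$).

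I expect no real obstacle: the whole derivation is a routine unwinding of conditional independence plus one application of the Dirichlet third-moment formula, and every ingredient has already been set up in the paper. The only mild subtlety to be careful about is the implicit centering convention for $\sigma_i(s)$ noted above, but this is immediate from the computation and does not require any additional work.
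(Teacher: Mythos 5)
Your proof is correct and takes essentially the same route as the paper, whose proof is simply ``Immediate from Fact~\ref{fact:diagonal-moments}'': conditioning on the labels, using conditional independence of the three bipartite edges incident to $a$, and then applying the Dirichlet third-moment identity is exactly the intended argument. Your note about centering is also right --- the computation produces $\tsigma_i(s)\tsigma_j(s)\tsigma_k(s) = v_s(i)v_s(j)v_s(k)$ rather than the uncentered product, which is what the downstream identification of $S_3$ with $\sum_s \iprod{v_s,x}^3/\norm{v_s}^3$ requires, so the $\sigma_i(s)$ in the statement should indeed be read as the centered quantity.
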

\begin{proof}
  Immediate from Fact~\ref{fact:diagonal-moments}.
\end{proof}

\begin{fact}
  \label{fact:xvalid-2}
  Let $n_0,n_1,A,k,d,\e,\alpha,\sigma,v,\tau,G,x,P$ be as  in Lemma~\ref{lem:xvalid-est}.
  Let $a \in A$.
  The following variance bound holds.
  \[
    \E_{G,\tau} P_a(G,x)^2 - \Paren{\E_{G,\tau} P_a(G,x)}^2 \leq \frac{\poly(k,\alpha,\e,d)}{n^{3}}\mper
  \]
\end{fact}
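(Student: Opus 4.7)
The plan is a direct variance computation obtained by expanding $P_a(G,x)^2$ over pairs of ordered triples $(i,j,k), (i',j',k') \in \overline{A}^3$ (with the distinctness restriction already built in), and subtracting off the analogous expansion of $(\E P_a)^2$. The key structural fact that makes the computation tractable is that, conditional on $\tau_a$, the centered edge indicators $\chi_{a\ell} \defeq G_{a\ell} - d/n$ for $\ell \in \overline{A}$ are mutually independent, with $\E[\chi_{a\ell} \mid \tau_a] = (\e d/n)\iprod{\tilde\tau_a,\tilde\sigma_\ell}$ and $\E[\chi_{a\ell}^2 \mid \tau_a] = (d/n)(1 + O(\e))$. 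Using the law of total variance, I would split $\Var(P_a) = \E_{\tau_a}\Var(P_a \mid \tau_a) + \Var_{\tau_a}(\E[P_a \mid \tau_a])$ and bound the two pieces separately.

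For the first piece, I would decompose the sum over pairs of triples $(T,T')$ according to $s = |T \cap T'|$. When $s = 0$, conditional independence of the six distinct $\chi$'s makes the contribution to $\Var(P_a \mid \tau_a)$ vanish. When $s \geq 1$, $s$ of the $\chi$'s appear squared and $6-2s$ appear singly, so each such term is bounded in absolute value by $(d/n)^s (\e d/n)^{6-2s}$ times a product of $|\iprod{\tilde\tau_a,\tilde\sigma_\ell}| \leq O(1)$ factors, which after $\tau_a$-integration using Fact~\ref{fact:diagonal-moments} contribute at most $\poly(k,\alpha)$. The number of ordered pairs $(T,T')$ with overlap pattern $s$ is $O(n^{6-s})$ and the $x$-factors contribute at most $\|x\|_\infty^6 \leq O(1)$ (assuming, as is the case in all applications, that $\|x\| \leq 1$ and $x$ is close to uniform; otherwise one uses $(\sum_i|x_i|)^6 \leq n^3\|x\|^6$ with Cauchy--Schwarz). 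Summing over $s = 1, 2, 3$ yields a bound of order $\sum_{s=1}^3 \poly(k,\alpha) \, \e^{6-2s} d^{6-s} / n^s$, whose dominant term (at $s = 3$) is $\poly(k,\alpha)\, d^3/n^3$.

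For the second piece, $\E[P_a \mid \tau_a] = (\e d/n)^3 \sum_{T \text{ distinct}} \iprod{\tilde\tau_a,\tilde\sigma_i}\iprod{\tilde\tau_a,\tilde\sigma_j}\iprod{\tilde\tau_a,\tilde\sigma_k}\, x_i x_j x_k$. Squaring and taking expectation over $\tau_a$ gives sixth-order moments of $\tilde\tau_a$, whose leading contribution equals $(\E_{\tau_a} \E[P_a \mid \tau_a])^2$ by Fact~\ref{fact:diagonal-moments}; the remaining deviation is controlled by the higher Dirichlet moments, which introduce at most a $\poly(k,\alpha)$ factor and, crucially, an extra power of $1/n^3$ coming from the fact that the ``diagonal collapses'' in the sixth-moment tensor of the Dirichlet force at least three of the six indices $i,j,k,i',j',k'$ to coincide. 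Combining, $\Var_{\tau_a}(\E[P_a \mid \tau_a]) \leq \poly(k,\alpha,\e,d)/n^3$.

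The main obstacle is bookkeeping the second piece cleanly: one must show that the $\tau_a$-variance of the conditional mean vanishes to the correct order in $1/n$, which requires extracting cancellation from the sixth-moment tensor of the Dirichlet distribution rather than just bounding it crudely. Concretely, the dangerous contribution would be $(\e d/n)^6$ times a sum over $n^6$ terms, which is only $O(\e^6 d^6)$---one needs to observe that the rank-one part (proportional to the squared mean) cancels and the residual tensor behaves like a degree-three analogue of Fact~\ref{fact:diagonal-moments}, supported on triples of collisions among the six indices, reducing the effective count of terms by a factor of $n^3$. With this, both pieces combine to give the claimed bound $\poly(k,\alpha,\e,d)/n^3$.
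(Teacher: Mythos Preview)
Your decomposition via the law of total variance is correct in outline but unnecessarily elaborate, and your handling of the second piece contains a gap that happens not to matter.

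The paper bounds the full second moment $\E_{G,\tau} P_a(G,x)^2$ directly, which already dominates the variance. Using $|\iprod{\tilde\tau_a,\tilde\sigma_\ell}|\le 1$ and conditional independence of the centered edge indicators given $\tau_a$, each pair of triples with overlap $s$ contributes an edge factor of order $(d/n)^{6-s}$, while the corresponding $x$-factors sum to at most $(\sum_i|x_i|)^6 \le n^3\|x\|^6$ by Cauchy--Schwarz. Every overlap class $s\in\{0,1,2,3\}$ therefore contributes $O(d^{6-s}/n^3)$, and one obtains $\E P_a^2 \le \poly(d,\e)/n^3$ with no variance splitting and no Dirichlet-moment cancellation. (The paper's displayed uniform bound $(d/n)^6$ is loose for $s\ge 1$, but the case-by-case refinement is immediate and does not change the conclusion.)

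Your ``main obstacle'' is thus illusory: the second piece already satisfies the crude bound
\[
\Var_{\tau_a}\bigl(\E[P_a\mid\tau_a]\bigr)\;\le\;\E_{\tau_a}\bigl(\E[P_a\mid\tau_a]\bigr)^2\;\le\;\Paren{\tfrac{\e d}{n}}^{6}\Paren{\tsum_i|x_i|}^{6}\;\le\;\frac{\e^6 d^6}{n^3}\,,
\]
using only $|\iprod{\tilde\tau_a,\tilde\sigma_\ell}|\le 1$. Moreover, the cancellation mechanism you sketch is incorrect as stated: subtracting $(\E_{\tau_a}\E[P_a\mid\tau_a])^2$ does \emph{not} kill the disjoint-triple terms, because for disjoint $T,T'$ the functions $g_T(\tau_a),g_{T'}(\tau_a)$ both depend on the same $\tau_a$ and have nonzero covariance in general. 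The sixth-moment structure of the Dirichlet governs collisions among coordinate indices in $[k]$, not among vertex indices in $[n]$, so it does not produce the factor-$n^3$ reduction you claim. Fortunately none of this is needed.
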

\begin{proof}
  Expanding $P_a(G,x)$ and using that $|\iprod{\sigma,\sigma'}| \leq 1$ for any $\sigma, \sigma' \in \Delta_{k-1}$ we get
  \[
    \E_{G,\tau} P_a(G,x)^2  \leq \Paren{\frac{d}{n}}^6 \sum_{\substack{ijk \text{ distinct}\\{i'j'k' \text{ distinct}}}} \Abs{x_i x_j x_k x_{i'} x_{j'} x_{k'} } \leq \Paren{\frac{d}{n}}^6 \cdot n^3 \cdot \|x\|^{12}\mper
  \]
\end{proof}

\begin{fact}
  \label{fact:xvalid-3}
  Let $n_0,n_1,A,k,d,\e,\alpha,\sigma,v,\tau,G,x,P$ be as  in Lemma~\ref{lem:xvalid-est}.
  Let $a \in A$.
  For some constant $\gamma_*(\e,d,k,\alpha)$ and every $\gamma_* > \gamma > 0$,
  \[
    \Pr_{G,\tau} \left \{ |P_a(G,x)| > n^{\gamma} \right \} \leq \exp(-n^{\Omega(\gamma)})
  \]
\end{fact}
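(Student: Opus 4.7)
The plan is to prove concentration of the trilinear form $P_a(G,x)$ by reducing it to a scalar Bernstein inequality applied to the linear form $\iprod{Y,x}$, where $Y_i \defeq G_{ai}-d/n$ for $i\in\overline A$.

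First I would condition on the label $\tau_a$. This renders the edge indicators $\{G_{ai}\}_{i\in\overline A}$ mutually independent, since in the mixed-membership model edges are independent given the vertex labels. The centered variables $Y_i$ then satisfy $|Y_i|\le 1$, $\Var(Y_i\mid \tau_a)\le d/n$, and $|\E[Y_i\mid \tau_a]|\le \e d/n$. Applying Bernstein's inequality to $L\defeq \sum_{i\in\overline A} x_i Y_i$ gives
\[
\Pr\bigl[|L|>n^{\gamma/3}\,\big|\,\tau_a\bigr]\le \exp\bigl(-n^{\Omega(\gamma)}\bigr),
\]
provided $\gamma<\gamma_*(\e,d)$ is small enough that $n^{\gamma/3}$ dominates the contribution $(\e d/\sqrt n)\|x\|$ to the mean of $L$, and assuming (as is the relevant regime here) that $\|x\|=O(1)$.

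Next I would relate $P_a$ to $L^3$ by a combinatorial expansion. Writing $L^3$ as a sum over all ordered triples of indices and separating the diagonal contributions yields
\[
L^3 \;=\; P_a(G,x) \;+\; 3\sum_{\substack{i\ne j\\ i,j\in\overline A}} x_i^2 x_j\, Y_i^2 Y_j \;+\; \sum_{i\in\overline A} x_i^3 Y_i^3.
\]
Using $|Y_i|\le 1$, the last sum is bounded by $\|x\|_\infty\|x\|^2$, and the middle sum by $O(\|x\|^2(|L|+\|x\|_\infty))$ after pulling out $\sum_i x_i^2 Y_i^2\le\|x\|^2$ and writing $\sum_{j\ne i} x_j Y_j = L - x_iY_i$. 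Hence on the event $\{|L|\le n^{\gamma/3}\}$ we obtain $|P_a(G,x)|\le |L|^3 + O(|L|)+O(1)\le O(n^\gamma)$. Since the Bernstein bound is uniform in $\tau_a$, integrating over $\tau_a$ preserves the tail estimate, yielding
\[
\Pr_{G,\tau}\bigl[|P_a(G,x)|>n^\gamma\bigr]\le \exp\bigl(-n^{\Omega(\gamma)}\bigr)
\]
after absorbing the multiplicative constant into $\gamma$.

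The only real subtleties are bookkeeping ones: verifying that the constants in Bernstein's inequality depend on $(\e,d,k,\alpha)$ only through the eventual definition of $\gamma_*$, and checking that the small but nonzero conditional mean of $L$ does not spoil the tail bound. Both are routine; no deep ideas are required, and an alternative black-box route via polynomial concentration (Kim--Vu or Schudy--Sviridenko) would give the same conclusion with slightly more opaque constants.
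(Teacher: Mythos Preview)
Your argument is correct, but it takes a different route than the paper. The paper's proof is literally one sentence: it invokes a standard exponential tail bound on the degree of vertex $a$. The idea is that $\Pr[\deg(a)>n^{\gamma'}]\le\exp(-\Omega(n^{\gamma'}))$ by a Chernoff/Poisson tail (the expected degree is $d=n^{o(1)}$), and once $\deg(a)\le n^{\gamma'}$ the quantity $P_a(G,x)$ is bounded \emph{deterministically}: the dominant contribution comes from triples $i,j,k$ all adjacent to $a$, giving $|P_a|\le\bigl(\sum_{i\sim a}|x_i|\bigr)^3\le(\deg a)^{3/2}\|x\|^3$ by Cauchy--Schwarz, while the remaining cases are lower order since non-edges contribute only $|Y_i|=d/n$.

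Your approach---conditioning on $\tau_a$, applying Bernstein directly to the linear form $L=\sum_i x_iY_i$, and then recovering $P_a$ from the $L^3$ expansion---is more elaborate but equally valid, and arguably more systematic: it extends cleanly to the degree-$4$ estimator $Q_a$ without redoing a case analysis, and your remark about Kim--Vu/Schudy--Sviridenko is in the same spirit. The paper's route is shorter for this particular statement because the single probabilistic input is the degree bound, after which everything is deterministic; your route front-loads the probability into the Bernstein step instead. Both ultimately rest on the same observation that $\sum_i Y_i^2\approx\deg(a)$ is what controls $P_a$.
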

\begin{proof}
  The fact follows from a standard exponential tail bound on the degree of vertex $a$.
\end{proof}

We can put these facts together to prove the $S_3$ portion of Lemma~\ref{lem:xvalid-est} (as we discussed above, the $S_4$ portion and Lemma~\ref{lem:xvalid-est-orth} are similar).
The strategy will be to use the following version of Bernstein's inequality, applied to the random variables $\iprod{G_a, v^{\tensor 3}}$.
The proof of the inequality is in the appendix.
\begin{proposition}[Bernstein wth tails]\label{prop:bernstein-tails}
  Let $X$ be a random variable satisfying $\E X = 0$ and, for some numbers $R,\delta,\delta' \in \R$,
  \[
    \Pr \{ |X| > R \} \leq \delta \text{ and } \E|X|\cdot \Ind_{|X| > R} \leq \delta'\mper
  \]
  Let $X_1,\ldots,X_m$ be independent realizations of $X$.
  Then
  \[
    \Pr \left \{ \Abs{\tfrac 1 m \sum_{i \leq m} X_i} \geq t + \delta' \right \} \leq \exp\Paren{\frac{- \Omega(1) \cdot m \cdot t^2}{\E X^2 + t\cdot R}} + m \delta \mper
  \]
\end{proposition}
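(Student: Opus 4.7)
The plan is to reduce the statement to the classical Bernstein inequality via a truncation argument, handling the untruncated regime by a simple union bound.

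First I would define the truncated random variables $\tilde X_i = X_i \cdot \Ind_{|X_i| \leq R}$ for each $i \in [m]$. By the hypothesis $\Pr\{|X| > R\} \leq \delta$ and a union bound, the event $\cE = \{\tilde X_i = X_i \text{ for all } i \leq m\}$ occurs with probability at least $1 - m\delta$. Hence up to the additive $m\delta$ term in the conclusion, it suffices to bound the probability that $|\tfrac 1m \sum_i \tilde X_i| \geq t + \delta'$.

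Next, I would center the truncated variables. Since $\E X = 0$, the mean of $\tilde X$ satisfies
\[
|\E \tilde X| = |\E X \cdot \Ind_{|X| \leq R}| = |\E X \cdot \Ind_{|X| > R}| \leq \E|X| \cdot \Ind_{|X| > R} \leq \delta'.
\]
By the triangle inequality, $|\tfrac 1m \sum_i \tilde X_i| \leq |\tfrac 1m \sum_i (\tilde X_i - \E \tilde X)| + \delta'$, so it is enough to show
\[
\Pr\!\left\{\left| \tfrac 1m \sum_i (\tilde X_i - \E \tilde X) \right| \geq t \right\} \leq \exp\!\left(\frac{-\Omega(1) \cdot m \cdot t^2}{\E X^2 + tR}\right).
\]

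The centered truncated variables $Y_i \defeq \tilde X_i - \E \tilde X$ are independent, mean-zero, and satisfy $|Y_i| \leq 2R$ almost surely, with $\sum_i \Var Y_i \leq m \cdot \E \tilde X^2 \leq m \cdot \E X^2$. The classical Bernstein inequality then gives
\[
\Pr\!\left\{\left| \sum_i Y_i \right| \geq mt \right\} \leq 2\exp\!\left(\frac{-(mt)^2/2}{m \cdot \E X^2 + (2R)(mt)/3}\right) = 2\exp\!\left(\frac{-\Omega(1) \cdot m t^2}{\E X^2 + tR}\right),
\]
which combined with the union bound above yields the claimed inequality.

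There is no real obstacle here — the proof is essentially bookkeeping around classical Bernstein. The only subtle step is verifying that the mean shift induced by truncation is bounded by $\delta'$, which is exactly what the second hypothesis $\E |X| \Ind_{|X| > R} \leq \delta'$ is designed to give.
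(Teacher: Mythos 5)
Your proof is correct and follows essentially the same route as the paper's: truncate at $R$, union-bound the probability that any $X_i$ exceeds the truncation level, absorb the at-most-$\delta'$ mean shift of the truncated variable, and apply classical Bernstein. If anything, your write-up is slightly more careful than the paper's, since you explicitly center the truncated variables and verify the almost-sure bound $|Y_i|\le 2R$ before invoking Bernstein.
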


Now we can prove Lemma~\ref{lem:xvalid-est}.

\begin{proof}[Proof of Lemma~\ref{lem:xvalid-est}]
  We apply Proposition~\ref{prop:bernstein-tails} to the $n_1$ random variables $X_a = \Paren{\tfrac{\e d} n}^{-3} C^{-1} P_a(G,x)$ for $a \in A$, where $C = C(k,\alpha)$ is the number from Fact~\ref{fact:xvalid-2}.
  (For each $a \in A$ these are iid over $G,\tau$.)
  Take $t = n^{3/2 - \gamma'}$ for a small-enough constant $\gamma'$ so that $n_1t^2/n^3 \geq n^{\gamma}$ for some constant $\gamma$, using the assumption $n_1 \geq n^{\Omega(1)}$.
  All together, we get
  \[
  \Pr_{G,\tau} \left \{ \Abs{\frac 1 {n_1} \sum_{a \in A} X_a - \sum_{s \in [k]} \sum_{ijk \in \overline A \text{ distinct}} \sigma_s(i) \sigma_s(j) \sigma_s(k) x_i x_j x_k } \geq n^{3/2 - \gamma'} \right \} \leq \exp(n^{-\gamma'})
  \]
  for some constants $\gamma, \gamma'$ (possibly different from $\gamma, \gamma'$ above) and large-enough $n$.
  For any unit $x \in \R^{n_0}$ and $\sigma \in \Delta_{k-1}^{n_0}$, using that $k \leq n^{o(1)}$ it is not hard to show via Cauchy-Schwarz that
  \[
  \Abs{\sum_{s \in [k]} \iprod{v_s,x}^3 - \sum_{s \in [k]} \sum_{ijk \in \overline A \text{ distinct}} \sigma_s(i) \sigma_s(j) \sigma_s(k) x_i x_j x_k} \leq n^{1 + o(1)}\mper
  \]
  The lemma follows.
\end{proof}

\subsection{Producing probability vectors}
\newcommand{\ov}{\overline{v}}
\newcommand{\ow}{\overline{w}}

In this section we prove Lemma~\ref{lem:cleanup}.
The proof of Lemma~\ref{lem:cleanup-2} is very similar (in fact it is somewhat easier) so we leave it to the reader.
\restatelemma{lem:cleanup}
 First some preliminaries.
Let $\sigma_1,\ldots,\sigma_n$ be iid from the $\alpha,k$ Dirichlet distribution.
There are two important families of vectors in $\R^n$.
Let
\[
v_s(i) = \sigma_i(s) - \frac 1k \qquad w_s(i) = \sigma_i(s) - \frac 1k\Paren{1 - \frac 1 {\sqrt{\alpha+1}}}\mper
\]
We will also work with a normalized version of the $v$ vectors:
\[
  \overline{v}_s = \frac{v_s}{(\E\|v_s\|^2)^{1/2}}\mper
\]
By construction, $\E\|\overline{v}_s\|^2 = 1$.
Also by definition, $\sum_s v_s = \sum_s \overline{v}_s = 0$.
Thus $\E \iprod{\sum_s \ov_s ,\sum_s \ov_s} = k + \sum_{s \neq t} \E\iprod{\ov_s,\ov_t} = 0$ and so by symmetry $\E \iprod{\ov_s,\ov_t} = \frac{-1}{k-1}$.
We let
\[
\ow_s = \ov_s + \frac 1{\sqrt n} \cdot \sqrt{\frac 1 {k-1}}
\]
so that $\E \iprod{\ow_s,\ow_t} = 0$ for $s \neq t$.
(In the facts which follow we sometimes write $\ov$ as $v$ when both normalizations are not needed; this is always noted.)

We will want the following fact; the proof is elementary.
\begin{fact}\label{fact:v-corr}
  Let $\sigma,u,v,w$ as above, and suppose $y$ is an $n \times k$ matrix whose rows are in $\Delta_{k-1} - \tfrac 1k$ (that is they are shifted probability vectors).
  Then $\tau = y + \tfrac 1k$ is a matrix whose rows are probability vectors, and $\tau$ satisfies
  \[
    \iprod{\tau,\sigma} \geq \iprod{y,v} + \frac nk\mper
  \]
\end{fact}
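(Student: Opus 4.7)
The plan is to expand $\iprod{\tau,\sigma}$ directly and observe that in fact equality holds, which is strictly stronger than the claimed inequality. I would write $\tau = y + \tfrac{1}{k} J$ and $\sigma = v + \tfrac{1}{k} J$, where $J \in \R^{n\times k}$ is the all-ones matrix; here I am using that $v$ has $(i,s)$-entry $v_s(i) = \sigma_i(s) - \tfrac{1}{k}$, so the Frobenius pairing $\iprod{y,v}$ matches the natural pairing on $n\times k$ matrices and the shift by $\tfrac{1}{k} J$ is literal in that pairing. By bilinearity,
\[
\iprod{\tau,\sigma} \;=\; \iprod{y,v} \;+\; \tfrac{1}{k}\iprod{J,v} \;+\; \tfrac{1}{k}\iprod{y,J} \;+\; \tfrac{1}{k^2}\iprod{J,J}.
\]

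Next I would verify that the two cross terms vanish as a consequence of the row-sum constraints. For the first, $\iprod{J,v} = \sum_{i,s}\bigl(\sigma_i(s)-\tfrac{1}{k}\bigr) = 0$ since each row of $\sigma$ is a probability vector and hence sums to $1$. For the second, $\iprod{y,J} = \sum_{i,s} y_i(s) = 0$ since each row of $y$ lies in $\Delta_{k-1} - \tfrac{1}{k}$ and therefore sums to $0$. The remaining term is $\tfrac{1}{k^2}\iprod{J,J} = \tfrac{nk}{k^2} = \tfrac{n}{k}$. Combining the four contributions yields $\iprod{\tau,\sigma} = \iprod{y,v} + \tfrac{n}{k}$, which implies the inequality stated in the fact (with equality, in fact). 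There is no real obstacle; the only point requiring care is consistent bookkeeping of the indices $i$ (over vertices) and $s$ (over communities) so that the Frobenius pairing and the additive decomposition $\sigma = v + \tfrac{1}{k}J$ are consistent, as set up in the opening paragraph.
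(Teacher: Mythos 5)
Your proof is correct: the paper omits the argument as ``elementary,'' and your direct expansion $\iprod{y+\tfrac 1k J,\, v+\tfrac 1k J}$ with the two cross terms vanishing from the row-sum constraints is exactly the intended computation, and it even yields equality rather than just the stated inequality. No gaps.
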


The following fact will be useful when $\delta$ is small but not tiny; i.e. $\delta < 1 - c$ for some fixed constant $c$ but $\delta \gg 1/\sqrt k$.
\begin{fact}
\label{fact:permutation-small}
  Suppose that $x_1,\ldots,x_k$ are unit vectors and $w_1,\ldots,w_k$ are orthonormal.
  Also suppose that there is $1 > \delta > 0$ such that for at least $\delta k$ vectors $w_s$ among $w_1,\ldots,w_k$ there exists a vector $x_t$ among $x_1,\ldots,x_k$ such that $\iprod{w_s,x_t} \geq \delta$.
  Then there is a permutation $\pi : [k] \rightarrow [k]$ such that if $x = (x_1,\ldots,x_k)$ is an $n \times k$ matrix and similarly for $w$,
  \[
  \iprod{x, \pi \cdot w} \geq \Paren{\delta^5 - \frac 1 {\sqrt k}\Paren{\frac 1 {1-\delta^4}}^{1/2}} \|x\|\|w\|\mcom
  \]
  where $x = (x_1,\ldots,x_k)$ is an $n \times k$ matrix and similarly for $w$.
\end{fact}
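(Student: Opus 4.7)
The plan is to view the statement as a bipartite assignment problem for the matrix $M \in \R^{k \times k}$ with entries $M_{s,t} = \iprod{w_s, x_t}$. Since $\|x\| = \|w\| = \sqrt k$ (Frobenius), we have $\iprod{x, \pi \cdot w} = \sum_t M_{\pi(t), t}$ and $\|x\|\|w\| = k$, so it suffices to exhibit a permutation $\pi$ with $\sum_t M_{\pi(t),t} \geq \delta^5 k - \sqrt{k/(1-\delta^4)}$. Two structural facts about $M$ will be used throughout: orthonormality of $w_1,\ldots,w_k$ gives $\sum_s M_{s,t}^2 \leq \|x_t\|^2 \leq 1$ for every $t$ (hence $\|M\|_F^2 \leq k$), and the same bound forces at most $1/\delta^2$ entries of size $\geq \delta$ in any single column.

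The first main step is to exhibit a partial matching of weight $\geq \delta^4 k$. Form the bipartite graph $H$ on $(\text{good }s) \times [k]$ with an edge at $(s,t)$ iff $M_{s,t} \geq \delta$. By hypothesis there are $\geq \delta k$ left vertices, each of degree $\geq 1$, while every right vertex has degree $\leq 1/\delta^2$. A K\"onig-style vertex-cover calculation now yields a maximum matching $\mu$ of size $|\mu| \geq \delta^3 k$: a vertex cover using $a$ left vertices and $b$ right vertices must leave $\delta k - a$ uncovered good vertices each sending an edge into the right-side cover, so $\delta k - a \leq b/\delta^2$, and minimizing $a+b$ under this constraint with $a,b \geq 0$ forces $a+b \geq \delta^3 k$. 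Every edge of $\mu$ carries weight $\geq \delta$, so $W_\mu := \sum_{(s,t)\in \mu} M_{s,t} \geq \delta^4 k$.

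Next I extend $\mu$ to a permutation $\pi$ of $[k]$ by choosing a bijection $\pi' : U_x \to U_w$ between the unmatched columns and unmatched rows that maximizes the extension's contribution. Averaging over all such bijections gives
\[
\max_{\pi'} \sum_{t \in U_x} M_{\pi'(t), t} \;\geq\; \frac{1}{|U_x|} \sum_{s \in U_w,\, t \in U_x} M_{s,t},
\]
and Cauchy--Schwarz on $\iprod{M_{U_w, U_x}, J}$ yields $\bigl|\sum_{s,t} M_{s,t}\bigr| \leq \|M_{U_w,U_x}\|_F \cdot |U_x| \leq \sqrt k \cdot |U_x|$. Hence some extension contributes at least $-\sqrt k$, and combined with $W_\mu$ we obtain a permutation $\pi$ with $\sum_t M_{\pi(t),t} \geq \delta^4 k - \sqrt k$.

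The stated inequality then follows from a purely algebraic comparison: subtracting the target from the achieved bound gives
\[
(\delta^4 k - \sqrt k) - \Paren{\delta^5 k - \sqrt{k/(1-\delta^4)}} \;=\; \delta^4(1-\delta)\, k \;+\; \sqrt k\,\Paren{\tfrac{1}{\sqrt{1-\delta^4}} - 1},
\]
which is a sum of two nonnegative quantities for every $\delta \in (0,1)$. Dividing by $\|x\|\|w\|=k$ recovers the form in the statement. The one delicate step is the matching-size bound in Step~2 via K\"onig's theorem (a weaker bound of $\delta^3 k / 2$ from a greedy matching is insufficient for $\delta$ close to $1$); the extension step is then just Cauchy--Schwarz combined with averaging over bijections, and the final comparison is elementary.
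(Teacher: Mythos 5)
Your proof is correct and follows essentially the same two-step strategy as the paper's: first a partial matching on the $\delta$-correlated pairs, driven by the observation that orthonormality of the $w_s$ lets each $x_t$ be $\delta$-good for at most $1/\delta^2$ rows, and then an averaged extension to a full permutation controlled by Cauchy--Schwarz. Your local variants --- K\"onig's theorem in place of the paper's greedy selection (both give a matching of size $\delta^3 k$, of which the paper only uses $\delta^4 k$), and a global Frobenius-norm bound in place of the paper's per-column second-moment bound for the extension --- produce the slightly stronger intermediate bound $\delta^4 k - \sqrt k$, and your closing algebraic comparison correctly verifies that this implies the stated inequality.
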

\begin{proof}
  We will think of $\pi$ as a matching of $w_1,\ldots,w_k$ to $x_1,\ldots,x_k$.
  Call $x_t$ \emph{good} for $w_s$ if $\iprod{w_s,x_t} \geq \delta$.
  First of all, by orthogonality of vectors $w_1,\ldots,w_k$, any particular vector $x_t$ is good for at most $1/\delta^2$ vectors $w_s$.
  Hence, there is a set $S$ of $\delta^4 k$ vectors $w_s$ such that for each $w_s$ there exists a good $x_t$ and all the good $x_t$'s are distinct.

  Begin by matching each $w_s \in S$ to its good $x_t$.
  Let $\pi$ be the result of extending that matching randomly to a perfect matching of $k$ to $k$.

  We need to lower bound $\E \sum_{s \notin S} \iprod{w_s,x_{\pi (s)}}$.
  Consider that for a particular $t$,
  \[
  \E -\iprod{x_t, w_{\pi^{-1}(t)}} \leq (\E \iprod{x_t, w_{\pi^{-1}(t)}}^2)^{1/2}\mper
  \]
  The distribution of $\pi^{-1}(t)$ is uniform among all $s \notin S$.
  So
  \[
    \E \iprod{x_t, w_{\pi^{-1}(t)}}^2 = \frac 1 {k - |S|} \sum_{s \notin S} \iprod{w_s,x_t}^2 \leq \frac 1k \Paren{\frac 1 {1 - \delta^4}}
  \]
  since $\sum_{s \in [k]} \iprod{w_s,x_t}^2 \leq 1$.
  It follows that
  \[
    \E \iprod{x_t, w_{\pi^{-1}(t)}} \geq - \frac 1 {\sqrt k} \Paren{\frac 1 {1-\delta^4}}^{1/2}\mper
  \]
  Therefore, $\E \sum_{s \notin S} \iprod{w_s, x_{\pi(s)}} \geq - \sqrt k\Paren{\frac 1 {1-\delta^4}}^{1/2}$.
  Thus there is some choice of $\pi$ such that $\sum_{s \notin S} \iprod{w_s, x_{\pi(s)}} \geq - \sqrt k\Paren{\frac 1 {1-\delta^4}}^{1/2}$.
  Hence for this $\pi$ one gets
  \[
    \sum_{s \in [k]} \iprod{w_s,x_{\pi(s)}} \geq \delta^5 k - \sqrt k\Paren{\frac 1 {1-\delta^4}}^{1/2} = \Paren{\delta^5 - \frac 1 {\sqrt k}\Paren{\frac 1 {1-\delta^4}}^{1/2}} \|x\|\|w\|\mper\qedhere
  \]
\end{proof}

The next fact serves the same purpose as the previous one but in the large $\delta$ case (i.e. $\delta$ close to $1$).
\begin{fact}
\label{fact:permutation-large}
  Under the same hypotheses as Fact~\ref{fact:permutation-small}, letting $\delta = 1 - \e$ for some $\e > 0$, there is a permutation $\pi : [k]  \rightarrow [k]$ such that $\iprod{x,\pi \cdot w} \geq (1 - 9\e) \|x\|\|w\|$.
\end{fact}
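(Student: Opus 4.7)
}
The plan is to build $\pi$ by first fixing the matching on a large ``high-correlation'' subset of indices, then extending arbitrarily, and finally bounding the loss on the extension by the worst-case Cauchy--Schwarz estimate. Since $\|x\|\|w\| = k$ (both matrices have unit-norm orthonormal/unit columns), the target inequality is
\begin{equation*}
  \sum_{s \in [k]} \iprod{w_s, x_{\pi(s)}} \;\geq\; (1-9\e)\,k.
\end{equation*}
If $\e \geq 1/9$ the right-hand side is non-positive, so any permutation works by Cauchy--Schwarz; hence we may assume $\e < 1/9$, in which case $(1-\e)^2 > 1/2$.

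First I would isolate the ``good'' indices. Let $S \subseteq [k]$ be the set of $s$ for which some $x_t$ satisfies $\iprod{w_s,x_t} \geq 1-\e$; by hypothesis $|S| \geq (1-\e)k$. For each $s \in S$ choose one such partner $t(s)$. The key structural observation is that $t \colon S \to [k]$ is injective. Indeed, suppose $t(s) = t(s') = t$ for distinct $s,s' \in S$. Since $w_s \perp w_{s'}$ are orthonormal, Bessel's inequality gives
\begin{equation*}
  1 \;=\; \|x_{t}\|^2 \;\geq\; \iprod{w_s,x_t}^2 + \iprod{w_{s'},x_t}^2 \;\geq\; 2(1-\e)^2 \;>\; 1,
\end{equation*}
a contradiction. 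So $t$ is an injection, and I extend $t$ to a permutation $\pi$ of $[k]$ arbitrarily on $[k]\setminus S$.

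Finally I bound $\sum_s \iprod{w_s, x_{\pi(s)}}$ by splitting the sum over $S$ and $[k]\setminus S$. For $s \in S$ each term is at least $1-\e$, contributing at least $(1-\e)|S| \geq (1-\e)^2 k$. For $s \notin S$, Cauchy--Schwarz gives $\iprod{w_s, x_{\pi(s)}} \geq -1$, and there are at most $\e k$ such terms, contributing at least $-\e k$. Summing,
\begin{equation*}
  \sum_s \iprod{w_s, x_{\pi(s)}} \;\geq\; (1-\e)^2 k - \e k \;=\; (1 - 3\e + \e^2)\,k \;\geq\; (1 - 9\e)\,k,
\end{equation*}
as desired. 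The only delicate point is the injectivity argument, which is the reason the hypothesis $\delta$ close to $1$ (equivalently $\e$ small) is essential; everything else is an elementary two-term decomposition. The looseness $1-3\e$ versus $1-9\e$ provides considerable slack and could absorb mild generalizations (e.g.\ slightly sub-orthonormal $w_s$'s) without changing the argument.
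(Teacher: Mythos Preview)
Your proof is correct and follows the same overall structure as the paper's: match a large ``good'' set $S$ to distinct partners, extend arbitrarily, and bound the complement by the trivial estimate $\iprod{w_s,x_t}\ge -1$. The one place you differ is in establishing injectivity of the partial matching. The paper reuses the construction from Fact~\ref{fact:permutation-small}, which only guarantees a distinct-partner set of size $\delta^4 k \ge (1-4\e)k$ (via the general observation that each $x_t$ can be good for at most $1/\delta^2$ of the $w_s$), leading to the bound $(1-5\e)k - 4\e k = (1-9\e)k$. You instead exploit the large-$\delta$ regime directly: when $(1-\e)^2 > 1/2$, Bessel's inequality forces each $x_t$ to be good for at most one $w_s$, so the map $t$ is injective on all of $S$ with $|S|\ge (1-\e)k$, yielding the sharper $(1-3\e+\e^2)k$. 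Your route is slightly more elementary and gives a better constant; the paper's route has the minor advantage of literally recycling the earlier construction without a new case analysis.
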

\begin{proof}
  As in the proof of Fact~\ref{fact:permutation-small}, we construct a matching $\pi$ by first matching a set $S$ of at least $\delta^4k \geq (1 - 4\e)k$ vectors $w_s$ to corresponding $x_t$.
  Then we match the remaining vectors arbitrarily.
  For any $s,t$ we know $\iprod{w_s,x_t} \geq -1$.
  So the result is
  \[
  \iprod{x, \pi \cdot w} \geq (1 - 5 \e) k - 4 \e k = (1 - 9 \e) k = (1 - 9\e) \|x\|\|w\|\mper\qedhere
  \]
\end{proof}

We will also want a way to translate a matrix correlated with $w$ to one correlated with $v$, so that we can apply Fact~\ref{fact:v-corr}.
\begin{fact}
  \label{fact:shift-1}
  Suppose $v$ is an $n \times k$ matrix whose rows are centered probability vectors and $w = v + c$ is a coordinate-wise additive shift of $v$.
  Suppose $y$ is also an $n \times k$ matrix whose rows are centered probability vectors shifted by $c$ in each coordinate (so $y - c$ is a matrix of centered probability vectors).
  Then the shifted matrix $y-c$ satisfies
  \[
    \iprod{y-c,v} \geq \iprod{y,w} - c^2 nk\mper
  \]
\end{fact}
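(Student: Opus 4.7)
The plan is to show that the inequality is in fact an equality, obtained by a direct expansion of the two inner products together with the two centering conditions on $v$ and $y$.

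First I would expand
\[
\iprod{y-c,v} = \iprod{y,v} - c \sum_{i,j} v_{ij}.
\]
Since by hypothesis each row of $v$ is a centered probability vector (i.e.\ a probability vector minus $\tfrac{1}{k}\mathbf{1}$), every row sums to $0$, so $\sum_{i,j} v_{ij} = 0$ and hence $\iprod{y-c,v} = \iprod{y,v}$.

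Next I would expand
\[
\iprod{y,w} = \iprod{y, v+c} = \iprod{y,v} + c \sum_{i,j} y_{ij}.
\]
By hypothesis $y - c$ has rows which are centered probability vectors, so each row of $y$ sums to $kc$, and therefore $\sum_{i,j} y_{ij} = nkc$. Substituting yields $\iprod{y,w} = \iprod{y,v} + c^2 nk$.

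Combining the two identities gives $\iprod{y-c,v} = \iprod{y,w} - c^2 nk$, which in particular proves the claimed inequality. There is no real obstacle here; the only thing to be careful of is keeping the two centering conventions straight (rows of $v$ sum to zero, while rows of $y$ sum to $kc$), which is what produces the $c^2 nk$ correction term.
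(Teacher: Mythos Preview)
Your proof is correct and follows essentially the same approach as the paper: both use that the rows of $v$ sum to zero to get $\iprod{y-c,v}=\iprod{y,v}$, and that the rows of $y$ sum to $kc$ to get $\iprod{y,w}=\iprod{y,v}+c^2nk$, yielding the stated identity (indeed an equality, as you observe).
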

\begin{proof}
  By definition, $\iprod{y -c,v} = \iprod{y,v}$. Since $v = w -c$, we get
  \[
    \iprod{y-c,v} = \iprod{y,v} = \iprod{y,w} - c \iprod{y, 1} = \iprod{y,w} - c^2 nk\mper\qedhere
  \]
\end{proof}

\begin{proof}[Proof of Lemma~\ref{lem:cleanup}]
  First assume $\delta < 1 - c$ for any small constant $c$.
  Let $\pi$ be the permutation guaranteed by Fact~\ref{fact:permutation-small} applied to the vectors $x_1,\ldots,x_k$ and $M^{-1/2}w_1,\ldots,M^{-1/2}w_k$.
  (Without loss of generality reorder the vectors so that $\pi$ is the identity permutation.)
  Since $1-c \geq \delta \geq 1/k^{1/C}$ for big-enough $C$ and small-enough $c$ (which are independent of $n,k$) and the guarantee of Fact~\ref{fact:permutation-small}, by event $E$ we get that
  \[
    \iprod{x,w} \geq \delta^{O(1)} \|x\| \|w\|\mper
  \]
  So by taking a correlation-preserving projection of $x$ into the set of matrices whose rows are shifted probability vectors, we get a matrix $y$ with the guarantee
  \[
  \iprod{y,w} \geq \delta^{O(1)} \|y\| \|w\| \quad \text{ and } \quad \|y\| \geq \delta^{O(1)} \|w\|\mper
  \]
  Applying Fact~\ref{fact:shift-1}, we obtain
  \[
  \iprod{y -c,v} \geq \iprod{y,w} - c^2 nk = \iprod{y,w} - \frac{\E \|w\|^2}{k}
  \]
  where $c = \tfrac 1 {k \sqrt{\alpha+1}}$.
  Putting things together and using $\E \|v\|^2 \leq \E \|w\|^2$ and the event $E$, we get
  \[
    \iprod{y-c,v} \geq \delta^{O(1)} \E\|v\|^2\mper
  \]
  So applying Fact~\ref{fact:v-corr} finishes the proof in this case.

  Now suppose $\delta \geq 1 -c$ for a small-enough constant $c$.
  Then using event $E$ and Fact~\ref{fact:permutation-large}, there is $\pi$ such that $\iprod{x,w} \geq (1 - O(c)) \|x\|( \E \|w\|^2)$ (where again we have without loss of generality reordered the vectors so that $\pi$ is the identity permutation).
 Now taking the Euclidean projection of $x \cdot \frac{(\E \|w\|^2)^{1/2}}{\|x\|}$ into the $n \times k$ matrices whose rows are centered probability vectors shifted entrywise by $c = \tfrac 1 {k\sqrt{\alpha+1}}$, we get a matrix $y$ which again satisfies
 $\iprod{y,w} \geq (1 - O(c)) \|y\|\|w\|$ and $\|y\| \geq (1 - O(c)) \|w\|$, so (using event $E$), $\iprod{y,w} \geq (1 - O(c)) \E \|w\|^2$.
 Removing the contribution from $\iprod{y,1}$, this implies that $\iprod{y-c,v} \geq (1 - O(c)) \E \|v\|^2$.
 For $c$ small enough, this is at least $\delta^{O(1)} \E \|v\|^2$.
 Applying Fact~\ref{fact:v-corr} finishes the proof.
\end{proof}

\subsection{Remaining lemmas}
We provide sketches of the proofs of Lemma~\ref{lem:shift-and-whiten} and Lemma~\ref{lem:cleanup}, since the proofs of these lemmas use only standard techniques.

\begin{proof}[Proof sketch of Lemma~\ref{lem:shift-and-whiten}]
  For $\sigma \in \R^k$, let $\tilde{\sigma} = \sigma - (1 - 1/\sqrt{\alpha+1})/k$.
  Standard calculations show that if $\sigma$ is drawn from the $\alpha,k$ Dirichlet distribution then $\E \tsigma \tsigma^\top = \tfrac 1 {k(\alpha+1)} \Id$.
  It follows by standard matrix concentration and the assumption $k,\alpha \leq n^{o(1)}$ that the eigenvalues of $\frac 1n \sum_{i \leq n} \tsigma_i \tsigma_i^\top$ are all $1 \pm n^{-\Omega(1)}$, where $\sigma_1,\ldots,\sigma_n$ are iid draws from the $\alpha,k$ Dirichlet distribution.

  For the second part of the Lemma, use the first part to show that $\Norm{\tfrac{v_s}{\|v_s\|} - w'_s} \leq 1/\poly(k)$.
  Then when $k \geq \delta^{-C}$ for large-enough $C$, if $\iprod{x,v_s}^3 \geq \delta^{O(1)} \|v_s\|^3$ it follows that also $\iprod{x,w_s} \geq \delta^{O(1)} - 1/\poly(k) \geq \delta^{O(1)}$.
  The lemma follows.
\end{proof}

\begin{proof}[Proof sketch of Lemma~\ref{lem:cleanup}]
  If $\delta < 1 - \Omega(1)$, then $\delta^2/2 \geq \delta^{O(1)}$, so the Lemma follows from standard concentration and Theorem~\ref{thm:correlation-preserving-projection} on correlation-preserving projection.
  On the other hand, if $\delta \geq 1 - o(1)$, then $\|v' - \tsigma\| \leq o(1) \cdot \|\tsigma\|$, so the same is also true for the projection of $v'$ into $(\tilde{\Delta}_{k-1})^n$ by convexity and the lemma follows.
\end{proof}

    \section{Lower bounds against low-degree polynomials at the Kesten-Stigum threshold}
\label{sec:lower-bound}

In this section we prove two lower bounds for $k$-community partial recovery algorithms based on low-degree polynomials.

\subsection{Low-degree Fourier spectrum of the k-community block model}
\begin{theorem}
  \label{thm:lower-bound-density}
  Let $d,\e,k$ be constants.
  Let $\mu : \{ 0,1\}^{n \times n} \rightarrow \R$ be the relative density of $SBM(n,d,\e,k)$ with respect to $G(n,\tfrac dn)$.
  Let $\mu^{\leq \ell}$ be the projection of $\mu$ to the degree-$\ell$ polynomials with respect to the norm induced by $G(n,\tfrac dn)$.\footnote{That is, $\|f\| = (\E_{G \sim G(n,\tfrac dn)} f(G)^2)^{1/2}$.}
  For any constant $\delta > 0$ and $\xi > 0$ (allowing $\xi \leq o(1)$),
  \[
  \|\mu^{\leq \ell}\| \text{ is }
  \begin{cases} \geq n^{\Omega(1)} \text{ if $\e^2 d > (1 + \delta) k^2, \ell \geq O(\log n)$}\\
  \leq n^{2 \xi} \text{ if $\e^2 d < (1 - \delta) k^2, \ell < n^{\xi}$ }
  \end{cases} \mper
  \]
\end{theorem}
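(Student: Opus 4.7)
The plan is to analyze $\|\mu^{\le \ell}\|^2 = \sum_{\alpha\subseteq \binom{[n]}{2},~|\alpha|\le \ell} \hat\mu(\alpha)^2$, where the sum is over edge-sets in the complete graph on $[n]$ and the Fourier coefficients are taken in the $\tfrac dn$-biased basis. A calculation extending Proposition~\ref{prop:2com-alpha-moment} to $k$ communities yields
\[
\hat\mu(\alpha) \;=\; \Paren{\e \sqrt{\tfrac dn}\Paren{1+O\Paren{\tfrac dn}}}^{|\alpha|} \cdot \E_{y \sim [k]^n} \prod_{ij \in \alpha} \phi(y_i,y_j),
\qquad \phi(a,b) \defeq \Ind_{a=b} - \tfrac 1k\,.
\]

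First I would record two structural facts about $\E_y \prod \phi$ obtained from the spectral decomposition of $\phi$ (viewed as a $k \times k$ symmetric matrix with eigenvalue $0$ on the all-ones direction and eigenvalue $1$ on its complement, of multiplicity $k-1$): (i) if $\alpha$ has a vertex of degree $1$ the expectation is $0$, so only graphs with minimum degree $\ge 2$ (in particular $e \ge v$) contribute; (ii) for a cycle $C_m$, $\E_y \prod \phi = \Tr(\phi^m)/k^m = (k-1)/k^m$; and $\E_y \prod \phi$ factorizes across connected components.

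\emph{Lower bound above Kesten--Stigum.} The cycle contribution alone suffices. Counting $(1-o(1))\cdot n^m/(2m)$ distinct $m$-cycles in $K_n$, I get
\[
\|\mu^{\le \ell}\|^2 \;\ge\; \sum_{m=3}^{\ell} \frac{(1-o(1))\,n^m}{2m}\cdot \Paren{\frac{\e^2 d}{n}}^m \cdot \Paren{\frac{k-1}{k^m}}^2 \;=\; \Omega(1)\sum_{m=3}^{\ell}\frac{1}{m}\Paren{\frac{\e^2 d}{k^2}}^m \,.
\]
When $\e^2 d > (1+\delta) k^2$ and $\ell \ge C(\delta)\log n$, the $m=\ell$ term alone is $\ge n^{\Omega(\delta)}$, giving $\|\mu^{\le \ell}\|\ge n^{\Omega(1)}$.

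\emph{Upper bound below Kesten--Stigum.} For an isomorphism class $H$ with $v$ vertices, $e$ edges, and cyclomatic number $c = e - v + 1 \ge 1$, the contribution of all copies of $H$ is at most
\[
\frac{n^v}{|\mathrm{Aut}(H)|}\cdot \Paren{\frac{\e^2 d}{n}}^e \cdot \bigabs{\E_y\tprod \phi}^2 \;=\; \frac{n^{1-c}}{|\mathrm{Aut}(H)|}\cdot (\e^2 d)^e \cdot \bigabs{\E_y\tprod \phi}^2\,.
\]
Summing over simple cycles ($c=1$) produces a geometric series in $\e^2 d/k^2 \le 1-\delta$, bounded by $O(\log \ell)\le n^{o(1)}$. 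For $c \ge 2$ the factor $n^{1-c}\le n^{-1}$ buys a lot of room; combined with the total count of ``shapes'' on at most $\ell$ edges (at most $\ell^{O(\ell)}$) and a bound of the form $|\E_y \prod_{ij \in \alpha}\phi|\le k^{-\Omega(e)}$ derived from the spectral decomposition and iterative ``peeling'' of minimum-degree-$2$ vertices, these contributions add up to $O(n^{-1}\cdot \ell^{O(\ell)})$, which is $n^{O(\xi)}$ when $\ell \le n^{\xi}$.

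\emph{Main obstacle.} The hard part is Step~4: proving a uniformly sharp upper bound on $|\E_y \prod_{ij \in \alpha}\phi(y_i,y_j)|$ for arbitrary multicyclic graphs $\alpha$, good enough to absorb the $\ell^{O(\ell)}$ count of shapes without losing the sub-$n^{\xi}$ control. A clean bound should come from writing $\prod_{ij \in \alpha}\phi(y_i,y_j) = \sum_{\sigma\colon E(\alpha)\to [k-1]} \prod_v \E_{y_v} \prod_{e\ni v} e_{\sigma_e}(y_v)$ in an orthonormal eigenbasis $\{e_s\}$ of $\phi$ and controlling the vertex tensors by elementary moment estimates for functions on $[k]$; the degree-$1$ vanishing then forces enough cancellation that the dominant non-cycle terms carry the extra $n^{-(c-1)}$ factor needed.
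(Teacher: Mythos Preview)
Your outline has the right skeleton---Fourier expansion, vanishing on leaves, cycle lower bound---but the upper-bound half contains two gaps that do not close as written.

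First, the arithmetic ``$O(n^{-1}\cdot \ell^{O(\ell)})$ is $n^{O(\xi)}$ when $\ell \le n^{\xi}$'' is false: for $\ell = n^\xi$ one has $\ell^{O(\ell)} = \exp\bigl(\Theta(n^\xi \log n)\bigr)$, which swamps any fixed power of $n$; a single $n^{-1}$ from $c\ge 2$ cannot absorb a super-polynomial shape count. Relatedly, the pointwise bound $|\E_y\prod_{ij\in\alpha}\phi|\le k^{-\Omega(e)}$ you flag as the obstacle is simply not true for dense $\alpha$: for $\alpha=K_t$ the correct magnitude is $\Theta(k^{-(t-1)})$, not $k^{-\Omega(t^2)}$. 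The paper uses the right exponent, namely $k^{-(V-C)}$ up to a factor polynomial in $V$ (Fact~\ref{fact:character-est}), and then compensates for the weaker $k$-saving with a \emph{coding/enumeration} argument (Lemma~\ref{lem:lb-no-cycles}): the number of labeled min-degree-$2$ subgraphs of $K_n$ with $V$ vertices, $E$ edges and no cycle component is at most $n^{V}\cdot V^{O(E-V)}$. The point is that the shape multiplicity is governed by the \emph{excess} $E-V$, not by $\ell$, and this matches the available $n^{-(E-V)}$ term exactly so the sum over $(V,E)$ converges to $O(1)$. This is the missing idea; the spectral-peeling route you sketch will not produce it.

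Second, your $c=1$ case treats only \emph{single} cycles. Disjoint unions of cycles also have $e=v$ (hence $e-v+1=1$ in your convention), and they are in fact the dominant contributors below Kesten--Stigum: Lemma~\ref{lem:lb-cycles} sums over all unions of cycles to get $\prod_{t\le \ell}\exp\!\bigl(k^{2}/t\bigr)=(2\ell)^{k^{2}}$, which is $n^{O(\xi)}$ rather than $O(\log \ell)$. This is precisely the source of the $n^{2\xi}$ in the theorem, and the factorization across components you noted is what makes the cycle/non-cycle decomposition multiplicative rather than additive.
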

This proves Theorem~\ref{thm:lower-bound} (see discussion following statement of that theorem).
To prove the theorem we need the following lemmas.
\begin{lemma}
  \label{lem:lb-degree-one}
  Let $\chi_\alpha : \{0,1\}^{n\times n} \rightarrow \R $ be the $\tfrac dn$-biased Fourier character.
  If $\alpha \subseteq \binom{n}{2}$, considered as a graph on $n$ vertices, has any degree-one vertex, then
  \[
  \E_{G \sim SBM(n,d,\e,k)} \chi_\alpha(G) = 0
  \]
\end{lemma}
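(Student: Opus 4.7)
The plan is to exploit conditional independence of the edges of $G$ given the community assignment $y\in[k]^n$, reducing $\E\chi_\alpha$ to a product expectation over $y$ that annihilates itself whenever $\alpha$ contains a degree-one vertex.

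First, I will condition on $y$. Since $\chi_\alpha=\prod_{ij\in\alpha}\chi_{ij}$ is a product of single-edge characters and the edges of $G$ are conditionally independent given $y$, the tower property gives
\begin{equation*}
  \E_{G\sim \mathrm{SBM}(n,d,\e,k)}\chi_\alpha(G)
  \;=\;\E_{y}\prod_{ij\in\alpha}\E\bigl[\chi_{ij}(G)\,\big|\,y_i,y_j\bigr].
\end{equation*}
Next, I will compute the one-edge conditional expectation. Write $g(y_i,y_j)=\Ind[y_i=y_j]-\tfrac 1k$, so that under the SBM the edge probability given $y$ is $(1+\e\, g(y_i,y_j))\tfrac dn$. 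Because $\chi_{ij}$ is affine in the indicator $x_{ij}$ and is centered under $G(n,\tfrac dn)$, one has $\E[\chi_{ij}\mid y_i,y_j]=c\cdot\bigl(\Pr[x_{ij}{=}1\mid y]-\tfrac dn\bigr)=c\e\cdot g(y_i,y_j)$ for a nonzero constant $c=c(n,d)$ depending only on the normalization of the biased character.

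Substituting back, $\E\chi_\alpha=(c\e)^{|\alpha|}\,\E_y\prod_{ij\in\alpha}g(y_i,y_j)$, and it suffices to show the $y$-expectation vanishes. Let $v$ be a degree-one vertex with unique neighbor $w$ in $\alpha$. The coordinates of $y$ are independent, and $y_v$ appears in exactly one factor of the product, namely $g(y_v,y_w)$. Conditioning on $y_{-v}$ and taking the inner expectation over $y_v$ (which is uniform on $[k]$) gives
\begin{equation*}
  \E_{y_v}g(y_v,y_w)\;=\;\Pr_{y_v}[y_v=y_w]-\tfrac1k\;=\;\tfrac1k-\tfrac1k\;=\;0
\end{equation*}
for every fixed value of $y_w$. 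By the tower property the full expectation is $0$, completing the proof.

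There is no real obstacle; the only mild care needed is in step two, verifying the precise normalization of the $\tfrac dn$-biased character so that the constant $c$ is indeed finite and the "shift by edge-probability difference" identity holds. Once that is pinned down, the lemma is essentially the single observation that the pairwise interaction $g$ has zero mean in each coordinate under the uniform product measure on community labels.
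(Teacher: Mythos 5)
Your proof is correct and is exactly the standard computation the paper alludes to when it omits the proof with a pointer to the calculations of Section~\ref{sec:mm} (compare Fact~\ref{fact:edge-sbm} and the proof of Lemma~\ref{lem:unbiased-sbm}): condition on the labels, use conditional independence of edges to factor the expectation, and observe that the label of a degree-one vertex appears in exactly one factor $\Ind[y_v=y_w]-\tfrac 1k$, whose mean over the uniform $y_v$ is zero. The normalization of the $\tfrac dn$-biased character only contributes a nonzero multiplicative constant, so it plays no role in the vanishing.
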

The proof follows from calculations very similar to those in Section~\ref{sec:mm}, so we omit it.

\begin{proof}[Proof of Theorem~\ref{thm:lower-bound-density}]
  The bound $\|\mu^{\leq \ell}\| \geq n^{\Omega(1)}$ when $\e^2d > (1 + \delta)k^2$ and $\ell \gg \log(n)$, follows from almost identical calculations to Section~\ref{sec:mm},\footnote{The calculations in Section~\ref{sec:mm} are performed for long-armed stars; to prove the present result the analogous calculations should be performed for cycles of logarithmic lengh. Similar calculations also appear in many previous works.} so we omit this argument and focus on the regime $\e^2d < (1 - \delta)k^2$.

  By definition and elementary Fourier analysis,
  \begin{align}
  \|\mu^{\leq \ell}\|^2 = \sum_{\alpha \subseteq \binom{n}{2}, |\alpha| \leq \ell} \widehat{\mu}(\alpha)^2 \label{eq:lb-2}
  \end{align}
  Also by definition,
  \[
    \widehat{\mu}(\alpha) = \E_{G \sim G(n,\tfrac dn)} \mu(G) \chi_\alpha(G) = \E_{G \sim SBM(n,d,\e,k)} \chi_\alpha
  \]
  where $\{\chi_\alpha\}$ are the $\tfrac dn$-biased Fourier characters.
  Thus, using Lemma~\ref{lem:lb-degree-one} we may restrict attion to the contribution of those $\alpha \subseteq \binom{n}{2}$ with $|\alpha| \leq \ell$ and containing no degree-$1$ vertices.

  Fix such an $\alpha$, and suppose it has $C(\alpha)$ connected components and $V_2(\alpha)$ vertices of degree $2$ (considered again as a graph on $[n]$).
  Fact~\ref{fact:character-est} (following this proof) together with routine computations shows that
  \[
    \Paren{\E_{G \sim SBM(n,d,\e,k)} \chi_\alpha(G)}^2 \leq \Paren{(1 + O(\tfrac dn)) \e^2 \tfrac dn}^{|\alpha|} k^{-2 (V(\alpha) - C(\alpha))}
    \leq \Paren{1 + O(\tfrac dn)}^{|\alpha|} \cdot n^{-|\alpha|} \cdot (1 - \delta)^{|\alpha|} \cdot k^{2(|\alpha| - V(\alpha) + C(\alpha))}\mper
  \]
  Let $c(\alpha) = \Paren{1 + O(\tfrac dn)}^{|\alpha|} \cdot n^{-|\alpha|} \cdot (1 - \delta)^{|\alpha|} \cdot k^{2(|\alpha| - V(\alpha) + C(\alpha))}$ be this upper bound on the contribution of $\alpha$ to the right-hand side of \eqref{eq:lb-2}.
  It will be enough to bound
  \[
    (*) \quad \defeq \sum_{\substack{ \alpha \subseteq \binom{n}{2} \\ |\alpha| \leq \ell \\ \alpha \text{ has no degree 1 nodes}}} c(\alpha)
  \]
  Given any $\alpha$ as in the sum, we may partition it into two vertex-disjoint subgraphs, $\alpha_0$ and $\alpha_1$, where $\alpha_0$ is a union of cycles and no connected component of $\alpha_1$ is a cycle, such that $\alpha = \alpha_0 \cup \alpha_1$.
  Thus,
  \[
   (*) \leq \Paren{\sum_{\alpha_0} c(\alpha_0) } \Paren {\sum_{\alpha_1} c(\alpha_1) }
  \]
  where $\alpha_0$ ranges over unions of cycles with $|\alpha_0| \leq \ell$ and $\alpha_1$ ranges over graphs on $[n]$ with at most $\ell$ where all degrees are at least $2$ and containing no connected component which is a cycle.
  Lemmas \ref{lem:lb-no-cycles} and \ref{lem:lb-cycles}, which follow, the terms above as $O(1)$ and $n^{2 \xi}$, respectively, which finishes the proof.
\end{proof}

\begin{fact}\label{fact:character-est}
  Let $U$ be a connected graph on $t$ vertices where all degrees are at least $2$.
  For each vertex $v$ of $U$ let $\sigma_v \in \R^k$ be a uniformly random standard basis vector.
  Let $\tsigma_v = \sigma_v - \tfrac 1k \cdot 1$.
  Then
  \[
    \Abs{\E \prod_{(u,v) \in U} \iprod{\tsigma_v, \tsigma_u}} \leq  t k^{-t + 1}
  \]
\end{fact}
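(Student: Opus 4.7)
The plan is induction on the number $t$ of vertices of $U$, using as the main engine a ``contraction identity'' for degree-$2$ vertices. The key facts are $\|\tsigma_v\|^2 = (k-1)/k$ (which is deterministic) and $\E \tsigma_v \tsigma_v^\top = \tfrac{1}{k}(I - \tfrac{1}{k} J)$; since this matrix acts as $\tfrac{1}{k} I$ on $\mathbf{1}^\perp$ and every $\tsigma_u$ lies in $\mathbf{1}^\perp$, marginalizing any degree-$2$ vertex $v$ with distinct neighbors $u_1 \ne u_2$ produces
\[
\E_{\sigma_v}\bigl[\iprod{\tsigma_v, \tsigma_{u_1}}\iprod{\tsigma_v, \tsigma_{u_2}}\bigr] = \tfrac{1}{k}\iprod{\tsigma_{u_1}, \tsigma_{u_2}}.
\]
Iterating this identity around a cycle $U = C_t$ gives the base case: after $t-2$ contractions the product collapses to $k^{-(t-2)} \cdot \E \iprod{\tsigma_{v_1}, \tsigma_{v_t}}^2 = (k-1)/k^t \le k^{-t+1}$, comfortably within the claimed bound.

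For the inductive step (stated for multigraphs, since these arise naturally from contractions), if $U$ contains a degree-$2$ vertex $v$ with distinct neighbors $u_1 \ne u_2$, the identity yields $T(U) = \tfrac{1}{k} T(U')$, where $U'$ is obtained by deleting $v$ and adding an edge $u_1 u_2$ (possibly creating a parallel edge). Since $U'$ is connected on $t-1$ vertices with minimum degree still $\ge 2$, the inductive hypothesis gives $|T(U)| \le \tfrac{1}{k}(t-1) k^{-(t-1)+1} = (t-1) k^{-t+1}$, as claimed.

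The main obstacle is the case where no contractible degree-$2$ vertex exists: either every degree-$2$ vertex has both its edges going to a single neighbor, or all vertices have degree $\ge 3$. The parallel-edge case can be absorbed via $\E \iprod{\tsigma_u, \tsigma_v}^2 = (k-1)/k^2$ followed by recursion on the smaller graph (with a sub-argument when the neighbor's degree drops below $2$, in which case a pendant vertex is created and the expectation vanishes). The all-degree-$\ge 3$ case ($|E(U)| \ge 3t/2$) I plan to handle through the inclusion-exclusion expansion
\[
T(U) = k^{-|E|} \sum_{S \subseteq E(U)} (-1)^{|E|-|S|} k^{\beta(S)}, \quad \beta(S) = |S| - t + c(S),
\]
obtained by expanding $\iprod{\tsigma_u, \tsigma_v} = \Ind_{\sigma_u = \sigma_v} - \tfrac{1}{k}$ and using that $\Pr[\sigma \text{ is constant on each component of } (V,S)] = k^{c(S) - t}$. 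The leading term is $k^{\beta(E)} = k^{|E|-t+1}$ from $S = E$; the technical heart of the remaining case is to show that the signed contributions of subsets of smaller cycle rank sum to at most $(t-1) k^{|E|-t+1}$ in absolute value, which reduces to a combinatorial count organizing spanning subgraphs by their bridge-removals and cycle-breaking edge removals relative to $E$.
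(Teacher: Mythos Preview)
Your approach is genuinely different from the paper's. The paper gives a short pointwise argument with no induction and no case split on the degree sequence: for a realization $\sigma$, call a vertex \emph{bad} if it has at most one neighbor of a different color, and let $m=m(\sigma)$ be the number of bad vertices. A degree count (each of the $t-m$ good vertices contributes $\ge 2$ to the number of disagreeing edges) shows there are at least $t-m$ disagreeing edges, so $\bigl|\prod_{e}\iprod{\tsigma_u,\tsigma_v}\bigr|\le k^{-(t-m)}$ for every realization. One then argues that configurations with many bad vertices are themselves unlikely (each bad vertex forces at least one agreeing incidence), giving a bound of order $k^{-m}$ on their probability, and summing over $m$ yields the factor $t$. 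This handles cycles, $K_4$, and everything else uniformly.

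Your contraction identity for degree-$2$ vertices with distinct neighbors is correct, and the resulting induction is clean for cycles and for any multigraph that can be reduced all the way down. But the case where every vertex has degree $\ge 3$ is not an afterthought: it arises immediately (e.g.\ $U=K_4$, before any contractions), and your treatment of it is only a plan. The inclusion--exclusion formula
\[
T(U)=k^{-|E|}\sum_{S\subseteq E}(-1)^{|E|-|S|}k^{\beta(S)}
\]
is correct, but the assertion that the non-leading terms sum to at most $(t-1)k^{|E|-t+1}$ in absolute value is exactly the content of the fact in this case, and you have not proved it. Note there are $2^{|E|}\ge 2^{3t/2}$ terms, each of magnitude up to $k^{\beta(E)}$, so you need substantial signed cancellation; ``organizing spanning subgraphs by bridge-removals and cycle-breaking edge removals'' is a direction, not an argument. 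Unless you can close this, the proof is incomplete, and the paper's good/bad-vertex route is both shorter and avoids the case analysis entirely.
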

\begin{proof}
  Consider a particular realization of $\sigma_1,\ldots,\sigma_t$.
  Suppose all but $m$ vertices $v$ in $U$ are adjacent to at least $2$ vertices $u_1,u_2$ such that $\sigma_{u_1} \neq \sigma_v$ and $\sigma_{u_2} \neq \sigma_v$.
  In this case,
  \[
  \Abs{\prod_{(u,v) \in U} \iprod{\tsigma_v, \tsigma_u}} \leq k^{-(t-m)}\mper
  \]
  The probability of such a pattern of disagreements is at most $k^{-m}$, unless $m = t$, in which case the probability is at most $k^{-t+1}$.
  The fact follows.
\end{proof}

\begin{lemma}
\label{lem:lb-no-cycles}
  For $\alpha \subseteq \binom{n}{2}$, let $V(\alpha)$ be the number of vertices in $\alpha$, let $C(\alpha)$ be the number of connected components in $\alpha$.
  For constants $\e,d,k$, let $c(\alpha) \defeq \Paren{1 + O(\tfrac dn)}^{|\alpha|} \cdot n^{-|\alpha|} \cdot (1 - \delta)^{|\alpha|} \cdot k^{2(|\alpha| - V(\alpha) + C(\alpha))}$
  Let $\ell \leq n^{0.01}$ and
  \[
  U = \left \{ \alpha \subseteq \binom{n}{2} \, : \, \alpha \text{ has all degrees $\geq 2$, has no connected components which are cycles, $|\alpha| \leq \ell$} \right \}\mper
  \]
  Then
  \[
  \sum_{\alpha \in U} c(\alpha) \leq O(1)\mper
  \]
\end{lemma}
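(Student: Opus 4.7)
The plan rests on two observations about $c(\alpha)$. First, since $c(\alpha)$ depends on $\alpha$ only through $|\alpha|$ and $r(\alpha) \defeq |\alpha| - V(\alpha) + C(\alpha)$, and since both quantities are additive over connected components, $c$ factorizes multiplicatively: $c(\alpha) = \prod_i c(\alpha_i)$. Second, every connected component of $\alpha \in U$ has minimum degree $\geq 2$ and is not a cycle, so by the handshake lemma it contains a vertex of degree $\geq 3$ and satisfies $|E| \geq |V|+1$; hence every component has circuit rank at least $2$, giving $r(\alpha) \geq 2 C(\alpha)$ and $V(\alpha) \leq |\alpha| - r(\alpha)/2$.

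Using multiplicativity and relaxing the vertex-disjointness constraint between components (an upper bound), grouping by an unordered collection of components, I would obtain
\[
\sum_{\alpha \in U} c(\alpha) \;\leq\; \sum_{c \geq 0} \frac{1}{c!}\, X^{c} \;=\; e^{X}, \qquad X \;\defeq\; \sum_{\substack{\alpha_0 \text{ connected}\\ \alpha_0 \in U}} c(\alpha_0).
\]
So it would suffice to show $X = O(1)$; in fact the approach yields $X = o(1)$. For this I would use the subdivision structure: any connected min-degree-$\geq 2$ graph that is not a cycle is obtained canonically by subdividing the edges of a multigraph $H$ whose every vertex has degree $\geq 3$ (contract each maximal path of degree-$2$ vertices in $\alpha_0$). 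For $r(\alpha_0) = r \geq 2$ fixed, the handshake bound on $H$ forces $|V(H)| \leq 2r-2$ and $|E(H)| \leq 3r-3$, so there are at most $r^{O(r)}$ unlabeled choices for $H$; given $H$, the number of labeled subdivisions in $[n]$ with exactly $e$ edges is at most $n^{e-r+1} \cdot e^{O(r)}$ (label the $\leq 2r-2$ branch vertices, choose a composition of $e$ into $\leq 3r-3$ arm lengths, then list the ordered labels of internal subdivision vertices).

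Plugging this count into $X$ would give
\[
X \;\leq\; \sum_{r \geq 2} r^{O(r)} k^{2r} n^{1-r} \sum_{e = r}^{\ell} (1-\delta)^{e} e^{O(r)}.
\]
Because $\delta$ is a constant, the inner sum peaks at $e^{\star} = O(r/\delta)$ with value at most $(Cr/\delta)^{O(r)}$, and $r \leq \ell \leq n^{0.01}$ forces $r^{O(r)} \leq n^{o(r)}$. The $r$-th summand is then bounded by $n \cdot (C(\delta,k) \cdot r^{O(1)}/n)^{r}$; the geometric series is dominated by its $r = 2$ term, yielding $X = O(n^{-1+o(1)})$. Therefore $\sum_{\alpha \in U} c(\alpha) \leq e^{X} = 1 + o(1) = O(1)$.

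The hard part will be producing a sharp enough count of connected min-degree-$\geq 2$ graphs. Direct use of Cayley- or Wright-style enumeration of connected labeled graphs of fixed excess produces an unavoidable factor $v^{v}$ (from $\binom{n}{v}$ combined with Stirling) that cannot be absorbed by $n^{-e}$ when $\delta$ is a small positive constant, because $(e(1-\delta))^{v}$ then grows in $v$ and the sum over $v$ is only truncated at $\ell = n^{0.01}$, which is already far too large. The subdivision encoding is what sidesteps this: it separates the $O(r)$-sized rigid skeleton from the free subdivision-vertex labels, which are counted sharply as ordered positions along paths, thereby replacing the prohibitive $v^v$ by the polynomial factor $e^{O(r)}$ and making the geometric sum in $r$ converge.
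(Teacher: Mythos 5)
Your proposal is correct. The overall strategy is the same as the paper's — in both cases the lemma reduces to showing that the number of graphs in $U$ with $V$ vertices and $E$ edges is $n^{V}\cdot(\text{small})^{O(E-V)}$, with no factor exponential in $V$ or $E$ itself, after which the sum converges because each component forces $E-V\geq 1$ and $k,V,E\leq n^{O(0.01)}$. Where you differ is in how that count is established. The paper encodes a (possibly disconnected) $\alpha$ directly, via a DFS-style vertex listing with reordering rules plus a list of "removed pairs," and a charging argument showing the number of removed pairs is at most $2(E-V)$; this yields $n^{V}V^{6(E-V)}$ in one shot. You instead first reduce to connected graphs via the exponential formula $\sum_{\alpha}c(\alpha)\leq e^{X}$ (valid here since $c(\cdot)\geq 0$ and is multiplicative over components), and then count connected min-degree-$2$ non-cycles by their degree-$\geq 3$ kernel and its subdivision, getting $r^{O(r)}e^{O(r)}n^{e-r+1}$ for circuit rank $r\geq 2$. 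The two counts are equivalent up to the harmless $n^{o(E-V)}$ slack both arguments tolerate. Your route is the more classical one (it is essentially the $2$-core/excess analysis from random graph theory) and avoids the paper's somewhat delicate reordering and charging scheme; the paper's encoding avoids the exponential-formula step by handling disconnected $\alpha$ directly. Your closing diagnosis is also accurate: a naive Wright-style count of connected graphs of fixed excess carries an $e^{v}$ (equivalently $v^{v}/v!$) overhead that the minimum-degree-$2$ condition is precisely what removes, and either the kernel decomposition or the paper's encoding is needed to exploit it.
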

\begin{proof}
  We will use a coding argument to bound the number of $\alpha \in U$ with $V$ vertices, $E$ edges, and $C$ connected components.
  We claim that any such $\alpha$ is uniquely specified by the following encoding.

  To encode $\alpha$, start by picking an arbitrary vertex $v_1$ in $\alpha$.
  List the vertices $v_1,\ldots,v_{|V|}$ of $\alpha$, each requiring $\log n$ bits, starting from $v_1$, using the following rules to pick $v_i$.
  \begin{enumerate}
    \item  If $v_{i-1}$ has a neighbor not yet appearing in the list $v_1,\ldots,v_{i-1}$, let $v_i$ be any such neighbor.
    \item Otherwise, if $v_{i-1}$ has a neighbor $v_j$ which
    \begin{enumerate}
    \item appears in the list $v_1,\ldots,v_{i-1}$ and
    \item for which either $j = 1$ or $v_{j-1}$ is not adjacent to $v_j$ in $\alpha$, and
    \item for which if $j \neq i'$ for $i' \leq i-1$ being the minimal index such that $v_{i'},\ldots,v_{i-1}$ is a path in $\alpha$ (i.e. $v_j,\ldots,v_{i-1}$ are not a cycle in $\alpha$)
    \end{enumerate}
    then reorder the list as follows.
  Remove vertices $v_j,\ldots,v_{j'}$ where $j'$ is the greatest index so that all edges $v_{\ell},v_{\ell+1}$ exist in $\alpha$ for $j \leq \ell \leq j'$.
  Also remove vertices $v_{i'},\ldots,v_{i-1}$ where $i'$ is analogously the minimal index such that edes $v_{\ell},v_{\ell+1}$ exist in $\alpha$ for $i' \leq \ell \leq i-1$.
  Then, append the list $v_{j'},v_{j'-1},\ldots,v_j,v_{i-1},\ldots,v_{i'}$.
  By construction, all of these vertices appear in a path in $\alpha$.
  The new list retains the invariant that every vertex either preceeds a neighbor in $\alpha$ or has no neighbors in $\alpha$ which have not previous appeared in the list.
  \item
  Otherwise, let $v_i$ be an arbitrary vertex in $\alpha$ in the same connected component as $v_{i-1}$, if some such vertices has not yet appeared in the list.
  \item Otherwise, let $v_i$ be an arbitrary vertex of $\alpha$ not yet appearing among $v_1,\ldots,v_{i-1}$.
  \end{enumerate}
  After the list of vertices, append to the encoding the following information.
  First, a list of the $R$ (for \textbf{r}emoved) pairs $v_i,v_{i+1}$ for which there is not an edge $(v_i,v_{i+1})$ in $\alpha$.
  This uses $2 R \log V$ bits.
  Last, a list of the edges in $\alpha$ which are not among the pairs $v_i,v_{i+1}$ (each edge encoded using $2 \log V$ bits).

  We argue that the number $R$ of removed pairs (and hence the length of their list in the encoding) is not too great.
  In particular, we claim $R \leq 2(E - V)$.
  In fact, this is true connected-component-wise in $\alpha$.
  To see it, proceed as follows.

  Fix a connected component $\beta$ of $\alpha$.
  Let $v_t$ be the first vertex in $\beta$ to appear in the list $v_1,\ldots,v_{|V|}$.
  Proceeding in increasing order down the list from $v_t$, let $(v_{r_1},v_{r_1+1}), (v_{r_2},v_{r_2 +1}),\ldots$ be the pairs encountered (before leaving $\beta$) which do not correspond to edges in $\alpha$ (and hence will later appear in the list of removed pairs).

  Construct a sequence of subgraphs $\beta_j$ of $\beta$ as follows.
  The graph $\beta_1$ is the line on vertices $v_t,\ldots,v_{r_1}$.
  To construct the graph $\beta_{j}$, start from $\beta_{j-1}$ and add the line from $v_{r_{j-1}+1}$ to $v_{r_{j}}$ (by definition all these edges appear in $\beta$).
  Since $v_{r_j}$ must have at least degree $2$, it has a neighbor $u_j$ in $\beta$ among the vertices $v_{a}$ for $a < r_j$ aside from $v_{r_j-1}$.
  (If $v_{r_j}$ had a neighbor not yet appearing in the list, then $v_{r_j+1}$ would have been that neighbor, contrary to assumption.)
  Choose any such neighbor and add it to $\beta_j$; this finishes construction of the graph $\beta_j$.
  For later use, note that either adding the edge to $u_j$ turns $\beta_j \setminus b_{j-1}$ into a cycle or $u_j$ is not itself among the $v_r$'s, since otherwise in constructing the list we would have done a reordering operation.

  In each of the graphs $\beta_j$, the number of edges is equal to the number of vertices.
  To obtain $\beta$, we must add $E_\beta - V_\beta$ edges (where $E_\beta$ is the number of edges and $\beta$ and $V_\beta$ is the number of vertices).
  We claim that in so doing at least one half of a distinct such edge must be added per $\beta_j$; we prove this via a charging scheme.
  As noted above, each graph $\beta_j \setminus \beta_{j-1}$ either contains $v_{r_{j-1}}$ as a degree-$1$ vertex or it forms cycle.
  If it contains a degree-1 vertex, by construction this vertex is not $u_{j'}$ for any $j' > j$, otherwise we would have reordered.
  So charge $\beta_j$ to the edge which must be added to fix the degree-1 vertex.

  In the cycle case, either some edge among the $E_\beta - V_\beta$ additional edges is added incident to the cycle (in which case we charge $\beta_j$ to this edge), or some $u_{j'}$ for $j' > j$ is in $\beta_j \setminus \beta_{j-1}$.
  If the latter, then $\beta_{j'} \setminus \beta_{j' - 1}$ contains a degree-1 vertex and $\beta_{j} \setminus \beta{j -1}$ can be charged to the edge which fixes that degree $1$ vertex.
  Every additional edge was charged at most twice.
  Thus, $R \leq 2(E - V)$

  It is not hard to check that $\alpha$ can be uniquely decoded from the encoding previously described.
  The final result of this encoding scheme is that each $\alpha$ can be encoded with at most $V \log n + 6(E-V) \log V$ bits, and so there are at most $n^V \cdot V^{6(E-V)}$ choices for $\alpha$.
  The contribution of such $\alpha$ to $\sum_{\alpha \in U} c(\alpha)$ is thus at most
  \[
  n^{-(E-V)} V^{6(E-V)} (1 - \delta/2)^E k^{2(E-V+C)}
  \]
  We know that $C \leq E-V$.
  So as long as $k,V \leq n^{0.01}$, we obtain that this contributes at most $n^{(E-V)/2} (1 - \delta/2)^E$.
  Summing across all $V,E \leq n^{0.01}$, the lemma follows.
\end{proof}

\begin{lemma}\label{lem:lb-cycles}
  For $\alpha \subseteq \binom{n}{2}$, let $V(\alpha)$ be the number of vertices in $\alpha$, let $C(\alpha)$ be the number of connected components in $\alpha$.
  For constants $1 > \delta > 0$ and $k$, let $c(\alpha) \defeq \Paren{1 + O(\tfrac dn)}^{|\alpha|} \cdot n^{-|\alpha|} \cdot (1 - \delta)^{|\alpha|} \cdot k^{2(|\alpha| - V(\alpha) + C(\alpha))}$
  Let $\ell \leq n^{\xi/k^2}$ for some $\xi > 0$ (allowing $\xi \leq o(1)$) and
  \[
  U = \left \{ \alpha \subseteq \binom{n}{2} \, : \, \alpha \text{ is a union of cycles} \right \}\mper
  \]
  Then
  \[
    \sum_{\alpha \in U} c(\alpha) \leq  n^{2 \xi}
  \] 
\end{lemma}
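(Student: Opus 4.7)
The plan is to exploit the special structure of unions of cycles, where each connected component has exactly as many edges as vertices, so that for $\alpha\in U$ we have $|\alpha|=V(\alpha)$ and hence $|\alpha|-V(\alpha)+C(\alpha) = C(\alpha)$. Thus
\[ c(\alpha) = (1+O(d/n))^{|\alpha|}\cdot (1-\delta)^{|\alpha|} \cdot n^{-|\alpha|}\cdot k^{2C(\alpha)}, \]
and the weight factorizes over the cycle components. The natural approach is then to count by cycle-length multiset and use an exponential-generating-function style bound.

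First I would compute the single-cycle contribution. The number of cycles of length $m$ on $n$ labeled vertices is at most $n^m/(2m)$, so the total contribution of cycles of length $m$ to $\sum_\alpha c(\alpha)$ is at most
\[ S(m) \;\defeq\; \frac{n^m}{2m}\cdot (1+O(d/n))^m\cdot (1-\delta)^m \cdot n^{-m}\cdot k^2 \;=\; \frac{k^2}{2m}\bigl[(1-\delta)(1+O(d/n))\bigr]^m. \]
Next I would bound the sum over unions by an exponential: since distinct vertex-disjoint cycles contribute independently and the unordered counting introduces a factor $1/c!$, I would argue
\[ \sum_{\alpha\in U} c(\alpha) \;\le\; \sum_{c\ge 0}\frac{1}{c!}\Bigl(\sum_{m=3}^\ell S(m)\Bigr)^c \;=\; \exp(T), \qquad T\defeq \sum_{m=3}^\ell S(m). \]
(Dropping the vertex-disjointness constraint only increases the count, which is all we need for an upper bound.)

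Finally, I would use that $q\defeq (1-\delta)(1+O(d/n))<1$ for any constant $\delta>0$ and $n$ sufficiently large, so $T$ is bounded by the convergent tail
\[ T \;\le\; \tfrac{k^2}{2}\sum_{m\ge 1}\frac{q^m}{m} \;=\; \tfrac{k^2}{2}\log\tfrac{1}{1-q} \;=\; O\bigl(k^2\log(1/\delta)\bigr), \]
giving $\exp(T)=(1/\delta)^{O(k^2)}$, which is $O(1)$ since $k$ and $\delta$ are constants and is certainly $\leq n^{2\xi}$ for any $\xi>0$. Notably, the bound is independent of $\ell$, because the series $\sum q^m/m$ converges; the hypothesis $\ell\le n^{\xi/k^2}$ is not needed here and is only relevant for bounding non-cycle graphs in \cref{lem:lb-no-cycles}.

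The main subtlety (and the only real step to verify carefully) will be the second display: one should check that the overcount from ignoring vertex-disjointness (and from allowing each $m_i$ to range up to $\ell$ rather than constraining $\sum m_i\le \ell$) truly gives an upper bound. This follows by mapping each vertex-disjoint union of $c$ cycles to the unordered multiset of its cycle-lengths and vertex labellings, and observing that the forgetful map onto the unconstrained product is injective on the domain. The remaining computations are routine geometric-series manipulations.
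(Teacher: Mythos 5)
Your proof is correct, and its skeleton is the same as the paper's: factor the weight $c(\alpha)$ over the cycle components, bound the sum over disjoint unions by the exponential of the single-cycle sum (dropping disjointness only adds nonnegative terms), and estimate that single-cycle sum. The one genuine difference is what you do with the factor $(1-\delta)^{|\alpha|}$. The paper discards it (bounding $(1-\delta/2)^{pt}\le 1$), organizes the product over cycle lengths $t$, and is left with $\exp\bigl(k^2\sum_{t\le\ell}1/t\bigr)\le(2\ell)^{k^2}$ -- this harmonic sum is precisely where the hypothesis $\ell\le n^{\xi/k^2}$ gets used. You instead keep the geometric decay, so the single-cycle series $\sum_m \tfrac{k^2}{2m}q^m$ converges to $\tfrac{k^2}{2}\log\tfrac1{1-q}=O(k^2\log(1/\delta))$, yielding an $O(1)$ bound that is independent of $\ell$; your observation that the hypothesis on $\ell$ is only needed for \cref{lem:lb-no-cycles} is accurate. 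This is a strictly sharper estimate obtained at no extra cost. The only caveat worth flagging: since the lemma allows $\xi\le o(1)$, concluding ``$(2/\delta)^{O(k^2)}\le n^{2\xi}$'' implicitly requires $n^{2\xi}$ to dominate that constant; but the paper's own chain $(2\ell)^{k^2}\le 2^{k^2}n^{\xi}\le n^{2\xi}$ relies on exactly the same slack, so this is a shared convention rather than a gap in your argument. Your verification of the decoupling step (injectivity of the forgetful map onto ordered tuples, divided by $c!$) is the right thing to check and is sound.
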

\begin{proof}
  Let $U_t$ be the set of $\alpha$ which are unions of $t$-cycles (we exclude the empty $\alpha$).
  Let $c_t = \sum_{\alpha \in U_t} c_\alpha$.
  Then
  \[
   \sum_{\alpha \in U} c(\alpha) \leq  \prod_{t \leq \ell} (1 + c_t)\mper
  \]
  Count the $\alpha \in U_t$ which contain exactly $p$ cycles of length $t$ by first choosing a list of $pt$ vertices---there are $n^{pt}$ choices.
  In doing so we will count each alpha $p! t^p$ times, since each of the $p$ cycles can be rotated and the cycles can themselves be exchanged.
  All in all, there are at most $n^{pt}/(p! t^p)$ such $\alpha$, and they contribute at most
  \[
  \frac{c(\alpha) n^{pt}}{p! t^p} \leq \frac{ (1 - \delta/2)^{pt} k^{2p}}{p! t^p} \leq k^{2p}/(p! t^p)\mper
  \]
  for large enough $n$.
  Thus, summing over all $\alpha \in U_t$, we get
  \[
    (1 + c_t) \leq \sum_{p=0}^\ell \frac{ (1 - \delta/2)^p k^{2p}}{p! t^p} \leq \exp( k^2 /t)\mper
  \]
  So,
  \[
    \prod_{t \leq \ell} (1 + c_t) \leq \exp(k^2 \sum_{t=1}^\ell 1/t) \leq \exp(k^2 \log 2\ell) \leq (2\ell)^{k^2} \leq n^{2\xi}\mper
  \]
\end{proof}

\subsection{Lower bound for estimating communities}
\begin{theorem}
  Let $d,\e,k,\delta$ be constants such that $\e^2 d < (1 -\delta)k^2$.
  Let $f : \{0,1\}^{n \times n} \rightarrow \R$ be any function, let $i,j \in [n]$ be distinct.
  Then if $f$ satisfies $\E_{G \sim G(n,\tfrac dn)} f(G) = 0$ and is correlated with the indicator $\Ind_{\sigma_i = \sigma_j}$ that $i$ and $j$ are in the same community in the following sense:
  \[
    \frac{\E_{G \sim SBM(n,d,\e,k)} f(G)(\Ind_{\sigma_i = \sigma_j} - \tfrac 1k)}{(\E_{G \sim G(n,\tfrac dn)} f(G)^2)^{1/2}} \geq \Omega(1)
  \]
  then $\deg f \geq n^{c(d,\e,k)}$ for some $c(d,\e,k) > 0$.
\end{theorem}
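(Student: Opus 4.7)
The plan is to adapt the Fourier-analytic approach of Theorem~\ref{thm:lower-bound-density} to this "one-sided" quantity. Define the signed relative density
\[
\tilde\nu(G) \defeq \frac{\E_\sigma\Brac{\iprod{\tsigma_i,\tsigma_j} \cdot \Pr_{SBM}(G \mid \sigma)}}{P_0(G)}, \qquad P_0 = G(n,\tfrac dn),
\]
where $\tsigma_a = \sigma_a - \tfrac 1k \mathbf 1$, so that $\iprod{\tsigma_i,\tsigma_j} = \Ind_{\sigma_i=\sigma_j} - \tfrac 1k$. Then the numerator in the theorem equals $\iprod{f,\tilde\nu}_{P_0}$. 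Using $\E_{P_0} f = 0$ together with Cauchy--Schwarz in $L^2(P_0)$, and the orthogonality of $f^{>\ell}$ with $\tilde\nu^{\leq\ell}$ whenever $\deg f \leq \ell$, we obtain
\[
\frac{\E_{SBM} f(G)(\Ind_{\sigma_i=\sigma_j} - \tfrac 1k)}{(\E_{P_0} f^2)^{1/2}} \;=\; \frac{\iprod{f^{\leq\ell},\tilde\nu^{\leq\ell}}_{P_0}}{\|f\|_{P_0}} \;\leq\; \|\tilde\nu^{\leq\ell}\|_{P_0}.
\]
So it is enough to exhibit $c = c(d,\e,k) > 0$ and an absolute constant $K$ such that $\|\tilde\nu^{\leq\ell}\|_{P_0} \leq K$ for all $\ell \leq n^c$; the theorem is then the contrapositive.

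Next, expand in the $\tfrac dn$-biased characters $\chi_\alpha$ of $P_0$. Using $\E[\chi_e \mid \sigma] = \Theta(\e\sqrt{d/n})\cdot \iprod{\tsigma_a,\tsigma_b}$ and independence of edges conditional on $\sigma$, one computes
\[
\widehat{\tilde\nu}(\alpha) \;=\; \Theta\Bigparen{\e\sqrt{d/n}}^{|\alpha|}\cdot \E_\sigma\Biggbrac{\;\prod_{(a,b)\in\alpha}\iprod{\tsigma_a,\tsigma_b}\;\cdot\;\iprod{\tsigma_i,\tsigma_j}\Biggr]\;(1+O(d/n))^{|\alpha|}.
\]
The crucial observation is that the $\sigma$-expectation is exactly the character expectation for the augmented multigraph $\alpha' \defeq \alpha \cup \{(i,j)\}$. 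By the degree-1 vanishing argument of Lemma~\ref{lem:lb-degree-one} applied to $\alpha'$, this coefficient vanishes unless every vertex of $\alpha'$ has degree $\geq 2$; equivalently, both $i$ and $j$ appear in $\alpha$, and every vertex of $\alpha$ other than $i,j$ has degree $\geq 2$. For such $\alpha$, Fact~\ref{fact:character-est} applied componentwise gives
\[
\widehat{\tilde\nu}(\alpha)^2 \;\leq\; (1+o(1))^{|\alpha|}\cdot\Paren{\tfrac{\e^2 d}{n}}^{|\alpha|}\cdot |\alpha|^{O(1)}\cdot k^{-2(V(\alpha')-C(\alpha'))}.
\]

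The remaining step is to sum these bounds, mimicking the cycle / non-cycle decomposition used in the proof of Theorem~\ref{thm:lower-bound-density}. Decompose each valid $\alpha$ as a union of a cycle part $\alpha_0$ and a non-cycle part $\alpha_1$. Because of the \emph{constraint} that $i,j$ lie in $\alpha$, the combinatorics are considerably tighter than in the unconstrained case: (i) a cycle of length $t$ that is required to pass through $i$ (respectively $i$ and $j$) is chosen out of only $n^{t-1}$ (respectively $n^{t-2}$) options, so each "required" cycle contributes a saving of $\Theta(1/n)$ or $\Theta(1/n^2)$ versus the count in Lemma~\ref{lem:lb-cycles}; (ii) an adapted version of the coding argument in Lemma~\ref{lem:lb-no-cycles} controls the non-cycle piece by $O(1)$. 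Combining these with the hypothesis $\e^2 d < (1-\delta)k^2$ (so that $\tfrac{\e^2 d}{k^2} < 1-\delta$), the convergent geometric series in $s \mapsto \Paren{\tfrac{\e^2 d}{k^2}}^s$ yields $\|\tilde\nu^{\leq\ell}\|_{P_0}^2 \leq O_{d,\e,k,\delta}(1)$ for all $\ell \leq n^{c(d,\e,k)}$ for some explicit $c > 0$.

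The main obstacle I expect is the case where $i$ and $j$ lie in a \emph{common} non-cycle component: then a long path from $i$ to $j$ with degree-$2$ internal vertices acts much like a cycle through $i,j$, and its contribution scales as $\sum_{t} (\e^2 d/k^2)^t$. This case dominates $\|\tilde\nu^{\leq \ell}\|^2$, and the convergence of that geometric series (which is exactly where $\e^2 d < k^2$ is used) is what gives the bound; it also forces $c(d,\e,k)$ to shrink as $\delta \to 0$. Handling this case carefully — in particular identifying it in the $\alpha_0/\alpha_1$ decomposition and treating it as a third "path" bucket alongside cycles and genuine higher-complexity non-cycle graphs — is the only step that requires more than a direct transcription of the arguments in Lemmas~\ref{lem:lb-no-cycles} and~\ref{lem:lb-cycles}.
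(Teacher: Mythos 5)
Your setup is exactly the paper's: your $\tilde\nu$ is the paper's $g(G)=\mu(G)\,\E[\Ind_{\sigma_i=\sigma_j}-\tfrac1k\mid G]$, the reduction to bounding $\|\tilde\nu^{\leq\ell}\|$ is the same, and the observation that $\widehat{\tilde\nu}(\alpha)$ is a character expectation for the augmented graph $\alpha\cup\{(i,j)\}$ (hence vanishes unless $i,j$ appear in $\alpha$ and every other vertex has degree at least $2$) matches the paper's argument. The gap is in the target you set. Bounding $\|\tilde\nu^{\leq\ell}\|$ by an absolute constant $K$ does not contradict the hypothesis that the correlation is $\Omega(1)$: a ratio equal to, say, $1/2$ is simultaneously $\leq K$ and $\Omega(1)$. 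You need $\|\tilde\nu^{\leq\ell}\| = o(1)$, i.e., the bound must tend to $0$ as $n\to\infty$.

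You land on the weaker target because of a factor-of-$n$ miscount in the bucket you correctly identify as dominant. A path $\beta$ from $i$ to $j$ with $t$ edges has squared coefficient $\widehat{\tilde\nu}(\beta)^2\approx(\e^2d/(nk^2))^t$ (the graph $\beta\cup\{(i,j)\}$ is a $(t+1)$-cycle, so the $\sigma$-expectation is $\approx k^{-t}$), but there are only about $n^{t-1}$ such paths, not $n^t$, because \emph{both} endpoints are pinned. The bucket therefore contributes $\tfrac1n\sum_t(\e^2d/k^2)^t = O(1/(n\delta))$, not $\sum_t(\e^2d/k^2)^t$. The same pinning yields an $n^{-\Omega(1)}$ saving for every shape of the connected component containing $i$ and $j$, and this is precisely how the paper obtains $\|g^{\leq\ell}\|^2\leq\|\mu^{\leq\ell}\|^2\cdot n^{-\Omega(1)}$, with the components not touching $i,j$ absorbed into $\|\mu^{\leq\ell}\|^2$ via Theorem~\ref{thm:lower-bound-density}. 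Note you still need that last comparison: the unpinned components can contribute as much as $n^{O(\xi)}$ when $\ell=n^{\xi}$, so $c(d,\e,k)$ must be chosen small enough that the $n^{-\Omega(1)}$ gain from the pinned component wins. With these corrections your argument coincides with the paper's proof.
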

\begin{proof}
  Let $g(G) = \mu(G) \E[\Ind_{\sigma_i = \sigma_j} - \frac 1k \, | \, G ]$, where $\mu(G)$ is the relative density of $SBM(n,d,\e,k)$.
  Standard Fourier analysis shows that the optimal degree-$\ell$ choice for such $f$ to maximize the above correlation is $g^{\leq \ell}$, the orthogonal projection of $g$ to the degree-$\ell$ polynomials with respect to the measure $G(n,\tfrac dn)$, and the correlation is at most $\|g^{\leq \ell}\|$.
  It suffices to show that for some constant $c(d,\e,k)$, if $\ell < n^{c(d,\e,k)}$ then $\|g^{\leq \ell}\| \leq o(1)$.

  For this we expand $g$ in the Fourier basis, noting that
  \[
    \widehat{g}(\alpha) = \E_{\sigma,G \sim SBM(n,d,\e,k)} \iprod{\tsigma_i,\tsigma_j} \chi_\alpha(G)
  \]
  where as usual $\tsigma_i = \sigma_i - \tfrac 1k \cdot 1$ is the centered indicator of $i$'s community.
  By-now routine computations show that
  \[
  \widehat{g}(\alpha)^2 \leq \Paren{(1 + O(d/n)) \e^2 \tfrac dn }^{|\alpha|} \cdot \Paren{\E \iprod{\tsigma_i,\tsigma_j} \cdot \prod_{(k,\ell) \in \alpha} \iprod{\tsigma_i,\tsigma_j}}^2
  \]
  We assume that $(i,j) \notin \alpha$; it is not hard to check that such $\alpha$'s dominate the norm $\|g^{\leq \ell}\|$.
  If some vertex aside from $i,j$ in $\alpha$ has degree $1$ then this is zero.
  Similarly, if $i$ or $j$ does not appear in $\alpha$ then this is zero.
  Otherwise,
  \[
  \widehat{g}(\alpha)^2 \leq \Paren{(1 + O(d/n)) }^{|\alpha|} n^{-|\alpha|} (1 - \delta)^{|\alpha|} k^{2(|\alpha| - V(\alpha) + C(\alpha)}
  \]
  where as usual $V(\alpha)$ is the number of vertices in $\alpha$ and $C(\alpha)$ is the number of connected components in $\alpha$.
  Let $\beta(\alpha)$ be the connected component of $\alpha$ containing $i$ and $j$ (if they are not in the same component the arguments are mostly unchanged).
  Then we can bound
  \[
  \|g^{\leq \ell}\|^2 = \sum_{|\alpha| \leq \ell} \widehat{g}(\alpha)^2 \leq \|\mu^{\leq \ell}\|^2 \cdot \sum_{\beta} \Paren{(1 + O(d/n)) }^{|\beta|} n^{-|\beta|} (1 - \delta)^{|\beta|} k^{2(|\beta| - V(\beta) + 1}
  \]
  where $\beta$ ranges over connected graphs with vertices from $[n]$, at most $\ell$ edges, every vertex except $i$ and $j$ having degree at least $2$, and containing $i$ and $j$ with degree at least $1$.
  There are at most $n^{V-2} V^{O(E-V)}$ such graphs containing at $V$ vertices aside from $i$ and $j$ and $E$ edges (by an analogous argument as in Lemma~\ref{lem:lb-no-cycles}).\Snote{}
  The total contribution from such $\beta$ is therefore at most
  \[
    \frac{k^{2(E - V + 1)} V^{O(E - V)}}{n^{E - V + 2}}
  \]
  Summing over $V$ and $E$, we get 
  \[
    \sum_{\beta} \Paren{(1 + O(d/n)) }^{|\beta|} n^{-|\beta|} (1 - \delta)^{|\beta|} k^{2(|\beta| - V(\beta) + 1} \leq n^{-\Omega(1)}
  \]
  so long as $\ell \leq n^c$ for small enough $c$.
  Using Theorem~\ref{thm:lower-bound-density} to bound $\|\mu^{\leq \ell} \|$ finishes the proof.
\end{proof}

    \section{Tensor decomposition from constant correlation}
\label{sec:tdecomp}

\begin{problem}[Orthogonal $n$-dimensional $4$-tensor decomposition from constant correlation]
\label{prob:tdecomp-l2}
  Let $a_1,\ldots,a_m \in \R^n$ be orthonormal, and let $A = \sum_{i = 1}^m a_i^{\tensor 4}$.
  Let $B \in (\R^n)^{\tensor 4}$ satisfy $\tfrac{\iprod{A,B}}{\|A\| \|B\|} \geq \delta = \Omega(1)$.\\
  Let $\cO$ be an oracle such that for any unit $v \in \R^n$,
  \[
  \cO(v) = \begin{cases}
  \textbf{YES} \text{ if } \sum_{i=1}^m \iprod{a_i,v}^4 \geq \delta^{O(1)} \\
  \textbf{NO} \text{ otherwise }
  \end{cases}
  \]

\noindent \textbf{Input:} The tensor $B$, and if $\delta < 0.01$, access to the oracle $\cO$.

\noindent  \textbf{Goal:} Output orthonormal vectors $b_1,\ldots,b_{m}$ so that there is a set $S \subseteq [m]$ of size $|S| \geq \delta^{O(1)} \cdot m$ where for every $i \in S$ there is $j \leq m$ with $\iprod{b_j,a_i}^2 \geq \delta^{O(1)}$.
\end{problem}

We will give an $n^{1/\delta^{O(1)}}$-time algorithm (hence using at most $n^{1/\delta^{O(1)}}$ oracle calls) for this problem based on a maximum-entropy Sum-of-Squares relaxation.
The main theorem is the following; the subsequent corollary arrives at the final algorithm.
\begin{theorem}\label{thm:tdecomp-main-small}
  Let $A,B$ and $a_1,\ldots,a_m$ and $\delta \leq 0.01$ be as in Problem~\ref{prob:tdecomp-l2}.
  Let $v_1,\ldots,v_r$ for $r \leq \delta^{4} m$ be orthonormal vectors.
  There is a randomized algorithm $ALG$ with running time $n^{O(1)}$ which takes input $B,v_1,\ldots,v_r$ and outputs a unit vector $v$, orthogonal to $v_1,\ldots,v_r$, with the following guarantee.
  There is a set $S \subseteq [m]$ of size $|S| \geq \delta^{O(1)} \cdot m$ so that for $i \in S$,
  \[
    \Pr \left \{ \iprod{v, a_i}^2 \geq \delta^{O(1)} \right \} \geq n^{-1/\poly(\delta)}\mper
  \]
\end{theorem}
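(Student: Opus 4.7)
The plan is to use a maximum-entropy sum-of-squares (correlation-preserving projection) step to obtain a degree-$4$ pseudo-moment tensor $M$ that inherits constant correlation with the signal $A$, and then extract a component via a Jennrich-style random contraction of $M$.

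First I would set up the convex set $\cC$ of valid degree-$4$ pseudo-moments $M = \pE_{\tilde\mu} x^{\otimes 4}$ for degree-$O(1)$ pseudo-distributions $\tilde\mu$ over $x \in \R^n$ satisfying $\|x\|^2 = 1$ and $\iprod{v_j, x} = 0$ for $j \leq r$. Since the $v_j$ are orthonormal with $r \leq \delta^4 m$ and the $a_i$ are orthonormal, a counting argument gives a set $S \subseteq [m]$ with $|S| \geq (1 - \delta^{O(1)}) m$ such that $\|\Pi a_i\|^2 \geq 1 - \delta^{O(1)}$ for all $i \in S$, where $\Pi$ projects onto $V^\perp$ for $V = \Span(v_1, \ldots, v_r)$. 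Setting $\tilde a_i = \Pi a_i/\|\Pi a_i\|$ and $A_S = \sum_{i \in S} \tilde a_i^{\otimes 4}$, the uniform distribution over $\{\tilde a_i\}_{i \in S}$ certifies $A_S/|S| \in \cC$ with $(\delta - \delta^{O(1)})$-correlation with $B$. Applying Theorem~\ref{thm:correlation-preserving-projection} inside $\cC$ then yields $M \in \cC$ of minimum Frobenius norm with $\iprod{M, A_S} \geq \delta^{O(1)} \|M\| \|A_S\|$ and $\|M\| \geq \delta^{O(1)}/\sqrt{|S|}$; this $M$ is computable by an SDP of size $n^{O(1)}$.

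Next I would sample $g \sim \cN(0, \Pi)$, form $M_g(p,q) = \sum_{r,s} M_{p,q,r,s}\, g_r g_s$, and return its top unit eigenvector $v$. Decomposing $M = c A_S + E$ with $\iprod{E, A_S} = 0$, we have $c \geq \delta^{O(1)}/|S|$, and the ``signal'' contribution is $c A_{S,g} = c\sum_{i \in S} \iprod{\tilde a_i, g}^2 \tilde a_i \tilde a_i^\top$. Standard Gaussian order-statistic calculations show that with probability $|S|^{-O(1)}$ over $g$, there is a unique $i^* \in S$ whose squared inner product $\iprod{\tilde a_{i^*}, g}^2$ exceeds every other by an $\Omega(\log|S|)$ multiplicative factor, producing in $c A_{S,g}$ an eigenvalue of order $c \log|S|$ in direction $\tilde a_{i^*}$ separated by a $\delta^{O(1)}$ multiplicative factor from the rest.

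The main obstacle is bounding $\|E_g\|_{\mathrm{op}}$, since $\|E\|_F$ is only a $\delta^{O(1)}$ factor below $\|c A_S\|_F$---the regime in which classical Jennrich analyses break down. The decisive extra structure is that $M$ is a pseudo-moment of a distribution over unit vectors: its $n^2 \times n^2$ flattening $\hat M$ is PSD with trace $\pE\|x\|^4 = 1$, so combined with the Frobenius bound of Step 1, $\hat M$ has spectral norm at most $\poly(1/\delta)/|S|$. Standard Gaussian concentration (Hanson--Wright) applied to the quadratic form $g \mapsto E_g$ then yields $\|E_g\|_{\mathrm{op}} \leq \poly(1/\delta)\polylog(n)/|S|$ with probability $1 - n^{-\omega(1)}$, dominated by the $c\log|S|$ signal gap. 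Hence with probability at least $|S|^{-O(1)} \geq n^{-1/\poly(\delta)}$ the top eigenvector of $M_g$ is $\delta^{O(1)}$-correlated with $\tilde a_{i^*}$, and thus with $a_{i^*}$, which establishes the theorem upon averaging $i^*$ over the random draw of $g$ and using $|S| = \Omega(m)$.
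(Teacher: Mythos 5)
Your high-level architecture matches the paper's: a correlation-preserving (minimum-Frobenius-norm) projection onto valid degree-$4$ pseudo-moments orthogonal to $v_1,\ldots,v_r$, followed by a Jennrich-style contraction $\pE\iprod{g,x}^2xx^\top$ with an $n^{-1/\poly(\delta)}$ success probability coming from conditioning on a large Gaussian coordinate. But the noise-control step has a genuine gap. First, your claim that PSD-ness, $\Tr \hat M = 1$, and the Frobenius bound together give $\|\hat M\|_{op} \leq \poly(1/\delta)/|S|$ is off by a factor of $\sqrt{|S|}$: trace $1$ plus $\|\hat M\|_F \leq \poly(1/\delta)/\sqrt{|S|}$ only yields $\|\hat M\|_{op} \leq \poly(1/\delta)/\sqrt{|S|}$. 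The paper cannot derive the needed $1/m$-scale operator norm bounds from entropy maximization alone; it imposes $\Normop{\pE xx^\top} \leq 1/m$ and $\Normop{\pE xx^\top \tensor xx^\top} \leq 1/m$ as explicit constraints in the convex program (checking that the target point $\tfrac1m \Pi A$ satisfies them), and these constraints are exactly what feed the spectral bound on the contraction.

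Second, Hanson--Wright controls the scalar chaos $g^\top Q_v g$ for a fixed test vector $v$; upgrading to $\|E_g\|_{op}$ requires a union bound over an $\exp(O(n))$-size net, which costs deviation terms of order $\sqrt{n}\,\|\hat E\|_{op}$ --- far larger than the signal scale $\delta^{O(1)}\log m/m$ you need to beat. The paper instead uses Gaussian decoupling plus the rectangular-flattening contraction bounds of Ma--Shi--Steurer (its Lemmas~\ref{lem:tdecomp-spherical-reweigh} and \ref{lem:tdecomp-nonspherical-reweigh} and Fact~\ref{fact:mss-contraction}), which exploit the tensor structure to avoid the net. Relatedly, the decomposition $M = cA_S + E$ with $E \perp A_S$ is the wrong split: $E$ is not unstructured noise (it can contain large pieces like $a_ia_i^\top \tensor a_ja_j^\top$), and no uniform operator-norm bound on $E_g$ at the required scale is available. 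The paper instead splits the \emph{Gaussian}, $g = g_0 a_i + g'$, conditions on $g_0^2 \gtrsim \log n/\delta^{O(1)}$, and bounds only $\Normop{\pE\iprod{g',x}^2xx^\top}$ and the cross term. This also explains why the paper outputs a random vector in the top $O(1/\delta^2)$-dimensional eigenspace rather than the single top eigenvector: the recovered signal is the rank-$O(1/\delta)$ matrix $\pE\iprod{a_i,x}^2xx^\top$, and your stronger ``unique spike dominates'' claim would require a spectral gap that the analysis does not provide.
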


The following corollary captures the overall algorithm for tensor decomposition, using the oracle $\cO$ to filter the output of the algorithm of Theorem~\ref{thm:tdecomp-main-small}.
\begin{corollary}\torestate{\label{cor:tdecomp-main}
  Let $a_1,\ldots,a_n, A, B, \delta$ be as in Theorem~\ref{thm:tdecomp-main-small} and $\cO$ as in Problem~\ref{prob:tdecomp-l2}.
  There is a $n^{\poly(1/\delta)}$-time algorithm which takes the tensor $B$ as input and returns $b_1,\ldots,b_m$ such that with high probability there is a set $S \subseteq [m]$ of size $|S| \geq \delta^{O(1)} m$ which has the guarantee that for all $i \in S$ there is $j \leq m$ with $\iprod{a_i,b_j}^2 \geq \delta^{O(1)}$.
  If $\delta \leq 1 - \Omega(1)$, the algorithm makes $n^{1/\poly(\delta)}$ adaptive queries to the oracle $\cO$.

  The algorithm can also be implemented with nonadaptive queries as follows.
  Once the input $B$ and the random coins of the algorithm are fixed, there is a list of at most $n^{\poly(k/\delta)}$.
  Query the oracle $\cO$ nonadaptively on all these vectors and assemble the answers into a lookup table; then the decomposition algorithm can be run using access only to the lookup table.
  }
\end{corollary}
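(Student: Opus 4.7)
The plan is to build an orthonormal list $L$ incrementally by amplifying Theorem~\ref{thm:tdecomp-main-small} and using the oracle $\cO$ as a filter. Initialize $L=\emptyset$, and for $T=n^{1/\poly(\delta)}$ trials, repeat: invoke the algorithm of Theorem~\ref{thm:tdecomp-main-small} on input $(B,L)$ with fresh randomness to obtain a unit $v\perp L$; query $\cO(v)$; and if the answer is \textbf{YES}, append $v$ to $L$. Halt as soon as $|L|=\delta^{C}m$ for a constant $C\ge 4$ chosen so that the hypothesis $r\le\delta^4 m$ of Theorem~\ref{thm:tdecomp-main-small} is maintained throughout. Finally, pad $L$ arbitrarily to an orthonormal $b_1,\ldots,b_m$ and output it.

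Correctness rests on two observations. First, a \textbf{YES} verdict from $\cO$ on a unit $v$ certifies $\sum_i\iprod{a_i,v}^4\ge\delta^{O(1)}$, and combined with $\sum_i\iprod{a_i,v}^2\le\|v\|^2=1$ (using orthonormality of $a_1,\ldots,a_m$) this yields $\max_j\iprod{a_j,v}^2\ge\sum_i\iprod{a_i,v}^4\ge\delta^{O(1)}$, so every member of $L$ is strongly correlated with some $a_j$. Second, because $L$ is orthonormal, any particular $a_j$ is $\delta^{O(1)}$-correlated with at most $\delta^{-O(1)}$ elements of $L$ (from $\sum_{b\in L}\iprod{a_j,b}^2\le 1$), hence when $|L|=\delta^{C}m$ the set $S$ of covered indices has size $\ge\delta^{O(1)}\cdot|L|\ge\delta^{O(1)}m$. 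For progress, Theorem~\ref{thm:tdecomp-main-small} guarantees that while $|L|\le\delta^4 m$ there is a set of $\delta^{O(1)}m$ indices $i\in[m]$ each hit with probability $n^{-1/\poly(\delta)}$ per trial, and any such hit implies $\cO(v)=\textbf{YES}$ by the calculation above; a straightforward Chernoff/union-bound estimate then shows that $T=n^{1/\poly(\delta)}$ trials suffice to grow $L$ to its target size with high probability. Each trial runs in $n^{O(1)}$ time and uses a single oracle call, giving total running time $n^{1/\poly(\delta)}$.

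The expected main obstacle is the non-adaptive reformulation. Once the random coins of all $T$ invocations are fixed, the candidate vector at trial $t$ depends only on $B$ and on the current list $L_t$; the full collection of vectors ever queried across the $\textbf{YES}/\textbf{NO}$ branching forms a tree of execution histories, which naively has size $2^T$. The plan is to control this via a careful induction: since $L$ grows by at most one vector per trial, never exceeds $\delta^C m$, and each new element is drawn from the set of candidates produced at earlier trials, the reachable collection of $L$'s closes up after $n^{\poly(1/\delta)}$ distinct candidate vectors. Enumerating that set, pre-querying $\cO$ on every element, and storing the answers in a lookup table then lets the adaptive routine be simulated without further oracle calls. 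The combinatorial counting that caps the branching tree by a polynomial rather than an exponential is the step I expect to require the most care.
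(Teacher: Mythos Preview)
Your approach is essentially the paper's: iteratively call the algorithm of Theorem~\ref{thm:tdecomp-main-small} with the current orthonormal list as the $v_i$'s, filter the returned candidate through $\cO$, and use the pigeonhole argument (each $a_j$ can be $\delta^{O(1)}$-correlated with at most $\delta^{-O(1)}$ members of an orthonormal list) to conclude that a list of size $\delta^{O(1)}m$ covers $\delta^{O(1)}m$ distinct $a_i$'s. The paper organizes the trials into $T=\delta^{O(1)}m$ rounds, each generating a batch of $t=n^{1/\delta^{O(1)}}$ candidates and keeping the first \textbf{YES}, but this is just a repackaging of your flat loop. Your explicit inequality $\max_j\iprod{a_j,v}^2\ge\sum_i\iprod{a_i,v}^4$ (from $\sum_i\iprod{a_i,v}^2\le 1$) is a clean justification that the paper leaves implicit.

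One omission: to justify the clause that the oracle is needed only when $\delta\le 1-\Omega(1)$, the paper handles the large-$\delta$ regime separately by invoking the Schramm--Steurer tensor decomposition algorithm, which requires no oracle; you should add this case. Regarding the nonadaptive reformulation, the paper's proof does not spell it out either, so your sketch is at the same level of detail as the original.
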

\begin{proof}[Proof of Corollary~\ref{cor:tdecomp-main}]
  If $\delta \geq 1 - \e^*$ for a small enough constant $\e^*$ then the tensor decomposition algorithm of Schramm and Steurer has the appropriate guarantees.
  (See Theorem 4.4 and Lemma 4.9 in \cite{DBLP:conf/colt/SchrammS17}.
  This algorithm has several advantages, including that it does not need to solve any semidefinite program, but it cannot handle the high-error regime we need to address here.)

  From here on we assume $\delta \leq 0.01 < 1 - \e^*$.
  (Otherwise, we can replace $\delta$ with $\delta^C \leq 0.01$ for large enough $C$.)
  Our algorithm is as follows.

  \begin{algorithm}[Constant-correlation tensor decomposition]
  \begin{compactenum}
    \item Let $V$ be an empty set of vectors.
    \item For rounds $1,\ldots,T = \delta^{O(1)} m$, do:
    \begin{compactenum}
      \item Use the algorithm of Theorem~\ref{thm:tdecomp-main-small} on the tensor $B$ to generate $w_1,\ldots,w_t$, where $t = n^{1/\delta^{O(1)}}$.
      \item Call $\cO$ on successive vectors $w_1,\ldots,w_t$, and let $w$ be the first for which it outputs \textbf{YES}.
      (If no such vector exists, the algorithm halts and outputs random orthonormal vectors $b_1,\ldots,b_m$.)
      \item Add $w$ to $V$.
    \end{compactenum}
    \item Let $b_1,\ldots,b_{m - |V|}$ be random orthonormal vectors, orthogonal to each $v \in V$.
    \item Output $\{b_1,\ldots,b_{m-|V|}\} \cup V$.
  \end{compactenum}
  \end{algorithm}
  Choosing $t = n^{1/\delta^{O(1)}}$ large enough, and $T = \delta^{O(1)} m$ small enough, by Theorem~\ref{thm:tdecomp-main-small} with high probability in every round $1,\ldots,T$ there is some $w$ among $w_1,\ldots,w_t$ for which $\cO$ outputs \textbf{YES}.
  Suppose that occurs.
  In this case, the algorithm outputs (along with some random vectors $b_i$) a set of vectors $V$ which are orthonormal, and each $v \in V$ satisfies $\iprod{v,a_i} \geq \delta^{O(1)}$ for some $a_i$; say that this $a_i$ is \emph{covered} by $v$.
  Each $a_i$ can be covered at most $1/\delta^{O(1)}$ times, by orthonormality of the set $V$.
  So, at least $\delta^{O(1)} |V| = \delta^{O(1)}m$ vectors are covered at least once, which proves the corollary.
\end{proof}

We turn to the proof of Theorem~\ref{thm:tdecomp-main-small}.
We will use the following lemmas, whose proofs are later in this section.
The problem is already interesting when the list $v_1,\ldots,v_r$ is empty, and we encourange the reader to understand this case first.

The first lemma says that a pseudodistribution of high entropy (in the $2$-norm sense\footnote{For a distribution $\mu$ finitely-supported on a family of orthonormal vectors, the Frobenious norm $\|\E_{x \sim \mu} x^{\tensor k} \|$ is closely related to the collision probability of $\mu$, itself closely related to the order-$2$ case of \Renyi entropy.})
 which is correlated with the tensor $B$ must also be nontrivially correlated with $A$.

\begin{lemma}\label{lem:tdecomp-correlation}
  Let $A, B$ be as in Problem~\ref{prob:tdecomp-l2}.
  Let $v_1,\ldots,v_r \in \R^n$ be orthonormal, with $r \leq \delta^4 m$.
  Suppose $\pE$ is the degree-$4$ pseudodistribution solving
  \begin{align}
  \min & \|\pE x^{\tensor 4}\|_F \label{eq:frob-min}\\
  \text{s.t. } & \text{ $\pE$ satisfies } \{\|x\|^2 \leq 1, \iprod{x,v_1} = 0,\ldots,\iprod{x,v_r}=0 \} \nonumber\\
  &  \iprod{\pE x^{\tensor 4},  B} \geq  \frac{\delta}{2m} \nonumber\\
    & \Normop{\pE xx^\top} \leq \tfrac 1 m\\
    & \Normop{\pE xx^\top \tensor xx^\top} \leq \tfrac 1 m
  \end{align}
  Then $\pE \sum_{i \leq m} \iprod{x,a_i}^4 \geq \delta^2 / 8$.
  Furthermore, it is possible to find $\pE$ in polynomial time.\footnote{Up to inverse-polynomial error, which we ignore here. See \cite{DBLP:conf/focs/MaSS16} for the ideas needed to show polynomial-time solvability.}
\end{lemma}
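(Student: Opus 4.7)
The plan is to invoke the correlation-preserving projection (Theorem~\ref{thm:correlation-preserving-projection}) on the convex set $\cC$ of degree-$4$ pseudo-moment tensors $Q = \pE x^{\tensor 4}$ whose associated $\pE$ satisfies the non-correlation constraints of the program: SoS-satisfaction of $\|x\|^2 \leq 1$ and $\iprod{x,v_j}=0$, together with the operator-norm bounds on $\pE xx^\top$ and $\pE xx^\top \tensor xx^\top$. The main idea is to exhibit a feasibility witness $Y \in \cC$ that realizes, up to a small perturbation, the tensor $A/m$; then correlation of $B$ with $Y$ is inherited from $\iprod{B, A}$, and Theorem~\ref{thm:correlation-preserving-projection} automatically transfers this correlation from $B$ to the Frobenius-norm minimizer $Q$.

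To construct $Y$, I would let $\pi$ be the orthogonal projector onto $\mathrm{span}(v_1,\ldots,v_r)^\perp$ and $T = \{i \in [m] : \|\pi a_i\|^2 \geq 1/2\}$. Orthonormality of the $a_i$ and $\|v_j\|=1$ yield $\sum_{i,j}\iprod{a_i,v_j}^2 \leq r \leq \delta^4 m$, so Markov gives $|T| \geq (1 - 2\delta^4) m$. Setting $\tilde a_i = \pi a_i/\|\pi a_i\|$ for $i \in T$ and $\pE^* f(x) = \tfrac{1}{|T|}\sum_{i\in T}f(\tilde a_i)$, the tensor $Y = \pE^* x^{\tensor 4}$ is an honest expectation, so it SoS-satisfies $\|x\|^2 \leq 1$ and the orthogonality constraints. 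The operator-norm bounds of order $O(1/m)$ follow from $\tilde a_i \tilde a_i^\top \preceq 2\pi a_i a_i^\top \pi$ together with orthonormality of $\{a_i\}$ and of $\{a_i \tensor a_i\}$ in $\R^n \tensor \R^n$, with the residual constants absorbed into the statement (possibly tightening $|T|$ or mildly rescaling the candidate). A short Cauchy-Schwarz computation using $\sum_{i \neq j \in T}\bigl(a_i^\top (\sum_k v_k v_k^\top) a_j\bigr)^2 \leq r$ then gives $\|Y\|_F^2 = (1 \pm O(\delta^4))/m$ and $\iprod{B, Y} = (1 \pm O(\delta^2))\iprod{B,A}/m \geq (1 - O(\delta^2))\delta\|B\|_F/\sqrt m$, so the correlation $\iprod{B,Y}/(\|B\|_F\|Y\|_F)$ is at least $\delta' \defeq \delta - O(\delta^3)$, and the lemma's threshold $\iprod{\pE x^{\tensor 4}, B} \geq \delta/(2m)$ is comfortably met by $Y$.

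Applying Theorem~\ref{thm:correlation-preserving-projection} with $P = B$, the witness $Y$, and correlation parameter $\delta'$, the Frobenius minimizer $Q = \pE x^{\tensor 4}$ satisfies $\iprod{Q, Y} \geq (\delta'/2)\|Q\|_F \|Y\|_F$ and $\|Q\|_F \geq \delta' \|Y\|_F$, hence $\iprod{Q, Y} \geq (\delta')^2/2 \cdot \|Y\|_F^2 \geq \delta^2/(4m) - O(\delta^4/m)$. Since $mY = A$ up to the same $(1 \pm O(\delta^2))$-perturbation and $\|Q\|_F \leq \|Y\|_F = O(1/\sqrt m)$ controls the perturbation contribution, we conclude $\iprod{Q, A} = m\iprod{Q, Y} \pm O(\delta^2)\|Q\|_F \|A\|_F \geq \delta^2/8$ after bookkeeping the constants, which is exactly $\pE \sum_{i \leq m}\iprod{x, a_i}^4$. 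Polynomial-time solvability is routine: the objective is a convex quadratic in the entries of $\pE$, all constraints (SoS, linear, spectral) are convex with efficient separation oracles, and the ellipsoid method or interior-point methods \cite{DBLP:conf/focs/MaSS16} find the optimum to inverse-polynomial accuracy in polynomial time. The main obstacle is propagating the $(1 \pm O(\delta^{O(1)}))$-factors introduced by the projected vectors $\tilde a_i$, which are no longer orthonormal, through every spectral, Frobenius, and inner-product estimate without upsetting the final $\delta^2/8$ bound---this is precisely why the hypothesis $r \leq \delta^4 m$ is required.
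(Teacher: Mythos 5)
Your proposal is correct and follows essentially the same route as the paper: both exhibit a feasible witness built from the projections of the $a_i$ onto $\mathrm{span}(v_1,\ldots,v_r)^\perp$ and then invoke Theorem~\ref{thm:correlation-preserving-projection}, using the near-orthogonality guaranteed by $r \le \delta^4 m$ (packaged in the paper as Fact~\ref{fact:A-proj}) to move between the projected and unprojected tensors. The one place your variant creates avoidable friction is the normalization $\tilde a_i = \pi a_i/\|\pi a_i\|$, which pushes $\Normop{\pE^* xx^\top}$ up to roughly $2/m$ and so violates the program's constraints; your suggested fix of ``tightening $|T|$'' cannot reach $1/m$ exactly, and ``rescaling'' must be realized as mixing with the point mass at $0$ (a bare rescaling is not a pseudo-expectation) --- whereas the paper simply uses the unnormalized $\Pi a_i$, whose moment tensors satisfy the $1/m$ operator-norm bounds exactly since $\|\Pi a_i\|\le 1$, and whose deficit from being unit vectors is harmless because $\|x\|^2\le 1$ is an inequality constraint.
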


The second lemma says that given a high-entropy (in the spectral sense of \cite{DBLP:conf/focs/MaSS16}) pseudodistribution $\pE$ having nontrivial correlation with some $a \in \R^n$, contracting $\pE$ with $a$ yields a matrix whose quadratic form is large at $a$ and which does not have too many large eigenvalues.
\begin{lemma}\label{lem:tdecomp-ideal-reweigh}
  Let $a_1,\ldots,a_m \in \R^n$ be orthonormal.
  
  Let $\pE$ be a degree-$4$ pseudoexpectation such that
  \begin{enumerate}
    \item $\pE$ satisfies $\{ \|x\|^2 \leq 1 \}$
    \item $\pE \sum_{i \leq m} \iprod{x, a_i}^4 \geq \delta$.
    \item $\|\pE xx^\top] \|_{op}, \|\pE xx^\top \tensor xx^\top \|_{op} \leq \tfrac 1 m \mper$\footnote{Recall that $\Normop{\cdot}$ denotes the operator norm, or maximum singular value, of a matrix.}
  \end{enumerate}
  Let $M_i \in \R^{n \times n}$ be the matrix $\pE \iprod{x,a_i}^2 xx^\top$.
  For every $i \in [m]$, the matrix $M_i$ has at most $4/\delta$ eigenvalues larger than $\tfrac \delta {4m}$.
  Furthermore,
  \[
  \Pr_{i \sim [m]} \left \{ \iprod{a_i, M_i a_i} \geq \tfrac {\delta}{2m} \right \} \geq \tfrac \delta {2} \mper
  \]
\end{lemma}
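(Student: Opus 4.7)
The plan is to observe that both claims reduce to elementary manipulations of $\pE$ using the three hypotheses, with no nontrivial SoS reasoning required beyond a single multiplication of the constraint $\|x\|^2 \le 1$ by a square.

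For the second claim, I would begin by noting the identity $\iprod{a_i, M_i a_i} = \pE \iprod{x,a_i}^4$, which turns the claim into a statement about the scalars $c_i \defeq \pE \iprod{x,a_i}^4 \ge 0$. Hypothesis 2 gives $\sum_i c_i \ge \delta$, i.e. average at least $\delta/m$. To convert this to a lower bound on the fraction of large $c_i$, I need a uniform upper bound, which comes from rewriting
\[
c_i \;=\; (a_i\tensor a_i)^\top \pE\bigl[xx^\top \tensor xx^\top\bigr] (a_i \tensor a_i) \;\le\; \Normop{\pE xx^\top\tensor xx^\top}\cdot \|a_i\|^4 \;\le\; \tfrac 1m
\]
by hypothesis 3 and orthonormality. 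A one-line Markov argument (if $0 \le c_i \le 1/m$ and $\E_i c_i \ge \delta/m$, then $\Pr_i\{c_i \ge \delta/(2m)\} \ge \delta/2$) then yields the claim.

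For the first claim, the key is to bound $\tr M_i$ and then use positivity. The matrix $M_i$ is PSD in the pseudoexpectation sense because for any $v \in \R^n$, $v^\top M_i v = \pE \iprod{x,a_i}^2\iprod{x,v}^2$, which is a pseudoexpectation of a sum of squares of degree $4$. Next I compute $\tr M_i = \pE \iprod{x,a_i}^2 \|x\|^2$. Since $\iprod{x,a_i}^2$ is a degree-$2$ SoS and $\pE$ satisfies $\{\|x\|^2 \le 1\}$ at degree $4$ (hypothesis 1), we may multiply to deduce $\pE \iprod{x,a_i}^2\|x\|^2 \le \pE\iprod{x,a_i}^2$. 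The right-hand side equals $a_i^\top \pE[xx^\top] a_i \le \Normop{\pE xx^\top} \le 1/m$ by hypothesis 3. Therefore $\tr M_i \le 1/m$, and since all eigenvalues are nonnegative, at most $(1/m)/(\delta/(4m)) = 4/\delta$ of them can exceed $\delta/(4m)$.

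There is no real obstacle here; the proof is essentially bookkeeping. The only place one must be slightly careful is checking that the SoS multiplication used in bounding $\tr M_i$ fits within the degree-$4$ budget of $\pE$, which it does because $\iprod{x,a_i}^2(1 - \|x\|^2)$ is a degree-$4$ polynomial expressible as (SoS)$\cdot$(constraint).
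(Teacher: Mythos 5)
Your proof is correct and follows essentially the same route as the paper's: bound each $c_i = \pE\iprod{x,a_i}^4 = \iprod{a_i,M_ia_i}$ above by $1/m$ via the operator-norm hypothesis on $\pE\,xx^\top\tensor xx^\top$, combine with $\sum_i c_i \ge \delta$ via a Markov-type count for the second claim, and use $\Tr M_i \le \pE\iprod{x,a_i}^2 \le 1/m$ together with $M_i\succeq 0$ for the eigenvalue count. Your handling of $\Tr M_i$ is in fact slightly more careful than the paper's (which writes an equality where, under the constraint $\|x\|^2\le 1$, only the inequality you use is justified).
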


The last lemma will help show that a random contraction of a high-entropy pseudodistribution behaves like one of the contractions from Lemma~\ref{lem:tdecomp-ideal-reweigh}, with at least inverse-polynomial probability.

\begin{lemma}\label{lem:tdecomp-nonspherical-reweigh}
Let $g \sim \cN(0,\Sigma)$ for some $0 \preceq \Sigma \preceq \Id$  and let $\pE$ be a degree-$4$ pseudoexpectation where
  \begin{itemize}
    \item $\pE$ satisfies $\{\|x\|^2 \leq 1\}$.
    \item $\Normop{\pE xx^\top } \leq c$.
    \item $\Normop{\pE xx^\top \tensor xx^\top} \leq c$
  \end{itemize}
  Then
  \[
    \E_g \Normop{\pE \iprod{g,x}^2 xx^\top} \leq O(c \cdot \log n)\mper
  \]
\end{lemma}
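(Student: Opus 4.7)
Write $M(g) := \pE\iprod{g,x}^2 xx^\top$, which is PSD for every $g$ because $\iprod{g,x}^2 xx^\top$ is an SoS-PSD matrix. I would split $\E_g\Normop{M(g)}$ into a mean plus fluctuation and bound them separately. For the mean, observe that $\E_g M(g) = \pE[(x^\top \Sigma x)\cdot xx^\top]$; since $\Sigma \preceq \Id$ gives $x^\top(\Id-\Sigma)x$ as an SoS polynomial, and since $\pE$ satisfies $\|x\|^2 \leq 1$ and $(1-\|x\|^2)(u^\top x)^2$ has non-negative pseudoexpectation for every $u$, we get $\E_g M(g) \preceq \pE[\|x\|^2 xx^\top] \preceq \pE[xx^\top]$. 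Hence $\Normop{\E_g M(g)} \leq c$ by the second hypothesis.

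For the fluctuation, I would exploit the variational form $\Normop{M(g)} = \sup_{\|v\|=1} g^\top A_v g$, where $A_v := \pE[(v^\top x)^2 xx^\top] \succeq 0$. Two scalar bounds on $A_v$ are available:
\begin{itemize}
\item $\Tr A_v = \pE[(v^\top x)^2 \|x\|^2] \leq \pE[(v^\top x)^2] \leq c$, using $\|x\|^2 \leq 1$ in SoS and the second hypothesis;
\item $\Normop{A_v} \leq c$, via pseudo-Cauchy--Schwarz $\pE[(v^\top x)^2(u^\top x)^2] \leq (\pE(v^\top x)^4)^{1/2}(\pE(u^\top x)^4)^{1/2}$, where each fourth moment is the Rayleigh quotient of $\pE xx^\top \tensor xx^\top$ on a rank-one tensor and is therefore at most $c$.
\end{itemize}
It follows that $\|A_v\|_F^2 \leq \Tr A_v \cdot \Normop{A_v} \leq c^2$, so each Gaussian quadratic form $g^\top A_v g$ has mean $\Tr(A_v\Sigma) \leq c$ and, by Hanson--Wright, sub-exponential tails $\Pr[g^\top A_v g > c\tau] \leq e^{-\Omega(\tau)}$ for $\tau \geq 1$.

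The central technical step is to take the $\sup_v$ inside the expectation with only a $\log n$ overhead, as a direct $\epsilon$-net argument would cost $n \log(1/\epsilon)$ via the net size $(3/\epsilon)^n$. I would instead use the moment/Schatten-norm trick: for $p = \Theta(\log n)$,
\[
  \E\Normop{M(g)} \leq \bigl(\E \Tr M(g)^p\bigr)^{1/p},
\]
since $n^{1/p} = O(1)$ for this choice of $p$. Expanding $\Tr M(g)^p$ as a sum over closed walks and taking the Gaussian expectation by Wick's theorem yields a sum of products of pseudoexpectations of degree-$4$ monomials in $x$, coupled through a pairing of the $2p$ $g$-indices. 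Bounding the pseudoexpectation factors by iterated application of the operator-norm hypotheses on $\pE xx^\top$ and $\pE xx^\top \tensor xx^\top$, together with the sub-exponential Gaussian moments $\E(g^\top A_v g)^p \leq (O(cp))^p$, gives a bound of the form $(O(c))^p \cdot p^p$. Taking the $p$-th root with $p = \log n$ yields $O(c\log n)$.

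The main obstacle will be the combinatorial bookkeeping in the walk expansion: one has to match the Wick pairing on the Gaussian side with the correct contractions of the 4-tensor $Q = \pE xx^\top \tensor xx^\top$ so that each factor of $c$ comes from a genuine operator-norm application rather than a crude Frobenius bound (the latter would lose the desired $\sqrt{c}$ improvement). A cleaner alternative I would try first is decoupling to a bilinear chaos $\sum_{a,b} g_a g'_b T^{ab}$ in independent copies $g,g'$, then applying matrix non-commutative Khintchine conditionally on $g'$ to pick up one $\sqrt{\log n}$ factor, and bounding the resulting variance proxy $\|\sum_a B^a(g')^2\|_{op}$ by $\Tr$ and the $\|Q\|_{op} \leq c$ hypothesis to get the second factor; the delicate point there is to avoid an extra loss that would produce $\sqrt{c\log n}$ instead of $c\log n$.
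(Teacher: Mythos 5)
Your preliminary reductions are all correct (the bound $\Normop{\E_g M(g)}\le c$, and for each fixed unit $v$ the bounds $\Tr A_v\le c$, $\Normop{A_v}\le c$ via pseudo-Cauchy--Schwarz, hence sub-exponential concentration of $g^\top A_v g$ around a mean of at most $c$). But the proof has a genuine gap exactly where you flag "the central technical step": you never actually carry out the passage from the pointwise-in-$v$ bound to a bound on $\E_g\sup_v g^\top A_v g$ with only an $O(\log n)$ loss. The trace-moment route $\E\Normop{M(g)}\le(\E\Tr M(g)^p)^{1/p}$ is not a routine Wick computation here: after pairing the $2p$ Gaussian indices, each term is a closed contraction of $p$ copies of the degree-$4$ pseudo-moment tensor through $\Sigma$, and the pairings that cross between different factors of $M(g)^p$ produce contraction patterns that are not obviously controlled by the two flattening norms $\Normop{\pE xx^\top}$ and $\Normop{\pE xx^\top\tensor xx^\top}$ alone; the claim that every such term is $(O(c))^p p^p$ is asserted, not proved, and is essentially the whole content of the lemma. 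Your fallback (decouple, then matrix Khintchine conditionally on $g'$, then bound the variance proxy $\Normop{\sum_a B^a(g')^2}$) is the right kind of idea but is also left unfinished at its own "delicate point," namely bounding $\E_{g'}\Normop{\sum_a B^a(g')^2}^{1/2}$ by $c\sqrt{\log n}$ rather than something worse; that step again requires a second Gaussian contraction argument, not just the stated hypotheses.

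For comparison, the paper's proof sidesteps the supremum entirely. First it reduces to $\Sigma=\Id$ by writing $g=\tfrac12(g'+g'')$ with $g'=g+h$, $g''=g-h$, $h\sim\cN(0,\Id-\Sigma)$ independent, so that $g',g''$ are each standard Gaussians, and the cross term $\pE\iprod{g',x}\iprod{g'',x}xx^\top$ is absorbed by pseudo-Cauchy--Schwarz. In the spherical case it splits $\pE\iprod{g,x}^2xx^\top$ into the diagonal part $\sum_i g_i^2\,\pE x_i^2xx^\top$ (bounded by $\E\max_i g_i^2=O(\log n)$ times $\Normop{\pE xx^\top}$) and the off-diagonal part, which it decouples via de la Pe\~na's inequality and then controls by two applications of the Gaussian tensor-contraction bound of Ma--Shi--Steurer (Fact~\ref{fact:mss-contraction}), once in $g$ and once in $h$; that cited fact is precisely the black box that delivers the $\sqrt{\log n}$ per contraction with no dimension factor. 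If you want to complete your argument, you should either import that contraction lemma (at which point your plan essentially collapses into the paper's) or genuinely execute the combinatorics of the trace-moment expansion, which I expect to be substantially harder than the two paragraphs you have budgeted for it.
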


Now we can prove Theorem~\ref{thm:tdecomp-main-small}.
\begin{proof}[Proof of Theorem~\ref{thm:tdecomp-main-small}]
The algorithm is as follows:
  \begin{algorithm}[Low-correlation tensor decomposition]
  \begin{compactenum}
    \item Use the first part of Lemma~\ref{lem:tdecomp-correlation} to obtain a degree-$4$ pseudoexpectation with $\pE \sum_{i \in [m]} \iprod{a_i,x}^4 \geq \delta^2/4$ satisfying $\{\|x\|^2 \leq 1, \iprod{x,v_1} = 0,\ldots,\iprod{x,v_r} = 0\}$. 

    \item Sample a random $g \sim \cN(0,\Id)$ and compute the contraction $M = \pE \iprod{g,x}^2 xx^\top$.
    \item Output a random unit vector $b$ in the span of the top $\tfrac {32} {\delta^2}$ eigenvectors of $M$.
  \end{compactenum}
  \end{algorithm}

  First note that for any $v \in \Span\{v_1,\ldots,v_r\}$, we must have $\iprod{v,Mv} = \pE \iprod{g,x}^2 \iprod{v,x}^2 = 0$, so $v$ lies in the kernel of $M$. 
  Hence, the ouput of the algorithm will always be orthogonal to $v_1,\ldots,v_r$.

  Let $\Pi_{32/\delta^2}$ be the projector to the top $32/\delta^2$ eigenvectors of $M$.
  For any unit vector $a$ with $\|\Pi_{32/\delta^2} a\| \geq \delta^{O(1)}$, the algorithm will output $b$ with nontrivial correlation with $a$.
  Formally, for any such $a$,
  \[
    \E_b \iprod{b,a}^2 \geq \delta^{O(1)}\mper
  \]

  So, our goal is to show that for a $\delta^{O(1)}$-fraction of the vectors $a_1,\ldots,a_m$,
  \[
    \Pr_g \{ \|\Pi_{32/\delta^2} a_i\| \geq \delta^{O(1)} \} \geq n^{-1/\delta^{O(1)}}\mper
  \]

For $i \in [m]$, let $M_i = \pE \iprod{a_i,x}^2 xx^\top$.
Let $i$ be the index of some $a_i$ so that
  \[
    \iprod{a_i, M_i a_i} \geq \tfrac {\delta^2}{16 m} \text{ and } \rank M_i^{\geq \tfrac {\delta^2}{32 m}} \leq \tfrac {32} {\delta^2}
  \]
  as in Lemma~\ref{lem:tdecomp-ideal-reweigh}.
  (There are $\Omega(\delta^2 m)$ possible choices for $a_i$, according to the Lemma.)

  We expand the Gaussian vector $g$ from the algorithm as
  \[
    g = g_0 \cdot a_i + g'
  \]
  where $g_0 \sim \cN(0,1)$ and $\iprod{g', a_i} = 0$.
  We note for later use that $g'$ is a Gaussian vector independent of $g_0$ and that $\E (g')(g')^\top \preceq \Id$.
  Using this expansion,
  \[
    M = g_0^2 \pE \iprod{a_i,x}^2 xx^\top + 2 \cdot g_0 \pE \iprod{g',x} \iprod{a_i,x} xx^\top + \pE \iprod{g',x}^2 xx^\top\mper
  \]
  We will show that all but the first term have small spectral norm.
  Addressing the middle term first, by Cauchy-Schwarz, for any unit $v \in \R^n$,
  \[
    \pE \iprod{g',x} \iprod{a_i,x}\iprod{v,x}^2 \leq \Paren{\pE \iprod{g',x}^2 \iprod{x,v}^2}^{1/2} \Paren{\pE \iprod{a_i,x}^2 \iprod{v,x}^2}^{1/2} \leq \Normop{\pE \iprod{g',x}^2xx^\top}^{1/2}\cdot \Paren{\tfrac 1 m}^{1/2}\mcom
  \]
where in the last step we have used that $\Normop{\pE xx^\top \tensor xx^\top} \leq \tfrac 1 m$.

  By Markov's inequality and Lemma~\ref{lem:tdecomp-nonspherical-reweigh},
  \[
    \Pr_{g'}\left \{ \Normop{\pE \iprod{g',x}^2xx^\top} > \tfrac {t \log n} m \right \} \leq O\Paren{\tfrac 1 t}\mper
  \]
  Let $t$ be a large enough constant so that
  \[
    \Pr_{g'} \left \{ \Normop{\pE \iprod{g',x}^2xx^\top} \leq \tfrac {t \log n} m \right \}  \geq 0.9\mper
  \]
  For any constant $c$, with probability $n^{-1/\poly(\delta)}$, the foregoing occurs and $g_0$ (which is independent of $g'$) is large enough that
  \[
    g_0^2 \cdot \tfrac {c\delta^2} m > \tfrac 1 {\delta^4} \Normop{M - g_0^2 M_i}\mper
  \]
  Choosing $c$ large enough, in this case
  \[
  M' \defeq \tfrac 1 {g_0^2} M = M_i + O(\delta^{6}/m)\mper
  \]
  Hence the vector $a_i$ satisfies
  \[
    \tfrac 1 {g_0^2} \iprod{a_i, M a_i} \geq \tfrac{\delta^2}{33m}
  \]
  This means that the projection $b$ of $a_i$ into the span of eigenvectors of $M'$ with eigenvalue at least $\delta^2/60m$ has $\|b\|^2 \geq \delta^{O(1)}$.
  This finishes the proof.
\end{proof}

\subsection{Proofs of Lemmas}
These lemmas and their proofs use many ideas from \cite{DBLP:conf/focs/MaSS16}.
The main difference here is that we want to contract the tensor $\pE x^{\tensor 4}$ in $2$ modes, to obtain the matrix $\pE \iprod{g,x}^2xx^\top$.
For us this is useful because $\pE \iprod{g,x}^2xx^\top \succeq 0$.
By contrast, the tools in \cite{DBLP:conf/focs/MaSS16} would only allow us to analyze the contraction $\pE \iprod{h,x \tensor x}xx^\top$ for $h \sim \cN(0, \Id_{n^2})$.

We start with an elementary fact.
\begin{fact}\label{fact:A-proj}
  Let $a_1,\ldots,a_m \in \R^n$ be orthonormal.
  Let $\Pi$ be the projector to a subspace of codimension at most $\delta m$.
  Let $A = \sum_{i=1}^m a_i^{\tensor 4}$ and $\Pi A = \sum_{i=1}^m (\Pi a_i)^{\tensor 4}$.
  Then $\iprod{A,\Pi A} \geq (1 - O(\sqrt \delta)) \|A\| \cdot \|\Pi A\|$.
\end{fact}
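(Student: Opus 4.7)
The plan is to reduce the whole statement to two elementary facts about projectors acting on an orthonormal system. The key observation is that because $\Pi$ is a symmetric idempotent, the ``cross'' inner products collapse: for any $i,j$,
\[
\iprod{a_i, \Pi a_j} \;=\; \iprod{\Pi a_i,\Pi a_j},
\]
and therefore
\[
\iprod{A,\Pi A} \;=\; \sum_{i,j\le m} \iprod{a_i,\Pi a_j}^4 \;=\; \sum_{i,j\le m} \iprod{\Pi a_i,\Pi a_j}^4 \;=\; \|\Pi A\|^2.
\]
Hence the claim $\iprod{A,\Pi A}\ge (1-O(\sqrt\delta))\,\|A\|\,\|\Pi A\|$ is equivalent to the single-sided bound $\|\Pi A\| \ge (1-O(\sqrt\delta))\,\|A\|$, which is what I would actually try to establish.

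To do so, first compute the easy norm $\|A\|^2 = \sum_{i,j}\iprod{a_i,a_j}^4 = m$ by orthonormality of the $a_i$'s. For $\|\Pi A\|^2$, drop the off-diagonal (non-negative) terms to get the lower bound $\|\Pi A\|^2 \ge \sum_i \|\Pi a_i\|^8$. Let $c_i = \|\Pi a_i\|^2 \in [0,1]$. A trace calculation controls the sum:
\[
\sum_i c_i \;=\; \tr\bigl(\Pi \cdot \textstyle\sum_i a_ia_i^\top\bigr) \;=\; m - \tr\bigl((\Id-\Pi)\textstyle\sum_i a_ia_i^\top\bigr) \;\ge\; m - \mathrm{codim}(\Pi) \;\ge\; (1-\delta)m,
\]
using $\sum_i a_ia_i^\top \preceq \Id$ and $\mathrm{codim}(\Pi)\le \delta m$. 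Finally, the power-mean (or Jensen's) inequality gives
\[
\sum_i c_i^4 \;\ge\; \frac{(\sum_i c_i)^4}{m^3} \;\ge\; (1-\delta)^4\,m,
\]
so $\|\Pi A\|^2 \ge (1-\delta)^4 m \ge (1-O(\delta))\|A\|^2$, which in particular is much stronger than the claimed $(1-O(\sqrt\delta))$ bound.

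There is no serious obstacle; the only potentially subtle step is the projector identity $\iprod{a_i,\Pi a_j}=\iprod{\Pi a_i,\Pi a_j}$, which makes $\iprod{A,\Pi A}$ a sum of non-negative quartic terms and collapses the proof to the diagonal. Everything else is a one-line trace bound plus a convexity inequality. I would write the argument in essentially the three displayed lines above, noting that we actually prove the stronger estimate with $O(\delta)$ in place of $O(\sqrt\delta)$.
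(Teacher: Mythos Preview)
Your argument is correct and follows essentially the same skeleton as the paper's proof: expand $\iprod{A,\Pi A}$ as $\sum_{i,j}\iprod{a_i,\Pi a_j}^4$, drop to the diagonal $\sum_i \|\Pi a_i\|^8$, and use the trace bound $\sum_i \|\Pi a_i\|^2 \ge (1-\delta)m$. There are two genuine differences worth noting. First, you make explicit the projector identity $\iprod{a_i,\Pi a_j}=\iprod{\Pi a_i,\Pi a_j}$, which gives the clean equality $\iprod{A,\Pi A}=\|\Pi A\|^2$ and reduces the whole statement to a one-sided norm bound; the paper does not isolate this identity and instead implicitly relies on $\|\Pi A\|\le\|A\|$ (which follows from $\Pi A=\Pi^{\otimes 4}A$) at the end. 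Second, and more substantively, to pass from $\sum_i c_i\ge(1-\delta)m$ to a lower bound on $\sum_i c_i^4$, the paper uses a Markov argument (``at most $\sqrt\delta m$ of the $c_i$ can be below $1-\sqrt\delta$''), which is where the $\sqrt\delta$ loss enters. Your use of the power-mean/Jensen inequality is tighter and gives $(1-\delta)^4 m$, so you legitimately obtain $1-O(\delta)$ rather than $1-O(\sqrt\delta)$. Both arguments are short; yours is a strict quantitative improvement.
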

A useful corollary of Fact~\ref{fact:A-proj} is that if $T$ is any $4$-tensor satisfying
$\iprod{T,\Pi A} \geq \delta \|T\| \|\Pi A\|$ and $\Pi$ has codimension $ \ll \delta^2 m$, then $\iprod{T,A} \geq \Omega(\delta) \|T\| \|A\|$.
\begin{proof}[Proof of Fact~\ref{fact:A-proj}]
  We expand
  \[
    \iprod{A,\Pi A} = \sum_{i,j \leq m} \iprod{a_i, \Pi a_j}^4 \geq \sum_{i,j \leq m} \|\Pi a_i\|^8
  \]
  Writing $\Pi$ in the $a_i$ basis, we think of $\|\Pi a_i\|^4 = \Pi_{ii}^2$, the square of the $i$-th diagonal entry of $\Pi$.
  Since $\Pi$ has codimension at most $\delta m$,
  \[
    \rank \Pi = \Tr \Pi = \sum_{i \leq n} \Pi_{ii} \geq n - \delta m\mper
  \]
  Furthermore, for each $i$, it must be that $0 \leq \Pi_{ii} \leq 1$.
  By Markov's inequality, at most $\sqrt \delta m$ diagonal entries of $\Pi$ can be less than $1 - \sqrt \delta$ in magnitude.
  Hence, $\sum_{i \leq m} \Pi_{ii}^4 \geq (1 - 4 \sqrt \delta)m$.
  On the other hand, $\|A\|^2 = m$; this proves the fact.
\end{proof}

Now we can prove Lemma~\ref{lem:tdecomp-correlation}.

\begin{proof}[Proof of Lemma~\ref{lem:tdecomp-correlation}]
  We will appeal to Theorem~\ref{thm:correlation-preserving-projection}.
  Let $\cC$ be the convex set of all pseudo-moments $\pE x^{\tensor 4}$ such that $\pE$ is a deg-4 pseudo-distribution that satisfies the polynomial constraints $\{\|x\|^2 \leq 1, \iprod{x,v_i} = 0\}$ and the operator norm conditions
  \begin{gather*}
    \Normop{\pE xx^\top} \leq \tfrac 1 m,\\
    \Normop{\pE xx^\top \tensor xx^\top} \leq \tfrac 1m\mper
  \end{gather*}
  Let $\Pi$ be the projector to the orthogonal space of $v_1,\ldots,v_r$.
  Notice that $\tfrac 1 m \Pi A \in \cC$.
  Furthermore, $\iprod{B, \Pi A} \geq \delta/2$ by Fact~\ref{fact:A-proj}, the assumption that $r \leq \delta^4 m$, and the assumption $\delta \leq 0.01$.
  By Theorem~\ref{thm:correlation-preserving-projection}, and Fact~\ref{fact:A-proj} again, the optimizer of the convex program in the Lemma satisfies $\iprod{\pE x^{\tensor 4}, \tfrac 1 m A} \geq \tfrac{\delta^2}{8m})$ and the result follows.
\end{proof}

\begin{proof}[Proof of Lemma~\ref{lem:tdecomp-ideal-reweigh}]
  By the assumption $\|\pE xx^\top \tensor xx^\top \| \leq \tfrac 1m$, for every $a_i$ it must be that $\pE \iprod{x,a_i}^4 \leq \tfrac 1m$.
  Since $\pE \sum_{i=1}^m \iprod{x,a_i}^4 \geq \delta$, at least $\delta m/2$ of the $a_i$'s must satisfy $\pE \iprod{x,a_i}^4 \geq \tfrac \delta {2m}$.
  Rewritten, for any such $a_i$ we obtain $\iprod{a_i,M_i a_i} \geq \tfrac \delta {2m}$.

  For any $M_i$,
  \[
    \Tr M_i = \pE \iprod{x,a_i}^2 \|x\|^2 = \pE \iprod{x,a_i}^2 \leq \tfrac 1m
  \]
  because $\|\pE xx^\top \| \leq \tfrac 1m$.
  Also, $M_i \succeq 0$.
  Hence, $M_i$ can have no more than $\tfrac 4 \delta$ eigenvalues larger than $\tfrac \delta {4m}$.
\end{proof}

Now we turn to the proof of Lemma~\ref{lem:tdecomp-nonspherical-reweigh}.
We will need spectral norm bounds on certain random matrices associated to the random contraction $\pE\iprod{g,x}xx^\top$.
The following are closely related to Theorem 6.5 and Corollary 6.6 in \cite{DBLP:conf/focs/MaSS16}.
\begin{lemma}\label{lem:tdecomp-spherical-reweigh}
  Let $g \sim \cN(0,\Id)$ and let $\pE$ be a degree-$4$ pseudoexpectation where
  \begin{itemize}
    \item $\pE$ satisfies $\{\|x\|^2 = 1\}$.
    \item $\Normop{\pE xx^\top } \leq c$.
    \item $\Normop{\pE xx^\top \tensor xx^\top} \leq c$
  \end{itemize}
  Then
  \[
    \E_g \Normop{\pE \iprod{g,x}^2 xx^\top} \leq O(c \cdot \log n)\mper
  \]
\end{lemma}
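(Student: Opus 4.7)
My plan is to prove the Lemma by the trace moment method, closely paralleling the matrix spectral arguments of \cite{DBLP:conf/focs/MaSS16}. First I will observe that $M_g \succeq 0$ almost surely: for any unit $v$, the Rayleigh quotient $v^{\top}M_g v = \pE\,\iprod{g,x}^2 \iprod{v,x}^2$ is the pseudoexpectation of a degree-$4$ sum of squares, hence nonnegative. This reduces bounding $\E\|M_g\|_{\mathrm{op}}$ to bounding $\E \Tr M_g^{2q}$, since $\|M_g\|_{\mathrm{op}} \le (\Tr M_g^{2q})^{1/(2q)} \le e\cdot \|M_g\|_{\mathrm{op}}$ for $2q = \lceil \log n\rceil$. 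My target inequality will be $\E \Tr M_g^{2q} \le (Cqc)^{2q}$ for an absolute constant $C$; taking $(2q)$-th roots then yields the claimed $O(c\log n)$.

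Next I will expand the trace using $2q$ independent copies $\pE^{(1)},\dots,\pE^{(2q)}$ of the pseudoexpectation:
\[
\Tr M_g^{2q} \;=\; \pE^{(1)}\cdots\pE^{(2q)}\; \prod_{s=1}^{2q} \iprod{g, x^{(s)}}^2 \;\cdot\; \prod_{s=1}^{2q}\iprod{x^{(s)}, x^{(s+1)}}
\]
with cyclic indexing. Taking $\E_g$ and applying Isserlis' theorem expresses the Gaussian factor as a sum over perfect matchings $\pi$ on the $4q$ copies of $g$ (two per vertex $s$), each contributing $\prod_{(a,b)\in\pi} \iprod{x^{(s(a))}, x^{(s(b))}}$. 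Combining $\pi$ with the cyclic ``backbone'' edges $\{s,s+1\}$ yields a $4$-regular multigraph $G_\pi$ on $[2q]$, and the contribution of $\pi$ becomes the tensor network contraction
\[
I(\pi) \;=\; \pE^{(1)}\cdots\pE^{(2q)}\; \prod_{(s,s')\in E(G_\pi)} \iprod{x^{(s)},x^{(s')}}.
\]

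The main technical step is to bound $I(\pi)$ using the two operator-norm hypotheses. Writing $S = \pE\, xx^{\top}$ and $T = \pE\, (xx^{\top})\otimes (xx^{\top})$ as $n\times n$ and $n^2\times n^2$ PSD matrices respectively, we have $\|S\|_{\mathrm{op}},\|T\|_{\mathrm{op}} \le c$ and $\Tr S,\Tr T \le 1$ (the latter using the constraint $\|x\|^2 = 1$). I will apply pseudoexpectation Cauchy--Schwarz iteratively to contract $G_\pi$ into products of traces of powers of $S$ and $T$; each such trace satisfies $\Tr S^t \le \|S\|_{\mathrm{op}}^{t-1} \Tr S \le c^{t-1}$ and similarly for $T$. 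Doubled edges in $G_\pi$ collapse to $\|x^{(s)}\|^2 = 1$ factors, while simple cycles of length $t$ contribute traces $\Tr S^t$ or $\Tr T^t$. A careful accounting (analogous to the graph-matrix bookkeeping of Section~6 of \cite{DBLP:conf/focs/MaSS16}) shows that every $I(\pi)$ can be bounded by $c^{2q}$ times at most $O(1)^{2q}$ combinatorial factors absorbed into $C$.

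Finally I will sum over the $(4q-1)!! \le (Cq)^{2q}$ matchings $\pi$ to obtain $\E \Tr M_g^{2q} \le (Cqc)^{2q}$, and extract the claimed bound $\E\|M_g\|_{\mathrm{op}} \le O(cq) = O(c\log n)$ by choosing $q = \lceil\log n\rceil$. The main obstacle is the bookkeeping in Step~3: because $S$ and $T$ may be highly degenerate (e.g.\ supported on a subspace of dimension $\sim 1/c$), each factor $\Tr S^t$ can be as large as $c^{t-1}$ rather than $c^t$, so the ``right'' power of $c$ per pairing must be squeezed out of a careful combinatorial argument. An alternative I will consider, if the direct tensor-network bookkeeping is too painful, is to view $\iprod{g,x}^2 = \iprod{gg^{\top}, xx^{\top}}$ as a rank-one restriction of the Gaussian chaos treated by \cite{DBLP:conf/focs/MaSS16} (Theorem 6.5), so that their conclusion for $\pE\,\iprod{h, x^{\otimes 2}}\,xx^{\top}$ can be specialized to $h = gg^{\top}$ after a decoupling step, yielding the same $O(c\log n)$ bound.
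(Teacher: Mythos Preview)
Your primary trace-moment approach is a genuinely different route from the paper. The paper gives a short argument using two black-box tools: first split
\[
\pE\iprod{g,x}^2\, xx^\top \;=\; \sum_i g_i^2\,\pE\, x_i^2\, xx^\top \;+\; \sum_{i\neq j} g_ig_j\,\pE\, x_ix_j\, xx^\top\,.
\]
The diagonal sum is a sum of PSD matrices totaling $\pE\|x\|^2 xx^\top = \pE xx^\top$, so its operator norm is at most $(\max_i g_i^2)\,c$, and $\E\max_i g_i^2 = O(\log n)$. For the off-diagonal sum the paper invokes the decoupling inequality of de~la~Pe\~na (Fact~\ref{fact:decoupling}) to replace $g_ig_j$ by $g_ih_j$ with $h$ independent of $g$, adds back the now-missing diagonal (again $O(c\log n)$), rewrites the result as $\pE\iprod{g,x}\iprod{h,x}xx^\top$, and applies the Gaussian-contraction bound (Fact~\ref{fact:mss-contraction}, i.e.\ Corollary~6.6 of \cite{DBLP:conf/focs/MaSS16}) once in $g$ and once in $h$. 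This is essentially your ``alternative'' plan made concrete, and it sidesteps all tensor-network bookkeeping.

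Your trace-moment route is plausible, but the step you yourself flag as the obstacle---showing $|I(\pi)|\le O(1)^{2q}c^{2q}$ for every matching $\pi$---is asserted rather than proved. Since $G_\pi$ always contains the backbone Hamiltonian cycle and the Gaussian edges form a second $2$-factor (a disjoint union of cycles), one can push through a Cauchy--Schwarz/cycle-contraction argument using $\|T\|_{\mathrm{op}}\le c$, $\Tr T\le 1$, and the full symmetry of $T=\pE x^{\otimes 4}$; but this is real work, not a one-liner, and you have not supplied it. The reward would be a self-contained proof independent of the cited black boxes; the cost is several pages where the paper uses half of one.
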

Before proving the lemma, we will need a classical decoupling inequality.
\begin{fact}[Special case of Theorem 1 in \cite{MR1261237-delaPena94}]\label{fact:decoupling}
  Let $g,h \sim \cN(0,\Id_n)$ be independent.
  Let $M_{ij}$ for $i,j \in [n]$ be a family of matrices.
  There is a universal constant $C$ so that
  \[
    \E_g \Normop{ \sum_{i \neq j} g_i g_j \cdot M_{ij}} \leq C \cdot \E_{g,h} \Normop{ \sum_{i \neq j} g_i h_j \cdot M_{ij}}\mper
  \]
\end{fact}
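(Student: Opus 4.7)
The plan is to prove this classical Gaussian decoupling inequality (due essentially to de la Pe{\~n}a) by the standard ``random selection'' technique. First I would introduce independent Bernoulli($1/2$) selectors $\delta_1,\ldots,\delta_n$, independent of $g$, and set $S = \{i : \delta_i = 1\}$. For any $i \neq j$ we have $\E_\delta[\delta_i(1-\delta_j)] = 1/4$, so
\begin{equation*}
\sum_{i\neq j} g_i g_j M_{ij} \;=\; 4 \cdot \E_\delta \sum_{i \in S,\, j \in S^c} g_i g_j M_{ij}\mcom
\end{equation*}
and Jensen's inequality, applied to the convex function $\Normop{\cdot}$, yields
\begin{equation*}
\E_g \Normop{\sum_{i\neq j} g_i g_j M_{ij}} \;\leq\; 4 \cdot \E_{g,\delta} \Normop{\sum_{i \in S,\, j \in S^c} g_i g_j M_{ij}}\mper
\end{equation*}

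The next step is the crucial one and is what makes Gaussianity matter. Conditional on $\delta$, the families $(g_i)_{i \in S}$ and $(g_j)_{j \in S^c}$ are independent. So if $h \sim \cN(0,\Id_n)$ is an independent copy of $g$, the tuples $((g_i)_{i\in S},(g_j)_{j\in S^c})$ and $((g_i)_{i\in S},(h_j)_{j\in S^c})$ have identical joint distributions, which gives
\begin{equation*}
\E_{g,\delta} \Normop{\sum_{i \in S,\, j \in S^c} g_i g_j M_{ij}} \;=\; \E_{g,h,\delta} \Normop{\sum_{i \in S,\, j \in S^c} g_i h_j M_{ij}}\mper
\end{equation*}

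Finally I would ``enlarge'' the restricted index set back to all $i \neq j$. Conditioning on $\delta$, $(g_i)_{i\in S}$, and $(h_j)_{j\in S^c}$, and treating $(g_i)_{i \in S^c}$ and $(h_j)_{j \in S}$ as fresh independent mean-zero Gaussian randomness, a quick four-case check on whether $i,j$ lie in $S$ or $S^c$ shows that the conditional mean of $\sum_{i\neq j} g_i h_j M_{ij}$ equals exactly $\sum_{i\in S,\, j\in S^c} g_i h_j M_{ij}$ (in the other three cases either $g_i$ or $h_j$ is averaged against an independent mean-zero Gaussian and so contributes zero). Another application of Jensen then gives
\begin{equation*}
\E \Normop{\sum_{i\in S,\, j\in S^c} g_i h_j M_{ij}} \;\leq\; \E \Normop{\sum_{i\neq j} g_i h_j M_{ij}}\mcom
\end{equation*}
and chaining the three displayed inequalities establishes the fact with the explicit constant $C = 4$. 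The only step requiring any care is the four-case conditional-mean calculation in the last paragraph, but it is routine; I do not foresee a serious obstacle.
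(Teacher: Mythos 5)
Your argument is correct, and it is worth noting that the paper itself supplies no proof of this fact: it is stated as a special case of Theorem~1 of de la Pe\~na--Montgomery-Smith and used as a black box. Your proposal therefore goes beyond the paper by giving a complete, self-contained, elementary proof with the explicit constant $C=4$. The three steps all check out: (i) $\E_\delta[\delta_i(1-\delta_j)]=\tfrac14$ for $i\neq j$ gives the identity $\sum_{i\neq j} g_ig_jM_{ij} = 4\,\E_\delta \sum_{i\in S, j\in S^c} g_ig_jM_{ij}$, and Jensen applied to the convex function $\Normop{\cdot}$ moves $\E_\delta$ outside; (ii) since the coordinates of $g$ are mutually independent, conditionally on $\delta$ the tuples $\bigl((g_i)_{i\in S},(g_j)_{j\in S^c}\bigr)$ and $\bigl((g_i)_{i\in S},(h_j)_{j\in S^c}\bigr)$ are equidistributed, so the substitution is an exact equality in law (this is where independence of coordinates is used --- the argument works for any product distribution, not just Gaussians); (iii) conditioning on $\delta$, $(g_i)_{i\in S}$, $(h_j)_{j\in S^c}$, the remaining coordinates $(g_i)_{i\in S^c}$ and $(h_j)_{j\in S}$ are independent and mean zero, so the conditional expectation of $\sum_{i\neq j}g_ih_jM_{ij}$ is exactly the restricted sum, and Jensen recovers the full bilinear form. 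One small presentational point: in step (iii) you should make explicit that you take the expectation of the \emph{full} decoupled sum over all of $g,h,\delta$ at the end, so that the final right-hand side no longer involves $\delta$; this is immediate but deserves a sentence. With that, the chain gives the stated inequality with $C=4$.
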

We will also need a theorem from \cite{DBLP:conf/focs/MaSS16}.
\begin{fact}[Corollary 6.6 in \cite{DBLP:conf/focs/MaSS16}]\label{fact:mss-contraction}
  Let $T \in \R^p \tensor \R^q \tensor \R^r$ be an order-$3$ tensor.
  Let $g \sim \cN(0,\Sigma)$ for some $0 \preceq \Sigma \preceq \Id_r$.
  Then for any $t \geq 0$,
  \[
    \Pr_g \left \{ \Norm{(\Id \tensor \Id \tensor g)^\top T}_{ \{1\}, \{2\} } \geq t \cdot \max \left \{ \|T\|_{\{1\}, \{2,3\}}, \|T\|_{\{2\}, \{1,3\}} \right \} \right \} \leq 2(p+q) \cdot e^{-t^2/2}\mcom
    \]
  and consequently,
  \[
    \E_g \Brac{ \Norm{(\Id \tensor \Id \tensor g)^\top T}_{ \{1\}, \{2\} } } \leq O(\log(p + q))^{1/2} \cdot \max \left \{ \|T\|_{\{1\}, \{2,3\}}, \|T\|_{\{2\}, \{1,3\}} \right \}
    \]
\end{fact}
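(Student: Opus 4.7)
The plan is to recognize the contracted matrix as a Gaussian sum of fixed matrices and invoke a black-box matrix Gaussian concentration inequality; the only real work is then to identify the variance proxy with the two tensor-flattening norms appearing in the conclusion. Concretely, for each $k \in [r]$ let $A_k \in \R^{p \times q}$ denote the $k$-th slice of $T$, so that $A_k(i,j) = T(i,j,k)$; then
\[
  M(g) \;\defeq\; (\Id \tensor \Id \tensor g)^\top T \;=\; \sum_{k=1}^r g_k\, A_k\mper
\]
Writing $g = \Sigma^{1/2}\tilde g$ with $\tilde g \sim \cN(0,\Id_r)$ rewrites $M(g) = \sum_\ell \tilde g_\ell\, A'_\ell$, where $A'_\ell \defeq \sum_k (\Sigma^{1/2})_{k\ell}\, A_k$, so the problem is reduced to bounding a standard Gaussian series in fixed matrices.

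I would then apply the standard matrix Gaussian tail bound (the Gaussian-series form of matrix Bernstein, e.g.\ Theorem~4.1.1 in Tropp's monograph on matrix concentration): for independent standard Gaussians and fixed $p \times q$ matrices $A'_\ell$,
\[
  \Pr\Bigl\{\, \Normop{\textstyle\sum_\ell \tilde g_\ell\, A'_\ell} \ge t\sigma \,\Bigr\} \;\le\; 2(p+q)\, e^{-t^2/2},
\]
where the variance proxy is $\sigma^2 \defeq \max\bigl(\Normop{\textstyle\sum_\ell A'_\ell (A'_\ell)^\top},\; \Normop{\textstyle\sum_\ell (A'_\ell)^\top A'_\ell}\bigr)$.

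The main step is to show $\sigma \le \max(\|T\|_{\{1\},\{2,3\}},\,\|T\|_{\{2\},\{1,3\}})$ via a completely-positive comparison. Expanding,
\[
  \sum_\ell A'_\ell (A'_\ell)^\top \;=\; \sum_{k,k'} \Sigma_{k,k'}\, A_k A_{k'}^\top \;=:\; \Phi(\Sigma),
\]
where $\Phi(X) \defeq \sum_{k,k'} X_{k,k'}\, A_k A_{k'}^\top$ is a linear superoperator on $r \times r$ matrices. A one-line check shows that $\Phi$ is completely positive: for any $Y$ with columns $y_1,\ldots,y_s$ one has $\Phi(YY^\top) = \sum_j \bigl(\sum_k y_j(k)\, A_k\bigr)\bigl(\sum_{k'} y_j(k')\, A_{k'}\bigr)^\top \succeq 0$. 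Hence $\Phi$ preserves the PSD order, and from $0 \preceq \Sigma \preceq \Id_r$ we conclude $\Phi(\Sigma) \preceq \Phi(\Id_r) = \sum_k A_k A_k^\top$. But $\sum_k A_k A_k^\top$ is precisely $T_{\{1\},\{2,3\}}\, T_{\{1\},\{2,3\}}^\top$ (the Gram matrix of the flattening placing mode~$1$ on rows and modes~$2,3$ on columns), so its operator norm equals $\|T\|_{\{1\},\{2,3\}}^2$; a symmetric argument gives $\Normop{\sum_\ell (A'_\ell)^\top A'_\ell} \le \|T\|_{\{2\},\{1,3\}}^2$. Substituting into the tail bound yields the stated probability estimate, and integrating (or equivalently the standard $\E\max \le O(\sqrt{\log N})\cdot\sigma$ bound for sub-Gaussian maxima with $N = O(p+q)$) produces the expectation estimate.

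The only technical subtlety is the completely-positive comparison---it is what lets the bound degrade gracefully under a general covariance $\Sigma \preceq \Id_r$ rather than paying an extra multiplicative factor---but the calculation is just a few lines; everything else is bookkeeping around the black-box matrix concentration inequality.
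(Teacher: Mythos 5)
Your proof is correct. Note that the paper itself does not prove this statement at all --- it is imported verbatim as a black box, cited as Corollary 6.6 of Ma--Shi--Steurer \cite{DBLP:conf/focs/MaSS16} --- so there is no internal argument to compare against; what you have written is a clean self-contained derivation along the lines one would find in that reference. The route you take is the standard one and all the steps check out: the contraction is exactly the Gaussian matrix series $\sum_k g_k A_k$ over the slices $A_k$; the substitution $g = \Sigma^{1/2}\tilde g$ reduces to iid standard Gaussians; Tropp's Gaussian-series tail bound gives the $2(p+q)e^{-t^2/2}$ form once $t$ is measured in units of the variance proxy $\sigma$; and the identification $\sum_k A_k A_k^\top = T_{\{1\},\{2,3\}}T_{\{1\},\{2,3\}}^\top$ (and its transpose analogue) is exactly right. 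The one step that deserves the emphasis you give it is the comparison $\Phi(\Sigma) \preceq \Phi(\Id_r)$ for the positive map $\Phi(X) = \sum_{k,k'}X_{k,k'}A_kA_{k'}^\top$; your verification that $\Phi(YY^\top) \succeq 0$ is all that is needed (ordinary positivity suffices here --- complete positivity is true but not used), and it is precisely what lets the bound hold for arbitrary $0 \preceq \Sigma \preceq \Id_r$ without an extra factor. The integration of the tail to get the $O(\log(p+q))^{1/2}$ expectation bound is routine. The value of your argument is that it makes the paper's tensor-decomposition analysis self-contained rather than dependent on an external corollary.
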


\begin{proof}[Proof of Lemma~\ref{lem:tdecomp-spherical-reweigh}]
  We expand the matrix $\pE \iprod{g,x}^2 xx^\top$ as
  \[
    \pE \iprod{g,x}^2 xx^\top = \sum_{i \in [n]} g_i^2 \pE x_i^2 xx^\top + \sum_{i \neq j \in [n]} g_i g_j \cdot \pE x_i x_j xx^\top\mper
  \]
  Addressing the first term, by standard concentration, $\E \max_{i \in [n]} g_i^2 = O(\log n)$.
  So,
  \[
    \E_g \Normop{ \sum_{i \in [n]} g_i^2 \pE x_i^2 xx^\top} \leq \E_g \Brac{\max_{i \in [n]} g_i^2 \cdot \Normop{\pE \|x\|^2 xx^\top }} = O(\log n) \cdot \Normop{\pE xx^\top} = O(c \cdot \log n)\mper
  \]
The second term we will decouple using Fact~\ref{fact:decoupling}.
  \[
    \E_g \Normop{\sum_{i \neq j} g_i g_j \cdot \pE x_i x_j xx^\top } \leq O(1) \cdot \E_{g,h} \Normop{\sum_{i \neq j} g_i h_j \cdot \pE x_i x_j xx^\top }\mper
  \]
  We add some aditional terms to the sum; by similar reasoning to our bound on the first term they do not contribute too much to the norm.
  \[
    \E_{g,h}\Normop{\sum_{i \neq j} g_i h_j \cdot \pE x_i x_j xx^\top } \leq O(1) \cdot \E_{g,h} \Normop{\sum_{i, j \in [n]} g_i h_j \cdot \pE x_i x_j xx^\top } + O(c \cdot \log n)\mper
  \]
  We can rewrite the matrix in the first term on the right-hand side as
  \[
    \sum_{i, j \in [n]} g_i h_j \cdot \pE x_i x_j xx^\top  = \pE \iprod{g,x}\iprod{h,x}xx^\top\mper
  \]
Now we can apply Fact~\ref{fact:mss-contraction} twice in a row; first to $g$ and then to $h$, which together with our norm bound on $\E xx^\top \tensor xx^\top$, gives
  \[
    \E_{g,h} \Normop{\pE \iprod{g,x}\iprod{h,x}xx^\top} \leq O(c \cdot \log n)\mper
  \]
  Putting all of the above together, we get the lemma.
\end{proof}

Next we prove Lemma~\ref{lem:tdecomp-nonspherical-reweigh} as a corollary of Lemma~\ref{lem:tdecomp-nonspherical-reweigh} which applies to random contractions which are non-spherical.
The proof technique is very similar to that for Fact~\ref{fact:mss-contraction}.
\begin{proof}[Proof of Lemma~\ref{lem:tdecomp-nonspherical-reweigh}]
  Let $h \sim \cN(0,\Id - \Sigma)$ be independent of $g$, and define $g' = g + h$ and $g'' = g - h$, so that $g = \tfrac 1 2 (g' + g'')$.
  It is sufficient to bound $\E_{g,h} \Normop{\pE \iprod{g' + g'', x}^2 xx^\top}$.
  Expanding and applying triangle inequality,
  \[
    \E_{g,h} \Normop{\pE \iprod{g' + g'', x}^2 xx^\top} \leq \E_{g,h} \Normop{\pE \iprod{g',x}^2 xx^\top} + 2\E_{g,h} \Normop{\pE \iprod{g',x}\iprod{g'',x}xx^\top} + \E_{g,h} \Normop{\pE \iprod{g'',x}^2 xx^\top}\mper
  \]
  The first and last terms are $O(c \cdot \log n)$ by Lemma~\ref{lem:tdecomp-spherical-reweigh}.
  For the middle term, consider the quadratic form of the matrix $\pE \iprod{g',x}\iprod{g'',x}xx^\top$ on a vector $v \in \R^n$:
  \begin{align*}
    \pE \iprod{g',x}\iprod{g'',x}\iprod{x,v}^2 \leq \pE \iprod{g',x}^2 \iprod{x,v}^2 + \pE \iprod{g'',x}^2 \iprod{x,v}^2
  \end{align*}
  by pseudoexpectation Cauchy-Schwarz.
  Thus for every $g',g''$,
  \[
\Normop{\pE \iprod{g',x}\iprod{g'',x}xx^\top} \leq \Normop{\pE \iprod{g',x}^2 xx^\top} + \Normop{\pE \iprod{g'',x}^2 xx^\top}\mper
  \]
  Together with Lemma~\ref{lem:tdecomp-spherical-reweigh} this concludes the proof.
\end{proof}

\subsection{Lifting 3-tensors to 4-tensors}
\label{sec:3-to-4}
\begin{problem}[3-to-4 lifting]
  Let $a_1,\ldots,a_m \in \R^n$ be orthonormal.
  Let $A_3 = \sum_{i=1}^m a_i^{\tensor 3}$ and $A_4 = \sum_{i=1}^m a_i^{\tensor 4}$.
  Let $B \in \R^{n \times n \times n}$ satisfy $\iprod{B,A_3} \geq \delta \cdot \|A_3\| \cdot \|B\|$.\\
  \noindent \textbf{Input:} The tensor $B$.\\
  \noindent \textbf{Goal:} Output $B'$ satisfying $\iprod{B',A_4} \geq \delta^{O(1)} \cdot \|A_4\| \cdot \|B'\|$.
\end{problem}

\begin{theorem}
\label{thm:3-to-4-lifting}
  There is a polynomial time algorithm, using the sum of squares method, which solves the 3-to-4 lifting problem.
\end{theorem}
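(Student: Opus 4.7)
The plan is to apply the correlation-preserving projection from Theorem~\ref{thm:correlation-preserving-projection} inside a suitable convex body of degree-$4$ pseudomoments, and then use an SoS Cauchy--Schwarz step to convert correlation with $A_3$ into correlation with $A_4$. Let $\cC$ be the convex set of all $3$-tensors of the form $\pE x^{\otimes 3}$, where $\pE$ ranges over degree-$4$ pseudoexpectations satisfying the polynomial constraint $\{\|x\|^2 \le 1\}$ together with the spectral constraints $\Normop{\pE xx^\top}\le 1/m$ and $\Normop{\pE xx^\top\tensor xx^\top}\le 1/m$. The key observation is that the actual expectation under the uniform distribution on $a_1,\ldots,a_m$ is feasible and realizes $\pE x^{\otimes 3}=\tfrac1m A_3$ and $\pE x^{\otimes 4}=\tfrac1m A_4$; in particular $Y\defeq \tfrac1m A_3\in\cC$, $\|Y\|=1/\sqrt m$, and $\|A_4\|=\sqrt m$.

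First I would apply Theorem~\ref{thm:correlation-preserving-projection} with $P=B$ and the target $Y$. Since $\iprod{B,A_3}\ge\delta\|B\|\|A_3\|$, we have $\iprod{B,Y}\ge\delta\|B\|\|Y\|$, and the projection can be implemented by solving a degree-$4$ SoS program in polynomial time (the spectral constraints are standard semidefinite constraints; see the discussion in Section~\ref{sec:tdecomp}). This produces a degree-$4$ pseudoexpectation $\pE$ whose $3$rd-moment tensor $Q\defeq\pE x^{\otimes 3}$ satisfies $\iprod{Q,Y}\ge(\delta/2)\|Q\|\|Y\|$ together with $\|Q\|\ge\delta\|Y\|=\delta/\sqrt m$. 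Unpacking the normalization yields
\[
  \iprod{Q,A_3} \;=\; m\cdot\iprod{Q,Y} \;\ge\; (\delta/2)\sqrt m\,\|Q\| \;\ge\; \delta^{2}/2\mper
\]

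Next I would use SoS Cauchy--Schwarz on each $a_i$ to transfer correlation to $A_4$. Applying the inequality $(\pE fg)^2\le(\pE f^2)(\pE g^2)$ (valid for a degree-$4$ pseudoexpectation) to $f=\iprod{a_i,x}$ and $g=\iprod{a_i,x}^2$ gives $(\pE\iprod{a_i,x}^3)^2\le \pE\iprod{a_i,x}^2\cdot\pE\iprod{a_i,x}^4$. Using the spectral bound $\pE\iprod{a_i,x}^2\le\Normop{\pE xx^\top}\le 1/m$, we conclude $\pE\iprod{a_i,x}^4\ge m\,(\pE\iprod{a_i,x}^3)^2$. Summing in $i$ and applying Cauchy--Schwarz across the $m$ terms,
\[
  \iprod{\pE x^{\otimes 4},A_4}
  \;=\; \sum_{i=1}^m \pE\iprod{a_i,x}^4
  \;\ge\; m\sum_{i=1}^m(\pE\iprod{a_i,x}^3)^2
  \;\ge\; \Bigparen{\sum_{i=1}^m\pE\iprod{a_i,x}^3}^{2}
  \;=\; \iprod{Q,A_3}^2
  \;\ge\; \delta^{4}/4\mper
\]
Finally, setting $B'\defeq\pE x^{\otimes 4}$ and viewing it as the PSD matrix $\pE xx^\top\tensor xx^\top$ on $\R^{n^2\times n^2}$, we get $\|B'\|_F^2=\Tr M^2\le\Normop M\cdot\Tr M\le (1/m)\cdot\pE\|x\|^4\le 1/m$. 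Combined with $\|A_4\|=\sqrt m$ this yields $\iprod{B',A_4}\ge \delta^{4}/4\ge(\delta^4/4)\,\|B'\|\,\|A_4\|$, which is the desired conclusion.

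\paragraph{Main obstacle.} The argument is largely an exercise in assembling already-developed tools, so there is no real technical obstacle; the one conceptual point requiring care is the choice of convex body $\cC$. The spectral constraints $\Normop{\pE xx^\top},\Normop{\pE xx^\top\tensor xx^\top}\le 1/m$ are precisely what the genuine moments of the uniform distribution on an orthonormal set $a_1,\ldots,a_m$ satisfy, and they are simultaneously (i) loose enough that $\tfrac1m A_3\in\cC$, guaranteeing feasibility with correlation $\delta$, and (ii) tight enough that the degree-$2$ bound $\pE\iprod{a_i,x}^2\le 1/m$ needed in the SoS Cauchy--Schwarz step is available. Weakening either constraint would cost polynomial factors in $m$ in the correlation bound and break the theorem.
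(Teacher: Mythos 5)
Your argument is correct and is essentially the paper's own proof: a correlation-preserving projection (Theorem~\ref{thm:correlation-preserving-projection}) of $B$ into a convex set of entropy-constrained degree-$4$ pseudomoments containing $\tfrac1m A_3$, followed by a Cauchy--Schwarz step to pass from third-moment to fourth-moment correlation. The only differences are cosmetic — you impose the operator-norm constraints $\Normop{\pE xx^\top},\Normop{\pE xx^\top\tensor xx^\top}\le 1/m$ and apply Cauchy--Schwarz per index $i$, while the paper constrains $\|\pE x^{\tensor 4}\|\le 1/\sqrt m$ directly and uses a single global Cauchy--Schwarz via $\sum_i\iprod{a_i,x}^2\le\|x\|^2$ — and both yield the same bound $\iprod{\pE x^{\tensor 4},A_4}\ge\delta^4/4$.
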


\begin{proof}
\textbf{Small $\delta$ regime: $\delta < 1 - \Omega(1)$: }
  The algorithm is to output the fourth moments of the optimizer of the following convex program.
  \begin{align*}
    \min_{\pE} \quad & \|\pE x^{\tensor 3}\|\\
    \text{s.t. } \quad & \pE \text{ is degree-$4$}\\
    & \pE \text{ satisfies } \{\|x\|^2 = 1\}\\
    & \iprod{\pE x^{\tensor 3}, B} \geq \frac{\delta \|B\|}{\sqrt m}\\
    & \|\pE x^{\tensor 4}\| \leq \frac 1 {\sqrt m}\mper
  \end{align*}
  To analyze the algorithm we apply Theorem~\ref{thm:correlation-preserving-projection}.
  Let $\cC$ be the set of degree-$4$ pseudodistributions satisfying $\{ \|x\|^2 = 1\}$ and having $\|\pE x^{\tensor 4}\| \leq 1/\sqrt{m}$.
  The uniform distribution over $a_1,\ldots,a_m$, whose third and fourth moments are $\tfrac 1m A_3$ and $\tfrac 1m A_4$, respectively, is in $\cC$.

  Let $\pE$ be the pseudoexpectation solving the convex program.
  By Theorem~\ref{thm:correlation-preserving-projection},
  \[
    \iprod{\pE x^{\tensor 3}, \tfrac 1m A_3} \geq \frac \delta 2 \cdot \frac 1 {\sqrt m} \cdot \|\pE x^{\tensor 3}\| \geq \frac{\delta^2}{2m}
  \]
  At the same time,
  \[
    \iprod{\pE x^{\tensor 3}, \tfrac 1m A_3} = \frac 1m \sum_{i=1}^m \pE \iprod{x,a_i}^3 \leq \frac 1m \Paren{\pE \sum_{i=1}^m \iprod{x,a_i}^4}^{1/2}
  \]
  by Cauchy-Schwarz.
  Putting these together, we obtain
  \[
  \iprod{\pE x^{\tensor 4}, A_4} = \pE \sum_{i=1}^m \iprod{x,a_i}^4 \geq \delta^4/4\mper
  \]
  Finally, $\|A_4\| \cdot \|\pE x^{\tensor 4}\| \leq 1$ (since we constrained $\|\pE x^{\tensor 4}\| \leq 1/\sqrt m$), which finishes the proof.\\

  \textbf{Large $\delta$ regime: $\delta \geq 1 - o(1)$: } 
  Modify the convex program from the small-$\delta$ regime to project $(B/\|B\|) \cdot 1/\sqrt m$ to same convex set $\cC$.
  The normalization is so that
  \[
  \Norm{(B/\|B\|) \cdot 1/\sqrt m} = \Norm{\tfrac 1m \cdot A_3}\mper
  \]
  The analysis is similar.
\end{proof}
  
    \section*{Acknowledgments}

We are indebted to Avi Wigderson who suggested color coding as a technique to evaluate the kinds of polynomials we study in this work.
We thank Moses Charikar for pointing out the relationship between our SOS program for low correlation tensor decomposition and \Renyi entropy.
We thank Christian Borgs, Jennifer Chayes, and Yash Deshpande for helpful conversations, especially relating to Section~\ref{sec:W}.
We thank anonymous reviewers for many suggested improvements to this paper.

  \phantomsection
  \addcontentsline{toc}{section}{References}
  \bibliographystyle{amsalpha}
  \bibliography{bib/mathreview,bib/wiki,bib/dblp,bib/custom,bib/scholar}

\appendix

    \section{Toolkit and Omitted Proofs}

\subsection{Probability and linear algebra tools}
\begin{fact}
  \label{fact:exp-to-prob}
  Consider any inner product $\iprod{\cdot, \cdot}$ on $\R^n$ with associated norm $\|\cdot \|$.
  Let $X$ and $Y$ be joinly-distributed $\R^n$-valued random variables.
  Suppose that
    $\|X\|^2 \leq C \E \|X\|^2$
    with probability $1$, and that
  \[
    \frac{\E \iprod{X,Y}}{(\E \|X\|^2)^{1/2} (\E \|Y\|^2)^{1/2}} \geq \delta\mper
  \]
  Then
  \[
    \Pr\left \{ \frac{\iprod{X,Y}}{\|X\| \cdot \|Y\|} \geq \frac \delta 2 \right \} \geq \frac{\delta^2}{4C^2}\mper
  \]
\end{fact}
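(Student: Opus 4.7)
My plan is to apply a two-way truncation to the random variable $Z \defeq \iprod{X,Y}/(\|X\|\cdot\|Y\|)$, which by Cauchy--Schwarz satisfies $Z \leq 1$ pointwise. Let $E$ denote the event $\{Z \geq \delta/2\}$; the goal is to lower-bound $\Pr(E)$. The strategy is to write
\[
  \E\iprod{X,Y} \;=\; \E\bigl[Z\,\|X\|\,\|Y\|\,\Ind_{E^c}\bigr] + \E\bigl[Z\,\|X\|\,\|Y\|\,\Ind_{E}\bigr]
\]
and upper-bound each term using the trivial pointwise inequalities $Z \leq \delta/2$ on $E^c$ and $Z \leq 1$ on $E$.

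For the first term, I would simply drop $Z$ in favor of its pointwise bound and then apply Cauchy--Schwarz (in the probabilistic sense) to get
\[
  \E\bigl[Z\,\|X\|\,\|Y\|\,\Ind_{E^c}\bigr] \;\leq\; \tfrac{\delta}{2}\,\bigl(\E\|X\|^2\bigr)^{1/2}\bigl(\E\|Y\|^2\bigr)^{1/2}.
\]
For the second term, here is where the a.s.\ bound $\|X\|^2 \leq C\,\E\|X\|^2$ enters: Cauchy--Schwarz plus this bound give
\[
  \E\bigl[\|X\|\,\|Y\|\,\Ind_{E}\bigr] \;\leq\; \bigl(\E\|X\|^2\,\Ind_E\bigr)^{1/2}\bigl(\E\|Y\|^2\bigr)^{1/2} \;\leq\; \sqrt{C\,\Pr(E)}\cdot\bigl(\E\|X\|^2\bigr)^{1/2}\bigl(\E\|Y\|^2\bigr)^{1/2}.
\]
Combining the two bounds with the hypothesis $\E\iprod{X,Y} \geq \delta\,(\E\|X\|^2)^{1/2}(\E\|Y\|^2)^{1/2}$ and cancelling the common factor yields $\delta/2 \leq \sqrt{C\,\Pr(E)}$, which rearranges to the claimed lower bound on $\Pr(E)$ (up to the precise constant, which follows by solving for $\Pr(E)$).

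There is no real obstacle here: the only nontrivial ingredient is using the a.s.\ boundedness of $\|X\|^2$ on the event $E$ (rather than unconditionally), which is essential since otherwise the contribution of $E$ to $\E\iprod{X,Y}$ could not be controlled by $\Pr(E)$. The proof is a short chain of Cauchy--Schwarz applications, and the main thing to be careful about is not accidentally discarding the $\Ind_E$ factor before invoking the a.s.\ bound, since doing so would lose the $\Pr(E)$ factor that is needed to close the argument.
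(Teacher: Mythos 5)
Your proof is correct and follows essentially the same route as the paper's: split $\E\iprod{X,Y}$ by the indicator of the event $\{\iprod{X,Y}\geq \tfrac\delta2\|X\|\|Y\|\}$, bound the complement's contribution by $\tfrac\delta2(\E\|X\|^2)^{1/2}(\E\|Y\|^2)^{1/2}$, and control the event's contribution via Cauchy--Schwarz together with the almost-sure bound on $\|X\|^2$ to extract a factor of $\sqrt{\Pr(E)}$. The only (immaterial) difference is where you attach the indicator before invoking Cauchy--Schwarz, which in fact yields the slightly sharper constant $\delta^2/(4C)$ in place of $\delta^2/(4C^2)$.
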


\begin{proof}[Proof of Fact~\ref{fact:exp-to-prob}]
  Let $\Ind_E$ be the $0/1$ indicator of an event $E$.
  Note that
  \[
    \E\Brac{ \iprod{X,Y} \Ind_{\iprod{X,Y} \leq \tfrac \delta 2 \cdot \|X\| \cdot \|Y\|}}
    \leq \tfrac \delta 2 \E \|X\| \cdot \|Y\| \leq \tfrac \delta 2 (\E \|X\|^2)^{1/2} (\E \|Y\|^2)^{1/2}\mper
  \]
  Hence,
  \[
  \E\Brac{ \iprod{X,Y} \Ind_{\iprod{X,Y} > \tfrac \delta 2 \cdot \|X\| \cdot \|Y\|}}
    \geq \tfrac \delta 2 \E \|X\| \cdot \|Y\| \leq \tfrac \delta 2 (\E \|X\|^2)^{1/2} (\E \|Y\|^2)^{1/2}\mper
  \]
  At the same time,
  \begin{align*}
    \E\Brac{ \iprod{X,Y} \Ind_{\iprod{X,Y} > \tfrac \delta 2 \cdot \|X\| \cdot \|Y\|}}
    & \leq \Paren{\E \|X\|^2\cdot \|Y\|^2}^{1/2}
    \cdot\Paren{ \E \Ind_{\iprod{X,Y} > \tfrac \delta 2 \cdot \|X\| \cdot \|Y\|}}^{1/2}\\
    & = \Paren{\E \|X\|^2\cdot \|Y\|^2}^{1/2}
    \cdot\Paren{ \Pr \{ \iprod{X,Y} > \tfrac \delta 2 \cdot \|X\| \cdot \|Y\|\} }^{1/2}\\
    & \leq C (\E \|X\|^2)^{1/2} (\E \|Y\|^2)^{1/2}
    \cdot\Paren{ \Pr \{ \iprod{X,Y} > \tfrac \delta 2 \cdot \|X\| \cdot \|Y\|\} }^{1/2}\mper
  \end{align*}
  Putting the inequalities together and rearranging finishes the proof.
\end{proof}

\begin{proof}[Proof of Proposition~\ref{prop:bernstein-tails}]
  We decompose $X_i$ as
  \[
    X_i = X_i\Ind_{|X_i| \leq R} + X_i \Ind_{|X_i| > R}\mper
  \]
Let $Y_i = X_i \Ind_{|X_i| \leq R}$.
Then
  \[
    |\E Y_i| = |\E X_i - \E X_i \Ind_{|X_i| > R}| \leq \delta'
  \]
  and
  \[
    \Var Y_i \leq \E Y_i^2 \leq \E X_i^2\mper
  \]
  So we can apply Bernstein's inequality to $\tfrac 1 m \sum_{i \leq m} Y_i$ to obtain that
  \[
    \Pr \left \{ \Abs{\tfrac 1 m \sum_{i \leq m} Y_i} \geq t + \delta' \right \} \leq \exp\Paren{\frac{- \Omega(1) \cdot m \cdot t^2}{\E X^2 + t\cdot R}}\mper
  \]
Now, with probability at least $1 - \delta$ we know $X_i = Y_i$, so by a union bound,
  \[
    \Pr \left \{ \Abs{\tfrac 1 m \sum_{i \leq m} X_i} \geq t + \delta' \right \} \leq \exp\Paren{\frac{- \Omega(1) \cdot m \cdot t^2}{\E X^2 + t\cdot R}} + m \delta \mper\qedhere
  \]
\end{proof}

\begin{fact}\label{fact:scalar-to-vector}
  Let $\{X_1,\ldots,X_n, Y_1,\ldots,Y_m \}$ are jointly distributed real-valued random variables.
  Suppose there is $S \subseteq [m]$ with $|S| \geq (1 - o_m(1))\cdot m$ such that for each $i \in S$ there a degree-$D$ polynomials $p_i$ satisfying
  \[
    \frac{\E p_i(X) Y_i}{(\E Y^2)^{1/2} (\E p_i(X)^2)^{1/2}} \geq \delta\mper
  \]
  Furthermore, suppose $\sum_{i \in S} \E Y_i^2 \geq (1 - o(1)) \sum_{i \in [m]} \E Y_i^2$.
  Let $Y \in \R^m$ be the vector-valued random variable with $i$-th coordinate $Y_i$, and similarly let $P(X)$ have $i$-th coordinate $p_i(X)$.
  Then
  \[
    \frac{\E \iprod{P(X),Y}}{(\E \|Y\|^2)^{1/2} \cdot (\E \|P(X)\|^2)^{1/2}} \geq (1 - o(1)) \cdot \delta
  \]
\end{fact}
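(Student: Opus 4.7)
The plan is to reduce the vector correlation to the scalar correlations entry by entry by choosing a convenient normalization of the polynomials $p_i$. Since the hypothesis is invariant under rescaling each $p_i$ by a positive constant, I would first replace each $p_i$ (for $i \in S$) by a scalar multiple so that $\E p_i(X)^2 = \E Y_i^2$; for $i \notin S$, I would set $p_i \equiv 0$ (so the $i$-th coordinate of $P(X)$ vanishes on indices outside $S$). After this rescaling the hypothesis becomes $\E p_i(X) Y_i \geq \delta \cdot \E Y_i^2$ for every $i \in S$.

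With that normalization in hand, the numerator expands coordinatewise as
\[
\E \iprod{P(X),Y} = \sum_{i \in S} \E p_i(X) Y_i \geq \delta \sum_{i \in S} \E Y_i^2,
\]
while the two factors in the denominator satisfy $\E \|P(X)\|^2 = \sum_{i \in S} \E Y_i^2$ and $\E \|Y\|^2 = \sum_{i \in [m]} \E Y_i^2 \leq (1+o(1)) \sum_{i \in S} \E Y_i^2$ by the second hypothesis. Putting these three estimates together,
\[
\frac{\E \iprod{P(X),Y}}{(\E \|Y\|^2)^{1/2} (\E \|P(X)\|^2)^{1/2}} \geq \frac{\delta \sum_{i \in S} \E Y_i^2}{\sqrt{(1+o(1))} \cdot \sum_{i \in S} \E Y_i^2} = (1-o(1))\cdot \delta,
\]
which is exactly the claim.

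There is no real obstacle in this proof; the only slightly delicate point is that the rescaling of the $p_i$ is legal because the quantity we are trying to bound, like the hypothesis, is scale-invariant in each coordinate of $P$. One could also avoid rescaling and instead apply Cauchy--Schwarz in the form $\sum_{i \in S} \E p_i(X) Y_i \geq \delta \sum_{i \in S} (\E Y_i^2)^{1/2} (\E p_i(X)^2)^{1/2}$ together with Cauchy--Schwarz on the right-hand side, but the rescaling argument above is cleaner and avoids a second Cauchy--Schwarz application.
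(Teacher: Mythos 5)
Your proof is correct, and it follows the same coordinatewise strategy as the paper's: expand $\E\iprod{P(X),Y}=\sum_{i\in S}\E p_i(X)Y_i$, apply the scalar hypothesis term by term, and then compare $\sum_{i\in S}(\E p_i(X)^2)^{1/2}(\E Y_i^2)^{1/2}$ against $(\E\|P(X)\|^2)^{1/2}(\E\|Y\|^2)^{1/2}$. The one genuine difference is that you first rescale each $p_i$ so that $\E p_i(X)^2=\E Y_i^2$ (and set $p_i\equiv 0$ off $S$), which turns that last comparison into an equality up to the $(1-o(1))$ factor. This normalization is not cosmetic: the paper's proof passes from $\sum_{i\in S}(\E p_i(X)^2)^{1/2}(\E Y_i^2)^{1/2}$ to $\bigl(\sum_{i\in S}\E p_i(X)^2\bigr)^{1/2}\bigl(\sum_{i}\E Y_i^2\bigr)^{1/2}$ with a ``$\geq$'', but Cauchy--Schwarz gives that inequality in the \emph{opposite} direction for general scalings, with equality only when the two sequences are proportional. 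Your explicit normalization (which is legitimate because both the hypothesis and the conclusion are invariant under positive rescaling of each coordinate of $P$) is exactly what makes this step valid, so your write-up is the cleaner and more careful version of the argument.
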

\begin{proof}
  The proof is by Cauchy-Schwarz.
  \begin{align*}
    \E \iprod{P(X),Y} & = \sum_{i \in S} \E p_i(X) Y\\
                      & \geq \delta \sum_{i \in S} (\E p_i(X)^2)^{1/2} (\E Y_i^2)^{1/2}\\
                      & \geq \delta \Paren{\E \sum_{i \in S} p_i(x)^2}^{1/2} \cdot (1 - o(1)) \Paren{\sum_{i \in [m]} Y_i^2}^{1/2}\mper\qedhere
  \end{align*}
\end{proof}

\subsection{Tools for symmetric and Dirichlet priors}
\begin{proof}[Proof of Fact~\ref{fact:diagonal-moments}]
  Let $X$ be any $\R^k$-valued random variable which is symmetric in distribution with respect to permutations of coordinates and satisfies $\sum_{s \in [k]} X(s) = 0$ with probability $1$.
  (The variable $\tsigma$ is one example.)

  We prove the claim about $\E \iprod{X,x} \iprod{X,y} \iprod{X,z} \iprod{X,w}$; the other proofs are similar.
  Consider the matrix $M = \E (X \tensor X)(X \tensor X)^\top$.
  Since $x,y,z,w$ are orthogonal to the all-$1$'s vector, we may add $1 \tensor v$, for any $v \in \R^n$, to any row or column of $M$ without affecting the statement to be proved.
  Adding multiples of $1 \tensor e_i$ to rows and columns appropriately makes $M$ a block diagonal matrix, with the top block indexed by coordinates $(i,i)$ for $i \in [k]$ and the bottom block indexed by pairs $(i,j)$ for $i \neq j$.

  The resulting top block takes the form $c \Id + c' J$, where $J$ is the all-$1$'s matrix.
  The bottom block will be a matrix from the Johnson scheme.
  Standard results on eigenvectors of the Johnson scheme (see e.g. \cite{DBLP:conf/colt/DeshpandeM15} and references therein) finish the proof.
  The values of constants $C$ for the Dirichlet distribution follow from the next fact.
\end{proof}

\begin{fact}
  \label{fact:dirichlet-covariance}
  Let $\sigma \in \R^k$ be distributed according to a (symmetric) Dirichlet distribution with parameter $\alpha$.
  That is, $\Pr(\sigma) \propto \prod_{j \in [k]} \sigma^{\alpha - 1}$.

  Let $\gamma \in \N^k$ be a $k$-tuple, and let $\sigma^\gamma = \prod_{j \leq k} \sigma_j^{\gamma_j}$.
  Let $|\gamma| = \sum_{j \leq k} \gamma_j$.
  Then
  \[
    \E \sigma^\gamma =  \frac{\Gamma(k\alpha)}{\Gamma(k\alpha + |\gamma|)} \cdot \frac{\prod_{j \leq k} \Gamma(\alpha + \gamma_j)}{\Gamma(\alpha)^k}\mper
  \]

  Furthermore, let $\tilde \sigma \in \R^k$ be given by
  $\tilde \sigma_i = \sigma_i - \tfrac 1 k$.
  Then
  \[
    \E \tsigma \tsigma^\top = \frac{\Gamma(k\alpha)}{\Gamma(k\alpha + 2)} \Paren{\frac{\Gamma(\alpha + 2)} {\Gamma(\alpha)} - \frac{\Gamma(\alpha + 1)^2}{\Gamma(\alpha)^2}} \cdot \Pi = \frac 1 {k(k\alpha + 1)} \cdot \Pi \mcom
  \]
where $\Pi \in \R^{k \times k}$ is the projector to the subspace orthogonal to the all-$1$s vector.
\end{fact}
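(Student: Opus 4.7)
The plan is to prove both claims by direct calculation against the normalizing constant of the Dirichlet density. The one fact I will use throughout is the classical Dirichlet integral
\[
\int_{\Delta_{k-1}} \prod_{j=1}^{k} \sigma_j^{\beta_j - 1} \, d\sigma = \frac{\prod_{j=1}^{k} \Gamma(\beta_j)}{\Gamma\bigl(\sum_{j=1}^{k} \beta_j\bigr)},
\]
valid for $\beta_j > 0$, which specialized to $\beta_j \equiv \alpha$ gives that the normalizing constant of the symmetric Dirichlet is $\Gamma(k\alpha)/\Gamma(\alpha)^k$.

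First I will derive the moment formula. Given $\gamma \in \mathbb{N}^k$, writing $\E \sigma^\gamma$ as the ratio of the integrals of $\prod_j \sigma_j^{(\alpha + \gamma_j) - 1}$ and $\prod_j \sigma_j^{\alpha - 1}$ and invoking the Dirichlet integral twice gives
\[
\E \sigma^\gamma = \frac{\prod_j \Gamma(\alpha + \gamma_j) / \Gamma(k\alpha + |\gamma|)}{\Gamma(\alpha)^k / \Gamma(k\alpha)} = \frac{\Gamma(k\alpha)}{\Gamma(k\alpha + |\gamma|)} \cdot \frac{\prod_j \Gamma(\alpha + \gamma_j)}{\Gamma(\alpha)^k},
\]
which is exactly the first display in the statement.

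Next I will compute the covariance by specializing this formula. Taking $\gamma = e_i$ gives $\E \sigma_i = 1/k$ (also forced by symmetry and $\sum_j \sigma_j = 1$). Taking $\gamma = 2 e_i$ and using $\Gamma(\alpha+2) = \alpha(\alpha+1)\Gamma(\alpha)$ and $\Gamma(k\alpha+2) = k\alpha(k\alpha+1)\Gamma(k\alpha)$ gives $\E \sigma_i^2 = (\alpha+1)/[k(k\alpha+1)]$, and taking $\gamma = e_i + e_j$ for $i \neq j$ gives $\E \sigma_i \sigma_j = \alpha/[k(k\alpha+1)]$. Subtracting $1/k^2$ from each yields diagonal entries $(k-1)/[k^2(k\alpha+1)]$ and off-diagonal entries $-1/[k^2(k\alpha+1)]$ for $\E \tsigma \tsigma^\top$, i.e.
\[
\E \tsigma \tsigma^\top = \frac{1}{k^2(k\alpha+1)}\bigl(k I - J\bigr) = \frac{1}{k(k\alpha+1)} \, \Pi.
\]
The scalar prefactor in the middle expression of the statement is just $\tfrac{\Gamma(k\alpha)}{\Gamma(k\alpha+2)}\bigl[(\alpha+1)\alpha - \alpha^2\bigr] = \tfrac{\alpha}{k\alpha(k\alpha+1)} = \tfrac{1}{k(k\alpha+1)}$, confirming the two forms agree.

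There is no serious obstacle here; the proof is essentially mechanical gamma-function arithmetic. The only thing worth noting conceptually is why only the projector $\Pi$ appears: the permutation symmetry of the symmetric Dirichlet forces $\E \tsigma \tsigma^\top$ to have at most two distinct entry values (one on the diagonal, one off), and the constraint $\sum_j \sigma_j = 1$ forces the all-ones vector into the kernel of the covariance, so the matrix must be a scalar multiple of $\Pi$; the above computation just pins down the scalar.
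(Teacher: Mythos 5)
Your proof is correct and follows essentially the same route as the paper's: evaluate moments as ratios of Dirichlet normalizing integrals and then specialize to first and second moments to assemble the covariance. You actually spell out slightly more detail than the paper (which only computes $\E \sigma_s\sigma_t$ and $\E\sigma_s^2$ and asserts the general formula is similar), and your gamma-function arithmetic checks out.
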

\begin{proof}
  We recall the density of the $k$-dimensional Dirichlet distribution with parameter vector $\alpha_1,\ldots,\alpha_k$.
  Here $\Gamma$ denotes the usual Gamma function.
  \[
    \Pr\{ \sigma \} = \frac{\Gamma(\sum_{j \leq k} \alpha_j )}{\prod_{j \leq k} \Gamma(\alpha_j)} \cdot \prod_{j \leq k} \sigma_j^{\alpha_j - 1}\mper
  \]
  In particular,
  \[
    \frac{\Gamma(\sum_{j \leq k} \alpha_j )}{\prod_{j \leq k} \Gamma(\alpha_j)} \cdot \int \prod_{j \leq k} \sigma_j^{\alpha_j - 1} \, d\sigma = 1
  \]
where the integral is taken with respect to Lebesgue measure on $\{ \sigma \, : \sum_{j \leq k} \sigma_j = 1 \}$.

  Using this fact we can compute the moments of the symmetric Dirichlet distribution with parameter $\alpha$.
  We show for example how to compute second moments; the general formula can be proved along the same lines.
  For $s \neq t \in [k]$,
  \begin{align*}
    \E \sigma_s \sigma_t & =  \frac{\Gamma(k\alpha)}{\Gamma(\alpha)^k} \cdot \int \! \sigma_s \sigma_t \prod_{j \leq k} \sigma_j^{\alpha - 1}\\
    & = \frac{\Gamma(k\alpha)}{\Gamma(k\alpha + 2)} \cdot \frac{\Gamma(\alpha + 1)^2}{\Gamma(\alpha)^2} \cdot \frac{\Gamma(k\alpha + 2)}{\Gamma(\alpha)^{k-2}\Gamma(\alpha +  1)^2} \cdot \int \! \sigma_s^{(\alpha + 1) - 1} \sigma_t^{(\alpha + 1) - 1} \prod_{j \neq s,t} \sigma_j^{\alpha - 1}\\
    & = \frac{\Gamma(k\alpha)}{\Gamma(k\alpha + 2)} \cdot \frac{\Gamma(\alpha + 1)^2}{\Gamma(\alpha)^2}\mper
  \end{align*}
  Similarly,
  \[
    \E \sigma_s^2 = \frac{\Gamma(k\alpha)}{\Gamma(k\alpha + 2)} \cdot \frac{\Gamma(\alpha + 2)}{\Gamma(\alpha)}\mper
  \]

  The formula for $\E \tilde \sigma \tsigma^\top$ follows immediately.
\end{proof}

\end{document}